\documentclass{llncs} 

\usepackage{mathpartir}
\usepackage{amsmath,amssymb,mathtools,xspace,upgreek,url}
\usepackage{float,bm,color}

\floatstyle{boxed} 
\restylefloat{figure}

\usepackage{envBisim}

\title{Environmental Bisimulations\\ for Delimited-Control Operators}
\author{Dariusz Biernacki \inst{1} \and Sergue\"i Lenglet \inst{2}}

\institute{Institute of Computer Science, University of Wroc\l{}aw \and LORIA,
  Universit\'e de Lorraine}

\begin{document}

\maketitle

\begin{abstract} 
  We present a theory of environmental bisimilarity for the
  delimited-control operators {\it shift} and {\it reset}. We consider
  two different notions of contextual equivalence: one that does not
  require the presence of a top-level control delimiter when executing
  tested terms, and another one, fully compatible with the original
  CPS semantics of shift and reset, that does. For each of them, we
  develop sound and complete environmental bisimilarities, and we
  discuss up-to techniques.
\end{abstract}

\section{Introduction}


Control operators for delimited
continuations~\cite{Danvy-Filinski:LFP90,Felleisen:POPL88} provide
elegant means for expressing advanced control
mechanisms~\cite{Danvy-Filinski:LFP90,Hieb-al:LaSC93}. Moreover, they
play a fundamental role in the semantics of computational
effects~\cite{Filinski:POPL94}, normalization by
evaluation~\cite{Balat-al:POPL04} and as a crucial refinement of
abortive control operators such as {\it
  callcc}~\cite{Felleisen:POPL88,Sitaram-Felleisen:LFP90}. Of special
interest are the control operators {\it shift} and {\it
  reset}~\cite{Danvy-Filinski:LFP90} due to their origins in
continuation-passing style (CPS) and their connection with
computational monads -- as demonstrated by
Filinski~\cite{Filinski:POPL94}, shift and reset can express in direct
style arbitrary computational effects, such as mutable state,
exceptions, etc. Operationally, the control delimiter reset delimits
the current continuation and the control operator shift abstracts the
current delimited continuation as a first class value that when
resumed is composed with the then-current continuation.

Because of the complex nature of control effects, it can be difficult
to determine if two programs that use shift and reset are equivalent
(i.e., behave in the same way) or not. \emph{Contextual
  equivalence}~\cite{JHMorris:PhD} is widely considered as the most
natural equivalence on terms in languages similar to the
$\lambda$-calculus. Roughly, two terms are contextually equivalent if
we cannot tell them apart when they are executed within any
context. The latter quantification over contexts makes this relation
hard to use in practice, so we usually look for simpler
characterizations of contextual equivalence, such as coinductively
defined \emph{bisimilarities}.

In our previous work, we defined
\emph{applicative}~\cite{Biernacki-Lenglet:FOSSACS12} and \emph{normal
  form}~\cite{Biernacki-Lenglet:FLOPS12} bisimilarities for shift and
reset. Applicative bisimilarity characterizes contextual equivalence,
but still quantifies over some contexts to relate terms (e.g.,
$\lambda$-abstractions are applied to the same arbitrary argument). As
a result, some equivalences remain quite difficult to prove. In
contrast, normal form bisimilarity does not contain any quantification
over contexts or arguments in its definition: the tested terms are
reduced to normal forms, which are then decomposed in bisimilar
sub-terms. Consequently, proofs of equivalence are usually simpler
than with applicative bisimilarity, and they can be simplified even
further with \emph{up-to techniques}. However, normal form
bisimilarity is not \emph{complete}, i.e., there exists contextually
equivalent terms which are not normal form bisimilar.

\emph{Environmental bisimilarity}~\cite{Sangiorgi-al:TOPLAS11} is a
different kind of behavioral equivalence which in terms of strength
and practicality
can be situated
in between applicative and normal form
bisimilarities. It has originally been proposed
in~\cite{Sumii-Pierce:TCS07} and has been since defined in various
higher-order languages (see, e.g.,
\cite{Sato-Sumii:APLAS09,Sumii:TCS10,Pierard-Sumii:LICS12}). Like
applicative bisimilarity, it uses some particular contexts to test
terms, except that the testing contexts are built from an environment,
which represents the knowledge built so far by an outside
observer. Environmental bisimilarity usually characterizes contextual
equivalence, but is harder to establish than applicative
bisimilarity. Nonetheless, like with normal form bisimilarity, one can
define powerful up-to techniques~\cite{Sangiorgi-al:TOPLAS11} to
simplify the equivalence proofs.
Besides, the authors of~\cite{Koutavas-al:ENTCS11} argue that the
additional complexity of environmental bisimilarity is necessary to
handle more realistic features, like local state or exceptions.

In the quest for a powerful enough (i.e., as discriminative as
contextual equivalence) yet easy-to-use equivalence for delimited
control, we study in this paper the environmental theory of a calculus
with shift and reset. More precisely, we consider two semantics for
shift and reset: the original one~\cite{Biernacka-al:LMCS05}, where
terms are executed within a top-level reset, and a more relaxed
semantics where this requirement is lifted. The latter is commonly
used in implementations of shift and
reset~\cite{Dybvig-al:JFP06,Filinski:POPL94} as well as in some
studies of these
operators~\cite{Asai-Kameyama:APLAS07,Kameyama:HOSC07}, including our
previous
work~\cite{Biernacki-Lenglet:FOSSACS12,Biernacki-Lenglet:FLOPS12}. So
far, the behavioral theory of shift and reset with the original
semantics has not been studied. Firstly, we define environmental
bisimilarity for the relaxed semantics and study its properties;
especially we discuss the problems raised by delimited control for the
definition of bisimulation up to context, one of the most powerful
up-to techniques. Secondly, we propose the first behavioral theory for
the original semantics, and we pinpoint the differences between the
equivalences of the two semantics. In particular, we show that the
environmental bisimilarity for the original semantics is complete
w.r.t. the axiomatization of shift and reset
of~\cite{Kameyama-Hasegawa:ICFP03}, which is not the case for the
relaxed semantics, as already proved
in~\cite{Biernacki-Lenglet:FOSSACS12} for applicative bisimilarity.

In summary, we make the following contributions in this paper.
\begin{itemize}
\item We show that environmental bisimilarity can be defined for a
  calculus with delimited control, for which we consider two different
  semantics. In each case, the defined bisimilarity equals contextual
  equivalence.
\item For the relaxed semantics, we explain how to handle \emph{stuck
  terms}, i.e., terms where a capture cannot go through because of the
  lack of an outermost reset.
\item We discuss the limits of the usual up-to techniques in the case
  of delimited control.
\item For the original semantics, we define a contextual equivalence,
  and a corresponding environmental bisimilarity. Proving soundness of
  the bisimilarity w.r.t. contextual equivalence requires significant
  changes from the usual soundness proof scheme. We discuss how
  environmental bisimilarity is easier to adapt than applicative
  bisimilarity.
\item We give examples illustrating the differences between the two
  semantics.
\end{itemize}
The rest of the paper is organized as follows: in Section
\ref{s:calculus}, we present the calculus $\lamshift$ used in this
paper, and recall some results, including the axiomatization
of~\cite{Kameyama-Hasegawa:ICFP03}. We develop an environmental theory
for the relaxed semantics in Section~\ref{s:general}, and for the
original semantics in Section~\ref{s:programs}. We conclude in
Section~\ref{s:conclusion}, and the appendices contain the
characterization proofs omitted from the main text.

\section{The Calculus $\lamshift$}
\label{s:calculus}

\subsection{Syntax}
The language $\lamshift$ extends the call-by-value $\lambda$-calculus
with the delimited-control operators \emph{shift} and
\emph{reset}~\cite{Danvy-Filinski:LFP90}. We assume we have a set of
term variables, ranged over by $\varx$, $y$, $z$, and $\vark$. We use
$k$ for term variables representing a continuation (\eg when bound
with a shift), while $x$, $y$, and $z$ stand for any values; we
believe such distinction helps to understand examples and reduction
rules. The syntax of terms is given by the following grammar:
\[
\begin{array}{llll}
  \textrm{Terms:} & \tm & ::= & \varx \Mid \lam \varx \tm \Mid \app
  \tm \tm \Mid \shift \vark \tm \Mid \reset \tm
\end{array}
\]
\noindent \emph{Values}, ranged over by $\val$, are terms of the form
$\lam \varx \tm$. The operator shift ($\shift \vark \tm$) is a capture
operator, the extent of which is determined by the delimiter reset
($\rawreset$). A $\lambda$-abstraction $\lam \varx \tm$ binds $\varx$
in $\tm$ and a shift construct $\shift \vark \tm$ binds $\vark$
in~$\tm$; terms are equated up to $\alpha$-conversion of their bound
variables. The set of free variables of $\tm$ is written $\fv \tm$; a
term $\tm$ is \emph{closed} if $\fv\tm = \emptyset$.

We distinguish several kinds of contexts, represented outside-in, as
follows:
\[
\begin{array}{llll}
  \textrm{Pure contexts:}& \ctx & ::= & \mtctx \Mid \vctx \val \ctx \Mid \apctx \ctx
  \tm \\[2mm]
  \textrm{Evaluation contexts:\:}& \rctx & ::= & \mtctx \Mid \vctx \val \rctx \Mid \apctx \rctx
  \tm \Mid \resetctx \rctx \\[2mm]
  \textrm{Contexts:}& \cctx & ::= & \mtctx \Mid \lam \varx \cctx
  \Mid \vctx \tm \cctx \Mid \apctx \cctx \tm \Mid \shift \vark \cctx \Mid \resetctx
  \cctx 
\end{array}
\]
\noindent Regular contexts are ranged over by $\cctx$. The pure
evaluation contexts\footnote{This terminology comes from Kameyama (\eg
  in \cite{Kameyama-Hasegawa:ICFP03}).} (abbreviated as pure
contexts), ranged over by $\ctx$, represent delimited continuations
and can be captured by shift. The call-by-value evaluation contexts,
ranged over by $\rctx$, represent arbitrary continuations and encode
the chosen reduction strategy. Filling a context $\cctx$ (respectively
$\ctx$, $\rctx$) with a term $\tm$ produces a term, written $\inctx
\cctx \tm$ (respectively $\inctx \ctx \tm$, $\inctx \rctx \tm$); the
free variables of $\tm$ may be captured in the process. We extend the
notion of free variables to contexts (with $\fv \mtctx=\emptyset$),
and we say a context $\cctx$ (respectively $\ctx$, $\rctx$) is
\emph{closed} if $\fv \cctx = \emptyset$ (respectively $\fv \ctx =
\emptyset$, $\fv \rctx = \emptyset$).

\subsection{Reduction Semantics}
\label{ss:reduction}

The call-by-value reduction semantics of $\lamshift$ is defined as
follows, where $\subst \tm \varx \val$ is the usual capture-avoiding
substitution of $\val$ for $\varx$ in $\tm$:
\[
\begin{array}{lrll}
  \RRbeta & \quad \inctx \rctx {\app {\lamp \varx \tm} \val} & \redcbv & \inctx
  \rctx {\subst \tm \varx \val} \\[2mm]
  \RRshift & \quad \inctx \rctx {\reset{\inctx \ctx {\shift \vark \tm}}} &
  \redcbv & \inctx \rctx{\reset{\subst \tm \vark
    {\lam \varx {\reset {\inctx \ctx \varx}}}}} \mbox{ with } \varx \notin \fv
  \ctx\\[2mm]
  \RRreset & \quad \inctx \rctx {\reset \val} & \redcbv & \inctx \rctx \val
\end{array}
\]
\noindent The term $\app {\lamp \varx \tm} \val$ is the usual
call-by-value redex for $\beta$-reduction (rule $\RRbeta$). The
operator $\shift \vark \tm$ captures its surrounding context $\ctx$ up
to the dynamically nearest enclosing reset, and substitutes $\lam
\varx {\reset {\inctx \ctx \varx}}$ for $\vark$ in $\tm$ (rule
$\RRshift$). If a reset is enclosing a value, then it has no purpose
as a delimiter for a potential capture, and it can be safely removed
(rule $\RRreset$). All these reductions may occur within a metalevel
context $\rctx$, so the reduction rules specify both the notion of
reduction and the chosen call-by-value evaluation strategy that is
encoded in the grammar of the evaluation contexts. Furthermore, the
reduction relation $\redcbv$ is compatible with evaluation contexts
$\rctx$, i.e., $\inctx \rctx \tm \redcbv \inctx \rctx {\tm'}$ whenever
$\tm \redcbv {\tm'}$.

There exist terms which are not values and which cannot be reduced any
further; these are called \emph{stuck terms}.
\begin{definition}
  A term $\tm$ is stuck if $\tm$ is not a value and $\tm \not
  \redcbv$.
\end{definition}
For example, the term $\inctx \ctx {\shift \vark \tm}$ is stuck
because there is no enclosing reset; the capture of $\ctx$ by the
shift operator cannot be triggered.

\begin{lemma}
  \label{l:stuck}
  A closed term $\tm$ is stuck iff $\tm = \inctx \ctx {\shift \vark
    {\tm'}}$ for some $\ctx$, $k$, and $\tm'$.
\end{lemma}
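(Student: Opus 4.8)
The plan is to prove both directions of the equivalence. The right-to-left direction is the easy one: given a closed term of the form $\inctx \ctx {\shift \vark {\tm'}}$, I would argue it is not a value (it is not a $\lambda$-abstraction, since $\ctx$ is either empty — making the term a shift, not a value — or has an outermost application or argument frame) and that it is irreducible. Irreducibility follows by inspecting the three reduction rules: rule $\RRbeta$ requires a $\beta$-redex $\app{\lamp \varx \tm}\val$ somewhere under an evaluation context $\rctx$, rule $\RRreset$ requires a $\reset\val$ under an $\rctx$, and rule $\RRshift$ requires the shift to sit under a $\reset$ via a pure context. Since $\ctx$ is a pure context containing no reset and no $\beta$-redex and no $\reset\val$ in evaluation position, none of the three left-hand sides matches; in particular there is no enclosing reset for the shift. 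Hence the term is stuck.

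The left-to-right direction is the substantive one. I would prove the contrapositive of a sharper claim: every closed term $\tm$ is either a value, or reducible, or of the form $\inctx \ctx {\shift \vark {\tm'}}$. The natural approach is a \emph{unique-decomposition} argument by structural induction on $\tm$. For a variable: impossible, since $\tm$ is closed. For a $\lambda$-abstraction: it is a value. For $\reset{\tm_1}$: apply the induction hypothesis to $\tm_1$ (which is closed); if $\tm_1$ is a value we fire $\RRreset$; if $\tm_1$ reduces, so does $\reset{\tm_1}$ by compatibility with evaluation contexts; if $\tm_1 = \inctx \ctx {\shift \vark {\tm'}}$, then $\reset{\tm_1}$ has the shape $\inctx {\mtctx} {\reset{\inctx \ctx {\shift \vark {\tm'}}}}$, which is exactly the redex of $\RRshift$ with outer evaluation context $\mtctx$, so it reduces. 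For an application $\app {\tm_1} {\tm_2}$: apply the IH to $\tm_1$. If $\tm_1$ reduces, then $\app{\tm_1}{\tm_2}$ reduces by putting the step under the evaluation context $\apctx\mtctx{\tm_2}$. If $\tm_1 = \inctx \ctx {\shift \vark {\tm'}}$, then $\app{\tm_1}{\tm_2} = \inctx {\apctx\ctx{\tm_2}} {\shift \vark {\tm'}}$, again of the required stuck form, since $\apctx\ctx{\tm_2}$ is a pure context. If $\tm_1$ is a value $\val_1$, apply the IH to $\tm_2$: if $\tm_2$ reduces, use the evaluation context $\vctx{\val_1}\mtctx$; if $\tm_2 = \inctx \ctx {\shift \vark {\tm'}}$, the whole term is $\inctx{\vctx{\val_1}\ctx}{\shift \vark {\tm'}}$, again stuck; and if $\tm_2$ is a value $\val_2$, then $\app{\val_1}{\val_2}$ is a $\beta$-redex and fires $\RRbeta$. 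For $\shift \vark {\tm'}$ itself: it already has the required form with $\ctx = \mtctx$.

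The main obstacle, such as it is, is not conceptual but bookkeeping: I need the fact that pure contexts are closed under the two frame-building operations $\apctx{\cdot}{\tm}$ and $\vctx{\val}{\cdot}$ used above, which is immediate from the grammar of $\ctx$, and I need to be careful that in the $\reset$ case the inner shift-term is matched against $\RRshift$ rather than leaving the term stuck — this is precisely where the ``dynamically nearest enclosing reset'' of the reduction rule does its work. I should also note that the decomposition in the stuck cases is essentially unique (the pure context $\ctx$ and the body $\shift \vark {\tm'}$ are determined by $\tm$), though the lemma statement only asks for existence, so a plain structural induction suffices and I need not dwell on uniqueness.
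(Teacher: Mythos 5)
Your proof is correct: the right-to-left direction by inspection of the three reduction rules, and the left-to-right direction via the structural induction showing every closed term is a value, reducible, or of the form $\inctx \ctx {\shift \vark {\tm'}}$, is exactly the standard argument (and is essentially the same induction that underlies the unique-decomposition property of Lemma~\ref{l:unique-decomp}). The paper states this lemma without proof, treating it as routine, so there is nothing to compare against beyond noting that your argument is the expected one and is sound.
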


\begin{definition}
  A term $\tm$ is a normal form if $\tm$ is a value or a stuck term.
\end{definition}

We call \emph{redexes} (ranged over by $\redex$) terms of the form
$\app{\lamp \varx \tm} \val$, $\reset {\inctx \ctx {\shift \vark
    \tm}}$, and $\reset \val$. Thanks to the following
unique-decomposition property, the reduction relation $\redcbv$ is
deterministic.
\begin{lemma}
  \label{l:unique-decomp}
  For all closed terms $\tm$, either $\tm$ is a normal form, or there
  exist a unique redex $\redex$ and a unique context $\rctx$ such that
  $\tm = \inctx \rctx \redex$.
\end{lemma}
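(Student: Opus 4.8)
The plan is to prove Lemma~\ref{l:unique-decomp} by structural induction on the closed term $\tm$, establishing existence and uniqueness of the decomposition simultaneously. The key idea is that the grammar of evaluation contexts $\rctx$ already encodes a left-to-right, outside-in traversal strategy, so the ``active'' position in any term is uniquely determined; I just need to check that the traversal either reaches a redex or exposes a normal form.

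First, for the \emph{existence} part, I would proceed by cases on the shape of $\tm$. If $\tm = \lam \varx {\tm'}$, it is a value, hence a normal form, and we are done. If $\tm = \app{\tm_1}{\tm_2}$, I apply the induction hypothesis to $\tm_1$: if $\tm_1$ is a stuck term, then by Lemma~\ref{l:stuck} it is of the form $\inctx \ctx {\shift \vark {\tm'}}$, so $\tm = \inctx {\apctx \ctx \tm_2}{\shift \vark {\tm'}}$ is stuck as well (the surrounding context is pure); if $\tm_1 = \inctx \rctx \redex$, then $\tm = \inctx{\apctx \rctx \tm_2}{\redex}$; if $\tm_1$ is a value $\val$, I recurse on $\tm_2$ in the same way, the only new subcase being $\tm_2$ a value, in which case $\app \val \val[2]$ is itself a $\beta$-redex decomposed with $\rctx = \mtctx$. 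If $\tm = \reset{\tm'}$, I apply the induction hypothesis to $\tm'$: if $\tm'$ is a value, $\reset{\tm'}$ is a $\RRreset$-redex; if $\tm'$ is stuck, it equals $\inctx \ctx {\shift \vark {\tm''}}$ by Lemma~\ref{l:stuck}, so $\reset{\tm'}$ is a $\RRshift$-redex; and if $\tm' = \inctx \rctx \redex$, then $\tm = \inctx{\resetctx \rctx}{\redex}$. The case $\tm = \varx$ is impossible since $\tm$ is closed, and $\tm = \shift \vark {\tm'}$ gives a stuck term directly (it is $\inctx \mtctx {\shift \vark {\tm'}}$).

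For \emph{uniqueness}, I would argue that the outermost constructor of $\tm$ forces which clause of the $\rctx$ grammar applies, and then the sub-decomposition is unique by induction. The subtle point is that a redex is never itself further decomposable into a strictly smaller evaluation context around another redex: for $\app{\lamp \varx \tm}\val$ both immediate subterms are values, so no proper evaluation context other than $\mtctx$ can sit at the root; for $\reset \val$ the body is a value, ruling out $\resetctx \rctx$ with a nonempty $\rctx$; for $\reset{\inctx \ctx {\shift \vark \tm}}$ the body is stuck, so again only $\mtctx$ works at that position. Combined with the induction hypothesis applied to the relevant subterm, this pins down $\rctx$ and $\redex$ uniquely. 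I should also record (or invoke the already-observed fact) that normal forms and reducible terms are mutually exclusive, which is immediate from Lemma~\ref{l:stuck} since a stuck term has no redex in an evaluation-context position.

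The main obstacle is getting the bookkeeping around pure contexts versus evaluation contexts right in the uniqueness argument: since $\ctx \subseteq \rctx$, when a subterm is stuck one must be careful that the decomposition $\inctx \ctx {\shift \vark {\tm'}}$ is the \emph{unique} pure-context decomposition — this itself follows by the same left-to-right traversal reasoning and is essentially the content of (the constructive strengthening of) Lemma~\ref{l:stuck}, so I would either prove that sharper version first or fold the argument in here. Everything else is routine case analysis that mirrors the grammar of $\rctx$.
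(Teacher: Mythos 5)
Your proof is correct: the paper states Lemma~\ref{l:unique-decomp} without proof (it is a standard unique-decomposition property inherited from the reduction semantics of prior work), and your structural induction -- existence by case analysis following the grammar of $\rctx$, uniqueness by observing that no redex admits a non-trivial evaluation-context decomposition and that normal forms never reduce -- is exactly the standard argument one would supply. The one point worth making explicit is the non-circularity with Lemma~\ref{l:stuck}, whose ``if'' direction you rely on; since the paper establishes that lemma first, your use of it is legitimate.
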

 
Finally, we write $\clocbv$ for the transitive and reflexive closure
of $\redcbv$, and we define the evaluation relation of $\lamshift$ as
follows.
\begin{definition}
  We write $\tm \evalcbv \tm'$ if $\tm \clocbv \tm'$ and $\tm'
  \not\redcbv$.
\end{definition}
The result of the evaluation of a closed term, if it exists, is a normal
form. If a term $\tm$ admits an infinite reduction sequence, we say it
\emph{diverges}, written $\tm \divcbv$. Henceforth, we use $\Omega = \app {\lamp
  \varx {\app \varx \varx}}{\lamp \varx {\app \varx \varx}}$ as an example of
such a term.

\subsection{CPS Equivalence}
\label{s:cps-equiv}

In~\cite{Kameyama-Hasegawa:ICFP03}, the authors propose an equational
theory of shift and reset based on
CPS~\cite{Danvy-Filinski:LFP90}. The idea is to relate terms that have
$\beta\eta$-convertible CPS translations.

\begin{definition}
  Terms $\tmzero$ and $\tmone$ are CPS equivalent, written $\tmzero
  \cpsequiv \tmone$, if their CPS translations are
  $\beta\eta$-convertible.
\end{definition}

Kameyama and Hasegawa propose eight axioms in~\cite{Kameyama-Hasegawa:ICFP03} to
characterize CPS equivalence: two terms are CPS equivalent iff one can derive
their equality using the equations below. Note that the axioms are defined on
open terms, and suppose variables as values.
\[
\begin{array}{rcllrcll}
  \app{\lamp \varx \tm} \val &\eqKH& \subst \tm \varx \val & 
  & \app {\lamp \varx {\inctx \ctx \varx}} \tm &\eqKH& \inctx \ctx \tm \mbox{ if } x
  \notin \fv \ctx &  \\[2mm]
  \reset {\inctx \ctx {\shift \vark \tm}} &\eqKH& \reset {\subst \tm \vark {\lam
      \varx {\reset {\inctx \ctx \varx}}}} & 
  & \quad  \reset {\app {\lamp \varx \tmzero}{\reset \tmone}} &\eqKH& \app {\lamp \varx
    {\reset \tmzero}}{\reset \tmone} &   \\[2mm]
  \reset \val &\eqKH& \val & 
  & \shift \vark {\reset \tm} &\eqKH& \shift \vark \tm &  \\[2mm]
  \lam \varx {\app \val \varx} &\eqKH& \val \mbox{ if } \varx \notin \fv \val &
    & \shift \vark {\app \vark \tm} &\eqKH& \tm \mbox{ if } \vark \notin \fv
  \tm &  
\end{array}
\]

\noindent We use the above relations as examples throughout the
paper. Of particular interest is the axiom $\app {\lamp \varx {\inctx
    \ctx \varx}} \tm \eqKH \inctx \ctx \tm$ (if $x \notin \fv \ctx$),
called $\beta_\Omega$ in~\cite{Kameyama-Hasegawa:ICFP03}, which can be
difficult to prove with
bisimilarities~\cite{Biernacki-Lenglet:FOSSACS12}.

\subsection{Context Closures}

Given a relation $\rel$ on terms, we define two context closures that
generate respectively terms and evaluation contexts. The term
generating closure $\cloct \rel$ is defined inductively as the
smallest relation satisfying the following rules:
\begin{mathpar}
  \inferrule{\tm \rel \tm'}
            {\tm \cloct \rel \tm'}
  \and
  \hspace{-1em}\inferrule{}
            {\varx \cloct \rel \varx}
  \and
  \hspace{-1em}\inferrule{\tm \cloct \rel \tm'}
            {\lam \varx \tm \cloct \rel \lam \varx {\tm'}}
  \and
  \hspace{-1em}\inferrule{\tmzero \cloct\rel \tmzero' \quad \tmone \cloct \rel \tmone'}
            {\app \tmzero \tmone \cloct\rel \app{\tmzero'}{\tmone'}}
  \and
  \hspace{-1em}\inferrule{\tm \cloct \rel \tm'}
            {\shift \vark \tm \cloct \rel \shift \vark {\tm'}}
  \and
  \hspace{-1em}\inferrule{\tm \cloct \rel \tm'}
            {\reset \tm \cloct \rel \reset{\tm'}}
\end{mathpar}

Even if $\rel$ is defined only on closed terms, $\cloct\rel$ is
defined on open terms. In this paper, we consider the restriction of
$\cloct\rel$ to closed terms unless stated otherwise. The context
generating closure $\clocc \rel$ of a relation $\rel$ is defined
inductively as the smallest relation satisfying the following rules:
\begin{mathpar}
  \inferrule{ }
  {\mtctx \clocc\rel \mtctx}
  \and
  \inferrule{\rctxzero \clocc \rel \rctxone \\ \valzero \cloctc\rel \valone}
  {\vctx \valzero \rctxzero \clocc\rel \vctx \valone \rctxone}
  \and
  \inferrule{\rctxzero \clocc \rel \rctxone \\ \tmzero \cloctc\rel \tmone}
  {\apctx \rctxzero \tmzero \clocc\rel \apctx \rctxone \tmone}
  \and
  \inferrule{\rctxzero \clocc\rel \rctxone}
  {\reset \rctxzero \clocc\rel \reset\rctxone}
\end{mathpar}
Again, we consider only the restriction of $\clocc\rel$ to closed
contexts.

\section{Environmental Relations for the Relaxed Semantics}
\label{s:general}

In this section, we define an environmental bisimilarity which
characterizes the contextual equivalence
of~\cite{Biernacki-Lenglet:FOSSACS12,Biernacki-Lenglet:FLOPS12}, where
stuck terms can be observed.

\subsection{Contextual Equivalence}

We recall the definition of contextual equivalence $\ctxequiv$ for the
relaxed semantics (given in~\cite{Biernacki-Lenglet:FOSSACS12}).
\begin{definition}
\label{d:context}
  For all $\tmzero$, $\tmone$ be terms. We write $\tmzero \ctxequiv
  \tmone$ if for all $\cctx$ such that $\inctx \cctx \tmzero$ and
  $\inctx \cctx \tmone$ are closed, the following hold:
  \begin{itemize}
  \item $\inctx \cctx \tmzero \evalcbv \valzero$ implies $\inctx
    \cctx \tmone \evalcbv \valone$;
  \item $\inctx \cctx \tmzero \evalcbv \tmzero'$, where $\tmzero'$ is stuck,
    implies $\inctx \cctx \tmone \evalcbv \tmone'$, with $\tmone'$ stuck as
    well;
  \end{itemize}
  and conversely for $\inctx \cctx \tmone$.
\end{definition}
The definition is simpler when using the following context
lemma~\cite{Milner:TCS77} (for a proof see Section~3.4
in~\cite{Biernacki-Lenglet:FOSSACS12}). Instead of testing with
general, closing contexts, we can close the terms with values and then
put them in evaluation contexts.

\begin{lemma}[Context Lemma]
  \label{l:context-lemma}
  We have $\tmzero \ctxequiv \tmone$ iff for all closed
  contexts~$\rctx$ and for all substitutions $\sigma$ (mapping
  variables to closed values) such that $\tmzero\sigma$ and
  $\tmone\sigma$ are closed, the following hold:
  \begin{itemize}
  \item $\inctx \rctx {\tmzero\sigma} \evalcbv \valzero$ implies
    $\inctx \rctx {\tmone\sigma} \evalcbv \valone$;
  \item $\inctx \rctx {\tmzero\sigma} \evalcbv \tmzero'$, where
    $\tmzero'$ is stuck, implies $\inctx \rctx {\tmone\sigma} \evalcbv
    \tmone'$, with $\tmone'$ stuck as well;
  \end{itemize}
  and conversely for $\inctx \rctx {\tmone\sigma}$.
\end{lemma}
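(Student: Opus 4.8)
I would prove the two implications separately, using throughout that by Lemma~\ref{l:unique-decomp} the relation $\redcbv$ is a partial function on closed terms, so that whenever $\tm \clocbv \tm'$ with $\tm$ closed, $\tm$ and $\tm'$ reach the same outcome under $\evalcbv$ (the same value, a stuck term, or divergence). For the left-to-right direction, given $\tmzero \ctxequiv \tmone$, a closed evaluation context $\rctx$, and a value substitution $\sigma$ of domain $\{\varx_1,\dots,\varx_n\}$ sending each $\varx_i$ to a closed value $\val_i$ and closing both $\tmzero$ and $\tmone$, I would instantiate the quantification of $\ctxequiv$ with the closing context
\[
  \cctx \;=\; \inctx{\rctx}{\app{\cdots\app{(\lam{\varx_1}{\cdots\lam{\varx_n}{\mtctx}})}{\val_1}\cdots}{\val_n}}\,.
\]
Then $n$ steps of $\RRbeta$, each compatible with $\rctx$, give $\inctx{\cctx}{\tmzero} \clocbv \inctx{\rctx}{\tmzero\sigma}$ (the $\val_i$ being closed, the iterated substitution coincides with $\sigma$), and likewise for $\tmone$, so the two terms in each pair have the same outcome, and the required conditions for $\rctx$ and $\sigma$ follow from those of $\ctxequiv$ for $\cctx$.

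For the converse, let $\rel$ be the relation on possibly open terms that holds between $\tm$ and $\tm'$ when the right-hand side condition holds with $\tmzero, \tmone$ replaced by $\tm, \tm'$; the hypothesis is $\tmzero \rel \tmone$ and the goal is $\tmzero \ctxequiv \tmone$. Instantiating with $\rctx = \mtctx$ and $\sigma$ the identity shows that $\rel$ is \emph{adequate}: two closed $\rel$-related terms reach outcomes of the same kind; and $\rel$ is closed under $\redcbv$ in either component, since value substitution preserves $\redcbv$ and, by determinism, a reduct has the same outcome. The plan is to show that $\rel$ is a \emph{congruence}, i.e.\ that $\cloct\rel$ restricted to closed terms is still adequate; since for every closing context $\cctx$ we have $\inctx{\cctx}{\tmzero} \cloct\rel \inctx{\cctx}{\tmone}$ (the base rule at the hole, the structural rules on the rest of $\cctx$, and $\varx \cloct\rel \varx$ at the leaves), this yields exactly Definition~\ref{d:context}. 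So the key lemma is: if $P \cloct\rel Q$ are closed and $P \evalcbv P^\ast$, then $Q \evalcbv Q^\ast$ with $P^\ast$ a value iff $Q^\ast$ is one, and $P^\ast$ stuck iff $Q^\ast$ is stuck (divergence being then forced by symmetry of $\rel$). I would prove it by induction on the length of $P$'s evaluation, with an inner analysis locating the redex of $P$ via Lemma~\ref{l:unique-decomp}: when $P$ is a normal form one descends through its structure using the structural rules of $\cloct\rel$, reaching a marked subterm only in a position where adequacy of $\rel$, together with the facts that $\shift{\vark}{\tm''}$ is stuck for every $\tm''$ and that $\app{s}{\tm}$ and $\app{\val}{s}$ are stuck whenever $s$ is, settles the case; when the redex lies inside a marked subterm, one reduces just that subterm and uses $\redcbv$-closure of $\rel$; and when the redex can be matched directly, $Q$ performs the corresponding reduction, after first reducing to values any marked subterms lying in function position along $Q$'s evaluation spine (legitimate by adequacy and $\redcbv$-closure), the delicate sub-case being singled out below.

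The obstacle I expect to dominate is the case where the contracted redex is a $\RRbeta$- or $\RRshift$-redex whose contraction \emph{substitutes} a value carrying $\rel$-marked material, since the contractum must remain $\cloct\rel$-related to its $Q$-side counterpart. This reduces to a substitution lemma --- that $\cloct\rel$ is preserved when $\cloct\rel$-related closed values are substituted for a variable --- and ultimately to the statement that two $\rel$-related closed values are interchangeable inside an \emph{arbitrary} term. The definition of $\rel$ provides two levers here: testing in an evaluation context of the form $\inctx{\rctx}{\app{\mtctx}{\val}}$ lets one push a $\RRbeta$-contraction through, and quantifying over value substitutions handles substitution of the \emph{same} value on both sides; but neither handles two \emph{distinct} related values under binders, and this residue is exactly where the apparent circularity (congruence via substitutivity via congruence) bites. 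Breaking it requires proving the substitution lemma and the key lemma together by a single induction staged on the number of reduction steps, using $\redcbv$-closure of $\rel$ to pre-evaluate each marked value before it is substituted. The extra bookkeeping specific to $\lamshift$ --- a $\RRshift$-redex duplicates the captured pure context, possibly containing marked subterms, once per free occurrence of the shift variable and wraps each copy inside a fresh $\lam{\varx}{\reset{\inctx{\ctx}{\varx}}}$ --- is absorbed by the same machinery, as $\cloct\rel$ is closed under all term constructors and under duplication of marked subterms. Once the key lemma holds, the converse follows as above.
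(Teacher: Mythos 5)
The paper does not actually prove this lemma: it defers to Section~3.4 of \cite{Biernacki-Lenglet:FOSSACS12}, where the difficult inclusion is obtained indirectly, by sandwiching the evaluation-context/substitution-based equivalence between contextual equivalence and an applicative bisimilarity whose soundness is established separately (Howe's method). So there is no in-paper argument to measure you against, and I can only assess your plan on its own terms. Your left-to-right direction is correct: the context $\inctx{\rctx}{\app{\app{\lamp{\varx_1}{\cdots\lam{\varx_n}{\mtctx}}}{\val_1}\cdots}{\val_n}}$ is a legitimate closing context, the $n$ $\beta$-steps all occur at evaluation positions, and determinism (Lemma~\ref{l:unique-decomp}) transfers the observations between $\inctx{\cctx}{\tmzero\vphantom{X}}$ and $\inctx{\rctx}{\tmzero\sigma}$.

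The gap is in the converse, exactly at the point you flag. Your key lemma needs the ``substitution lemma'' that $\cloct\rel$ is preserved when $\cloct\rel$-related closed values are substituted for a variable, and the resolution you offer --- a single induction on reduction length, ``using $\redcbv$-closure of $\rel$ to pre-evaluate each marked value before it is substituted'' --- does not engage with the actual obstruction: the marked arguments are already values, so there is nothing to pre-evaluate. The real problem is that after contracting $\app{\lamp{\varx}{\tm}}{\valzero}$ against $\app{\lamp{\varx}{\tm'}}{\valone}$ with $\valzero$ and $\valone$ \emph{distinct} related values, a marked subterm $s$ of $\tm$ with $\varx$ free becomes $\subst{s}{\varx}{\valzero}$ on the left and $\subst{s'}{\varx}{\valone}$ on the right; the definition of $\rel$ only yields co-evaluation under a \emph{common} closing substitution, so this pair is not in $\cloct\rel$ and your induction invariant breaks. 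A mechanism that does close this case exists but is absent from your plan: since argument position is an evaluation position ($\vctx{\val}{\rctx}$ is in the grammar of evaluation contexts), a marked value in argument position can be exchanged for its partner \emph{before} the $\beta$-step, by instantiating the defining property of $\rel$ with the context $\inctx{\rctx}{\vctx{\lamp{\varx}{\tm}}{\mtctx}}$ and invoking determinism; after the exchange the same value is substituted on both sides, and closure of $\rel$ under common value substitutions finishes the case. Making this precise forces a one-marked-occurrence-at-a-time replacement discipline (and, in $\lamshift$, an account of how $\RRshift$ duplicates marked material from the captured context into fresh abstractions of the form $\lam{\varx}{\reset{\inctx{\ctx}{\varx}}}$), i.e., essentially a Mason--Talcott-style uniform-computation argument --- or else one routes the whole direction through the soundness of a bisimilarity, as the cited reference does. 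As written, your proposal names the crux but does not close it.
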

In~\cite{Biernacki-Lenglet:FOSSACS12}, we prove that $\ctxequiv$
satisfies all the axioms of CPS equivalence except for $\shift \vark
{\app \vark \tm} \eqKH \tm$ (provided $\vark \notin \fv \tm$): indeed,
$\shift \vark {\app \vark \tm}$ is stuck, but $\tm$ may evaluate to a
value. Conversely, some contextually equivalent terms are not CPS
equivalent, like Turing's and Church's call-by-value fixed point
combinators. Similarly, two arbitrary diverging terms are related by
$\ctxequiv$, but not necessarily by $\cpsequiv$.

\subsection{Definition of Environmental Bisimulation and Basic Properties}

Environmental bisimulations use an environment $\env$ to accumulate
knowledge about two tested terms. For the
$\lambda$-calculus~\cite{Sangiorgi-al:TOPLAS11}, $\env$ records the
values $(\valzero, \valone)$ the tested terms reduce to, if they
exist. We can then compare $\valzero$ and $\valone$ at any time by
passing them arguments built from $\env$. In $\lamshift$, we have to
consider stuck terms as well; therefore, environments may also contain
pairs of stuck terms, and we can test those by building pure contexts
from $\env$.

Formally, an environment $\env$ is a relation on normal forms which
relates values with values and stuck terms with stuck terms; e.g., the
identity environment $\Id$ is $\{(\tm, \tm) \mmid \tm \mbox{ is a
  normal form}\}$. An environmental relation $\erel$ is a set of
environments $\env$, and triples $(\env, \tmzero, \tmone)$, where
$\tmzero$ and $\tmone$ are closed. We write $\tmzero \ierel \erel \env
\tmone$ as a shorthand for $(\env, \tmzero, \tmone) \in \erel$;
roughly, it means that we test $\tmzero$ and $\tmone$ with the
knowledge~$\env$. The \emph{open extension} of $\erel$, written
$\open\erel$, is defined as follows: if $\vect \varx=\fv \tmzero \cup
\fv\tmone$\footnote{Given a metavariable $m$, we write $\vect m$ for a
  set of entities denoted by $m$.}, then we write $\tmzero
\open{\ierel \erel \env} \tmone$ if $\lam {\vect \varx} \tmzero \ierel
\erel \env \lam{\vect \varx} \tmone$.

\begin{definition}
  \label{d:env-bisim}
  A relation $\erel$ is an environmental bisimulation if
  \begin{enumerate}
  \item $\tmzero \ierel \erel \env \tmone$ implies: \label{e:tm}
    \begin{enumerate}
    \item if $\tmzero \redcbv \tmzero'$, then $\tmone \clocbv \tmone'$
      and $\tmzero' \ierel \erel \env \tmone'$; \label{e:tmtau}
    \item if $\tmzero=\valzero$, then $\tmone \clocbv
      \valone$ and $\env \cup \{(\valzero, \valone)\} \in
      \erel$; \label{e:tmval}
    \item if $\tmzero$ is stuck, then $\tmone \clocbv \tmone'$
      with $\tmone'$ stuck, and $\env \cup \{(\tmzero,
      \tmone')\} \in \erel$; \label{e:tmstuck}
    \item the converse of the above conditions on $\tmone$;
    \end{enumerate}
  \item $\env \in \erel$ implies: \label{e:env}
    \begin{enumerate}
    \item if $\lam \varx \tmzero \mathrel\env \lam \varx \tmone$ and $\valzero
      \cloctc \env \valone$, then $\subst \tmzero \varx \valzero \ierel \erel \env \subst
      \tmone \varx \valone$;\label{e:envv}
    \item if $\inctx \ctxzero {\shift \vark \tmzero} \mathrel\env \inctx \ctxone
      {\shift \vark \tmone}$ and $\ctx'_0 \clocc \env \ctx'_1$, then $\reset
      {\subst \tmzero \vark {\lam \varx {\reset {\inctx {\ctx'_0}{\inctx
                \ctxzero \varx}}}}} \ierel \erel \env \reset {\subst \tmone
        \vark {\lam \varx {\reset {\inctx {\ctx'_1}{\inctx \ctxone \varx}}}}}$
      for a fresh $\varx$.\label{e:envs}
    \end{enumerate}
  \end{enumerate}
\end{definition}

\emph{Environmental bisimilarity}, written $\bisim$, is the largest
environmental bisimulation. To prove that two terms $\tmzero$ and
$\tmone$ are equivalent, we want to relate them without any predefined
knowledge, i.e., we want to prove that $\tmzero \ierel \bisim
\emptyset \tmone$ holds; we also write $\empbisim$ for $\ierel \bisim
\emptyset$.

The first part of the definition makes the bisimulation game explicit
for $\tmzero$, $\tmone$, while the second part focuses on environments
$\env$. If $\tmzero$ is a normal form, then~$\tmone$ has to evaluate
to a normal form of the same kind, and we extend the environment with
the newly acquired knowledge. We then compare values in $\env$
(clause~(\ref{e:envv})) by applying them to arguments built from
$\env$, as in the
$\lambda$-calculus~\cite{Sangiorgi-al:TOPLAS11}. Similarly, we test
stuck terms in $\env$ by putting them within contexts
$\reset{\ctx'_0}$, $\reset{\ctx'_1}$ built from $\env$
(clause~(\ref{e:envs})) to trigger the capture. This reminds the way
we test values and stuck terms with applicative
bisimilarity~\cite{Biernacki-Lenglet:FOSSACS12}, except that
applicative bisimilarity tests both values or stuck terms with the
same argument or context. Using different entities (as in
Definition~\ref{d:env-bisim}) makes bisimulation proofs harder, but it
simplifies the proof of congruence of the environmental bisimilarity.

\begin{example}
  \label{ex:reset-beta}
  We have $\reset{\app {\lamp \varx \tmzero}{\reset \tmone}} \empbisim
  \app{\lamp \varx {\reset \tmzero}}{\reset \tmone}$, because the relation $\erel =
  \{ (\emptyset,\reset{\app {\lamp \varx \tm}{\reset {\tm'}}}, \app{\lamp \varx
    {\reset \tm}}{\reset {\tm'}}), (\emptyset,\reset{\app {\lamp \varx
      \tm} \val}, \app{\lamp \varx {\reset \tm}} \val)\} \cup \{(\env, \tm, \tm)
  \mmid \env \subseteq \mathord{\Id} \} \cup \{ \env \mmid \env \subseteq
  \mathord{\Id} \}$ is a bisimulation. Indeed, if $\reset
  {\tm'}$ evaluates to $\val$, then $\reset{\app {\lamp \varx \tm}{\reset
      {\tm'}}} \clocbv \reset{\app {\lamp \varx \tm} \val}$ and $\app{\lamp
    \varx {\reset \tm}}{\reset {\tm'}} \clocbv \app{\lamp \varx {\reset \tm}}
  \val$, which both reduce to $\reset {\subst \tm \varx \val}$.
\end{example}

As usual with environmental relations, the candidate relation $\erel$
in the above example could be made simpler with the help of up-to
techniques. 

Definition \ref{d:env-bisim} is written in the small-step style, because each
reduction step from $\tmzero$ has to be matched by $\tmone$. In the big-step
style, we are concerned only with evaluations to normal forms.
\begin{definition}
  \label{d:bs}
  A relation $\erel$ is a big-step environmental bi\-si\-mu\-la\-tion
  if $\tmzero \ierel \erel \env \tmone$ implies:
  \begin{enumerate}
  \item $\tmzero \ierel \erel \env \tmone$ implies: \label{e:bs-tm}
    \begin{enumerate}
    \item if $\tmzero \clocbv \valzero$, then $\tmone \clocbv
      \valone$ and $\env \cup \{(\valzero, \valone)\} \in
      \erel$;\label{e:bs-tmval}
    \item if $\tmzero \clocbv \tmzero'$ with $\tmzero'$ stuck, then
      $\tmone \clocbv \tmone'$, $\tmone'$ stuck, and $\env \cup
      \{(\tmzero', \tmone')\} \in \erel$;\label{e:bs-tmstuck} 
    \item the converse of the above conditions on $\tmone$;
    \end{enumerate}
  \item $\env \in \erel$ implies: \label{e:bs-env}
    \begin{enumerate}
    \item if $\lam \varx \tmzero \mathrel\env \lam \varx \tmone$ and $\valzero
      \cloctc \env \valone$, then $\subst \tmzero \varx \valzero \ierel \erel \env \subst
      \tmone \varx \valone$;\label{e:bs-envv}
    \item if $\inctx \ctxzero {\shift \vark \tmzero} \mathrel\env \inctx \ctxone
      {\shift \vark \tmone}$ and $\ctx'_0 \clocc \env \ctx'_1$, then $\reset
      {\subst \tmzero \vark {\lam \varx {\reset {\inctx {\ctx'_0}{\inctx
                \ctxzero \varx}}}}} \ierel \erel \env \reset {\subst \tmone
        \vark {\lam \varx {\reset {\inctx {\ctx'_1}{\inctx \ctxone \varx}}}}}$
      for a fresh $\varx$.\label{e:bs-envs}
    \end{enumerate}
  \end{enumerate}
\end{definition}

\begin{lemma}
  If $\erel$ is a big-step environmental bisimulation, then $\erel
  \subseteq \mathord{\bisim}$.
\end{lemma}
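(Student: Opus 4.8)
The plan is to show that every big-step environmental bisimulation $\erel$ is contained in some small-step environmental bisimulation, and hence in $\bisim$; the cleanest way is to exhibit a single small-step relation built from $\erel$ and check the clauses of Definition~\ref{d:env-bisim} directly. Since a small-step bisimulation must follow $\tmzero$ step by step, whereas $\erel$ only constrains evaluation to normal forms, I would take the candidate
\[
\erel' \ =\ \erel \ \cup\ \{(\env,\tmzero,\tmone) \mmid \env\in\erel,\ \tmzero\clocbv\tm,\ \tmone\clocbv\tm \text{ for some } \tm,\ \text{and}\ \tmzero\ierel\erel\env\tmone\},
\]
or, more simply, close $\erel$ under the rule "if $\tmzero\ierel\erel\env\tmone$ and $\tmzero\redcbv\tmzero'$, then $(\env,\tmzero',\tmone)\in\erel'$" while keeping all environments of $\erel$. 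The environments component of $\erel'$ is just that of $\erel$, so clause~(\ref{e:env}) of Definition~\ref{d:env-bisim} holds immediately for $\erel'$ because clause~(\ref{e:bs-env}) of Definition~\ref{d:bs} gives exactly the same requirement (the resulting triple lies in $\erel\subseteq\erel'$). The work is therefore entirely in clause~(\ref{e:tm}).

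For clause~(\ref{e:tm}), suppose $(\env,\tmzero,\tmone)\in\erel'$. First reduce to the case where this triple comes from $\erel$: if it was obtained by $n$ small steps from some $\erel$-triple, then by construction $\tmone$ has already reduced (in zero steps, since we never advanced it) and the remaining obligations transfer; the honest way to say this is that I would phrase $\erel'$ so that the first component always carries the "original" $\tmone$ and a residual of the "original" $\tmzero$, and then a triple $(\env,\tmzero',\tmone)\in\erel'$ witnesses an $\erel$-triple $(\env,\tmzero,\tmone)$ with $\tmzero\clocbv\tmzero'$. Now: (a) if $\tmzero'\redcbv\tmzero''$ then $\tmzero\clocbv\tmzero''$, so $(\env,\tmzero'',\tmone)\in\erel'$ and we match with zero steps of $\tmone$; (b) if $\tmzero'=\valzero$ is a value, then $\tmzero\clocbv\valzero$, so by clause~(\ref{e:bs-tmval}) $\tmone\clocbv\valone$ with $\env\cup\{(\valzero,\valone)\}\in\erel\subseteq\erel'$; (c) if $\tmzero'$ is stuck, then $\tmzero\clocbv\tmzero'$ with $\tmzero'$ stuck, so by clause~(\ref{e:bs-tmstuck}) $\tmone\clocbv\tmone'$ stuck with $\env\cup\{(\tmzero',\tmone')\}\in\erel\subseteq\erel'$. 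The converse conditions on $\tmone$ are symmetric, using that $\redcbv$ is deterministic (Lemma~\ref{l:unique-decomp}) so that any reduct of $\tmone$ is "on the way" to whatever normal form $\tmone$ evaluates to, which is exactly what clause~(\ref{e:bs-tm})'s converse predicts.

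The main obstacle — really the only subtle point — is the converse direction in clause~(\ref{e:tmtau}): when $\tmone\redcbv\tmone'$, we must produce $\tmone''$ with $\tmzero'\clocbv\tmone''$... wait, we need $\tmzero\clocbv$ something matched with $\tmone'$, i.e. $(\env,\tmzero',\tmone')\in\erel'$ — here determinism of $\redcbv$ is essential, since it guarantees $\tmone'$ is the unique one-step reduct and lies on the unique reduction path of $\tmone$, so the big-step data about $\tmone$ still applies verbatim to $\tmone'$, and $(\env,\tmzero',\tmone')$ (or its symmetric variant) belongs to $\erel'$ by the same closure rule applied on the right. I would record determinism as the key lemma invoked here and note that without it the reduction of one side would not be "compatible" with the eventual normal-form behaviour. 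Everything else is bookkeeping: $\erel'\supseteq\erel$ gives $\erel\subseteq\bisim$ once $\erel'$ is shown to be an environmental bisimulation, since $\bisim$ is the largest one.
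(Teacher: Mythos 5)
Your proposal is correct, and since the paper states this lemma without proof (it is treated as routine), the right comparison is with the standard argument the authors evidently have in mind --- which is exactly what you give: close the big-step bisimulation under reduction on \emph{both} components, i.e.\ take $\erel' = \{(\env,\tmzero',\tmone') \mmid \tmzero \ierel \erel \env \tmone,\ \tmzero \clocbv \tmzero',\ \tmone \clocbv \tmone'\} \cup \{\env \mmid \env \in \erel\}$, match any small step by zero steps on the other side, and use determinism of $\redcbv$ (via Lemma~\ref{l:unique-decomp}) to transfer the big-step guarantees about $\tmzero$, $\tmone$ to their reducts when a normal form is reached. You correctly isolate determinism as the one substantive ingredient: without it, a reduct of $\tmone$ need not lie on the reduction path to the value or stuck term promised by the big-step clauses. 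Two small presentational points: the first candidate you display (requiring $\tmzero$ and $\tmone$ to reduce to a \emph{common} term $\tm$) is not the relation you end up using and should simply be dropped, since the triples it adds are already in $\erel$ and it does not close under one-sided reduction; and the symmetric closure on the right, which you only concede near the end, should be built into the definition of $\erel'$ from the start, since without it the converse of clause~(\ref{e:tmtau}) fails outright. With the two-sided closure stated up front, every clause of Definition~\ref{d:env-bisim} follows as you describe, and $\erel \subseteq \erel' \subseteq \mathord{\bisim}$.
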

Big-step relations can be more convenient to use when we know the
result of the evaluation, as in Example~\ref{ex:reset-beta}, or as in
the following one.
\begin{example}
  \label{ex:reset-reset}
  We have $\reset{\reset \tm} \empbisim \reset \tm$. Indeed, we can
  show that $\reset{\reset \tm} \clocbv \val$ iff $\reset{\tm} \clocbv
  \val$, therefore $\{(\emptyset, \reset {\reset \tm}, \reset \tm)\}
  \cup \{(\env, \tm, \tm) \mmid \env \subseteq \mathord{\Id} \} \cup
  \{ \env \mmid \env \subseteq \mathord{\Id} \}$ is a big-step
  environmental bisimulation.
\end{example}
We use the following results in the rest of the paper.
\begin{lemma}[Weakening]
  \label{l:weakening}
  If $\tmzero \ebisim \tmone$ and $\env' \subseteq \env$ then $\tmzero
  \ierel \bisim {\env'} \tmone$.
\end{lemma}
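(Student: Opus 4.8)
The statement to prove is: if $\tmzero \ebisim \tmone$ and $\env' \subseteq \env$, then $\tmzero \ierel \bisim {\env'} \tmone$. (Here $\tmzero \ebisim \tmone$ abbreviates $\tmzero \ierel \bisim \env \tmone$.) Since $\bisim$ is the largest environmental bisimulation, it suffices to exhibit an environmental bisimulation $\erel$ that contains $(\env', \tmzero, \tmone)$ whenever $\bisim$ contains $(\env, \tmzero, \tmone)$ for some $\env \supseteq \env'$, and likewise $\env' \in \erel$ whenever $\env \in \bisim$ for some $\env \supseteq \env'$. So the plan is to define
\[
  \erel = \{\, (\env', \tmzero, \tmone) \mmid \exists \env \supseteq \env',\ \tmzero \ierel \bisim \env \tmone \,\} \cup \{\, \env' \mmid \exists \env \supseteq \env',\ \env \in \bisim \,\},
\]
and check that $\erel$ satisfies the two clauses of Definition~\ref{d:env-bisim}. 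The monotonicity we exploit is purely the fact that the context/term closures $\cloctc{(\cdot)}$, $\clocc{(\cdot)}$, and $\open{(\cdot)}$ are monotone in their argument relation: if $\env' \subseteq \env$ then $\valzero \cloctc{\env'} \valone$ implies $\valzero \cloctc{\env} \valone$, and similarly for $\clocc{}$.

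For clause~(\ref{e:tm}): suppose $\tmzero \ierel \erel {\env'} \tmone$ because $\tmzero \ierel \bisim \env \tmone$ with $\env' \subseteq \env$. If $\tmzero \redcbv \tmzero'$, then since $\bisim$ is a bisimulation, $\tmone \clocbv \tmone'$ with $\tmzero' \ierel \bisim \env \tmone'$; hence $\tmzero' \ierel \erel {\env'} \tmone'$, still using the same $\env \supseteq \env'$. If $\tmzero = \valzero$, then $\tmone \clocbv \valone$ and $\env \cup \{(\valzero,\valone)\} \in \bisim$; since $\env' \cup \{(\valzero,\valone)\} \subseteq \env \cup \{(\valzero,\valone)\}$, we get $\env' \cup \{(\valzero,\valone)\} \in \erel$. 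The stuck case is identical, and the converse conditions on $\tmone$ are symmetric. For clause~(\ref{e:env}): suppose $\env' \in \erel$ via $\env \in \bisim$ with $\env' \subseteq \env$. If $\lam\varx\tmzero \mathrel{\env'} \lam\varx\tmone$ and $\valzero \cloctc{\env'} \valone$, then because $\env' \subseteq \env$ we have $\lam\varx\tmzero \mathrel{\env} \lam\varx\tmone$ and $\valzero \cloctc{\env} \valone$, so by clause~(\ref{e:envv}) for $\bisim$, $\subst\tmzero\varx\valzero \ierel \bisim \env \subst\tmone\varx\valone$, which gives $\subst\tmzero\varx\valzero \ierel \erel {\env'} \subst\tmone\varx\valone$. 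The stuck-term clause~(\ref{e:envs}) is handled the same way, again using monotonicity of $\clocc{(\cdot)}$ to lift $\ctx'_0 \clocc{\env'} \ctx'_1$ to $\ctx'_0 \clocc{\env} \ctx'_1$.

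There is no real obstacle here: the argument is a routine "the defining conditions only ever grow the environment or pass to the same environment, and the auxiliary closures are monotone" verification. The one point that deserves a word of care is checking that the two union components of $\erel$ interact correctly — in particular that when a term-triple in $\erel$ produces a normal form, the extended environment it adds lands in the environment-component of $\erel$ — but this is exactly what the computation above shows. Once $\erel$ is seen to be an environmental bisimulation, $\erel \subseteq \bisim$ by coinduction, and since $(\env',\tmzero,\tmone) \in \erel$ by construction, we conclude $\tmzero \ierel \bisim {\env'} \tmone$.
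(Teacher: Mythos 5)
Your proof is correct and matches the paper's approach: the paper simply defers to the standard argument of Sangiorgi et al.\ for this lemma, and that argument is exactly the coinductive one you give, based on the monotonicity of the closures $\cloctc{(\cdot)}$ and $\clocc{(\cdot)}$ and the fact that the bisimulation clauses only ever enlarge the environment.
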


A smaller environment is a weaker constraint, because we can build less
arguments and contexts to test the normal forms in $\env$. The proof is as
in~\cite{Sangiorgi-al:TOPLAS11}. Lemma~\ref{l:redcbv-in-empbisim} states that
reduction (and therefore, evaluation) is included in~$\empbisim$.
\begin{lemma}
  \label{l:redcbv-in-empbisim}
  If $\tmzero \redcbv \tmzero'$, then $\tmzero \empbisim \tmzero'$.
\end{lemma}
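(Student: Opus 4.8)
The plan is to exhibit a big-step environmental bisimulation $\erel$ containing the triple $(\emptyset, \tmzero, \tmzero')$, and then to conclude $\tmzero \empbisim \tmzero'$ via the lemma stating that every big-step environmental bisimulation is contained in $\bisim$. Since the statement concerns $\empbisim$, the term $\tmzero$ is closed, hence so is $\tmzero'$ (reduction preserves closedness). The candidate I would use is $\erel = \{(\emptyset, \tmzero, \tmzero')\} \cup \erel_0$, where $\erel_0 = \{(\env, \tm, \tm) \mmid \env \subseteq \Id\} \cup \{\env \mmid \env \subseteq \Id\}$ is precisely the ``diagonal'' building block already employed in Examples~\ref{ex:reset-beta} and~\ref{ex:reset-reset}.

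First I would check that $\erel_0$ on its own is a big-step environmental bisimulation -- something implicitly relied on in those examples. For a triple $(\env,\tm,\tm)$ with $\env\subseteq\Id$, any evaluation of $\tm$ to a value (resp.\ to a stuck term) is matched by the very same evaluation of the other copy of $\tm$, and extending $\env$ with the resulting diagonal pair keeps it inside $\Id$, hence in $\erel_0$. For an environment $\env\subseteq\Id$, clause~(\ref{e:bs-envv}) requires noting that $\cloct\Id$ is the identity relation on terms (an easy induction over the rules defining the term closure), so $\cloct\env$ and $\cloctc\env$ relate only identical entities; thus two values related by $\env$ are equal, and two closed values related by $\cloctc\env$ are equal, which makes the pair of substituted bodies demanded by~(\ref{e:bs-envv}) diagonal, hence a triple of $\erel_0$. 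Clause~(\ref{e:bs-envs}) is discharged the same way, additionally using that $\clocc\Id$ is the identity on (evaluation) contexts.

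The only genuinely new point is the triple $(\emptyset,\tmzero,\tmzero')$, and the key step there is a determinism argument. Since $\tmzero\redcbv\tmzero'$, the term $\tmzero$ is not a normal form, so every reduction sequence $\tmzero\clocbv\tm''$ ending in a normal form $\tm''$ takes at least one step; by the unique-decomposition property (Lemma~\ref{l:unique-decomp}), which makes $\redcbv$ deterministic, that first step must be $\tmzero\redcbv\tmzero'$, whence $\tmzero'\clocbv\tm''$. Conversely, $\tmzero'\clocbv\tm''$ immediately yields $\tmzero\redcbv\tmzero'\clocbv\tm''$. Hence $\tmzero$ and $\tmzero'$ evaluate to exactly the same normal forms, necessarily of the same kind, and clauses~(\ref{e:bs-tmval}) and~(\ref{e:bs-tmstuck}) (together with their converses) for this triple are satisfied by matching each evaluation of one term with the identical evaluation of the other and extending $\emptyset$ with a diagonal pair $(\tm'',\tm'')\subseteq\Id$, which lands back in $\erel$.

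I do not anticipate a real obstacle here: once the candidate is guessed, everything is routine. The only things to be careful about are the bookkeeping that $\erel_0$ is closed under all clauses of Definition~\ref{d:bs} -- in particular the fact that the context closures of $\Id$ are the identity -- and the determinism step for the new triple, both of which follow directly from Lemma~\ref{l:unique-decomp} and the shape of the closure definitions.
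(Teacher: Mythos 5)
Your proof is correct, and in fact the paper gives no proof of this lemma at all (it is stated as routine, and the appendix version for programs just says ``same as for Lemma~\ref{l:redcbv-in-empbisim}''); your argument --- the diagonal candidate relation, the check that the $\Id$-based building block used in Examples~\ref{ex:reset-beta} and~\ref{ex:reset-reset} is itself a big-step bisimulation, and the determinism step via Lemma~\ref{l:unique-decomp} --- is exactly the standard argument the authors leave implicit. No gaps.
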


\subsection{Soundness and Completeness}
\label{s:soundness-completeness}

We now prove soundness and completeness of $\empbisim$
w.r.t. contextual equivalence. Because the proofs follow the same
steps as for the $\lambda$-calculus~\cite{Sangiorgi-al:TOPLAS11}, we
only give here the main lemmas and sketch their proofs. The complete
proofs can be found in Appendix~\ref{a:general}. First, we need some
basic up-to techniques, namely up-to environment (which allows bigger
environments in the bisimulation clauses) and up-to bisimilarity
(which allows for limited uses of $\empbisim$ in the bisimulation
clauses), whose definitions and proofs of soundness are
classic~\cite{Sangiorgi-al:TOPLAS11}.

With these tools, we can prove that $\empbisim$ is sound and complete
w.r.t. contextual equivalence. For a relation~$\rel$ on terms, we
write $\rvals \rel$ for its restriction to closed normal forms. The
first step consists in proving congruence for normal forms, and also
for any terms but only w.r.t. evaluation contexts.

\begin{lemma}
  \label{l:cong-val-main}
  Let $\tmzero$, $\tmone$ be normal forms. If $\tmzero \ebisim
  \tmone$, then $\inctx \cctx \tmzero \ierel \bisim \env \inctx \cctx
  \tmone$.
\end{lemma}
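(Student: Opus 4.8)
The plan is to exhibit an environmental relation $\erel$ that (i) contains all triples $(\env, \inctx \cctx \tmzero, \inctx \cctx \tmone)$ with $\tmzero \ebisim \tmone$ normal forms and $\cctx$ a closed context, and (ii) is a bisimulation up to the basic techniques (up-to environment, up-to bisimilarity) mentioned just before the statement. Concretely, I would take $\erel$ to be the set of all environments $\env$ that are "generated by $\bisim$" — i.e., $\env$ is obtained from some $\env_0$ with $\env_0 \subseteq {\bisim}$ by closing the value side under $\cloctc{\cdot}$ and the stuck side under $\clocc{\cdot}$ — together with all triples $(\env, \inctx \cctx \tmzero, \inctx \cctx \tmone)$ where $\env \in \erel$, $\cctx$ is a closed context built from $\env$ (each hole-free sub-term / sub-context of $\cctx$ coming from a pair in $\env$), and $\tmzero \ierel{\bisim}{\env} \tmone$ (in particular the normal-form case $\tmzero \ebisim \tmone$ from the statement, via Weakening, Lemma~\ref{l:weakening}). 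I would actually prove the stronger statement that this $\erel$ is a bisimulation up-to, which immediately gives $\inctx \cctx \tmzero \ierel{\bisim}{\env} \inctx \cctx \tmone$ for arbitrary $\cctx$ (not just evaluation contexts), and then specialise.

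The verification proceeds by structural induction on the context $\cctx$ (equivalently, on the derivation witnessing that $\cctx$ is built from $\env$). The base case $\cctx = \mtctx$ is immediate: $\inctx \mtctx \tmzero = \tmzero \ebisim \tmone$, and for normal forms $\tmzero$, $\tmone$ the bisimulation clauses (\ref{e:tmval}), (\ref{e:tmstuck}) put the pair into $\env$, landing us in an environment of the required shape. For the inductive cases one analyses the outermost constructor of $\cctx$. When $\inctx \cctx \tmzero$ is itself a value (e.g.\ $\cctx = \lam \varx {\cctx'}$ or $\cctx = \shift \vark {\cctx'}$ viewed inside a reset), clause~(\ref{e:tmval}) adds $(\inctx \cctx \tmzero, \inctx \cctx \tmone)$ to $\env$; the resulting environment is again $\bisim$-generated because when the environment clause (\ref{e:envv}) later substitutes a $\cloctc{\env}$-related argument into the bodies, one reconstructs a context built from the (enlarged) environment with holes again filled by a $\ebisim$-related pair of normal forms — so we stay inside $\erel$, possibly after using up-to environment. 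When $\inctx \cctx \tmzero$ is a redex or reduces, one uses determinism / unique decomposition (Lemma~\ref{l:unique-decomp}) to locate the redex: either it lies entirely within the "known" part coming from $\env$ (then both sides take matching steps by the properties of $\bisim$ on that environment, and we use up-to bisimilarity to absorb the reducts), or it involves the plugged term, in which case $\tmzero$ itself must reduce, or be a value fed to a surrounding application, or be a stuck term whose capture is triggered by a surrounding reset in $\cctx$ — the last two matched via the environment clauses (\ref{e:envv}), (\ref{e:envs}) on the pair $(\tmzero,\tmone)$ together with the fact that the captured pure context is $\clocc{\env}$-related. In each sub-case the continuation is again of the form "context built from environment, hole filled by a $\ierel{\bisim}{\cdot}{}$-related pair", so it lies in $\erel$ up to the basic techniques.

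The main obstacle, as the surrounding text warns, is the stuck-term / capture case: when $\cctx$ supplies an enclosing reset and $\tmzero = \inctx \ctx {\shift \vark {\tm'_0}}$ is stuck, the reduction of $\inctx \cctx \tmzero$ fires rule $\RRshift$, reifying $\ctx$ together with the portion of $\cctx$ up to that reset into a captured continuation, substituted into $\tm'_0$. To match this I must show that this reified context is exactly of the form $\inctx {\ctx'}{\inctx \ctx \varx}$ with $\ctx' \clocc{\env} \ctx'$-related to the corresponding context on the $\tmone$ side, so that environment clause~(\ref{e:envs}) applies — and symmetrically handle the case where the reset that triggers the capture sits inside the unknown term, which is where stuck terms of $\env$ come into play via clause~(\ref{e:envs}). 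Getting the bookkeeping of "which reset is the dynamically nearest one" right, and checking that the substituted captured continuation, once plugged back, still yields a configuration in $\erel$, is the delicate part; everything else is the routine case analysis on the shape of $\cctx$ and the reduction of $\inctx \cctx \tmzero$. Finally, restricting the proven statement to evaluation contexts $\rctx$ (and $\cctx$ in general) and to the normal-form hypothesis of the lemma gives the stated result, with the environment-side congruence folded in for free since $\erel$ already contains the relevant enlarged environments.
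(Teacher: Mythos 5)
Your overall strategy is the paper's: build a candidate relation out of the environment's context closure, verify it as a bisimulation up to environment (and bisimilarity), proceed by induction on how the context is generated, and discharge the capture case through clause~(\ref{e:envs}). The paper likewise proves this lemma simultaneously with Lemma~\ref{l:cong-e-ctx-main} via the relation $\erel = \erel_1 \cup \erel_2 \cup \{\rvals{\cloct\env}\}$ displayed in Section~\ref{s:soundness-completeness}. But there is a genuine gap in how you specify the tested triples. You allow a \emph{single} arbitrary context $\cctx$ built from $\env$ around a pair of terms related by $\ierel\bisim\env$. That relation is not closed under reduction: a $\beta$-step occurring in $\cctx$ can duplicate or erase the hole. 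For instance, with $\cctx = \app{\lamp y{\app y y}}{\mtctx}$ and $\tmzero$, $\tmone$ values, $\inctx\cctx\tmzero \redcbv \app{\tmzero}{\tmzero}$, which is no longer of the form ``one context, one related pair,'' so your bisimulation check gets stuck at the very first clause. The paper avoids this by splitting the relation in two: arbitrary (multi-hole, congruence-closed) contexts are permitted only around pairs \emph{already recorded in the environment}, via $\erel_2 = \{(\rvals{\cloct\env},\tmzero,\tmone) \mmid \tmzero \cloctc\env \tmone\}$, which absorbs duplication by construction; terms merely related by $\ierel\erely\env$ are placed only in \emph{evaluation} contexts ($\erel_1$), where the hole can never be duplicated. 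The lemma for normal forms then follows because clauses~(\ref{e:tmval})/(\ref{e:tmstuck}) let you first add the pair $(\tmzero,\tmone)$ to the environment and only then close under arbitrary contexts.

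A second, related over-reach: you claim your relation ``immediately gives $\inctx\cctx\tmzero \ierel\bisim\env \inctx\cctx\tmone$ for arbitrary $\cctx$'' and arbitrary related terms, i.e.\ full congruence in one pass. The paper deliberately stages this: Lemmas~\ref{l:cong-val-main} and~\ref{l:cong-e-ctx-main} are proved first with bisimulation up to environment only, and the full congruence (Lemma~\ref{l:cong-empbisim-main}) comes afterwards as a separate bisimulation-up-to-bisimilarity argument relying on auxiliary substitution lemmas that, as the paper remarks, cannot be strengthened to avoid the up-to-bisimilarity detour. Your proposal would need the same restructuring: restrict the generating contexts as above, take the congruence closure $\cloct\env$ of the environment rather than single-hole pluggings, and defer the general-context, general-term statement to the later lemma. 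With those corrections the remaining case analysis you sketch (including the bookkeeping for which reset is captured) is the right one.
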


\begin{lemma}
  \label{l:cong-e-ctx-main}
  If $\tmzero \ebisim \tmone$, then $\inctx \rctx \tmzero \ierel
  \bisim \env \inctx \rctx \tmone$.
\end{lemma}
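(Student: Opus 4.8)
The plan is to derive this lemma from Lemma~\ref{l:cong-val-main}, following the standard environmental-bisimulation argument. The idea is that an arbitrary term $\tm$ behaves, from the point of view of an observing evaluation context, exactly like the normal form it evaluates to (if any), and Lemma~\ref{l:redcbv-in-empbisim} together with the first clause of Definition~\ref{d:env-bisim} lets us replace $\tm$ by that normal form up to $\empbisim$. Concretely, I would set up a candidate relation
\[
\erel = \{(\env, \inctx \rctx {\tmzero}, \inctx \rctx {\tmone}) \mmid \tmzero \ierel \bisim \env \tmone,\ \rctx \text{ closed}\} \cup \mathord{\bisim}
\]
and prove it is an environmental bisimulation up to the basic up-to techniques (up-to environment and up-to bisimilarity) mentioned just before the statement. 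Since $\bisim$ itself is already included, the only new work is in the triples of the first form.

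First I would play the bisimulation game on $\inctx \rctx {\tmzero}$. If $\tmzero$ is not a normal form, then by Lemma~\ref{l:unique-decomp} its reduction drives the reduction of $\inctx \rctx {\tmzero}$ (using that $\redcbv$ is compatible with evaluation contexts), so a step $\inctx \rctx {\tmzero} \redcbv \inctx \rctx {\tmzero'}$ is matched by $\tmzero \redcbv \tmzero'$, hence $\tmone \clocbv \tmone'$ with $\tmzero' \ierel \bisim \env \tmone'$, giving $\inctx \rctx {\tmzero} \redcbv \inctx \rctx {\tmzero'} \mathrel{\erel_\env} \inctx \rctx {\tmone'} \mathrel{{}^*\!\!\reflectbox{$\redcbv$}} \inctx \rctx {\tmone}$ — more precisely, matched up to bisimilarity via Lemma~\ref{l:redcbv-in-empbisim}. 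If instead $\tmzero$ is a normal form, then $\tmone \clocbv \tmone'$ with $\tmone'$ a normal form of the same kind and $\env \cup \{(\tmzero,\tmone')\} \in \bisim$; by Lemma~\ref{l:cong-val-main} we get $\inctx \rctx {\tmzero} \mathrel{\ierel\bisim{\env\cup\{(\tmzero,\tmone')\}}} \inctx \rctx {\tmone'}$, and then $\inctx \rctx {\tmone'} \mathrel{{}^*\!\!\reflectbox{$\redcbv$}} \inctx \rctx {\tmone}$ closes the case using up-to bisimilarity (and Lemma~\ref{l:weakening} to shrink the environment back to $\env$ if needed). The environment clauses~(\ref{e:envv}) and~(\ref{e:envs}) of Definition~\ref{d:env-bisim} are inherited directly, since the environments appearing in $\erel$ are exactly those of $\bisim$.

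The main obstacle is not any single case but the bookkeeping around the up-to techniques: one must check that composing "up-to bisimilarity" with the candidate relation stays within the reach of the soundness theorems for those techniques, and that the normal-form case genuinely reduces to Lemma~\ref{l:cong-val-main} rather than requiring a stronger congruence statement for arbitrary contexts (which is precisely what we do \emph{not} yet have at this point). A secondary subtlety is the stuck-term subcase: when $\tmzero$ is stuck, $\inctx \rctx {\tmzero}$ need not be stuck (the enclosing $\rctx$ may supply a reset and trigger the capture), so it is essential that we route through Lemma~\ref{l:cong-val-main}, which already handles embedding a stuck term into an arbitrary context $\cctx$, rather than trying to argue directly about the shape of $\inctx \rctx {\tmzero}$.
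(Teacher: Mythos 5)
Your derivation is correct as a reduction of Lemma~\ref{l:cong-e-ctx-main} to Lemma~\ref{l:cong-val-main}, but it takes a genuinely different route from the paper, and the difference is instructive. The paper does \emph{not} prove Lemma~\ref{l:cong-val-main} first: the two lemmas are proved \emph{simultaneously}, by showing that the single relation $\erel_1 \cup \erel_2 \cup \{\rvals{\cloct\env} \mmid \env \in \mathord{\erely}\}$, with $\erel_1 = \{(\rvals{\cloct\env}, \inctx \rctxzero \tmzero, \inctx \rctxone \tmone) \mmid \tmzero \ierel \erely \env \tmone,\ \rctxzero \clocc \env \rctxone\}$ and $\erel_2 = \{(\rvals{\cloct\env}, \tmzero, \tmone) \mmid \tmzero \cloctc \env \tmone\}$, is a bisimulation up to environment. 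This mutual dependence is unavoidable: placing a normal form in an arbitrary $\cctx$ and letting it reduce (e.g.\ a $\beta$-step discharged by clause~(\ref{e:envv})) produces a pair of $\erely$-related terms sitting inside two \emph{different but $\clocc\env$-related} evaluation contexts --- exactly the $\erel_1$ shape. So your argument is a valid alternative proof of this lemma only \emph{after} the combined proof of Lemma~\ref{l:cong-val-main} has been carried out; as a self-contained strategy it hides where the real work lies. Two smaller points of glue you elide: before applying Lemma~\ref{l:cong-val-main} to the normal forms stored in the environment you need the (easy) observation that $(\valzero,\valone)\in\env\in\mathord{\bisim}$ yields the triple $\valzero \ierel \bisim \env \valone$; and in the converse clauses, when $\tmone$ passes through intermediate terms on its way to $\valone$, your candidate relation as written does not contain those pairs, so you must close $\ierel\bisim\env$ under reduction of the right component or invoke up-to bisimilarity there too (the paper avoids this by using only up-to environment). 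Your remark on the stuck-term subcase --- that $\inctx\rctx\tmzero$ need not be stuck when $\tmzero$ is, forcing a detour through the normal-form congruence --- is exactly the phenomenon the paper's $\erel_2$ handles in its $\reset{\tmzero'}$ case.
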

Lemmas~\ref{l:cong-val-main} and~\ref{l:cong-e-ctx-main} are proved
simultaneously by showing that, for any environmental bisimulation
$\erely$, the relation
\begin{multline*}
  \erel = \{(\rvals {\cloct \env}, \inctx \rctxzero \tmzero, \inctx
  \rctxone \tmone) \mmid \tmzero \ierel \erely \env \tmone, \rctxzero
  \clocc \env \rctxone\} \\ \cup \{(\rvals {\cloct \env}, \tmzero,
  \tmone) \mmid \env \in \mathord{\erely}, \tmzero \cloctc \env \tmone \}
  \cup \{\rvals{\cloct \env} \mmid \env \in \mathord{\erely} \}
\end{multline*}
is a bisimulation up-to environment. Informally, the elements of the
first set of $\erel$ reduce to elements of the second set of $\erel$,
and we then prove the bisimulation property for these elements by
induction on $\tmzero \mathrel{\cloctc \env} \tmone$. We can then
prove the main congruence lemma.
\begin{lemma}
  \label{l:cong-empbisim-main}
  $\tmzero \empbisim \tmone$ implies $\inctx \cctx \tmzero \ierel
  \bisim {\rvals {\cloct \empbisim}} \inctx \cctx \tmone$.
\end{lemma}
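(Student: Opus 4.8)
The plan is to exhibit a single candidate environmental relation $\erel$, to prove it is an environmental bisimulation up to bisimilarity (so that $\erel \subseteq \mathord{\bisim}$), and to read the statement off it. The first, purely syntactic, observation is that $\tmzero \empbisim \tmone$ implies $\inctx\cctx\tmzero \mathrel{\cloct\empbisim} \inctx\cctx\tmone$ for every context $\cctx$ closing both terms: this is an immediate induction on $\cctx$, using the base rule of $\cloct$ at the hole, the rule $\varx \cloct\rel \varx$ at the (necessarily bound) variable leaves of $\cctx$, and the congruence rules of $\cloct$ elsewhere; the same induction also shows $\tm \mathrel{\cloct\empbisim} \tm$ for every term $\tm$. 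It therefore suffices to prove $\erel \subseteq \mathord{\bisim}$ for
\[
  \erel \;=\; \{(\rvals{\cloct\empbisim},\, \tmzero,\, \tmone) \mid \tmzero \mathrel{\cloct\empbisim} \tmone,\ \tmzero, \tmone \text{ closed}\} \;\cup\; \{\rvals{\cloct\empbisim}\},
\]
since instantiating $\tmzero, \tmone$ with $\inctx\cctx\tmzero, \inctx\cctx\tmone$ then gives $\inctx\cctx\tmzero \ierel\bisim{\rvals{\cloct\empbisim}} \inctx\cctx\tmone$. The environment of every triple in $\erel$ is the single relation $\rvals{\cloct\empbisim}$, which already contains every $\cloct\empbisim$-related pair of closed normal forms, so the environment need never be enlarged; and since the target environment is $\rvals{\cloct\empbisim}$ itself, no weakening is needed. (One checks along the way that $\rvals{\cloct\empbisim}$ is a legitimate environment, i.e.\ that it relates values to values and stuck terms to stuck terms; this follows from the analysis of $\cloct\empbisim$-derivations and the fact that $\empbisim$ is a bisimulation.)

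The heart of the proof is a pair of \emph{matching statements}, proved simultaneously by induction on the size of the left-hand term, and relying only on the congruence properties already in hand (Lemmas~\ref{l:cong-val-main} and~\ref{l:cong-e-ctx-main}), on Lemma~\ref{l:redcbv-in-empbisim}, on Lemmas~\ref{l:stuck} and~\ref{l:unique-decomp}, and on the fact that $\empbisim$ is a bisimulation. The first statement: if $\tmzero \mathrel{\cloct\empbisim} \tmone$ are closed and $\tmzero$ is a normal form, then $\tmone \clocbv \tmone'$ with $\tmone'$ a normal form of the same kind and $\tmzero \mathrel{\cloct\empbisim} \tmone'$. The second: if moreover $\tmzero$ is not a normal form and $\tmzero \redcbv \tmzero_1$, then $\tmone \clocbv \tmone'$ and there are closed terms $a, b$ with $\tmzero_1 \empbisim a$, $a \mathrel{\cloct\empbisim} b$, $b \empbisim \tmone'$. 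Both are proved by inspecting the last rule of $\tmzero \mathrel{\cloct\empbisim} \tmone$: in the base case one plays the bisimulation game on $\tmzero \empbisim \tmone$ directly; in a congruence case one follows the call-by-value strategy inside $\tmzero$, applies the induction hypotheses to the relevant strict subterms, and reassembles, using Lemmas~\ref{l:cong-val-main}/\ref{l:cong-e-ctx-main} to push the residual $\empbisim$-steps back through the enclosing evaluation frames. The point that keeps the argument from ever opening up the environment clauses of $\bisim$ is that one never matches a $\tmone$-reduct against the $\tmzero$-reduct: whenever $\tmzero$ fires a redex involving an $\empbisim$-leaf (a $\RRbeta$ whose operator is such a leaf, or a $\RRshift$ whose captured stuck term descends from one), the \emph{un-reduced} form on the $\tmone$ side is still $\cloct\empbisim$-related to the un-reduced form on the $\tmzero$ side, hence an $\erel$-related pair, while by Lemma~\ref{l:redcbv-in-empbisim} each side's reduct is $\empbisim$-equivalent to its un-reduced form, so composition does the job. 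Granting the two matching statements, $\erel$ is an environmental bisimulation up to bisimilarity: clause~(\ref{e:tmtau}) is the second statement; clauses~(\ref{e:tmval}) and~(\ref{e:tmstuck}) are the first statement, the freshly acquired normal-form pair already lying in $\rvals{\cloct\empbisim}$; and the environment clauses~(\ref{e:envv}) and~(\ref{e:envs}) are discharged by the same ``un-reduce'' device, e.g.\ $\subst\tmzero\varx\valzero \empbisim \app{\lamp\varx\tmzero}{\valzero}$, the latter being $\cloct\empbisim$-related to $\app{\lamp\varx\tmone}{\valone}$ as soon as $\lam\varx\tmzero \mathrel{\cloct\empbisim} \lam\varx\tmone$ and $\valzero \mathrel{\cloct\empbisim} \valone$ (which follow from $\rvals{\cloct\empbisim} \subseteq \mathord{\cloct\empbisim}$ and idempotence of the closure), and finally $\app{\lamp\varx\tmone}{\valone} \empbisim \subst\tmone\varx\valone$. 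Soundness of up to bisimilarity (and, if convenient, up to environment) then yields $\erel \subseteq \mathord{\bisim}$, and the lemma follows.

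I expect the main obstacle to be the $\RRshift$ case of the second matching statement, where delimited control departs from the plain call-by-value $\lambda$-calculus of~\cite{Sangiorgi-al:TOPLAS11}. When $\tmzero$, inside some evaluation context, has the shape $\reset{\inctx\ctx{\shift\vark\tm}}$, the captured pure context $\ctx$ may cut through several layers of the $\cloct\empbisim$-derivation of $\tmzero \mathrel{\cloct\empbisim} \tmone$; before the corresponding $\RRshift$-redex can be exposed on the $\tmone$ side one must first drive the $\empbisim$-leaves sitting in those layers down to normal forms, which is where the first matching statement is invoked and where Lemma~\ref{l:stuck} is needed (a bisimilar evolution of a stuck term is again a $\shift$ trapped in a pure context). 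Moreover $\RRshift$ duplicates the captured pure context --- hence any $\empbisim$-leaves inside it --- and installs a fresh delimited binder $\lam\varx{\reset{\inctx\ctx\varx}}$, and one has to check that the two residuals are still related, up to $\empbisim$ on both sides, by a $\cloct\empbisim$-pair; it is exactly this duplication that forces the candidate relation to be closed under the full term closure $\cloct\empbisim$ rather than merely under a single context around a single $\empbisim$-pair. Once this bookkeeping for stuck terms and delimited captures is in place, the remaining cases are routine adaptations of the argument for the $\lambda$-calculus.
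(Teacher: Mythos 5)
Your overall architecture coincides with the paper's: the same candidate relation $\erel = \{(\rvals{\cloct\empbisim}, \tmzero, \tmone) \mmid \tmzero \cloctc\empbisim \tmone\} \cup \{\rvals{\cloct\empbisim}\}$, shown to be a bisimulation up to bisimilarity by analysing the derivation of $\tmzero \cloctc\empbisim \tmone$, with the evaluation-context congruence lemmas used to reassemble residuals. The gap is in your second matching statement and the ``un-reduce'' device that drives it. You match $\tmzero \redcbv \tmzero'$ by producing $a$, $b$ with $\tmzero' \empbisim a \cloctc\empbisim b \empbisim \tmone'$, i.e.\ you compose with $\empbisim$ on the \emph{left} of the candidate after a small-step move. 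Clause~(\ref{e:utbisim-tmtau}) of Definition~\ref{d:utbisim} only permits $\tmzero' \ierel\erel\env\empbisim\tmone'$, with $\empbisim$ on the right, and the paper explicitly remarks that one cannot compose on the left in that clause. This is not pedantry: small-step bisimulation up to bisimilarity with a left leg is unsound in general, precisely because the left leg can undo the progress made by the reduction step --- which is exactly what relating $\tmzero'$ back to its un-reduced form does. So the up-to technique your argument actually needs is not the one whose soundness the paper (or your proposal) establishes.

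The paper avoids this by genuinely contracting the redex on both sides and proving dedicated substitution lemmas (Lemmas~\ref{l:subst-val} and~\ref{l:subst-shift}): if $\lam\varx\tmzero \cloctc\empbisim\empbisim\lam\varx\tmone$ and $\valzero \cloctc\empbisim\empbisim\valone$, then $\subst\tmzero\varx\valzero \cloctc\empbisim\empbisim \subst\tmone\varx\valone$ --- the residual $\empbisim$ kept strictly on the right --- and analogously for the contractum of an $\RRshift$ step, where the captured pure context is handled by an inner induction on its $\clocc\empbisim$ decomposition. Your proof needs either these lemmas in place of the un-reduce device, or a switch to bisimulation up to expansion for the left leg (legitimate, since $\tmzero \redcbv \tmzero'$ is an expansion and the paper notes up-to-expansion is unproblematic here); you supply neither. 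Note that left composition \emph{is} licensed in the environment clauses~(\ref{e:utbisim-envv}) and~(\ref{e:utbisim-envs}), so that part of your argument, and your first matching statement for normal forms, stand as written.
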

We show that $\{ (\rvals {\cloct \empbisim}, \tmzero, \tmone) \mmid
\tmzero \cloctc \empbisim \tmone\} \cup \{ \rvals {\cloct \empbisim}
\}$ is a bisimulation up-to bisimilarity by induction on $\tmzero
\cloctc \empbisim \tmone$. By weakening (Lemma \ref{l:weakening}), we
can deduce from Lemma \ref{l:cong-empbisim-main} that $\empbisim$ is a
congruence, and therefore is sound w.r.t. $\ctxequiv$.

\begin{corollary}[Soundness]
  \label{c:soundness-main}
  We have $\mathord{\empbisim} \subseteq \mathord{\ctxequiv}$.
\end{corollary}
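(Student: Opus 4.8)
The plan is to obtain soundness in two steps: first upgrade Lemma~\ref{l:cong-empbisim-main} to full congruence of $\empbisim$, and then read contextual equivalence directly off the bisimulation game played in the empty environment. For the congruence step, suppose $\tmzero \empbisim \tmone$ and let $\cctx$ be any context such that $\inctx \cctx \tmzero$ and $\inctx \cctx \tmone$ are closed. Lemma~\ref{l:cong-empbisim-main} gives $\inctx \cctx \tmzero \ierel \bisim {\rvals {\cloct \empbisim}} \inctx \cctx \tmone$, and since $\emptyset \subseteq \rvals {\cloct \empbisim}$, Weakening (Lemma~\ref{l:weakening}) yields $\inctx \cctx \tmzero \empbisim \inctx \cctx \tmone$. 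Hence $\empbisim$ is preserved by every closing context.

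For the second step, I would fix such a $\cctx$ and unfold Definition~\ref{d:env-bisim} starting from $\inctx \cctx \tmzero \empbisim \inctx \cctx \tmone$. Assume, say, $\inctx \cctx \tmzero \evalcbv \valzero$; by definition of $\evalcbv$ there is a reduction sequence $\inctx \cctx \tmzero \redcbv \dots \redcbv \valzero$ of some finite length. Arguing by induction on that length and repeatedly applying clause~(\ref{e:tmtau}) — which leaves the environment unchanged — one obtains a term $\tmone^\star$ with $\inctx \cctx \tmone \clocbv \tmone^\star$ and $\valzero \empbisim \tmone^\star$, still in the empty environment. Applying clause~(\ref{e:tmval}) to the value $\valzero$ then gives $\tmone^\star \clocbv \valone$, so $\inctx \cctx \tmone \clocbv \valone$, and since $\valone$ is a value (hence irreducible) we conclude $\inctx \cctx \tmone \evalcbv \valone$. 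The case where $\inctx \cctx \tmzero$ evaluates to a stuck term is handled identically, using clause~(\ref{e:tmstuck}) instead of clause~(\ref{e:tmval}), with Lemma~\ref{l:stuck} confirming the result is a normal form; the two converse implications of Definition~\ref{d:context} follow symmetrically from the converse conditions in clause~(\ref{e:tm}). This establishes $\tmzero \ctxequiv \tmone$, i.e.\ $\mathord{\empbisim} \subseteq \mathord{\ctxequiv}$.

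There is essentially no hard part left at this stage — all the genuine difficulty (closing the candidate relation under arbitrary contexts via the up-to-environment and up-to-bisimilarity techniques) has already been absorbed into Lemmas~\ref{l:cong-val-main}--\ref{l:cong-empbisim-main}. The only point requiring a little care is bookkeeping: one must observe that the small-step matching of clause~(\ref{e:tmtau}) never enlarges the environment, so that the empty-environment triple persists along the whole matching sequence and clauses~(\ref{e:tmval}) and~(\ref{e:tmstuck}) remain applicable at the end, and that for a value or a stuck term ``$\clocbv$ to a normal form of the expected kind'' coincides with ``$\evalcbv$ to such a normal form'' because such terms cannot be reduced further.
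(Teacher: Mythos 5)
Your proposal is correct and follows essentially the same route as the paper: congruence is obtained from Lemma~\ref{l:cong-empbisim-main} together with weakening (Lemma~\ref{l:weakening}), and soundness then follows because the observables of the bisimulation game (evaluation to a value, evaluation to a stuck term) coincide with those of Definition~\ref{d:context}. The paper merely states this last step in one line, whereas you spell out the induction on the reduction sequence using clauses~(\ref{e:tmtau}), (\ref{e:tmval}) and~(\ref{e:tmstuck}); the content is the same.
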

The relation $\empbisim$ is also complete w.r.t. contextual
equivalence.
\begin{theorem}[Completeness]
  We have $\mathord{\ctxequiv} \subseteq \mathord{\empbisim}$.
\end{theorem}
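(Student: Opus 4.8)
The plan is to promote $\ctxequiv$ to a big-step environmental bisimulation (Definition~\ref{d:bs}) and then descend to the empty environment with weakening. Write $\rvals\ctxequiv$ for the restriction of $\ctxequiv$ to closed normal forms. Instantiating the definition of $\ctxequiv$ with the empty context, if $\tmzero \ctxequiv \tmone$ and $\tmzero$ is a value (resp.\ stuck) then $\tmone$ evaluates to a value (resp.\ to a stuck term), and symmetrically; in particular $\rvals\ctxequiv$ relates values only with values and stuck terms only with stuck terms, hence is a legitimate environment. I would take as candidate
\[
  \erel = \{(\rvals\ctxequiv, \tmzero, \tmone) \mmid \tmzero \ctxequiv \tmone\} \cup \{\rvals\ctxequiv\},
\]
prove it is a big-step environmental bisimulation, conclude $\erel \subseteq \mathord{\bisim}$, and note that any $\tmzero \ctxequiv \tmone$ then satisfies $\tmzero \ierel\bisim{\rvals\ctxequiv}\tmone$, so Lemma~\ref{l:weakening} applied with $\emptyset \subseteq \rvals\ctxequiv$ yields $\tmzero \empbisim \tmone$.

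Before checking the clauses I would record three facts: (i) $\mathord{\redcbv} \subseteq \mathord{\ctxequiv}$, immediate from Lemma~\ref{l:redcbv-in-empbisim} and Corollary~\ref{c:soundness-main}; (ii) $\ctxequiv$ is a congruence, immediate from its definition since contexts compose; and (iii), obtained from (ii) plus reflexivity and transitivity of $\ctxequiv$ by induction on the closure derivations, that $\valzero \cloctc{\rvals\ctxequiv} \valone$ implies $\valzero \ctxequiv \valone$ and that $\ctx'_0 \clocc{\rvals\ctxequiv} \ctx'_1$ implies $\inctx{\ctx'_0}{\tm'} \ctxequiv \inctx{\ctx'_1}{\tm'}$ for every closed $\tm'$ making both sides closed.

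With these in hand, the clauses of Definition~\ref{d:bs} become routine, following the $\lambda$-calculus argument of~\cite{Sangiorgi-al:TOPLAS11}. For a triple $(\rvals\ctxequiv, \tmzero, \tmone)$: if $\tmzero \clocbv \valzero$, the empty-context test gives $\tmone \clocbv \valone$, and $\valzero \ctxequiv \tmzero \ctxequiv \tmone \ctxequiv \valone$ by (i), so the extended environment is again $\rvals\ctxequiv \in \erel$; the stuck case and the two converse clauses are symmetric. For the environment $\rvals\ctxequiv$: given $\lam\varx\tmzero \ctxequiv \lam\varx\tmone$ and $\valzero \cloctc{\rvals\ctxequiv} \valone$, fact (iii) gives $\valzero \ctxequiv \valone$, (ii) gives $\app{\lamp\varx\tmone}\valzero \ctxequiv \app{\lamp\varx\tmone}\valone$, and (i) on the two $\RRbeta$-redexes gives $\subst\tmzero\varx\valzero \ctxequiv \app{\lamp\varx\tmzero}\valzero$ and $\app{\lamp\varx\tmone}\valone \ctxequiv \subst\tmone\varx\valone$, so $\subst\tmzero\varx\valzero \ctxequiv \subst\tmone\varx\valone$; given stuck terms $\inctx\ctxzero{\shift\vark\tmzero} \ctxequiv \inctx\ctxone{\shift\vark\tmone}$ and $\ctx'_0 \clocc{\rvals\ctxequiv} \ctx'_1$, I would chain, by (ii), (iii), and (ii) again, $\reset{\inctx{\ctx'_0}{\inctx\ctxzero{\shift\vark\tmzero}}} \ctxequiv \reset{\inctx{\ctx'_0}{\inctx\ctxone{\shift\vark\tmone}}} \ctxequiv \reset{\inctx{\ctx'_1}{\inctx\ctxone{\shift\vark\tmone}}}$, and then observe that, composites of pure contexts being pure, the two outer terms are $\RRshift$-redexes that reduce, by (i), to precisely the two terms demanded by clause~(\ref{e:bs-envs}). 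I do not expect a genuine obstacle here: the proof needs no idea beyond the $\lambda$-calculus case, and the only points that require attention are choosing this candidate, checking that $\rvals\ctxequiv$ never relates a value to a stuck term (which forces the empty-context test to be used in both directions), and making the $\RRshift$ step in clause~(\ref{e:bs-envs}) line up syntactically with the continuation $\lam\varx{\reset{\inctx{\ctx'_0}{\inctx\ctxzero\varx}}}$ that $\RRshift$ produces.
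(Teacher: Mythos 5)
Your proposal is correct and matches the paper's own proof essentially step for step: the same candidate relation $\{(\rvals\ctxequiv, \tmzero, \tmone) \mmid \tmzero \ctxequiv \tmone\} \cup \{\rvals\ctxequiv\}$ shown to be a big-step environmental bisimulation, the same use of $\mathord{\redcbv} \subseteq \mathord{\empbisim} \subseteq \mathord{\ctxequiv}$ (via soundness) to transfer evaluation results and to discharge the $\beta$- and shift-redexes in the environment clauses, and the same congruence argument for the closures. The only cosmetic difference is that you argue congruence directly from context composition where the paper appeals to the Context Lemma as an alternate characterization; this changes nothing of substance.
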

The proof is by showing that $\{(\rvals \ctxequiv, \tmzero, \tmone)
\mmid \tmzero \ctxequiv \tmone\} \cup \{ \rvals \ctxequiv \}$ is a
big-step bisimulation, using Lemma \ref{l:context-lemma} as an
alternate definition for $\ctxequiv$.

\subsection{Bisimulation up to context}
\label{s:up-to-context}

Equivalence proofs based on environmental bisimilarity can be
simplified by using up-to techniques, such as up to reduction, up to
expansion, and up to context~\cite{Sangiorgi-al:TOPLAS11}. We only
discuss the last, since the first two can be defined and proved sound
in $\lamshift$ without issues. Bisimulations up to context may factor
out a common context from the tested terms. Formally, we define the
context closure of $\erel$, written $\clocr\erel$, as follows: we have
$\tmzero \clocr{\ierel \erel \env} \tmone$ if
\begin{itemize}
\item either $\tmzero = \inctx \rctxzero {\tmzero'}$, $\tmone = \inctx
  \rctxone {\tmone'}$, $\tmzero' \ierel \erel \env \tmone'$, and
  $\rctxzero \clocc \env \rctxone$;
\item or $\tmzero \cloctc \env \tmone$.
\end{itemize}
Note that terms $\tmzero'$ and $\tmone'$ (related by $\ierel \erel
\env$) can be put into evaluation contexts only, while normal forms
(related by $\env$) can be put in any contexts. This restriction to
evaluation contexts in the first case is usual in the definition of
up-to context techniques for environmental
relations~\cite{Sangiorgi-al:TOPLAS11,Sumii:TCS10,Sato-Sumii:APLAS09,Pierard-Sumii:LICS12}.

\begin{definition}
  \label{d:utctx}
  A relation $\erel$ is an environmental bisimulation up to context if
  \begin{enumerate}
  \item $\tmzero \ierel \erel \env \tmone$ implies: \label{e:utctx-tm}
    \begin{enumerate}
    \item if $\tmzero \redcbv \tmzero'$, then $\tmone \clocbv \tmone'$
      and $\tmzero' \clocr{\ierel \erel \env}
      \tmone'$; \label{e:utctx-tmtau}
    \item if $\tmzero=\valzero$, then $\tmone \clocbv \valone$ and
      $\env \cup \{(\valzero, \valone)\} \subseteq \mathord{\rvals{\cloct{\env'}}}$ for some
      $\env' \in \erel$ ; \label{e:utctx-tmval}
    \item if $\tmzero$ is stuck, then $\tmone \clocbv \tmone'$
      with $\tmone'$ stuck, and $\env \cup \{(\tmzero,
      \tmone')\} \subseteq \mathord{\rvals{\cloct{\env'}}}$ for some
      $\env' \in \erel$; \label{e:utctx-tmstuck}
    \item the converse of the above conditions on $\tmone$;
    \end{enumerate}
  \item $\env \in \erel$ implies: \label{e:utctx-env}
    \begin{enumerate}
    \item if $\lam \varx \tmzero \mathrel\env \lam \varx \tmone$ and 
       $\valzero \cloctc\env \valone$, then $\subst \tmzero \varx \valzero \clocr{\ierel \erel \env}
      \subst \tmone \varx \valone$; \label{e:utctx-envv}
    \item if $\inctx \ctxzero {\shift \vark \tmzero} \mathrel\env \inctx \ctxone
      {\shift \vark \tmone}$ and $\ctx'_0 \clocc \env \ctx'_1$, then $\reset
      {\subst \tmzero \vark {\lam \varx {\reset {\inctx {\ctx'_0}{\inctx
                \ctxzero \varx}}}}} \clocr{\ierel \erel \env} \reset {\subst
        \tmone \vark {\lam \varx {\reset {\inctx {\ctx'_1}{\inctx \ctxone
                \varx}}}}}$ for a fresh $\varx$.\label{e:utctx-envs}
    \end{enumerate}
  \end{enumerate}
\end{definition}

\begin{lemma}
  \label{l:utctx-soundness}
  If $\erel$ is an environmental bisimulation up to context, then
  $\erel \subseteq \mathord{\bisim}$.
\end{lemma}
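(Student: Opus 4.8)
The standard strategy is to show that if $\erel$ is an environmental bisimulation up to context, then its context closure $\clocr\erel$ — suitably packaged as an environmental relation — is contained in a genuine environmental bisimulation, hence in $\bisim$. Concretely, I would define the candidate relation
\[
\erel' = \{(\rvals{\cloct\env}, \inctx\rctxzero{\tmzero}, \inctx\rctxone{\tmone}) \mmid \tmzero \ierel\erel\env \tmone,\ \rctxzero \clocc\env \rctxone\}
\cup \{(\rvals{\cloct\env}, \tmzero, \tmone) \mmid \env \in \erel,\ \tmzero \cloctc\env \tmone\}
\cup \{\rvals{\cloct\env} \mmid \env \in \erel\},
\]
mirroring the relation used for Lemmas~\ref{l:cong-val-main} and~\ref{l:cong-e-ctx-main}. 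Note that restricting environments to closed normal forms via $\rvals{\cloct{\cdot}}$ is what makes the up-to-context clauses (\ref{e:utctx-tmval}) and (\ref{e:utctx-tmstuck}) line up with membership in $\erel'$: the conditions there say exactly that the extended environment sits inside some $\rvals{\cloct{\env'}}$ with $\env'\in\erel$. I would then prove that $\erel'$ is an environmental bisimulation up to the already-established sound techniques (up-to environment and up-to bisimilarity), or directly an environmental bisimulation, and conclude $\clocr\erel \subseteq \erel' \subseteq \bisim$, which gives $\erel \subseteq \bisim$ since $\erel \subseteq \clocr\erel$.

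The proof proceeds by analyzing the three kinds of triples in $\erel'$. For a triple of the second form, $\tmzero \cloctc\env \tmone$ with $\env\in\erel$, I would argue by induction on the derivation of $\tmzero \cloctc\env \tmone$: either $\tmzero \mathrel{\env} \tmone$ directly (a normal form, handled by clauses~(\ref{e:utctx-envv}) and~(\ref{e:utctx-envs}) of the up-to-context definition together with Lemma~\ref{l:unique-decomp} to locate the redex), or $\tmzero$ and $\tmone$ share a top-level constructor whose immediate subterms are related by $\cloct\env$, so a reduction step on $\tmzero$ happens inside one subterm and the inductive hypothesis applies after re-assembling the context. For a triple of the first form, $\inctx\rctxzero\tmzero \mathrel{\erel'} \inctx\rctxone\tmone$, the key observation is that any reduction of $\inctx\rctxzero\tmzero$ is either internal to the hole term $\tmzero$ (use the bisimulation-up-to-context game for $\tmzero \ierel\erel\env \tmone$, then push the results back under $\rctxzero \clocc\env \rctxone$, landing in $\clocr\erel$ and hence in $\erel'$), or $\tmzero$ is a normal form and the redex involves the surrounding context. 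In the latter case one uses the $\RRbeta$/$\RRshift$/$\RRreset$ rules: a $\beta$-step feeds a $\cloct\env$-related argument into a $\cloct\env$-related body, producing $\cloct\env$-related results by compatibility of substitution with $\cloct\env$; a shift-step captures a pure subcontext of $\rctxzero$, which is precisely what clause~(\ref{e:utctx-envs}) is designed to simulate; a reset-step just erases a delimiter on both sides.

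The main obstacle is the \textbf{stuck-term / shift-capture case} for triples of the first form: when $\tmzero$ is stuck, say $\tmzero = \inctx\ctxzero{\shift\vark{\tmzero''}}$, the evaluation context $\rctxzero$ may or may not supply an enclosing reset, and one must carefully split $\rctxzero$ along its outermost reset (using the grammar of $\rctx$ and Lemma~\ref{l:unique-decomp}) to extract the pure context that gets captured; this captured context is built partly from $\ctxzero$ (inside the environment, via the stuck-term entry of $\env$) and partly from the outer $\rctxzero \clocc\env \rctxone$, so one has to recombine two $\clocc\env$-related pieces into a single one and check it is still $\clocc\env$-related — essentially a closure-composition lemma for $\clocc$ under plugging. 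A secondary subtlety is ensuring freshness of the variable $\varx$ introduced in clauses~(\ref{e:utctx-envs}) is handled uniformly and that the open-extension bookkeeping (the $\open{\erel}$ convention) does not interfere. These are the same difficulties encountered in the congruence proof of Lemmas~\ref{l:cong-val-main}--\ref{l:cong-e-ctx-main}, so I expect to reuse those auxiliary lemmas about $\cloct\env$, $\clocc\env$, and substitution almost verbatim, with the only genuinely new content being the "up to context" relaxation in the conclusions, which is absorbed by the soundness of up-to environment and up-to bisimilarity.
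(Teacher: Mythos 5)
Your proposal is correct and follows essentially the same route as the paper, which simply defers to the standard soundness argument of Sangiorgi et al.\ \cite{Sangiorgi-al:TOPLAS11}: close $\erel$ under contexts, package the result with environments $\rvals{\cloct\env}$ exactly as in the candidate relation used for Lemmas~\ref{l:cong-val-main} and~\ref{l:cong-e-ctx-main}, and show it is a bisimulation (up to environment), the delicate case being the capture of part of the surrounding evaluation context by a shift. The only point worth making explicit is the final weakening step (Lemma~\ref{l:weakening}) needed to pass from the environment $\rvals{\cloct\env}$ back to $\env$ when concluding $\erel \subseteq \mathord{\bisim}$.
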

The soundness proof is the same as
in~\cite{Sangiorgi-al:TOPLAS11}. While this definition is enough to
simplify proofs in the $\lambda$-calculus case, it is not that helpful
in $\lamshift$, because of the restriction to evaluation contexts
(first item of the definition of $\clocr \erel$). In the
$\lambda$-calculus, when a term $\tm$ reduces
within an evaluation context, the context is not affected, hence
Definition \ref{d:utctx} is enough to help proving interesting
equivalences. It is not the case in $\lamshift$, as (a part of) the
evaluation context can be captured.

Indeed, suppose we want to construct a candidate relation $\erel$ to
prove the $\beta_\Omega$ axiom, i.e., $\inctx \ctx \tm$ is equivalent
to $\app{\lamp \varx {\inctx \ctx \varx}} \tm$, assuming $\varx \notin
\fv \ctx$. The problematic case is when $\tm$ is a stuck term $\inctx
\ctxzero {\shift \vark \tmzero}$; we have to add the stuck terms
$\app{\lamp \varx {\inctx \ctx \varx}}{\inctx \ctxzero {\shift \vark
    \tmzero}}$ and $\inctx \ctx {\inctx \ctxzero {\shift \vark
    \tmzero}}$ to an environment $\env$ of $\erel$. For $\erel$ to be
a bisimulation, we then have to prove that for all $\ctxone \clocc\env
\ctxtwo$, we have $\reset{\subst {\tmzero} \vark {\lam y
    {\reset{\inctx \ctxone {\app {\lamp \varx {\inctx \ctx
              \varx}}{\inctx \ctxzero y}}}}}} \ierel \erel \env
\reset{\subst {\tmzero} \vark {\lam y {\reset{\inctx \ctxtwo {\inctx
          \ctx {\inctx \ctxzero y}}}}}}$. At this point, we would like
to use the up-to context technique, because the subterms $\app {\lamp
  \varx {\inctx \ctx \varx}}{\inctx \ctxzero y}$ and $\inctx \ctx
{\inctx \ctxzero y}$ are similar to the terms we want to relate (they
can be written $\app{\lamp \varx {\inctx \ctx \varx}}{\tm''}$ and
$\inctx \ctx {\tm''}$ with $\tm''=\inctx \ctxzero y$). However, we
have at best $\reset{\subst {\tmzero} \vark {\lam y {\reset{\inctx
        \ctxone {\app {\lamp \varx {\inctx \ctx \varx}}{\inctx
            \ctxzero y}}}}}} \cloct{\open{\ierel \erel \env}}
\reset{\subst {\tmzero} \vark {\lam y {\reset{\inctx \ctxtwo {\inctx
          \ctx {\inctx \ctxzero y}}}}}}$ (and not $\clocr{\ierel \erel
  \env}$), because (i) $\app {\lamp \varx {\inctx \ctx \varx}}{\inctx
  \ctxzero y}$ and $\inctx \ctx {\inctx \ctxzero y}$ are open terms,
and (ii) $\tmzero$ can be any term, so $\app {\lamp \varx {\inctx \ctx
    \varx}}{\inctx \ctxzero y}$ and $\inctx \ctx {\inctx \ctxzero y}$
can be put in any context, not necessarily in an evaluation
one. Therefore, Definition~\ref{d:utctx} cannot help there.

Problem (ii) could be somewhat dealt with in the particular case of
the $\beta_\Omega$ axiom by changing clause (\ref{e:utctx-envs}) of
Definition~\ref{d:utctx} into
\begin{enumerate}
    \item[\textit{(b)}] if $\inctx \ctxzero {\shift \vark \tmzero} \mathrel\env \inctx \ctxone
      {\shift \vark \tmone}$ and $\ctx'_0 \clocc{\ierel \erel \env} \ctx'_1$, then $\reset
      {\subst \tmzero \vark {\lam \varx {\reset {\inctx {\ctx'_0}{\inctx
                \ctxzero \varx}}}}} \cloct{\ierel \erel \env} \reset {\subst
        \tmone \vark {\lam \varx {\reset {\inctx {\ctx'_1}{\inctx \ctxone
                \varx}}}}}$ for a fresh $\varx$.
\end{enumerate}
and similarly for clause (\ref{e:utctx-envv}). In plain text, we build
the testing contexts $\ctx'_0$, $\ctx'_1$ from $\ierel \erel \env$
(instead of $\env$), and the resulting terms have to be in
$\cloct{\ierel \erel \env}$ (without any evaluation context
restriction). The resulting notion of bisimulation up to context is
sound. The new clause would be more difficult to establish in general
than the original one (of Definition~\ref{d:utctx}), because it tests
more pairs of contexts. However, for the $\beta_\Omega$ axiom, we
would have to prove that for all $\ctxone \cloct{\ierel \erel \env}
\ctxtwo$, $\reset{\subst {\tmzero} \vark {\lam y {\reset{\inctx
        \ctxone {\app {\lamp \varx {\inctx \ctx \varx}}{\inctx
            \ctxzero y}}}}}} \cloct{\ierel \erel \env} \reset{\subst
  {\tmzero} \vark {\lam y {\reset{\inctx \ctxtwo {\inctx \ctx {\inctx
            \ctxzero y}}}}}}$ holds; it would be easy, except $\app
       {\lamp \varx {\inctx \ctx \varx}}{\inctx \ctxzero y}$ and
       $\inctx \ctx {\inctx \ctxzero y}$ are open terms (problem (i)).

Problem (i) seems harder to fix, because for $\app {\lamp \varx
  {\inctx \ctx \varx}}{\inctx \ctxzero y} \open{\ierel \erel \env}
\inctx \ctx {\inctx \ctxzero y}$ to hold, we must have $\app {\lamp
  \varx {\inctx \ctx \varx}}{\inctx \ctxzero \valzero} \ierel \erel
\env \inctx \ctx {\inctx \ctxzero \valone}$ for all $\valzero
\cloct\env \valone$. Because $\ctxzero$ can be anything, it means that
we must have $\app {\lamp \varx {\inctx \ctx \varx}}{\tmzero'} \ierel
\erel \env \inctx \ctx {\tmone'}$ with $\tmzero' \cloct\env \tmone'$;
$\tmzero'$ and $\tmone'$ are plugged in different contexts, therefore
bisimulation up to context (which factors out only a common context)
cannot help us there; a new kind of up-to technique is required.

The $\beta_\Omega$ axiom example suggests that we need more powerful
up-to techniques for environmental bisimilarity for delimited control;
we leave these potential improvements as a future work. Note that we
do not have such issues with up-to techniques for normal form
bisimilarity: it relates open terms without having to replace their
free variables, and normal form bisimulation up to context is not
restricted to evaluation contexts only. But even if environmental
bisimulation up to context is not as helpful as wished, it still
simplifies equivalence proofs, as we can see with the next example.

\begin{example}
  In \cite{Danvy-Filinski:DIKU89}, a variant of Turing's call-by-value fixed
  point combinators using shift and reset has been proposed. Let $\theta = \lam
  {\varx y}{\app y {\lamp z {\app {\app {\app \varx \varx} y} z}}}$. We prove
  that $\tmzero = \app \theta \theta$ is bisimilar to its variant $\tmone =
  \reset {\app \theta {\shift \vark {\app \vark \vark}}}$. Let $\theta' = \lam
  \varx {\reset{\app \theta \varx}}$, $\valzero = \lam y {\app y {\lamp z {\app
        {\app {\app \theta \theta} y} z}}}$, and $\valone = \lam y {\app y
    {\lamp z {\app {\app {\app {\theta'}{\theta'}} y} z}}}$. We define $\env$
  inductively such that $\valzero \mathrel\env \valone$, and if $\valzero'
  \cloctc\env \valone'$, then $\lam z {\app {\app {\app \theta
        \theta}{\valzero'}} z} \mathrel\env \lam z {\app {\app {\app
        {\theta'}{\theta'}}{\valone'}} z}$. Then $\erel = \{ (\env, \tmzero,
  \tmone), (\env, \tmzero, \app{\theta'}{\theta'}), \env \}$ is a (big-step)
  bisimulation up to context. Indeed, we have $\tmzero \evalcbv \valzero$,
  $\tmone \evalcbv \valone$, and $\app{\theta'}{\theta'} \evalcbv \valone$,
  therefore clause (\ref{e:utctx-tmval}) of Definition
  \ref{d:utctx} is checked for both pairs. We now check clause
  (\ref{e:utctx-envv}), first for $\valzero \mathrel\env
  \valone$. For all $\valzero' \cloctc\env \valone'$, we have $\app
  {\valzero'}{\lamp z {\app {\app {\app \theta \theta}{\valzero'}} z}} \cloctc
  \env \app {\valone'}{\lamp z {\app {\app {\app{ \theta'} {\theta'}}{\valone'}}
      z}}$ (because $\lam z {\app {\app {\app \theta \theta}{\valzero'}} z}
  \mathrel\env \lam z {\app {\app {\app{ \theta'} {\theta'}}{\valone'}} z}$),
  hence the result holds. Next, let $\lam z {\app {\app {\app \theta
        \theta}{\valzero'}} z} \mathrel\env \lam z {\app {\app {\app{ \theta'}
        {\theta'}}{\valone'}} z}$ (with $\valzero' \cloctc \env \valone'$), and
  let $\valzero'' \cloctc\env \valone''$. We have to check that $\app {\app
    {\app \theta \theta}{\valzero'}}{\valzero''} \clocr{\ierel \erel \env} \app
  {\app {\app{ \theta'} {\theta'}}{\valone'}}{\valone''}$, which is true,
  because $\app \theta \theta \ierel \erel \env \app {\theta'}{\theta'}$, and
  $\apctx{\apctx \mtctx {\valzero'}}{\valzero''} \clocc\env \apctx{\apctx \mtctx
    {\valone'}}{\valone''}$.
\end{example}

\section{Environmental Relations for the Original Semantics}
\label{s:programs}

The original CPS semantics for shift and reset~\cite{Danvy-Filinski:LFP90} as
well as the corresponding reduction semantics~\cite{Biernacka-al:LMCS05} assume
that terms can be considered as programs to be executed, only when surrounded by
a top-level reset. In this section, we present a CPS-compatible bisimulation
theory that takes such a requirement into account. In this section, we call
\emph{programs}, ranged over by $\prg$, terms of the form $\reset\tm$.

\subsection{Contextual Equivalence}

To reflect the fact that terms are executed within an enclosing reset,
the contextual equivalence we consider in this section tests terms in
contexts of the form $\reset \cctx$ only. Because programs cannot
reduce to stuck terms, the only possible observable action is
evaluation to values. We therefore define contextual equivalence for
programs as follows.
\begin{definition}
  \label{d:context-p}
  Let $\tmzero$, $\tmone$ be terms. We write $\tmzero \ctxequivp
  \tmone$ if for all $\cctx$ such that $\reset{\inctx \cctx \tmzero}$
  and $\reset{\inctx \cctx \tmone}$ are closed, $\reset{\inctx \cctx
    \tmzero} \evalcbv \valzero$ implies $\reset{\inctx \cctx \tmone}
  \evalcbv \valone$, and conversely for $\reset{\inctx \cctx \tmone}$.
\end{definition}
Note that $\ctxequivp$ is defined on all terms, not just programs. It
is easy to check that $\ctxequiv$ is more discriminative than
$\ctxequivp$. We will see in Section~\ref{s:examples-p} that this
inclusion is in fact strict.

\begin{lemma}
  We have $\mathord{\ctxequiv} \subseteq \mathord{\ctxequivp}$.
\end{lemma}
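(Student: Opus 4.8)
The plan is a direct containment argument: every context that $\ctxequivp$ uses to test terms is, up to the outer reset, an ordinary context usable by $\ctxequiv$, and the sole observation available to $\ctxequivp$ (evaluation to a value) is one of the observations available to $\ctxequiv$.

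Concretely, I would assume $\tmzero \ctxequiv \tmone$ and fix an arbitrary context $\cctx$ such that $\reset{\inctx \cctx \tmzero}$ and $\reset{\inctx \cctx \tmone}$ are closed. Since the grammar of regular contexts is closed under $\resetctx{\cdot}$, the context $\cctx' = \resetctx \cctx$ is again a regular context, and $\inctx {\cctx'}{\tmzero} = \reset{\inctx \cctx \tmzero}$, $\inctx {\cctx'}{\tmone} = \reset{\inctx \cctx \tmone}$ are closed. Instantiating Definition~\ref{d:context} at $\cctx'$ then gives exactly: $\reset{\inctx \cctx \tmzero} \evalcbv \valzero$ implies $\reset{\inctx \cctx \tmone} \evalcbv \valone$, and conversely for $\reset{\inctx \cctx \tmone}$. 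This is precisely the defining clause of $\ctxequivp$ (Definition~\ref{d:context-p}), so $\tmzero \ctxequivp \tmone$, as required.

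The one point to check is that the additional stuck-term clause of Definition~\ref{d:context} imposes no further obligation on the terms $\reset{\inctx \cctx \tmzero}$ and $\reset{\inctx \cctx \tmone}$, and it does not: a closed program $\reset \tm$ never evaluates to a stuck term. Indeed, using that $\resetctx{\cdot}$ is an evaluation context and that $\reset{\inctx \ctx {\shift \vark {\tm'}}}$ and $\reset \val$ are redexes (rules $\RRshift$ and $\RRreset$), every one-step reduct of a closed program is again a program or a value, so by induction the whole reduction sequence stays among programs and values, and the only normal form among those is a value. Hence the stuck-term clause of $\ctxequiv$ is vacuous on $\reset{\inctx \cctx \tmzero}$ and $\reset{\inctx \cctx \tmone}$. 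There is essentially no obstacle here; the only fact worth recalling is this impossibility of closed programs getting stuck, which is already part of the metatheory of $\lamshift$ established earlier.
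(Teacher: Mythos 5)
Your argument is correct and is precisely the direct instantiation the paper has in mind (the paper offers no explicit proof of this lemma, merely remarking that the inclusion is easy to check): since $\reset \cctx$ is again a context of the grammar, the value clause of Definition~\ref{d:context} instantiated at $\reset \cctx$ is literally the defining clause of Definition~\ref{d:context-p}. Your final paragraph is superfluous for this direction of the inclusion---the stuck-term clause of $\ctxequiv$ is a hypothesis you are free to ignore, not an obligation to discharge---although the fact it establishes (a closed program never evaluates to a stuck term) is true and is used elsewhere in the paper.
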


\subsection{Definition and Properties}

We now propose a definition of environmental bisimulation adapted to
programs (but defined on all terms, like $\ctxequivp$). Because stuck
terms are no longer observed, environments $\env$ henceforth relate
only values. Similarly, we write $\rval\rel$ for the restriction of a
relation $\rel$ on terms to pairs of closed values.

\begin{definition}
  \label{d:env-bisim-p}
  A relation $\erel$ is an environmental bisimulation for programs if
  \begin{enumerate}
  \item if $\tmzero \ierel \erel \env \tmone$ and $\tmzero$ and
    $\tmone$ are not both programs, then for all $\ctxzero \clocc\env
    \ctxone$, we have $\reset{\inctx \ctxzero \tmzero} \ierel \erel
    \env \reset{\inctx \ctxone \tmone}$; \label{e:tm-p}
  \item if $\prgzero \ierel \erel \env \prgone$ \label{e:prg-p}
    \begin{enumerate}
    \item if $\prgzero \redcbv \prgzero'$, then $\prgone \clocbv
      \prgone'$ and $\prgzero' \ierel \erel \env
      \prgone'$; \label{e:prgtau-p}
    \item if $\prgzero \redcbv \valzero$, then $\prgone \clocbv
      \valone$, and $\{(\valzero, \valone)\} \cup \env \in
      \erel$; \label{e:prgval-p}
    \item the converse of the above conditions on $\prgone$;
    \end{enumerate}
  \item for all $\env \in \erel$, if $\lam \varx \tmzero 
    \mathrel{\env} \lam \varx \tmone$ and $\valzero \cloctc\env
    \valone$, then $\subst \tmzero \varx \valzero \ierel \erel \env
    \subst \tmone \varx \valone$. \label{e:env-p}
  \end{enumerate}
\end{definition}
Environmental bisimilarity for programs, written $\bisimp$, is the
largest environmental bisimulation for programs. As before, the
relation $\ierel \bisimp \emptyset$, also written $\empbisimp$, is
candidate to characterize $\ctxequivp$.

Clauses (\ref{e:prg-p}) and (\ref{e:env-p}) of Definition
\ref{d:env-bisim-p} deal with programs and environment in a classical
way (as in plain $\lambda$-calculus). The problematic case is when
relating terms $\tmzero$ and $\tmone$ that are not both programs
(clause (\ref{e:tm-p})). Indeed, one of them may be stuck, and
therefore we have to test them within some contexts $\reset\ctxzero$,
$\reset\ctxone$ (built from $\env$) to potentially trigger a capture
that otherwise would not happen. We cannot require both terms to be
stuck, as in clause (\ref{e:envs}) of Definition~\ref{d:env-bisim},
because a stuck term can be equivalent to a term free from control
effect. E.g., we will see that $\val \empbisimp \shift \vark {\app
  \vark \val}$, provided that $\vark \notin \fv\val$.

\begin{example}
  Suppose we want to prove $\reset{\app {\lamp \varx \tmzero}{\reset
      \tmone}} \empbisimp \app{\lamp \varx {\reset \tmzero}}{\reset
    \tmone}$ (as in Example~\ref{ex:reset-beta}). Because $\app{\lamp
    \varx {\reset \tmzero}}{\reset \tmone}$ is not a program, we have
  to put both terms into a context first: we have to change the
  candidate relation of Example~\ref{ex:reset-beta} into $\erel = \{
  (\emptyset,\reset{\app {\lamp \varx \tmzero}{\reset \tmone}},
  \app{\lamp \varx {\reset \tmzero}}{\reset \tmone})\} \cup
  \{(\emptyset,\reset{\inctx \ctx{\reset{\app {\lamp \varx
          \tmzero}{\reset \tmone}}}}, \reset{\inctx \ctx{\app{\lamp
        \varx {\reset \tmzero}}{\reset \tmone}}})\} \cup
  \{(\emptyset,\reset{\inctx \ctx{\reset{\app {\lamp \varx \tmzero}
        \val}}}, \reset{\inctx \ctx{\app{\lamp \varx {\reset \tmzero}}
      \val}})\} \cup \{(\env, \tm, \tm) \mmid \env \subseteq
  \mathord{\Id} \} \cup \{ \env \mmid \env \subseteq \mathord{\Id}
  \}$. In contrast, to prove $\reset{\reset \tm} \empbisimp \reset
  \tm$, we do not have to change the candidate relation of
  Example~\ref{ex:reset-reset}, since both terms are programs.
\end{example}
We can give a definition of big-step bisimulation by removing clause
(\ref{e:prgtau-p}) and changing $\redcbv$ into $\clocbv$ in clause
(\ref{e:prgval-p}). Lemmas~\ref{l:weakening}
and~\ref{l:redcbv-in-empbisim} can also be extended to $\bisimp$ and
$\empbisimp$. %
The next lemma shows that $\empbisim$ is more discriminative than
$\empbisimp$.
\begin{lemma}
  \label{l:empbisim-in-empbisimp}
  We have $\mathord{\empbisim} \subseteq \mathord{\empbisimp}$.
\end{lemma}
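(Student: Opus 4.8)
The plan is to show that $\empbisim$, viewed as (a component of) an environmental relation, can be turned into an environmental bisimulation for programs, so that $\empbisim \subseteq \empbisimp$ follows from the fact that $\bisimp$ is the largest such relation. Since $\empbisim$ relates closed terms with the empty environment, the natural candidate is to take the full bisimilarity $\bisim$ and close it so that it has the shape required by Definition~\ref{d:env-bisim-p}. Concretely, I would define
\[
  \erel = \{(\rval\env, \tmzero, \tmone) \mmid \tmzero \ierel\bisim\env \tmone\} \cup \{\rval\env \mmid \env\in\mathord{\bisim}\},
\]
that is, I keep the same term/environment pairs as in $\bisim$ but restrict the environments to their value components (since environments for programs relate only values). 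I would then check the three clauses of Definition~\ref{d:env-bisim-p} for $\erel$.

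For clause~(\ref{e:env-p}) (the environment clause), note that if $\lam\varx\tmzero \mathrel{\rval\env} \lam\varx\tmone$ then in particular $\lam\varx\tmzero \mathrel\env \lam\varx\tmone$, and if $\valzero \cloctc{\rval\env}\valone$ then $\valzero\cloctc\env\valone$ (restricting the environment only removes building blocks), so clause~(\ref{e:envv}) of Definition~\ref{d:env-bisim} gives $\subst\tmzero\varx\valzero \ierel\bisim\env \subst\tmone\varx\valone$, hence the pair is in $\erel$. For clause~(\ref{e:prg-p}) (the program clause), suppose $\prgzero\ierel\bisim\env\prgone$ with both terms programs. A reduction step $\prgzero\redcbv\prgzero'$ is matched by $\prgone\clocbv\prgone'$ with $\prgzero'\ierel\bisim\env\prgone'$ directly from clause~(\ref{e:tmtau}) of Definition~\ref{d:env-bisim}; note a one-step reduct of a program is again a program, so this stays inside the program case. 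If $\prgzero\redcbv\valzero$ then by clause~(\ref{e:tmtau}) $\prgone\clocbv\valone$, and then by clause~(\ref{e:tmval}) applied to $\valzero\ierel\bisim\env\valone$ we get $\env\cup\{(\valzero,\valone)\}\in\mathord{\bisim}$, whose value-restriction is $\rval\env\cup\{(\valzero,\valone)\}\in\erel$, as required (here I use that adding a value pair to a value environment and then restricting is the same as restricting and adding). The converse directions are symmetric.

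The clause I expect to be the main obstacle is clause~(\ref{e:tm-p}): when $\tmzero,\tmone$ are not both programs and $\ctxzero\clocc{\rval\env}\ctxone$, I must show $\reset{\inctx\ctxzero\tmzero} \ierel\erel\env \reset{\inctx\ctxone\tmone}$. This is precisely a congruence-with-respect-to-an-evaluation-context statement, and the tool for it is Lemma~\ref{l:cong-e-ctx-main} (congruence of $\bisim$ under evaluation contexts), together with weakening (Lemma~\ref{l:weakening}) to pass from the environment $\rvals{\cloct\env}$ produced there down to $\rval\env$, and Lemma~\ref{l:redcbv-in-empbisim}/transitivity of $\bisim$ to absorb the added outer $\reset{}$: from $\tmzero\ierel\bisim\env\tmone$ and $\ctxzero\clocc\env\ctxone$ we obtain $\reset{\inctx\ctxzero\tmzero}\ierel\bisim{\rvals{\cloct\env}}\reset{\inctx\ctxone\tmone}$, then weaken the environment to $\rval\env$. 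The subtlety is bookkeeping: $\clocc{\rval\env}$-contexts are in particular $\clocc\env$-contexts, and one must make sure the environment carried by $\erel$ after this step is still the value-restriction of an environment of $\bisim$ — which it is, since $\bisim$ is closed under weakening of environments and $\rval\env$ is itself an environment of $\bisim$ by the second component of $\erel$. Once clause~(\ref{e:tm-p}) is dispatched this way, $\erel$ is an environmental bisimulation for programs containing $\ierel\bisim\emptyset$ with empty environment, hence $\empbisim\subseteq\empbisimp$.
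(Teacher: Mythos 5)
Your proposal is correct and follows essentially the same route as the paper's (sketched) argument: one checks that a relation of the shape $\{(\rval\env,\tmzero,\tmone)\mmid \tmzero\ierel\bisim\env\tmone\}\cup\{\rval\env \mmid \env\in\mathord{\bisim}\}$ satisfies Definition~\ref{d:env-bisim-p}, with clause~(\ref{e:tm-p}) discharged by congruence of $\bisim$ under evaluation contexts plus weakening, and the program and environment clauses inherited directly from Definition~\ref{d:env-bisim}. The only points to tighten are that you need the two-context form of Lemma~\ref{l:cong-e-ctx-main} (to handle $\reset{\ctxzero}\clocc\env\reset{\ctxone}$ with distinct related contexts), which is not its literal statement but is exactly what the relation $\erel_1$ in its proof establishes, and that the appeal to Lemma~\ref{l:redcbv-in-empbisim} to ``absorb the outer reset'' is unnecessary, since $\reset\ctx$ is already an evaluation context and so is covered by that congruence lemma directly.
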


A consequence of
Lemma~\ref{l:empbisim-in-empbisimp} is that we can use
Definition~\ref{d:env-bisim} as a proof technique for
$\empbisimp$. E.g., we have directly $\reset{\app {\lamp \varx
    \tmzero}{\reset \tmone}} \empbisimp \app{\lamp \varx {\reset
    \tmzero}}{\reset \tmone}$, because $\reset{\app {\lamp \varx
    \tmzero}{\reset \tmone}} \empbisim \app{\lamp \varx {\reset
    \tmzero}}{\reset \tmone}$.

\review{3.  sect 4.2 paragraph just after lemma 12.  I could not
  completely understand the reasoning here in the first two lines the
  relation with suffix "v" ($\simeq^v$) is used but the conclusion (in
  the third line) the relatin without "v" is used ($\simeq$).  The
  latter is a superset of the former, so the reasoning itself is OK,
  but then why did you choose this setting?  Also, this paragraph
  seems to show that the relation $\{(\simeq^v, t_0, t_1) | t_0 \simeq
  t_1\} \cup ...$ satisfies the condition in Def. 10, but why it
  implies Lemma 12 ?  It seems that the following relation is more
  relevant here: $\{(\emptyset, t_0, t_1) | t_0 \simeq t_1\} \cup
  \{\simeq^v\}$ I recommend either (1) to add more explanations so
  that this paragraph indeed becomes a proof sketch of Lemma 12, or
  (2) to remove the first three lines of this paragraph.}\serguei{I
  would go with option (1), but maybe option (2) is preferable for
  space reasons }\darek{We will have some space after removing Remark
  1, so maybe (1) is doable?}\serguei{Maybe, we will see when we are
  done addressing the other comments}\darek{I am OK with option (2)
  given that we need to save space.}

\subsection{Soundness and Completeness}

We sketch the proofs of soundness and completeness of $\empbisimp$
w.r.t.~$\ctxequivp$; see Appendix~\ref{a:programs} for the complete
proofs. The soundness proof follows the same scheme as in
Section~\ref{s:soundness-completeness}, with some necessary
adjustments. As before, we need up-to environment and up-to
bisimilarity techniques to prove the following lemmas. 

\begin{lemma}
  \label{l:cong-val-p-main}
  If $\valzero \ebisimp \valone$, then
  $\inctx \cctx \valzero \ierel \bisimp \env \inctx \cctx \valone$.
\end{lemma}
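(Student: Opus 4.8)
The goal is to show that whenever $\valzero \ebisimp \valone$, then $\inctx \cctx \valzero \ierel \bisimp \env \inctx \cctx \valone$ for an arbitrary context $\cctx$. Following the scheme of Section~\ref{s:soundness-completeness}, I would prove this by exhibiting a candidate relation built from $\bisimp$ (or from an arbitrary environmental bisimulation for programs $\erely$) and the context closures, and checking it is an environmental bisimulation for programs up to the basic up-to techniques (up-to environment and up-to bisimilarity). Concretely, for any environmental bisimulation for programs $\erely$, I would set
\begin{multline*}
  \erel = \{(\rval{\cloct \env}, \inctx \cctxzero \tmzero, \inctx \cctxone \tmone) \mmid \tmzero \ierel \erely \env \tmone,\ \cctxzero \cloctc \env \cctxone\} \\
  \cup \{(\rval{\cloct \env}, \tmzero, \tmone) \mmid \env \in \mathord{\erely},\ \tmzero \cloctc \env \tmone\}
  \cup \{\rval{\cloct \env} \mmid \env \in \mathord{\erely}\},
\end{multline*}
and prove $\erel$ is an environmental bisimulation for programs up to environment. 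Since $\bisimp$ is itself such a bisimulation and the identity context is a context, the lemma follows by taking $\erely = \mathord{\bisimp}$.

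**Key steps, in order.**
First, I would establish the routine facts about the context closures that are needed throughout: that $\cloctc \env$ is preserved by substitution of related values for variables, and that $\clocc \env$ composes properly with $\cloctc \env$ when filling holes. Second, I would check clause~(\ref{e:tm-p}) of Definition~\ref{d:env-bisim-p}: when $\inctx \cctxzero \tmzero$ and $\inctx \cctxone \tmone$ are not both programs, I wrap them in $\reset{\ctxzero}$, $\reset{\ctxone}$ for $\ctxzero \clocc \env \ctxone$; but $\reset{\inctx \ctxzero {\inctx \cctxzero \tmzero}}$ is again of the form ``evaluation/general context around $\tmzero$,'' so it stays in $\erel$ (using that $\reset{\inctx \ctxzero \cctxzero}$ is a context closed-compatibly with $\reset{\inctx \ctxone \cctxone}$). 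Third — the real work — I would handle clause~(\ref{e:prg-p}), i.e.\ the case where $\inctx \cctxzero \tmzero$ and $\inctx \cctxone \tmone$ happen to be programs and reduce. Here I would proceed by induction on the derivation of $\tmzero \cloctc \env \tmone$ (the second component of $\erel$) after first arguing that a program $\inctx \cctxzero \tmzero$ either already has $\tmzero$ a normal form sitting in a decomposable position — in which case one invokes clause~(\ref{e:prgval-p})/(\ref{e:tm-p}) of $\erely$ for $\tmzero \ierel \erely \env \tmone$ to let $\tmone$ catch up, extending the environment and landing in the third component of $\erel$ — or the redex lies entirely within the surrounding context, in which case both sides take the corresponding step and remain in $\erel$. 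The delicate sub-case is when $\tmzero$ is stuck and the enclosing context supplies the missing $\reset$: then the capture fires, and one must check that the term obtained matches, modulo $\cloctc$, the term obtained on the other side; this is exactly where clause~(\ref{e:tm-p}) of $\erely$ (wrapping a stuck $\tmzero$ in $\reset{\ctx}$ for $\ctx$ built from $\env$) is used, so that $\tmone$'s behaviour under the captured context is already controlled by $\erely$. Fourth, clause~(\ref{e:env-p}) for environments: if $\lam \varx \tmzero \mathrel{\rval{\cloct \env}} \lam \varx \tmone$ and $\valzero' \cloctc{\rval{\cloct \env}} \valone'$, then by an inner induction on the $\cloct$-derivation one reduces to the case handled by clause~(\ref{e:env-p}) of $\erely$, and substitution commutes with $\cloctc$, keeping the result in $\erel$.

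**Main obstacle.**
I expect the main obstacle to be the stuck-term / capture sub-case of step three: when a context $\cctx$ around a stuck $\tmzero$ contains a $\reset$ that turns $\inctx \cctx \tmzero$ into a reducible program, the step performed is a $\RRshift$ reduction that captures part of $\cctx$ and rebuilds it as a $\lambda$-abstraction substituted into $\tmzero$'s body. One must verify that the residual on the $\tmzero$-side and on the $\tmone$-side are still related by the closure $\erel$ — i.e.\ that the captured pure context, which was $\clocc \env$-related, gives rise to $\cloct \env$-related reconstructed values, and that what remains is again ``a context (closed-compatibly) around a pair related by $\erely$.'' The bookkeeping is fiddly because the hole position migrates and because clause~(\ref{e:tm-p}) of $\erely$ only gives us control after we have explicitly inserted the missing $\reset$; getting the decomposition of $\inctx \cctx \tmzero = \inctx \rctx \redex$ to line up with an application of an $\erely$-clause is the crux. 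The rest is the same bookkeeping as in~\cite{Sangiorgi-al:TOPLAS11} and in Section~\ref{s:soundness-completeness}, and up-to environment absorbs the environment growth cleanly.
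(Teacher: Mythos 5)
Your overall architecture is the right one and matches the paper's: prove this lemma simultaneously with the evaluation-context congruence lemma by exhibiting a candidate relation built from an arbitrary bisimulation $\erely$ and the closures $\cloct\env$, $\clocc\env$, and check it is a bisimulation up to environment, with the substitution/closure commutation facts (the paper's Lemma~\ref{l:cong-is-subst} and Lemma~\ref{l:erel-p}) doing the routine work. The capture sub-case you flag as the crux is indeed where the case analysis on redexes happens in the paper's proof.

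However, the first component of your candidate relation is too generous, and this is a genuine gap. You allow arbitrary $\erely$-related terms $\tmzero \ierel \erely \env \tmone$ under arbitrary \emph{general} contexts $\cctxzero$, $\cctxone$; the paper's corresponding component $\erel_1$ restricts to \emph{evaluation} contexts $\rctxzero \clocc\env \rctxone$. With general contexts the bisimulation clauses cannot be closed off: take $\cctxzero = \cctxone = \lam y \mtctx$ with $\tmzero$, $\tmone$ arbitrary related non-values. Then $\reset{\lam y {\tmzero}} \redcbv \lam y {\tmzero}$ is a value, and clause~(\ref{e:prgval-p}) demands that the pair $(\lam y {\tmzero}, \lam y {\tmone})$ be added to an environment of your relation, i.e.\ that $\lam y {\tmzero} \mathrel{\rval{\cloct{\env'}}} \lam y {\tmone}$ for some $\env' \in \erel$ -- which requires $\tmzero \cloct{\env'} \tmone$, something you do not have, since $\tmzero$ and $\tmone$ are related only by $\ierel\erely\env$ and need not be built from value pairs in any environment. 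Up-to environment does not repair this, because enlarging $\env'$ still only yields pairs generated by the term closure. This is precisely the reason the up-to-context closure in Section~\ref{s:up-to-context} (and in~\cite{Sangiorgi-al:TOPLAS11}) restricts non-normal-form components to evaluation contexts. The fix -- and what the paper actually does -- is to keep general contexts only in the second component $\{(\rval{\cloct\env},\tmzero,\tmone) \mmid \tmzero \cloctc\env \tmone\}$, and to observe that the present lemma reduces to that component: from $\valzero \ierel\erely\env \valone$, clause~(\ref{e:tm-p}) with $\ctxzero=\ctxone=\mtctx$ gives $\reset{\valzero} \ierel\erely\env \reset{\valone}$, then clause~(\ref{e:prgval-p}) forces $\env \cup \{(\valzero,\valone)\} \in \erely$, so $\inctx\cctx\valzero \cloctc{(\env\cup\{(\valzero,\valone)\})} \inctx\cctx\valone$ lands in the second component (up to environment). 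Without this step, your claim that ``the lemma follows by taking $\erely = \mathord{\bisimp}$ and the identity context'' rests on the unsound first component.
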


\begin{lemma}
  \label{l:cong-e-ctx-p-main}
  If $\tmzero \ebisimp \tmone$, then
  $\inctx \rctx \tmzero \ierel \bisimp \env \inctx \rctx \tmone$.
\end{lemma}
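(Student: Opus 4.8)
The plan is to prove Lemmas~\ref{l:cong-val-p-main} and~\ref{l:cong-e-ctx-p-main} simultaneously, following the joint proof of Lemmas~\ref{l:cong-val-main} and~\ref{l:cong-e-ctx-main}. Fixing an arbitrary (big-step) environmental bisimulation for programs $\erely$, I would show that
\begin{multline*}
  \erel = \{(\rval{\cloct\env},\inctx\rctxzero\tmzero,\inctx\rctxone\tmone) \mmid \tmzero\ierel\erely\env\tmone,\ \rctxzero\clocc\env\rctxone\} \\
  {}\cup \{(\rval{\cloct\env},\tmzero,\tmone)\mmid \env\in\mathord\erely,\ \tmzero\cloctc\env\tmone\}
  \cup \{\rval{\cloct\env}\mmid \env\in\mathord\erely\}
\end{multline*}
is a big-step environmental bisimulation for programs up to environment and up to bisimilarity. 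Soundness of these up-to techniques gives $\erel\subseteq\mathord\bisimp$; instantiating $\erely$ with $\bisimp$, recovering from a value triple $\valzero\ierel\bisimp\env\valone$ the environment $\{(\valzero,\valone)\}\cup\env\in\bisimp$ (via clauses~(\ref{e:tm-p}) and~(\ref{e:prgval-p}) applied with $\mtctx$), using that $\mathord{=}\subseteq\cloct\env$ — whence $\rctx\clocc\env\rctx$ for closed $\rctx$, and $\inctx\cctx\valzero\cloct\env\inctx\cctx\valone$ for closed $\cctx$ whenever $\valzero\cloct\env\valone$ — and finally weakening (Lemma~\ref{l:weakening}), yields both statements. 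Relative to the relaxed-semantics proof the bookkeeping is actually lighter in the value part, since environments now relate only values and the former stuck-term clause of the environment game disappears; the new work is concentrated in clause~(\ref{e:tm-p}) of Definition~\ref{d:env-bisim-p}.

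Before the main argument I would record a handful of routine closure facts, each by induction on a derivation: that $\cloct\env$ and $\clocc\env$ are preserved by substitution of $\cloct\env$-related values; that composing two $\clocc\env$-related evaluation contexts (in particular, prefixing a $\clocc\env$-related pure context) yields a $\clocc\env$-related pair, so that, using the $\reset$-rule of $\clocc$, $\reset{\inctx\ctxzero\rctxzero}\clocc\env\reset{\inctx\ctxone\rctxone}$ whenever $\ctxzero\clocc\env\ctxone$ and $\rctxzero\clocc\env\rctxone$; and that if a redex sits entirely inside $\rctxzero$ and $\rctxzero\clocc\env\rctxone$ then the two sides take matching reduction steps with residues related by the corresponding closures. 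The key point behind the second fact is that $\reset{\inctx\ctxzero\rctxzero}$ is again an evaluation context. These are essentially the auxiliary facts of the relaxed-semantics proof.

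The heart of the proof is the case analysis for a triple from the first component, $\inctx\rctxzero\tmzero\ierel\erel\env\inctx\rctxone\tmone$ with $\tmzero\ierel\erely\env\tmone$ and $\rctxzero\clocc\env\rctxone$. If the two terms are not both programs, clause~(\ref{e:tm-p}) requires, for each $\ctxzero\clocc\env\ctxone$, that $\reset{\inctx\ctxzero{\inctx\rctxzero\tmzero}}\ierel\erel\env\reset{\inctx\ctxone{\inctx\rctxone\tmone}}$; rewriting these as $\inctx{\reset{\inctx\ctxzero\rctxzero}}\tmzero$ and $\inctx{\reset{\inctx\ctxone\rctxone}}\tmone$ and invoking the composition fact, the pair is back in the first component with no change to $\env$. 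If the two terms are both programs, I locate the active redex: if it lies inside $\rctxzero$, the third preliminary fact lets $\rctxone$ match; otherwise the redex involves $\tmzero$, and I apply the relevant clause of $\erely$ to $\tmzero\ierel\erely\env\tmone$ — a $\RRbeta$ step or a value step as in the $\lambda$-calculus, and the extra case in which $\tmzero$ is a \emph{stuck} term whose shift is captured because $\rctxzero$ supplies an enclosing reset. This last subcase is where I expect the main obstacle: since $\erely$ relates programs, the stuck $\tmzero$ may be paired with a $\tmone$ that is a value or a program, so there is no symmetric ``both stuck'' situation to exploit as in Definition~\ref{d:env-bisim}. When $\tmone$ too is not a program, clause~(\ref{e:tm-p}) of $\erely$ gives $\reset{\inctx\ctx\tmzero}\ierel\erely\env\reset{\inctx{\ctx'}\tmone}$ for all $\ctx\clocc\env\ctx'$, and I instantiate $\ctx$ with the pure segment of $\rctxzero$ that is captured, so that both sides land, after the capture (possibly after bridging a few administrative reductions with up-to bisimilarity), back in the first component with the outer evaluation context preserved by the composition fact; when $\tmone$ is already a program, I use clauses~(\ref{e:prgtau-p})/(\ref{e:prgval-p}) of $\erely$ instead, threading the same outer context. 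Finally, triples from the second component, $\tmzero\cloctc\env\tmone$, are handled by induction on the derivation: the base case $\tmzero\mathrel\env\tmone$ uses clause~(\ref{e:env-p}) of Definition~\ref{d:env-bisim-p} for the applications of the two related values, and meets the wrapping demanded by clause~(\ref{e:tm-p}) either by reducing a top-level $\reset\valzero$ and enlarging $\env$ with the already-related value pair, or by producing a term visibly again in $\cloctc\env$; the inductive cases push one term constructor outward, landing either in $\cloctc\env$ (for the shapes tested only by wrapping) or, for the $\reset$ constructor, in a program whose single reduction step is analysed using the preliminary facts and the inductive hypothesis. Environments in the third component satisfy their clause immediately from clause~(\ref{e:env-p}) of $\erely$.
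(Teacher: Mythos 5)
Your proposal matches the paper's proof in Appendix~\ref{a:programs}: the candidate relation $\erel=\erel_1\cup\erel_2\cup\{\rval{\cloct\env}\}$ is exactly the one the paper uses, the two congruence lemmas are proved simultaneously as a bisimulation up to environment, and the key step — using clause~(\ref{e:tm-p}) of $\erely$ to wrap a non-program $\tmzero$ in the innermost enclosing $\reset\ctxzero$ supplied by $\rctxzero$ and then falling back on the program clauses — is the same trick the paper employs. Apart from minor presentational differences (the paper argues small-step and does not actually need up-to bisimilarity for these two lemmas), this is essentially the paper's argument.
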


We prove Lemmas~\ref{l:cong-val-p-main} and \ref{l:cong-e-ctx-p-main} by showing
that a relation similar to the relation~$\erel$ defined in
Section~\ref{s:soundness-completeness} is a bisimulation up to environment. 
We then want to prove the main congruence lemma, akin to
Lemma~\ref{l:cong-empbisim-main}, by showing that $\erely \mathord= \{
(\rval {\cloct \empbisimp}, \tmzero, \tmone) \mmid \tmzero \cloctc
\empbisimp \tmone\} \cup \{ \rval {\cloct \empbisimp} \}$ is a
bisimulation up to bisimilarity. However, we can no longer proceed by
induction on $\tmzero \cloctc \empbisimp \tmone$, as for
Lemma~\ref{l:cong-empbisim-main}. Indeed, if $\prgzero = \reset
\tmzero$, $\prgone = \reset \tmone$ with $\tmzero \cloctc \empbisimp
\tmone$, and if $\tmzero$ is a stuck term, then $\prgzero$ reduces to
some term, but the induction hypothesis does not tell us anything
about $\tmone$. To circumvent this, we decompose related programs into
related subcomponents.

\begin{lemma}
  \label{l:decompose-prg-main}
  If $\prgzero \cloctc\empbisimp \prgone$, then either $\prgzero \empbisimp
  \prgone$, or one of the following holds:
  \begin{itemize}
  \item $\prgzero = \reset \valzero$;
  \item $\prgzero = \inctx \rctxzero {\reset{\inctx \ctxzero
      \tmzero}}$, $\prgone = \inctx \rctxone {\reset{\inctx \ctxone
      \tmone}}$ , $\rctxzero \clocc \empbisimp \rctxone$, $\ctxzero
    \clocc\empbisimp \ctxone$, $\tmzero \empbisimp \tmone$ and
    $\tmzero \redcbv \tmzero'$ or $\tmzero$ is stuck;
  \item $\prgzero = \inctx \rctxzero {\reset{\inctx \ctxzero
      \redexzero}}$, $\prgone = \inctx \rctxone {\reset{\inctx \ctxone
      \tmone}}$ , $\rctxzero \clocc \empbisimp \rctxone$, $\ctxzero
    \clocc\empbisimp \ctxone$, $\redexzero \cloctc\empbisimp \tmone$
    but $\redexzero \not\empbisimp \tmone$.
  \end{itemize}
\end{lemma}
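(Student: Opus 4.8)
The plan is to proceed by structural induction on the derivation of $\prgzero \cloctc\empbisimp \prgone$, distinguishing cases on the last rule used. Since $\prgzero$ is a program, it has the form $\reset{\tm}$ for some $\tm$, so the derivation of $\prgzero \cloctc\empbisimp \prgone$ must end either with the base rule for $\empbisimp$ (giving $\prgzero \empbisimp \prgone$ directly, the first disjunct), or with the congruence rule for reset, that is, $\prgzero = \reset{\tmzero}$, $\prgone = \reset{\tmone}$ with $\tmzero \cloctc\empbisimp \tmone$. In the latter case I analyse $\tmzero$ via the unique-decomposition property (Lemma~\ref{l:unique-decomp}): either $\tmzero$ is a value, or $\tmzero$ is stuck, or $\tmzero = \inctx {\rctx'} \redexzero$ for a unique redex $\redexzero$ and context $\rctx'$. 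If $\tmzero$ is a value, we are in the case $\prgzero = \reset\valzero$ (second disjunct).

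The heart of the argument is the third case, where $\tmzero = \inctx{\rctx'}\redexzero$. Here I would inspect how the outer $\reset$ interacts with $\rctx'$. Writing $\rctx'$ in outside-in form, I peel off the structure until I isolate the innermost delimiter: this exhibits $\prgzero$ as $\inctx \rctxzero {\reset{\inctx \ctxzero \redexzero}}$ for some evaluation context $\rctxzero$ (the part outside the nearest enclosing reset of the redex, which here always contains at least the top-level reset) and some pure context $\ctxzero$ (the part between that reset and the redex). The key technical point is to transport this decomposition across $\cloctc\empbisimp$: from $\tmzero \cloctc\empbisimp \tmone$ and the shape of $\tmzero$ I need to read off corresponding contexts $\rctxone$, $\ctxone$ and a subterm of $\tmone$ with $\rctxzero \clocc\empbisimp \rctxone$, $\ctxzero \clocc\empbisimp \ctxone$. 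This follows by an auxiliary induction on $\tmzero \cloctc\empbisimp \tmone$ analysing the term-constructor congruence rules: a congruence rule applied at an application or reset node corresponds exactly to extending a pure or evaluation context, and the base rule $\empbisimp$ at a subterm is what produces the related subcomponents. Once the decomposition is obtained, I split on whether the isolated subterm $\redexzero$ is itself related by $\empbisimp$ to the corresponding subterm $\tmone'$ of $\tmone$: if $\redexzero \empbisimp \tmone'$, I land in the second disjunct (using that a redex either reduces or — vacuously here — is stuck; in fact redexes always reduce, and I can also absorb the subcase where the related subterm was already a non-redex normal form after possibly re-reducing), and if $\redexzero \not\empbisimp \tmone'$ I land in the third disjunct with $\redexzero \cloctc\empbisimp \tmone'$ from the derivation.

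The main obstacle I anticipate is the bookkeeping in the auxiliary induction that reassociates a $\cloctc\empbisimp$-derivation over $\reset{\inctx\rctx\redex}$ into the nested form $\inctx\rctxzero{\reset{\inctx\ctxzero\redexzero}}$ with the three context/subterm components correctly related: one must be careful that the congruence rules of $\cloctc{}$ are applied in lockstep on both sides, that the pure-context part never crosses an inner $\reset$, and that the evaluation-context part correctly inherits the outer $\reset$ and any enclosing frames. A secondary subtlety is handling the precise phrasing ``$\tmzero \redcbv \tmzero'$ or $\tmzero$ is stuck'' in the second disjunct: since $\reset{\tmzero}$ is a program and $\tmzero = \inctx\ctxzero\redexzero$ sits under a reset, I should check that the three listed disjuncts are genuinely exhaustive — any non-$\empbisimp$-related program with a value-free body either has its body reduce (because it contains a redex reachable without passing another reset) or decomposes as in the third disjunct. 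This amounts to combining Lemma~\ref{l:unique-decomp} with the observation that a program never gets stuck, which I would state explicitly before the case analysis.
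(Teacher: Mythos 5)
Your overall strategy matches the paper's. The appendix proof of this lemma proceeds by stating a strengthened disjunction for \emph{arbitrary} related terms $\tmzero \cloctc\empbisimp \tmone$ --- six cases: the terms are $\empbisimp$-related; the left term is a value; it is $\inctx \ctxzero {\tmzero'}$ or $\inctx \rctxzero {\reset{\inctx \ctxzero {\tmzero'}}}$ with $\tmzero' \empbisimp \tmone'$ and $\tmzero'$ reducing or stuck; or it is $\inctx \ctxzero \redexzero$ or $\inctx \rctxzero {\reset{\inctx \ctxzero \redexzero}}$ with $\redexzero \cloctc\empbisimp \tmone'$ but $\redexzero \not\empbisimp \tmone'$ --- and proves that by a single induction on $\cloctc\empbisimp$, then specializes to programs. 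Your ``auxiliary induction'' is exactly this generalization, and generalizing beyond programs is indeed forced, as you note, because peeling off a constructor leaves non-programs.

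The one step that would fail as written is the order of operations in your main case. You propose to first locate the unique redex of the body by Lemma~\ref{l:unique-decomp} and only then transport that decomposition across $\cloctc\empbisimp$, finally splitting on whether $\redexzero$ is $\empbisimp$-related ``to the corresponding subterm $\tmone'$ of $\tmone$''. But if the derivation of $\tmzero \cloctc\empbisimp \tmone$ uses the base rule $\empbisimp$ at a node strictly between the top-level reset and the redex, the congruence structure stops there: there is no corresponding subterm of $\tmone$ for $\redexzero$ at all, and $\tmone$ below that node may have an entirely different shape. The decomposition must therefore bottom out at the $\empbisimp$-related node, whose left component is a general reducing-or-stuck term rather than a redex --- which is precisely what the second bullet of the lemma (and the corresponding cases of the paper's generalized statement) records with the clause ``$\tmzero \redcbv \tmzero'$ or $\tmzero$ is stuck''. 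Your parenthetical about ``absorbing the subcase where the related subterm was already a non-redex normal form'' gestures at this but does not resolve it: the correct split is not on whether the redex is $\empbisimp$-related, but on whether the path from the root down to the redex passes through a base-rule node before reaching it. Setting the induction up directly on the derivation of $\cloctc\empbisimp$, with the strengthened six-case conclusion, rather than decompose-then-transport, makes this fall out cleanly; that is why the paper phrases the generalized statement the way it does.
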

Lemma~\ref{l:decompose-prg-main} generalizes
Lemma~\ref{l:unique-decomp} to related programs: we know $\prgzero$
can be decomposed into  contexts $\rctx$,
$\reset\ctx$, and a redex $\redex$, and we relate these subterms to
$\prgone$. We can then prove that $\erely$ (defined above) is a
bisimulation up to bisimilarity, by showing that, in each case
described by Lemma~\ref{l:decompose-prg-main}, $\prgzero$ and
$\prgone$ reduce to terms related by $\erely$. From this, we deduce
$\empbisimp$ is a congruence, and is sound w.r.t. $\ctxequivp$.

\begin{lemma}
  \label{l:cong-empbisim-main-p}
  $\tmzero \empbisimp \tmone$ implies $\inctx \cctx \tmzero \ierel
  \bisimp {\rval {\cloct \empbisimp}} \inctx \cctx \tmone$.
\end{lemma}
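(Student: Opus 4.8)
The plan is to follow the scheme of Lemma~\ref{l:cong-empbisim-main}, but with the decomposition Lemma~\ref{l:decompose-prg-main} replacing the induction that fails in this setting. Concretely, I would show that
\[
  \erely = \{ (\rval {\cloct \empbisimp}, \tmzero, \tmone) \mmid \tmzero \cloctc \empbisimp \tmone\} \cup \{ \rval {\cloct \empbisimp} \}
\]
is a (big-step) environmental bisimulation up to bisimilarity for programs, so that $\erely \subseteq \mathord{\bisimp}$ by soundness of that up-to technique~\cite{Sangiorgi-al:TOPLAS11}. The statement then follows: from $\tmzero \empbisimp \tmone$ we get $\tmzero \cloctc \empbisimp \tmone$ by the base rule of $\cloct$, hence $\inctx \cctx \tmzero \cloctc \empbisimp \inctx \cctx \tmone$ for every $\cctx$ by a routine induction on $\cctx$ using the structural rules of $\cloct$, so $(\rval{\cloct\empbisimp}, \inctx\cctx\tmzero, \inctx\cctx\tmone) \in \erely \subseteq \mathord{\bisimp}$. (Note that $\Id \subseteq \cloct\empbisimp$ already holds via the variable rule, so $\erely$ contains every reflexive triple and the environment $\rval{\cloct\empbisimp}$; this is needed to let $\prgone$ take no step in some cases below.)

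The two \emph{structural} clauses are routine. For clause~(\ref{e:tm-p}), if $\tmzero \cloctc\empbisimp \tmone$ are not both programs and $\ctxzero \clocc{\rval{\cloct\empbisimp}} \ctxone$, then plugging $\cloct$-related terms into $\clocc$-related pure contexts yields $\cloct$-related terms (easy induction on the $\clocc$-derivation), so $\reset{\inctx\ctxzero\tmzero} \cloctc\empbisimp \reset{\inctx\ctxone\tmone}$ and the triple is already in $\erely$. For clause~(\ref{e:env-p}), if $\lam\varx\tmzero \mathrel{\rval{\cloct\empbisimp}} \lam\varx\tmone$ and $\valzero \cloctc{\rval{\cloct\empbisimp}} \valone$, then either $\lam\varx\tmzero \empbisimp \lam\varx\tmone$, in which case clause~(\ref{e:env-p}) of $\bisimp$ yields $\subst\tmzero\varx\valzero \empbisimp \subst\tmone\varx\valone$, or the pair comes from the $\lambda$-rule of $\cloct$, in which case $\subst\tmzero\varx\valzero \cloctc\empbisimp \subst\tmone\varx\valone$ by substitutivity of $\cloct$; either way the triple lies in $\erely$ up to $\empbisimp$.

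The substance is clause~(\ref{e:prg-p}): given $(\rval{\cloct\empbisimp}, \prgzero, \prgone) \in \erely$ with both components programs (so $\prgzero \cloctc\empbisimp \prgone$) and $\prgzero \clocbv \valzero$, one must produce $\prgone \clocbv \valone$ with $\{(\valzero,\valone)\}\cup\rval{\cloct\empbisimp}$ contained, up to $\empbisimp$, in an environment of $\erely$. As the text explains, induction on the derivation of $\prgzero \cloctc\empbisimp \prgone$ breaks down --- a stuck or reducible subterm of $\prgzero$ buried under a $\reset$ drives the reduction while the derivation says nothing strong about the matching $\cloctc\empbisimp$-counterpart in $\prgone$. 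I would instead induct on the length of $\prgzero \clocbv \valzero$ (a closed program always reduces, so this length is at least one), invoking Lemma~\ref{l:decompose-prg-main} at each step. If $\prgzero \empbisimp \prgone$, the game for $\prgzero$ is won directly by $\bisimp$. Since a program reducing in one step to a value has the form $\reset\valzero$, the base case is $\prgzero = \reset\valzero$: combined with $\prgzero \cloctc\empbisimp \prgone$ this forces $\prgone = \reset\tmone$ with $\valzero \cloctc\empbisimp \tmone$, and then either $\tmone$ is a value, so $\prgone \redcbv \tmone$ with $(\valzero,\tmone) \in \rval{\cloct\empbisimp}$, or $\valzero \empbisimp \tmone$, so $\reset\valzero \empbisimp \reset\tmone$ by Lemma~\ref{l:cong-e-ctx-p-main} and $\prgone$ reaches a value $\valone$ with $\valzero \empbisimp \valone$ because $\reset\valzero$ does. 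In the remaining cases Lemma~\ref{l:decompose-prg-main} exhibits $\prgzero$ as $\inctx\rctxzero{\reset{\inctx\ctxzero{\tmzero}}}$ and $\prgone$ as $\inctx\rctxone{\reset{\inctx\ctxone{\tmone}}}$ with $\rctxzero \clocc\empbisimp \rctxone$, $\ctxzero \clocc\empbisimp \ctxone$, and either $\tmzero$ reducible or stuck with $\tmzero \empbisimp \tmone$, or $\tmzero = \redexzero$ a redex with $\redexzero \cloctc\empbisimp \tmone$; the redex fired by $\prgzero$ then sits at this spot, and its reduct, after absorbing the contraction into $\empbisimp$ (the analogue of Lemma~\ref{l:redcbv-in-empbisim}) and re-plugging the $\clocc\empbisimp$-related contexts, is related to $\prgone$ --- possibly after a matching reduction of $\prgone$ --- inside $\erely$ up to $\empbisimp$, so the induction hypothesis applies since $\prgzero$ now reaches $\valzero$ in strictly fewer steps. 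The environment-extension conditions are discharged throughout by the remark that $\empbisimp$-equivalent closed values lie in $\rval{\cloct\empbisimp}$.

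I expect the main obstacle to be exactly this repair of the induction: recognizing that the right well-founded measure is the reduction length rather than the $\cloct$-derivation, and checking that Lemma~\ref{l:decompose-prg-main} always either hands over a fully $\empbisimp$-equivalent \emph{active} subterm (with related surrounding contexts) or pins $\prgzero$ down to $\reset\valzero$, so that every one-step reduct of $\prgzero$ remains in the up-to-bisimilarity closure of $\erely$ with a strictly shorter terminating reduction, despite the non-transitivity of $\cloct$. The rest of the up-to-bisimilarity bookkeeping (composing with $\empbisimp$ on both sides, handling the environment) is standard~\cite{Sangiorgi-al:TOPLAS11}. Once $\erely \subseteq \mathord{\bisimp}$ is established the lemma is immediate, and by the analogue of Lemma~\ref{l:weakening} for $\bisimp$ it further yields that $\empbisimp$ is a congruence, hence sound w.r.t.\ $\ctxequivp$.
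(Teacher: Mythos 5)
Your overall strategy is the paper's: the same candidate relation, the same replacement of the failed induction on $\cloctc\empbisimp$ by the decomposition Lemma~\ref{l:decompose-prg-main}, and the same use of up-to bisimilarity to absorb the substitution lemma in the environment clause. The only structural difference is that the paper verifies the \emph{small-step} clause~(\ref{e:prgtau-p}) directly --- each single reduction of $\prgzero$ is matched by $\prgone \clocbv \prgone'$ with the reducts related in the candidate relation up to $\empbisimp$ --- and lets the coinductive soundness of up-to bisimilarity close the argument; no induction on the length of $\prgzero \clocbv \valzero$ is needed. Your extra induction on reduction length is harmless but redundant, and it forces you to appeal to a ``big-step bisimulation up to bisimilarity'' technique that the paper never defines or proves sound, whereas the small-step route only uses the up-to techniques already established.

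The genuine gap is in the fourth case of the decomposition, which you dispatch with ``the redex fired by $\prgzero$ then sits at this spot, and its reduct, after absorbing the contraction into $\empbisimp$ and re-plugging the contexts, is related to $\prgone$.'' This is not true as stated: in that case Lemma~\ref{l:decompose-prg-main} only gives $\redexzero \cloctc\empbisimp \tmone$ with $\redexzero \not\empbisimp \tmone$, and the subterm $\tmone$ of $\prgone$ need \emph{not} be a redex of the same shape, because $\empbisimp$ (which seeds $\cloct$) can relate a value to a non-value and a stuck term to a non-stuck term (e.g.\ $\val \empbisimp \shift\vark{\app\vark\val}$). So when $\redexzero = \app{\valzero^1}{\valzero^2}$ or $\redexzero = \reset{\inctx\ctx{\shift\vark{\tmzero'}}}$, the components of $\tmone$ may only be $\empbisimp$-related to the components of $\redexzero$, and establishing that $\reset{\inctx\ctxone\tmone}$ nonetheless evaluates to a related program requires a further case analysis (and, for the capture redex, an inner induction) on the derivation of $\redexzero \cloctc\empbisimp \tmone$; this is precisely the content of Lemmas~\ref{l:bisim-val-p}, \ref{l:bisim-beta-redex-p} and~\ref{l:bisim-stuck-p} in the paper's proof of Lemma~\ref{l:cong-empbisim-p}, and it is where most of the work lies. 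Your closing claim that the decomposition ``always either hands over a fully $\empbisimp$-equivalent active subterm \dots or pins $\prgzero$ down to $\reset\valzero$'' mischaracterizes this case and would have to be repaired along those lines.
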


\begin{corollary}[Soundness]
  \label{c:soundness-p-main}
  We have $\mathord{\empbisimp} \subseteq \mathord{\ctxequivp}$.
\end{corollary}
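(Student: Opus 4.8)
The plan is to mirror the argument used for the relaxed semantics at the end of Section~\ref{s:soundness-completeness}: first turn the congruence lemma into a genuine congruence property of $\empbisimp$, and then extract the contextual observations of Definition~\ref{d:context-p} directly from the program clauses of Definition~\ref{d:env-bisim-p}.

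First I would combine Lemma~\ref{l:cong-empbisim-main-p} with the version of weakening (Lemma~\ref{l:weakening}) for $\bisimp$: since $\emptyset \subseteq \rval{\cloct\empbisimp}$, the lemma yields that $\tmzero \empbisimp \tmone$ implies $\inctx \cctx \tmzero \empbisimp \inctx \cctx \tmone$ for every context $\cctx$ closing both terms, i.e., $\empbisimp$ is a congruence. Instantiating this with the context $\resetctx\cctx$ gives, in particular, that whenever $\resetctx\cctx$ closes $\tmzero$ and $\tmone$ we have $\reset{\inctx \cctx \tmzero} \empbisimp \reset{\inctx \cctx \tmone}$, and both sides are now programs.

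It then remains to check the observational clause of Definition~\ref{d:context-p}. Suppose $\reset{\inctx \cctx \tmzero} \evalcbv \valzero$, so that there is a finite sequence $\reset{\inctx \cctx \tmzero} \redcbv \cdots \redcbv \valzero$ ending in a value. Using Lemmas~\ref{l:stuck} and~\ref{l:unique-decomp}, one checks the invariant that a closed program is never a normal form and, when it reduces, reduces either to a value or to another closed program; hence every term of the sequence except the last is a program. I would then replay the sequence against $\reset{\inctx \cctx \tmone}$, applying clause~(\ref{e:prgtau-p}) of Definition~\ref{d:env-bisim-p} at each internal step and clause~(\ref{e:prgval-p}) at the final step, to obtain $\reset{\inctx \cctx \tmone} \clocbv \valone$ for some value $\valone$; since a value is a normal form, $\reset{\inctx \cctx \tmone} \evalcbv \valone$. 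The converse direction is symmetric, using the symmetric clause of item~(\ref{e:prg-p}). This establishes $\tmzero \ctxequivp \tmone$, hence $\mathord{\empbisimp} \subseteq \mathord{\ctxequivp}$.

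I do not expect a real obstacle in this last step; the only point that needs care is the program-preservation invariant above, which is exactly what guarantees that the replayed computation only ever triggers the program clauses of Definition~\ref{d:env-bisim-p}, and never clause~(\ref{e:tm-p}). The genuinely delicate part of this section --- handling stuck subterms under a reset while proving congruence --- has already been isolated in Lemmas~\ref{l:decompose-prg-main} and~\ref{l:cong-empbisim-main-p}, on which the whole argument rests.
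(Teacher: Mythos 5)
Your proposal is correct and follows essentially the same route as the paper: the paper also derives congruence of $\empbisimp$ from Lemma~\ref{l:cong-empbisim-main-p} together with weakening, and then concludes soundness because the observable actions (here, evaluation of the resulting programs to values) coincide. The only difference is that you spell out the program-preservation invariant along the reduction sequence, which the paper leaves implicit.
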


\begin{remark}
  \label{r:applicative}
  Following the ideas behind Definition~\ref{d:env-bisim-p}, one can
  define an applicative bisimilarity $\bis$ for programs. However,
  proving that $\bis$ is sound seems more complex than for
  $\empbisimp$. We remind that the soundness proof of an applicative
  bisimilarity consists in showing that a relation called the
  \emph{Howe's closure} $\cloh\bis$ is an applicative bisimulation. To
  this end, we need a version of Lemma~\ref{l:decompose-prg-main} for
  $\cloh\bis$. However, $\cloh\bis$ is inductively defined as the
  smallest congruence which contains $\bis$ and satisfies
  $\mathord{\cloh\bis \bis} \subseteq \mathord{\cloh\bis}$ (1), and
  condition (1) makes it difficult to write a decomposition lemma for
  $\cloh\bis$ similar to Lemma~\ref{l:decompose-prg-main}.
\end{remark}

We prove completeness of $\empbisimp$ by showing that the relation
$\ctxequivep$, defined below, coincides with $\ctxequivp$ and $\empbisimp$. By
doing so, we also prove a context lemma for $\ctxequivp$.
\begin{definition}
  Let $\tmzero$, $\tmone$ be closed terms. We write $\tmzero
  \ctxequivep \tmone$ if for all closed~$\rctx$, $\reset{\inctx \rctx
    \tmzero} \evalcbv \valzero$ implies $\reset{\inctx \rctx \tmone}
  \evalcbv \valone$, and conversely for $\reset{\inctx \rctx \tmone}$.
\end{definition}
By definition, we have $\mathord{\ctxequivp} \subseteq
\mathord{\ctxequivep}$. With the same proof technique as in
Section~\ref{s:soundness-completeness}, we prove the following lemma.
\begin{lemma}[Completeness]
  \label{l:completeness-p-main}
  We have $\mathord{\ctxequivep} \subseteq
  \mathord{\empbisimp}$. 
\end{lemma}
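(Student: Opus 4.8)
The plan is to exhibit a single (big-step) environmental bisimulation for programs whose term-triples are exactly the pairs related by $\ctxequivep$, closed under enough context to satisfy the three clauses of Definition~\ref{d:env-bisim-p}. Concretely, I would take
\[
\erel = \{(\rval\ctxequivep, \tmzero, \tmone) \mmid \tmzero \ctxequivep \tmone\} \cup \{\rval\ctxequivep\},
\]
mimicking the completeness proof sketched in Section~\ref{s:soundness-completeness} for the relaxed semantics, and show it is a big-step environmental bisimulation for programs (using the big-step variant mentioned right after Definition~\ref{d:env-bisim-p}, obtained by deleting clause~(\ref{e:prgtau-p}) and replacing $\redcbv$ by $\clocbv$ in clause~(\ref{e:prgval-p})). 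Since $\empbisimp$ is the largest such relation, this yields $\ctxequivep \subseteq \empbisimp$, and combined with the already-noted inclusion $\ctxequivp \subseteq \ctxequivep$ and with soundness (Corollary~\ref{c:soundness-p-main}) this also collapses $\ctxequivp$, $\ctxequivep$, and $\empbisimp$, giving the promised context lemma for $\ctxequivp$ for free.

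The verification proceeds clause by clause. For clause~(\ref{e:tm-p}): if $\tmzero \ctxequivep \tmone$ and they are not both programs, I must show $\reset{\inctx\ctxzero\tmzero} \mathrel{\erel} \reset{\inctx\ctxone\tmone}$ for every $\ctxzero \clocc{\rval\ctxequivep} \ctxone$; the key is that $\reset{\inctx\ctxzero\cdot}$ and $\reset{\inctx\ctxone\cdot}$ are contexts of the form $\reset\cctx$ up to substituting the related values occurring in $\ctxzero,\ctxone$, and that $\ctxequivep$ is preserved under plugging into closed $\rctx$ wrapped in a top-level reset — which is essentially its definition — together with the fact that $\ctxequivep$ restricted to values is closed under such contexts (a congruence-type sub-argument for values, proved exactly as in the relaxed case, or importable from there). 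For clause~(\ref{e:prg-p}): given related \emph{programs} $\prgzero \ctxequivep \prgone$, I use determinism (Lemma~\ref{l:unique-decomp}) and the definition of evaluation: a program cannot get stuck, so either it diverges (then so does the other, by $\ctxequivep$ with $\rctx = \mtctx$, and there is nothing to check in the big-step game) or it evaluates to a value $\valzero$; then $\prgone \clocbv \valone$ and I need $\{(\valzero,\valone)\} \cup \rval\ctxequivep \in \erel$, i.e. $\valzero \ctxequivep \valone$, which follows from $\prgzero \empbisimp$-type reasoning — more precisely from $\prgzero \ctxequivep \prgone$ and Lemma~\ref{l:redcbv-in-empbisim}'s analogue/soundness composed appropriately, or directly: for any closed $\rctx$, $\reset{\inctx\rctx\valzero}$ and $\reset{\inctx\rctx\prgzero}$ have related behaviour since $\prgzero \clocbv \valzero$, and similarly on the right. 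Clause~(\ref{e:env-p}) is the standard environment clause: if $\lam\varx\tmzero \mathrel{\rval\ctxequivep} \lam\varx\tmone$ and $\valzero \cloctc{\rval\ctxequivep} \valone$, I must show $\subst\tmzero\varx\valzero \ctxequivep \subst\tmone\varx\valone$, which reduces to closing-context reasoning plus the fact that $\cloctc{\rval\ctxequivep}$-related arguments can be handled one at a time — exactly as in Section~\ref{s:soundness-completeness}.

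The main obstacle I anticipate is clause~(\ref{e:tm-p}), the non-program case: I must argue that dropping a pair of non-program terms into $\reset\ctxzero,\reset\ctxone$ with $\ctxzero,\ctxone$ built from the value environment stays within $\ctxequivep$, and this requires knowing that $\ctxequivep$ (a) is preserved under composition with evaluation contexts under a top-level reset and (b) has the property that its value-restriction is a congruence, so that the values embedded in $\ctxzero$ vs.\ $\ctxone$ can be abstracted out. Part (a) is immediate from the definition of $\ctxequivep$, but part (b) is where I'd need to replay the congruence-for-values argument — the same $\erel$-is-a-bisimulation-up-to-environment machinery used for Lemmas~\ref{l:cong-val-p-main} and~\ref{l:cong-e-ctx-p-main} — or, more economically, invoke soundness (Corollary~\ref{c:soundness-p-main}) in the form $\empbisimp \subseteq \ctxequivp \subseteq \ctxequivep$ to transfer the needed congruence of $\empbisimp$ to $\ctxequivep$ on the fragment already known equal. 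Handling this circularity cleanly — making sure the proof of completeness does not secretly presuppose the completeness it is proving — is the delicate point, and the resolution is the standard one: the congruence facts invoked are about $\empbisimp$ and about $\ctxequivep$-restricted-to-values only, both established independently of the inclusion being proved.
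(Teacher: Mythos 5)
Your proposal is correct and follows essentially the same route as the paper: the same candidate relation $\erel = \{(\rval\ctxequivep, \tmzero, \tmone) \mmid \tmzero \ctxequivep \tmone\} \cup \{\rval\ctxequivep\}$, shown to be a big-step environmental bisimulation for programs, with clause~(\ref{e:tm-p}) discharged by congruence of $\ctxequivep$ under evaluation contexts built from the value environment, clause~(\ref{e:prg-p}) by the definition of $\ctxequivep$ together with $\redcbv \subseteq \empbisimp \subseteq \ctxequivep$ (soundness), and clause~(\ref{e:env-p}) by congruence plus $\beta$-reduction. Your explicit discussion of the potential circularity and its resolution (only soundness of $\empbisimp$ and congruence of $\ctxequivep$ on values are needed, both established independently) matches what the paper's proof uses implicitly.
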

With Lemma~\ref{l:completeness-p-main} and Corollary
\ref{c:soundness-p-main}, we have $\mathord{\ctxequivp} \subseteq
\mathord{\ctxequivep} \subseteq \mathord{\empbisimp} \subseteq
\mathord{\ctxequivp}$. Defining up-to context for programs is
possible, with the same limitations as in Section~\ref{s:up-to-context}.

\subsection{Examples}
\label{s:examples-p}

We illustrate the differences between $\empbisim$ and $\empbisimp$, by
giving some examples of terms related by $\empbisimp$, but not by
$\empbisim$. First, note that $\empbisimp$ relates non-terminating
terms with stuck non-terminating terms.
\begin{lemma}
  \label{l:omega-stuck}
  We have $\Omega \empbisimp \shift \vark \Omega$.
\end{lemma}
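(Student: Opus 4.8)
The plan is to exhibit an explicit (big-step) environmental bisimulation for programs that relates $\Omega$ and $\shift \vark \Omega$. Since neither term is a value, and $\Omega$ is not a program while $\shift \vark \Omega$ is a stuck term (not a program either), clause~(\ref{e:tm-p}) of Definition~\ref{d:env-bisim-p} is the one that governs the pair: we must test the two terms in all contexts $\reset\ctxzero$, $\reset\ctxone$ with $\ctxzero \clocc\env \ctxone$. The key observation is that, for \emph{every} pure context $\ctx$, the program $\reset{\inctx \ctx \Omega}$ diverges (because $\Omega$ occupies the redex position of an infinite reduction and pure contexts never guard against divergence), and likewise $\reset{\inctx \ctx {\shift \vark \Omega}}$ reduces in one $\RRshift$ step to $\reset{\subst \Omega \vark {\lam \varx {\reset{\inctx \ctx \varx}}}}$, which is just $\reset\Omega$ since $\vark \notin \fv\Omega$, and $\reset\Omega$ diverges as well. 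So both sides always diverge, and a big-step bisimulation never has to produce a value on either side.

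Concretely, I would take
\[
  \erel = \{(\emptyset, \Omega, \shift \vark \Omega)\}
  \cup \{(\emptyset, \reset{\inctx \ctx \Omega}, \reset{\inctx \ctx {\shift \vark \Omega}}) \mmid \ctx \textrm{ closed pure context}\}
  \cup \{(\env, \tm, \tm) \mmid \env \subseteq \mathord{\Id}\}
  \cup \{\env \mmid \env \subseteq \mathord{\Id}\}.
\]
Since the only environment appearing with the interesting pairs is $\emptyset$, the clause~(\ref{e:env-p}) obligations are vacuous for those, and are trivially satisfied by the diagonal part $\{(\env,\tm,\tm)\}\cup\{\env\}$, which is a bisimulation on its own. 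For clause~(\ref{e:tm-p}): from $(\emptyset,\Omega,\shift\vark\Omega)$, any $\ctxzero \clocc\emptyset \ctxone$ forces $\ctxzero = \ctxone = \ctx$ (closed, so built only from the empty environment), and $\reset{\inctx \ctx \Omega} \ierel\erel\emptyset \reset{\inctx \ctx {\shift\vark\Omega}}$ is in $\erel$; the same argument applies to the pairs $(\emptyset,\reset{\inctx\ctx\Omega},\reset{\inctx\ctx{\shift\vark\Omega}})$, which after wrapping become $\reset{\inctx{\ctx'}{\reset{\inctx\ctx\Omega}}}$ versus $\reset{\inctx{\ctx'}{\reset{\inctx\ctx{\shift\vark\Omega}}}}$ — I would either close $\erel$ under this operation (allowing arbitrary evaluation contexts around a reset) or, more cleanly, observe directly that these wrapped terms still diverge on both sides and fall into a slightly enlarged but still manageable family of pairs. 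For clause~(\ref{e:prg-p}) applied to a pair of programs $\reset{\inctx \ctx \Omega} \ierel\erel\emptyset \reset{\inctx\ctx{\shift\vark\Omega}}$: the left side reduces (it contains the $\Omega$-redex, or if $\ctx=\mtctx$, it is $\reset\Omega \redcbv \reset\Omega$ after a $\beta$ step), giving some $\reset{\inctx\ctx{\Omega'}}$ where $\Omega \redcbv \Omega'$ and $\Omega'$ again diverges in every pure context, so it matches an element of $\erel$ (up to identifying $\Omega'$-diverging terms with the $\Omega$ family, which I would handle by a small up-to-reduction/up-to-bisimilarity step, or by taking $\erel$ to contain $(\emptyset,\reset{\inctx\ctx\tm},\reset{\inctx\ctx{\shift\vark\Omega}})$ for all $\tm$ with $\tm \redcbv$ and $\tm\divcbv$); the right side, by contrast, is matched by zero or more steps of $\clocbv$ (Definition~\ref{d:env-bisim-p} only requires $\prgone \clocbv \prgone'$), and since it diverges we can always follow it far enough, so the bisimulation game never gets stuck.

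The main obstacle is purely bookkeeping: getting the candidate relation closed under the moves of clauses~(\ref{e:tm-p}) and~(\ref{e:prg-p}) — i.e.\ under wrapping by $\reset\ctx$ on both sides, and under the reduction steps that turn $\Omega$ into $\app{\lamp\varx{\app\varx\varx}}{\lamp\varx{\app\varx\varx}}$ again — without the relation blowing up. This is smoothed by working big-step and by allowing up-to bisimilarity (already available by Lemma~\ref{l:redcbv-in-empbisim} extended to $\empbisimp$), so that each side can be collapsed to a canonical diverging program after each step. There is no subtlety in the ``values'' direction because neither side ever produces a value; the only content is that divergence is preserved under pure contexts and under the $\RRshift$ step that consumes the $\shift$, which is immediate from $\vark\notin\fv\Omega$ and the determinism of $\redcbv$ (Lemma~\ref{l:unique-decomp}).
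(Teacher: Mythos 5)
Your overall strategy --- exhibit a small, concrete environmental bisimulation for programs all of whose members diverge --- is exactly the paper's, but the candidate relation you write down is not closed under the bisimulation game, and the patches you sketch do not repair the actual point of failure. The problematic move is the $\RRshift$ step of the \emph{right-hand} component: $\reset{\inctx \ctx {\shift \vark \Omega}} \redcbv \reset{\subst \Omega \vark {\lam \varx {\reset{\inctx \ctx \varx}}}} = \reset \Omega$ (as you note, since $\vark \notin \fv\Omega$). By the converse of clause~(\ref{e:prgtau-p}) of Definition~\ref{d:env-bisim-p}, the left component must answer with some $\prgzero'$ such that $\reset{\inctx \ctx \Omega} \clocbv \prgzero'$ and $\prgzero' \ierel \erel \emptyset \reset\Omega$; but $\reset{\inctx \ctx \Omega}$ only ever reduces to itself, so the pair $(\emptyset, \reset{\inctx \ctx \Omega}, \reset\Omega)$ must belong to $\erel$, and for $\ctx \neq \mtctx$ it lies in none of your four families (your second family keeps $\shift \vark \Omega$ on the right, and your diagonal part requires syntactic equality). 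The enlargement you propose --- adding $(\emptyset,\reset{\inctx\ctx\tm},\reset{\inctx\ctx{\shift\vark\Omega}})$ for diverging reducible $\tm$ --- only tracks reductions of the left component, where nothing interesting happens ($\Omega \redcbv \Omega$); it leaves the right-hand capture step unmatched. The paper closes the relation by simply adding the third family of pairs $(\emptyset, \reset{\inctx \ctx \Omega}, \reset{\Omega})$, after which each component of each pair reduces only to terms already related, and the verification is immediate without any appeal to up-to techniques.

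A secondary point: the bookkeeping you worry about for clause~(\ref{e:tm-p}) --- having to re-wrap the pairs $(\reset{\inctx\ctx\Omega},\reset{\inctx\ctx{\shift\vark\Omega}})$ in further contexts $\reset{\ctx'}$ --- does not arise. Clause~(\ref{e:tm-p}) applies only when the two terms are \emph{not both} programs; once both sides are of the form $\reset{\cdot}$, only clause~(\ref{e:prg-p}) is in play, so the relation need not be closed under nested wrapping. With that observation, and with the missing third family added, your argument coincides with the paper's.
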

The relation $\{(\emptyset, \Omega, \shift \vark \Omega), (\emptyset,
\reset{\inctx \ctx \Omega}, \reset{\inctx \ctx {\shift \vark
    \Omega}}), (\emptyset, \reset{\inctx \ctx \Omega}, \reset{\Omega})
\}$ is a bisimulation for programs. Lemma~\ref{l:omega-stuck} does not hold
with $\empbisim$ because $\Omega$ is not stuck.

As wished, $\empbisimp$ satisfies the only axiom
of~\cite{Kameyama-Hasegawa:ICFP03} not satisfied by $\empbisim$.
\begin{lemma}
  \label{l:skkt}
  If $\vark \notin \fv \tm$, then $\tm \open\empbisimp \shift \vark {\app \vark \tm}$.
\end{lemma}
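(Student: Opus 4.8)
The plan is to exhibit, after unfolding the open extension, a (big-step) environmental bisimulation for programs that encodes the evident correspondence between the reduction behaviours of the two sides. Writing $\vect\varx = \fv\tm$, the statement $\tm\open\empbisimp\shift\vark{\app\vark\tm}$ amounts to $\lam{\vect\varx}\tm\empbisimp\lam{\vect\varx}{\shift\vark{\app\vark\tm}}$, so I would build a relation containing $(\emptyset,\lam{\vect\varx}\tm,\lam{\vect\varx}{\shift\vark{\app\vark\tm}})$. These are values, so clause~(\ref{e:tm-p}) of Definition~\ref{d:env-bisim-p} first wraps them in $\reset\ctx$ for an arbitrary closed pure $\ctx$, and clause~(\ref{e:env-p}) will later feed them arbitrary related closed arguments; the crucial point is that the only way to tell the two values apart is to apply them fully under a reset, and since $\vark$ never occurs free in an instance of $\tm$, we have the one-step reduction
\[
\reset{\inctx\ctx{\shift\vark{\app\vark{s}}}}\;\redcbv\;\reset{\app{\lamp z{\reset{\inctx\ctx z}}}{s}}
\]
for every closed pure $\ctx$ and every instance $s$ of $\tm$. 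Hence the relation must also contain, for closed pure $\ctx$ and closed terms $s_0$, $s_1$ related by the term closure of the current environment, the ``core'' program pairs $\reset{\inctx\ctx{s_0}}$ versus $\reset{\app{\lamp z{\reset{\inctx\ctx z}}}{s_1}}$: the two programs differ only by the extra continuation-reifying frame $\app{\lamp z{\reset{\inctx\ctx z}}}{\mtctx}$ that shift has left behind around the term.

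I would then check the bisimulation clauses on these core pairs by following the evaluation of $s_0$ (resp.\ $s_1$) inside its evaluation context under the reset. If $s_0$ evaluates to a value $\valzero$ (and $s_1$ correspondingly to $\valone$) without capturing the enclosing context, then the left program reaches $\reset{\inctx\ctx{\valzero}}$ while the right one reaches $\reset{\app{\lamp z{\reset{\inctx\ctx z}}}{\valone}}\redcbv\reset{\reset{\inctx\ctx{\valone}}}$; the superfluous reset is erased by $\reset{\reset\tm}\empbisim\reset\tm$ (Example~\ref{ex:reset-reset}, carried over to $\empbisimp$ by Lemma~\ref{l:empbisim-in-empbisimp}), so up to bisimilarity we are left with $\reset{\inctx\ctx{\valzero}}$ versus $\reset{\inctx\ctx{\valone}}$, which lies in the context closure of the environment once $(\valzero,\valone)$ has been recorded (clause~(\ref{e:prgval-p})). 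If instead $s_0$ evaluates, within its context, to a stuck term $\inctx{\ctxzero}{\shift{\vark'}{\tmzero'}}$ and $s_1$ to $\inctx{\ctxone}{\shift{\vark'}{\tmone'}}$ with $\ctxzero$, $\ctxone$ and $\tmzero'$, $\tmone'$ correspondingly related (as ensured by keeping the environment reduction-closed, cf.\ the congruence argument of Lemma~\ref{l:cong-e-ctx-p-main}), the nearest enclosing reset is the same outer one on both sides, so shift fires on both; on the left it reifies $\inctx\ctx{\inctx\ctxzero{\mtctx}}$, on the right the larger pure context $\app{\lamp z{\reset{\inctx\ctx z}}}{\inctx\ctxone{\mtctx}}$, and the resulting programs are $\reset{\tmzero'}$, $\reset{\tmone'}$ with $\vark'$ replaced by these reified continuations. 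One then checks they are again related: the common wrapper ``$\reset{}$ around the body with a hole at $\vark'$'' is a genuine common context absorbed by up-to context, while the two substituted continuation values, once applied to a value, reduce to a core pair over the enlarged pure context $\inctx\ctx{\inctx\ctxzero{\mtctx}}$ — so the family of core pairs is self-similar under capture. Divergence is matched by divergence because we work big-step. Finally, clause~(\ref{e:env-p}) on the accumulated environments reduces, for the original pair, to its fully-applied instances, which are core pairs after one step, and for the capture-generated pairs back to core pairs as above; so the relation is closed and, by the program analogue of Lemma~\ref{l:utctx-soundness}, contained in $\empbisimp$, giving $\tm\open\empbisimp\shift\vark{\app\vark\tm}$.

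The main obstacle — and the reason this is precisely the axiom $\empbisim$ fails to validate — is that here one side is forever a stuck term while the other need not be, so the two programs are only ever distinguished once the surrounding value has been applied and a reset has surfaced, and the whole argument rests on the fact that the frame $\app{\lamp z{\reset{\inctx\ctx z}}}{\mtctx}$ left by shift degenerates to a single harmless reset the moment its argument becomes a value. The delicate case is when $\tm$ itself performs a control effect: the continuations reified by the two sides then differ by exactly that frame, a mismatch that bisimulation up to context cannot remove since it only factors out a common context — this is the limitation discussed in Section~\ref{s:up-to-context} (problem~(i)). Consequently the bisimulation cannot be kept small: it must carry the whole family of core pairs explicitly, and the real work is to pin the family down so that every reduct — modulo reduction (Lemma~\ref{l:redcbv-in-empbisim}), modulo the reset collapse, and modulo a common context — stays inside it; the individual computations, driven by unique decomposition (Lemma~\ref{l:unique-decomp}), are routine.
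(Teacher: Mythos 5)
Your high-level setup is the right one and matches the paper's appendix: unfold the open extension by $\lambda$-closing, observe that everything reduces to comparing $\reset{\inctx \ctx {s_0}}$ with $\reset{\app {\lamp z {\reset{\inctx \ctx z}}}{s_1}}$ after the shift fires, and handle the value case by collapsing the double reset. But you miss the device that actually carries the paper's proof: the ``core'' obligation $\reset{\inctx \ctx \tm} \empbisim \reset{\app {\lamp \varx {\reset{\inctx \ctx \varx}}} \tm}$ is not verified by a hand-built bisimulation there. It follows in one line from completeness of $\empbisim$ w.r.t.\ $\ctxequiv$, the previously established fact that $\ctxequiv$ validates the $\beta_\Omega$ axiom, and Example~\ref{ex:reset-reset}; the candidate relation is then just the two extra families of pairs union $\empbisim$, and Lemma~\ref{l:empbisim-in-empbisimp} does the rest. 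Choosing not to use that shortcut is legitimate, but it puts the full weight on your direct verification.

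That verification has a genuine gap: your family of core pairs is not closed under the bisimulation game. Each of your core pairs carries exactly one extra frame $\app {\lamp z {\reset{\inctx \ctx z}}} \mtctx$ on the right, with the \emph{same} $\ctx$ as on the left. However, $s_1$ is built from an environment containing the pair $(\lam {\vect \varx} \tm,\ \lam {\vect \varx}{\shift \vark {\app \vark \tm}})$, so in the course of evaluating a core pair the right-hand side can $\beta$-reduce a fully applied instance of the second value and expose a fresh $\shift \vark {\app \vark {\cdot}}$ in evaluation position while the left-hand side exposes nothing and keeps computing. When that shift captures up to the outer reset it reifies the pure context $\app {\lamp z {\reset{\inctx \ctx z}}}{\inctx {\ctxone} \mtctx}$, i.e.\ a frame \emph{containing} the previous frame, whereas the corresponding left-hand continuation is the single context $\inctx \ctx {\inctx {\ctxzero} \mtctx}$. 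The resulting pair is not of your core shape, the frames nest to unbounded depth as this repeats, your symmetric case analysis (``$s_0$ reaches a value'' / ``$s_0$ reaches a stuck term'') never confronts this asymmetric step, and neither up-to context (which only factors out a \emph{common} context) nor the reset collapse of Example~\ref{ex:reset-reset} removes the mismatch; working big-step does not help either, since identifying the final values still requires tracking these intermediate states. This is precisely why the paper's hand-made proofs of the analogous facts in Appendix~C introduce inductively defined families of substitutions and contexts of arbitrary nesting depth, or else erase the extra frame up to bisimilarity using $\beta_\Omega$. To repair your argument you must either generalize the core pairs to arbitrarily nested frames and re-establish closure --- which is the entire difficulty, not the ``routine'' part --- or import the $\beta_\Omega$ argument as the paper does.
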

We sketch the proof for $\tm$ closed; for the general case, see Appendix
\ref{a:skkt-open}. We prove that $\{(\emptyset, \tm, \shift \vark {\app \vark
  \tm}), (\emptyset, \reset{\inctx \ctx \tm}, \reset{\inctx \ctx {\shift \vark
    {\app \vark \tm}}})\} \cup \empbisim$ is a bisimulation for
programs. Indeed, we have $\reset{\inctx \ctx {\shift \vark {\app \vark \tm}}}
\redcbv \reset{\app {\lamp \varx {\reset{\inctx \ctx \varx}}} \tm}$, and because
$\empbisim$ verifies the $\beta_\Omega$ axiom ($\empbisim$ is complete, and
$\ctxequiv$ verifies the $\beta_\Omega$ axiom
\cite{Biernacki-Lenglet:FOSSACS12}), we know that $\reset{\app {\lamp \varx
    {\reset{\inctx \ctx \varx}}} \tm} \empbisim \reset{\reset{\inctx \ctx \tm}}$
holds. From Example~\ref{ex:reset-reset}, we have $\reset{\reset{\inctx \ctx
    \tm}} \empbisim \reset{\inctx \ctx \tm}$, therefore we have $\reset{\inctx
  \ctx {\shift \vark {\app \vark \tm}}} \empbisim \reset{\inctx \ctx \tm}$.

Consequently, $\open\empbisimp$ is complete w.r.t. $\cpsequiv$.

\begin{corollary}
  We have $\mathord{\cpsequiv} \subseteq \mathord{\open\empbisimp}$.
\end{corollary}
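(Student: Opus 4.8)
The plan is to rely on the Kameyama--Hasegawa characterization~\cite{Kameyama-Hasegawa:ICFP03} of $\cpsequiv$ as the least congruence on open terms containing all instances of the eight axiom schemes of Section~\ref{s:cps-equiv}. Consequently it suffices to prove that (i) $\open\empbisimp$ is a congruence --- an equivalence relation compatible with every term constructor --- and that (ii) $\open\empbisimp$ contains every instance of each of the eight schemes; the inclusion $\mathord{\cpsequiv} \subseteq \mathord{\open\empbisimp}$ then follows because $\cpsequiv$ is the \emph{least} such congruence.

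For step (i), I would first observe that $\empbisimp$ is a congruence on closed terms: this follows from Lemma~\ref{l:cong-empbisim-main-p} together with (the $\bisimp$-version of) weakening, Lemma~\ref{l:weakening}, exactly as $\empbisim$ was shown to be a congruence in Section~\ref{s:soundness-completeness}. The open extension $\open\empbisimp$ is then an equivalence relation and is compatible with every term constructor: to relate contexts applied to related open terms one $\lambda$-closes the free variables of the two terms, invokes the closed-term congruence, and uses clause~(\ref{e:env-p}) of Definition~\ref{d:env-bisim-p} to recover the required substitution instances.

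For step (ii), seven of the eight schemes are already validated by $\ctxequiv$~\cite{Biernacki-Lenglet:FOSSACS12}; since $\empbisim$ coincides with $\ctxequiv$ (Corollary~\ref{c:soundness-main} and the completeness theorem of Section~\ref{s:soundness-completeness}) and $\ctxequiv$ is itself a congruence, every open instance of these seven schemes lies in $\open\empbisim$, hence in $\open\empbisimp$ by Lemma~\ref{l:empbisim-in-empbisimp}. The eighth scheme, $\shift\vark{\app\vark\tm} \eqKH \tm$ for $\vark\notin\fv\tm$, is exactly the content of Lemma~\ref{l:skkt}. With (i) and (ii) in hand, $\open\empbisimp$ is a congruence containing all eight axiom schemes, so it contains $\cpsequiv$.

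The step I expect to be the actual work is the passage to open terms in (i): transferring the closed-term congruence of $\empbisimp$ to a congruence on open terms requires careful handling of variables occurring free in only one of the two related terms and of their interaction with the substitution clause~(\ref{e:env-p}) of Definition~\ref{d:env-bisim-p}, together with the analogous argument that a $\ctxequiv$-equivalence of open terms lifts to an $\empbisim$-equivalence of their $\lambda$-closures. Step (ii), by contrast, is essentially a bookkeeping of facts already established (most notably Lemma~\ref{l:skkt}).
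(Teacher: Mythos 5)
Your proposal is correct and follows essentially the same route as the paper: the corollary is drawn there as an immediate consequence of the fact that seven of the Kameyama--Hasegawa axioms are validated by $\ctxequiv$, hence by $\empbisim$ (soundness and completeness), hence by $\empbisimp$ (Lemma~\ref{l:empbisim-in-empbisimp}), while the eighth is Lemma~\ref{l:skkt}, together with congruence of $\open\empbisimp$ and minimality of $\cpsequiv$ as the congruence generated by the axioms. The paper leaves the lifting of the closed-term congruence to the open extension implicit, whereas you correctly flag it as the only step requiring real care.
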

As a result, we can use $\cpsequiv$ (restricted to closed terms) as a proof
technique for~$\empbisimp$. E.g., the following equivalence can be derived from
the axioms~\cite{Kameyama-Hasegawa:ICFP03}.
\begin{lemma}
  \label{l:s-tail}
  If $\vark \notin \fv\tmone$, then $\app{\lamp \varx{\shift \vark
      \tmzero}} \tmone \empbisimp \shift\vark {\appp {\lamp \varx
      \tmzero} \tmone}$.
\end{lemma}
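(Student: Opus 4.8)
The plan is to reduce the statement to a CPS-equivalence and then appeal to the corollary $\mathord{\cpsequiv}\subseteq\mathord{\open\empbisimp}$ established just above. Concretely, I would first prove $\app{\lamp \varx{\shift \vark \tmzero}} \tmone \cpsequiv \shift\vark {\appp {\lamp \varx \tmzero} \tmone}$ (after renaming so that $\varx\ne\vark$, which is harmless, and using the hypothesis $\vark\notin\fv\tmone$), and then obtain $\app{\lamp \varx{\shift \vark \tmzero}} \tmone \open\empbisimp \shift\vark {\appp {\lamp \varx \tmzero} \tmone}$ at once from that corollary; for closed $\tmzero$ and $\tmone$ this is exactly the claimed $\empbisimp$, and for open terms it is the intended open extension.

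For the CPS-equivalence step I would derive the equality of the two terms from the eight axioms of~\cite{Kameyama-Hasegawa:ICFP03}, which characterize $\cpsequiv$. The phenomenon at work is a \emph{hoisting} of the $\shift$: in $\app{\lamp \varx{\shift \vark \tmzero}} \tmone$ the operator $\shift\vark$ is blocked under $\lam\varx$ until the argument $\tmone$ has been evaluated and substituted for $\varx$, whereas in $\shift\vark {\appp {\lamp \varx \tmzero} \tmone}$ it fires first, reifying the surrounding continuation, and only then is $\tmone$ evaluated and substituted for $\varx$. Because $\varx\ne\vark$ and $\vark\notin\fv\tmone$, (i) evaluating $\tmone$ is insensitive to the continuation reified for $\vark$, and (ii) the substitution for $\varx$ (the value $\tmone$ reduces to) commutes with the substitution for $\vark$ (the reified continuation); hence the two orderings produce the same result. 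The cleanest way to discharge the step concretely is to unfold the Danvy--Filinski CPS translation~\cite{Danvy-Filinski:LFP90} of both terms and perform the administrative $\beta$-reductions: both translations reduce to the same continuation-taking term, namely the one that runs the translation of $\tmone$, binds the outcome to $\varx$, installs the reified continuation for $\vark$, and runs the translation of $\tmzero$ under the identity continuation (the last point corresponding, on the source side, to the $\rawreset$ introduced around the shift body by the axiom $\reset {\inctx \ctx {\shift \vark \tm}} \eqKH \reset {\subst \tm \vark {\ldots}}$). An axiomatic derivation achieves the same by threading this axiom together with the reset-lifting axiom, $\reset \val \eqKH \val$, $\shift \vark {\reset \tm} \eqKH \shift \vark \tm$, and the $\beta$-value axiom; no single basic axiom performs the hoisting, so the reset axioms have to cooperate.

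Once the CPS-equivalence holds, the conclusion is immediate via the corollary. The main obstacle is the CPS-equivalence bookkeeping --- getting the administrative reductions to join, or, equivalently, finding the right sequence of reset axioms --- together with the careful use of $\vark\notin\fv\tmone$, which is genuinely necessary: were $\vark$ allowed to occur free in $\tmone$, the continuation substituted for $\vark$ would touch those occurrences, the two substitutions would fail to commute, and the equivalence would break.
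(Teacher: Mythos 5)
Your proposal is correct and follows exactly the route the paper takes in the main text: it justifies Lemma~\ref{l:s-tail} by observing that the equation is an instance of $\cpsequiv$ (derivable from the Kameyama--Hasegawa axioms, using $\vark\notin\fv\tmone$) and then invoking the corollary $\mathord{\cpsequiv}\subseteq\mathord{\open\empbisimp}$ stated immediately before it. The only thing you could not have known is that Appendix~\ref{a:s-tail} also gives a direct bisimulation proof with an elaborate candidate relation, but the paper includes that precisely to illustrate the complexity one avoids by arguing via $\cpsequiv$ as you do.
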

This equivalence does not hold with $\empbisim$, because the term on
the right is stuck, but the term on the left may not evaluate to a
stuck term (if $\tmone$ does not terminate). We can generalize this
result as follows, again by using $\cpsequiv$.
\begin{lemma}
  \label{l:generalized-s-tail}
  If $\vark \notin \fv\tmone$ and $\varx \notin \fv \ctx$, then we
  have $\app{\lamp \varx{\inctx \ctx {\shift \vark \tmzero}}} \tmone
  \empbisimp \inctx \ctx {\shift\vark {\appp {\lamp \varx \tmzero}
      \tmone}}$.
\end{lemma}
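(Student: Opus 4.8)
The plan is to reduce the statement to a CPS equivalence and then apply the preceding corollary. Since $\mathord{\cpsequiv} \subseteq \mathord{\open\empbisimp}$, it is enough to prove $\app{\lamp \varx {\inctx \ctx {\shift \vark \tmzero}}} \tmone \cpsequiv \inctx \ctx {\shift \vark {\appp {\lamp \varx \tmzero} \tmone}}$, and by the axiomatization of $\cpsequiv$ in~\cite{Kameyama-Hasegawa:ICFP03} it suffices to derive this equality from the eight axioms (which range over open terms and treat variables as values). Now Lemma~\ref{l:s-tail} is itself such a derivable equality, namely $\app{\lamp \varx {\shift \vark \tmzero}} \tmone \eqKH \shift \vark {\appp {\lamp \varx \tmzero} \tmone}$ (this is where $\vark \notin \fv \tmone$ is used); hence, by congruence of $\eqKH$ applied inside the pure context $\ctx$, it remains only to show
\[
  \app{\lamp \varx {\inctx \ctx {\shift \vark \tmzero}}} \tmone \;\eqKH\; \inctx \ctx {\app{\lamp \varx {\shift \vark \tmzero}} \tmone} ,
\]
i.e.\ that a $\beta_v$-redex can be pulled out of a pure context whose free variables avoid the abstracted variable.

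I would prove this displayed equality by using the $\beta_\Omega$ axiom ``in reverse''. Let $D$ be the pure context determined by $\inctx D \tm = \inctx \ctx {\app{\lamp \varx {\shift \vark \tmzero}} \tm}$ for every $\tm$ (a composition of pure contexts, hence again pure); since $\varx \notin \fv \ctx$ we have $\varx \notin \fv D$. By the $\beta_v$ axiom, $\app{\lamp \varx {\shift \vark \tmzero}} \varx \eqKH \shift \vark \tmzero$, so $\inctx D \varx \eqKH \inctx \ctx {\shift \vark \tmzero}$ and therefore, by congruence, $\app{\lamp \varx {\inctx D \varx}} \tmone \eqKH \app{\lamp \varx {\inctx \ctx {\shift \vark \tmzero}}} \tmone$. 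On the other hand, the $\beta_\Omega$ axiom (legitimate because $\varx \notin \fv D$) gives $\app{\lamp \varx {\inctx D \varx}} \tmone \eqKH \inctx D \tmone = \inctx \ctx {\app{\lamp \varx {\shift \vark \tmzero}} \tmone}$. Chaining these two equalities yields the displayed equation, and hence the lemma. Alternatively, one could skip the axiom system and verify directly that the two CPS translations are $\beta\eta$-convertible --- the point being that in the CPS image of $\inctx \ctx {\shift \vark {(\cdots)}}$ the reified continuation absorbs the pure context $\ctx$ --- but the derivation through $\beta_\Omega$ and Lemma~\ref{l:s-tail} is shorter.

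The only genuinely non-obvious step is the reverse use of $\beta_\Omega$ with the carefully chosen pure context $D$; a more naive approach --- for instance, an induction on the structure of $\ctx$ that tries to commute the outermost application frame past the $\beta_v$-redex one frame at a time --- tends to loop, because the single-frame commutation turns out to be essentially as hard as the full statement. Once this observation is made, the remaining work is routine: the side-conditions needed are just $\varx \notin \fv \ctx$ (to obtain $\varx \notin \fv D$) and $\vark \notin \fv \tmone$ (used only through Lemma~\ref{l:s-tail}), both immediate. I therefore do not expect a substantial obstacle beyond spotting this derivation.
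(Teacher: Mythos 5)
Your proposal is correct and follows the same route as the paper: the paper proves this lemma simply ``by using $\cpsequiv$'' together with the corollary $\mathord{\cpsequiv}\subseteq\mathord{\open\empbisimp}$, leaving the axiomatic derivation implicit. Your explicit derivation --- factoring through Lemma~\ref{l:s-tail} and using $\beta_\Omega$ in reverse on the pure context $D=\inctx{\ctx}{\app{\lamp \varx{\shift \vark \tmzero}}{\mtctx}}$ (legitimate since $\varx\notin\fv D$) --- is a valid way to fill in the detail the paper omits.
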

Proving Lemma~\ref{l:skkt} without the $\beta_\Omega$ axiom and
Lemmas~\ref{l:s-tail} and~\ref{l:generalized-s-tail} without
$\cpsequiv$ requires complex candidate relations (see the proof of
Lemma~\ref{l:s-tail} in Appendix~\ref{a:s-tail}), because of the lack
of powerful enough up-to techniques.

\serguei{An example of terminating terms that are not related by
  $\cpsequiv$ would be nice}\darek{I guess we have it (e.g., Lemmas 20
  and 21 adjusted so that $t_1$ in one term is the Curry fixed-point
  combinator and in the other it is the Turing fixed-point
  combinator). Perhaps we want to save such examples for the journal
  version.}

\section{Conclusion}
\label{s:conclusion}

We propose sound and complete environmental bisimilarities for two
variants of the semantics of $\lamshift$. For the semantics of
Section~\ref{s:general}, we now have several bisimilarities, each with
its own merit. Normal form
bisimilarity~\cite{Biernacki-Lenglet:FLOPS12} and its up-to techniques
leads to minimal proof obligations, however it is not complete, and
distinguishes very simple equivalent terms (see Proposition~1
in~\cite{Biernacki-Lenglet:FLOPS12}). Applicative bisimilarity
\cite{Biernacki-Lenglet:FOSSACS12} is complete but sometimes requires
complex bisimulation proofs (e.g., for the $\beta_\Omega$
axiom). Environmental bisimilarity~$\empbisim$
(Definition~\ref{d:env-bisim}) is also complete, can be difficult to
use, but this difficulty can be mitigated with up-to
techniques. However, bisimulation up to context is not as helpful as
we could hope (see Section~\ref{s:up-to-context}), because we have to
manipulate open terms (problem (i)), and the context closure of an
environmental relation is restricted to evaluation contexts (problem
(ii)). As a result, proving the $\beta_\Omega$ axiom is more difficult
with environmental than with applicative bisimilarity. We believe
dealing with problem (i) requires new up-to techniques to be
developed, and lifting the evaluation context restriction (problem
(ii)) would benefit not only for $\lamshift$, but also for process
calculi with passivation~\cite{Pierard-Sumii:LICS12}; we leave this as
a future work.

In contrast, we do not have as many options when considering the
semantics of Section~\ref{s:programs} (where terms are evaluated
within a top-level reset). The environmental bisimilarity of this
paper $\empbisimp$ (Definition~\ref{d:env-bisim-p}) is the first to be
sound and complete w.r.t. Definition~\ref{d:context-p}. As argued
in~\cite{Biernacki-Lenglet:FLOPS12} (Section 3.2), normal form
bisimilarity cannot be defined on programs without introducing extra
quantifications (which defeats the purpose of normal form
bisimilarity). Applicative bisimilarity could be defined for programs,
but proving its soundness would require a new technique, since the
usual one (Howe's method) does not seem to apply (see
Remark~\ref{r:applicative}). This confirms that environmental
bisimilarity is more flexible than applicative
bisimilarity~\cite{Koutavas-al:ENTCS11}. However, we would like to
simplify the quantification over contexts in clause (\ref{e:tm-p}) of
Definition~\ref{d:env-bisim-p}, so we look for sub-classes of terms
where this quantification is not mandatory.

Other future works include the study of the behavioral theory of other
delimited control operators, like the dynamic ones (e.g., {\it
  control} and {\it prompt}~\cite{Felleisen:POPL88} or {\it shift$_0$}
and {\it reset$_0$}~\cite{Danvy-Filinski:DIKU89}), but also of
abortive control operators, such as {\it callcc}, for which no sound and
complete bisimilarity has been defined so far.

\subsubsection*{Acknowledgments}
\label{sec:acks}
We thank Ma{\l}gorzata Biernacka and the anonymous referees for many
helpful comments on the presentation of this work.

\bibliographystyle{abbrv}
\bibliography{mybib}

\newpage

\appendix

\section{Soundness and Completeness for the Relaxed Semantics}
\label{a:general}

In bisimulation up-to environment, one can use bigger environments
that the ones needed by Definition \ref{d:env-bisim}. As a result,
instead of making the environment grow at each bisimulation step, we
can directly use the largest possible environment.
\begin{definition}
  \label{d:utenv}
  An environmental relation $\erel$ is an environmental bisimulation
  up to environment if
  \begin{enumerate}
  \item $\tmzero \ierel \erel \env \tmone$ implies: \label{e:utenv-tm}
    \begin{enumerate}
    \item if $\tmzero \redcbv \tmzero'$, then $\tmone \clocbv \tmone'$
      and $\tmzero' \ierel \erel {\env'} \tmone'$ for some $\env'$
      such that $\env \subseteq \env'$; \label{e:utenv-tmtau}
    \item if $\tmzero$ is a value $\valzero$, then $\tmone \clocbv
      \valone$ and $\env' \in \erel$ for some $\env'$ such that $\env
      \cup \{(\valzero, \valone)\} \subseteq
      \env'$; \label{e:utenv-tmval}
    \item if $\tmzero$ is a stuck term, then $\tmone \clocbv \tmone'$
      where $\tmone'$ is a stuck term and $\env' \in \erel$ for some
      $\env'$ such that $\env \cup \{(\tmzero, \tmone')\} \subseteq
      \env'$; \label{e:utenv-tmstuck}
    \item the converse of the above conditions on $\tmone$;
    \end{enumerate}
  \item $\env \in \erel$ implies: \label{e:utenv-env}
    \begin{enumerate}
    \item for all $(\lam \varx \tmzero, \lam \varx \tmone) \in \env$,
      for all $(\valzero, \valone) \in \mathord{\cloctc \env}$, we
      have $\subst \tmzero \varx \valzero \ierel \erel {\env'} \subst
      \tmone \varx \valone$ for some $\env \subseteq
      \env'$; \label{e:utenv-envv}
    \item for all $(\inctx \ctxzero {\shift \vark \tmzero}, \inctx
      \ctxone {\shift \vark \tmone}) \in \env$, for all $(\ctx'_0,
      \ctx'_1) \in \mathord{\clocc \env}$, we have $$\reset {\subst
        \tmzero \vark {\lam \varx {\reset {\inctx {\ctx'_0}{\inctx
                \ctxzero \varx}}}}} \ierel \erel {\env'} \reset
              {\subst \tmone \vark {\lam \varx {\reset {\inctx
                      {\ctx'_1}{\inctx \ctxone \varx}}}}}$$ for a
              fresh $\varx$ and some $\env \subseteq
              \env'$.\label{e:utenv-envs}
    \end{enumerate}
  \end{enumerate}
\end{definition}

\begin{lemma}
  If $\erel$ is an environmental bisimulation up to environment, then
  $\erel \subseteq \bisim$.
\end{lemma}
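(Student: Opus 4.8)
The plan is to exhibit an environmental bisimulation (in the sense of Definition~\ref{d:env-bisim}) that contains $\erel$, and the natural candidate is the \emph{downward closure} of $\erel$ along the inclusion ordering on environments:
\[
  \overline\erel \;=\; \{(\env, \tmzero, \tmone) \mmid (\hat\env, \tmzero, \tmone) \in \erel \text{ for some } \hat\env \supseteq \env\} \;\cup\; \{\env \mmid \hat\env \in \erel \text{ for some } \hat\env \supseteq \env\} .
\]
Since a subset of an environment is again an environment (it still relates values only to values and stuck terms only to stuck terms) and the triples in $\erel$ have closed components, $\overline\erel$ is a well-formed environmental relation, and $\erel \subseteq \overline\erel$ (take $\hat\env = \env$). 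It therefore suffices to show that $\overline\erel$ is an environmental bisimulation, which gives $\erel \subseteq \overline\erel \subseteq \mathord\bisim$.

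To check the clauses of Definition~\ref{d:env-bisim} for $\overline\erel$, fix $(\env,\tmzero,\tmone) \in \overline\erel$ with witness $\hat\env \supseteq \env$ and $(\hat\env,\tmzero,\tmone) \in \erel$. Each step is obtained from the matching clause of Definition~\ref{d:utenv} applied to $(\hat\env,\tmzero,\tmone)$: e.g.\ if $\tmzero \redcbv \tmzero'$, clause~(\ref{e:utenv-tmtau}) gives $\tmone \clocbv \tmone'$ and some $\env' \supseteq \hat\env$ with $(\env',\tmzero',\tmone') \in \erel$; as $\env \subseteq \hat\env \subseteq \env'$, this $\env'$ is a witness for $\tmzero' \ierel{\overline\erel}\env \tmone'$ --- which is exactly what Definition~\ref{d:env-bisim} asks, namely the conclusion at the \emph{same} environment $\env$. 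If $\tmzero$ is a value $\valzero$, clause~(\ref{e:utenv-tmval}) gives $\tmone \clocbv \valone$ and $\env' \in \erel$ with $\env \cup \{(\valzero,\valone)\} \subseteq \hat\env \cup \{(\valzero,\valone)\} \subseteq \env'$, whence $\env \cup \{(\valzero,\valone)\} \in \overline\erel$; the stuck-term case is analogous, and the symmetric conditions on $\tmone$ are treated identically. For the environment clauses, take $\env \in \overline\erel$ with witness $\hat\env \supseteq \env$, $\hat\env \in \erel$; every pair of $\env$ is a pair of $\hat\env$, and since the term- and context-closure operators are monotone in the underlying relation, a pair related by a closure of $\env$ is related by that same closure of $\hat\env$; so clauses~(\ref{e:utenv-envv}) and~(\ref{e:utenv-envs}) applied to $\hat\env$ furnish, at some $\env' \supseteq \env$, triples in $\erel$, hence the required triples in $\overline\erel$ at $\env$.

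The only point needing care --- and it is a matter of setting up the candidate correctly rather than a real obstacle --- is that Definition~\ref{d:env-bisim} states the conclusion of each clause at the \emph{same} environment as the hypothesis, while the up-to variant (Definition~\ref{d:utenv}) may jump to a larger one; taking the downward closure $\overline\erel$ is exactly what lets a larger environment serve as evidence at the smaller one, and this is sound because membership in $\overline\erel$ is inherited monotonically downward --- the relational counterpart of $\bisim$ being itself downward closed (Lemma~\ref{l:weakening}). Everything else is a routine unfolding, as in~\cite{Sangiorgi-al:TOPLAS11}.
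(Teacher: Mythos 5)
Your proof is correct: the downward closure of $\erel$ along environment inclusion is indeed an environmental bisimulation containing $\erel$, and all the clause-checking goes through as you describe (monotonicity of $\cloct{\cdot}$ and $\clocc{\cdot}$ being the only point that needs mentioning). This is exactly the classic argument from~\cite{Sangiorgi-al:TOPLAS11} that the paper invokes without spelling out, so your approach coincides with the intended proof.
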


Next, we define bisimulation up-to bisimilarity, where we can compose
with $\empbisim$ to simplify the definition of candidate relations by
factoring out useless bisimilar terms.
\begin{definition}
  \label{d:utbisim}
  An environmental relation $\erel$ is an environmental bisimulation
  up to bisimilarity if
  \begin{enumerate}
  \item $\tmzero \ierel \erel \env \tmone$ implies: \label{e:utbisim-tm}
    \begin{enumerate}
    \item if $\tmzero \redcbv \tmzero'$, then $\tmone \clocbv \tmone'$
      and $\tmzero' \ierel \erel \env \empbisim
      \tmone'$; \label{e:utbisim-tmtau}
    \item if $\tmzero$ is a value $\valzero$, then $\tmone \clocbv
      \valone$ and $\env \cup \{(\valzero, \valone')\} \in \erel$ for
      some $\valone' \empbisim \valone$; \label{e:utbisim-tmval}
    \item if $\tmzero$ is a stuck term, then $\tmone \clocbv \tmone'$
      where $\tmone'$ is a stuck term and $\env \cup \{(\tmzero,
      \tmone'')\} \in \erel$ for some stuck term $\tmone''$ such that
      $\tmone'' \empbisim \tmone'$; \label{e:utbisim-tmstuck}
    \item the converse of the above conditions on $\tmone$;
    \end{enumerate}
  \item $\env \in \erel$ implies: \label{e:utbisim-env}
    \begin{enumerate}
    \item for all $(\lam \varx \tmzero, \lam \varx \tmone) \in \env$,
      for all $(\valzero, \valone) \in \mathord{\cloctc \env}$, we
      have $\subst \tmzero \varx \valzero \empbisim \ierel \erel \env
      \empbisim \subst \tmone \varx \valone$; \label{e:utbisim-envv}
    \item for all $(\inctx \ctxzero {\shift \vark \tmzero}, \inctx
      \ctxone {\shift \vark \tmone}) \in \env$, for all $(\ctx'_0,
      \ctx'_1) \in \mathord{\clocc \env}$, we have $$\reset {\subst
        \tmzero \vark {\lam \varx {\reset {\inctx {\ctx'_0}{\inctx
                \ctxzero \varx}}}}} \empbisim \ierel \erel \env
      \empbisim \reset {\subst \tmone \vark {\lam \varx {\reset
            {\inctx {\ctx'_1}{\inctx \ctxone \varx}}}}}$$ for a fresh
      $\varx$.\label{e:utbisim-envs}
    \end{enumerate}
  \end{enumerate}
\end{definition}

\begin{lemma}
  If $\erel$ is an environmental bisimulation up to bisimilarity, then
  $\erel \subseteq \,\bisim$.
\end{lemma}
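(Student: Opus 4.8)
The plan is to reuse the classical argument of~\cite{Sangiorgi-al:TOPLAS11}: starting from an environmental bisimulation up to bisimilarity $\erel$, I would build a genuine environmental bisimulation $\erely$ with $\erel \subseteq \erely$, so that $\erely \subseteq \mathord{\bisim}$ (as $\bisim$ is the largest environmental bisimulation) forces $\erel \subseteq \mathord{\bisim}$. Throughout I would use the standard properties of $\empbisim$ --- it is reflexive on closed terms, symmetric, transitive, and a congruence (Lemma~\ref{l:cong-empbisim-main}) --- and the fact that $\rvals\empbisim$ relates values with values and stuck terms with stuck terms (a value and a stuck term cannot be bisimilar, one evaluating to a value and the other not). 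The candidate $\erely$ would consist of: every environment $\env$ such that $\env \subseteq \mathord{\rvals\empbisim}\;\env_0\;\mathord{\rvals\empbisim}$ (relational composition) for some $\env_0 \in \erel$; and every triple $(\env,\tmzero,\tmone)$ for which there are $\tmzero_0$, $\tmone_0$ and $\env_0 \in \erel$ with $\tmzero \empbisim \tmzero_0$, $\tmzero_0 \ierel\erel{\env_0}\tmone_0$, $\tmone_0 \empbisim \tmone$ and $\env \subseteq \mathord{\rvals\empbisim}\;\env_0\;\mathord{\rvals\empbisim}$. Reflexivity of $\empbisim$ gives $\erel \subseteq \erely$, and $\erely$ is by construction closed under shrinking its environments; so it only remains to verify the clauses of Definition~\ref{d:env-bisim} for $\erely$. (Alternatively one could relax $\erely$ and invoke soundness of up to environment, but the formulation above avoids that.)

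The term clauses are where the argument is easy. Fix $(\env,\tmzero,\tmone)\in\erely$ witnessed by $\tmzero \empbisim \tmzero_0 \ierel\erel{\env_0}\tmone_0 \empbisim \tmone$. If $\tmzero \redcbv \tmzero'$, then $\tmzero' \empbisim \tmzero$ by Lemma~\ref{l:redcbv-in-empbisim}, hence $\tmzero' \empbisim \tmzero_0$ by transitivity, so $(\env,\tmzero',\tmone)\in\erely$ and $\tmone$ answers with zero reduction steps. If $\tmzero$ is a value (resp.\ a stuck term), then, since $\empbisim$ is a bisimulation, $\tmzero_0 \clocbv \valzero_0$ for some value (resp.\ stuck term) with $\{(\tmzero,\valzero_0)\}\in\mathord{\bisim}$; iterating clause~(\ref{e:utbisim-tmtau}) of Definition~\ref{d:utbisim} along $\tmzero_0 \clocbv \valzero_0$ and then applying clause~(\ref{e:utbisim-tmval}) (resp.~(\ref{e:utbisim-tmstuck})) yields a reduct of $\tmone_0$ that is $\erel$-related to $\valzero_0$, up to $\empbisim$, under some $\env_1 \in \erel$ with $\env_0 \subseteq \env_1$; composing with $\tmone_0 \empbisim \tmone$ gives $\tmone \clocbv \valone$ of the same kind, and chaining the intermediate bisimilarities yields $(\tmzero,\valone)\in\mathord{\rvals\empbisim}\;\env_1\;\mathord{\rvals\empbisim}$, so $\env \cup \{(\tmzero,\valone)\}$ is again one of $\erely$'s environments. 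The converse conditions on $\tmone$ follow symmetrically.

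For the environment clauses, fix $\env \in \erely$ with $\env \subseteq \mathord{\rvals\empbisim}\;\env_0\;\mathord{\rvals\empbisim}$, $\env_0 \in \erel$. A pair $(\lam\varx\tmzero,\lam\varx\tmone)\in\env$ comes with $\lam\varx\tmzero \empbisim \lam\varx{\tmzero'}$, $(\lam\varx{\tmzero'},\lam\varx{\tmone'})\in\env_0$, $\lam\varx{\tmone'} \empbisim \lam\varx\tmone$ (a normal form bisimilar to an abstraction being an abstraction); given $\valzero \cloctc\env\valone$, congruence of $\empbisim$ (Lemma~\ref{l:cong-empbisim-main}) factors $\cloctc\env$ through $\cloctc{\env_0}$ up to $\empbisim$ on each side, clause~(\ref{e:utbisim-envv}) of $\erel$ then applies, and one further use of congruence of $\empbisim$ together with $\RRbeta$ and Lemma~\ref{l:redcbv-in-empbisim} repackages the outcome as the required $\erely$-triple $(\env,\subst\tmzero\varx\valzero,\subst\tmone\varx\valone)$. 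The stuck-term clause~(\ref{e:envs}) would be treated analogously, and this is the step I expect to be the main obstacle, being the one proper to $\lamshift$: from $(\inctx\ctxzero{\shift\vark\tmzero},\inctx\ctxone{\shift\vark\tmone})\in\env$ and Lemma~\ref{l:stuck} one again gets a matching pair of stuck terms in $\env_0$, but for $\ctx'_0 \clocc\env \ctx'_1$ one must then push the bisimilarity through the captured context, decomposing $\ctx'_0$ and $\ctx'_1$ into $\empbisim$-images of contexts related by $\clocc{\env_0}$ --- which uses congruence of $\empbisim$ with respect to (evaluation) contexts, Lemma~\ref{l:cong-e-ctx-main} --- before clause~(\ref{e:utbisim-envs}) of $\erel$ and the unfolding of $\empbisim$ on the two stuck terms (clause~(\ref{e:envs}) of Definition~\ref{d:env-bisim}) close the case. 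Everything besides this last piece of bookkeeping is the routine chaining of bisimilarities from the $\lambda$-calculus proof of~\cite{Sangiorgi-al:TOPLAS11}; once all clauses are checked, $\erely$ is an environmental bisimulation and $\erel \subseteq \erely \subseteq \mathord{\bisim}$.
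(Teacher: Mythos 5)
The paper gives no proof of this lemma at all --- it states it and defers to the classical argument of~\cite{Sangiorgi-al:TOPLAS11} --- and your overall strategy is exactly that classical route: compose $\erel$ with $\empbisim$ on both sides (with the environments enlarged accordingly) and show the result is a genuine environmental bisimulation. Your treatment of the term clauses is correct: the reduction clause is absorbed into the left-hand $\empbisim$ via Lemma~\ref{l:redcbv-in-empbisim}, and the value/stuck clauses follow by unfolding $\empbisim$ on $\tmzero \empbisim \tmzero_0$ and iterating clauses~(\ref{e:utbisim-tmtau})--(\ref{e:utbisim-tmstuck}) of Definition~\ref{d:utbisim} along $\tmzero_0$'s evaluation, exactly as you describe.

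There is, however, one genuine problem: in the environment clauses you invoke the full congruence of $\empbisim$, citing Lemma~\ref{l:cong-empbisim-main}. In this paper that lemma is itself proved by exhibiting a bisimulation \emph{up to bisimilarity}, i.e., by using the very lemma you are proving, so your argument as written is circular. The circularity is repairable, because the only instances of congruence you actually need are (a) plugging $\empbisim$-related \emph{normal forms} into a common, arbitrary context --- this is what factors $\cloctc\env$ through $\cloctc{\env_0}$, and what relates the two continuations built from $\ctx'_0$ and its $\env_0$-image in the stuck-term case --- and (b) plugging $\empbisim$-related terms into a common evaluation context. These are exactly Lemmas~\ref{l:cong-val-main} and~\ref{l:cong-e-ctx-main}, both proved in the paper using only bisimulation up to environment, hence available before the present lemma. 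You should therefore replace the appeal to Lemma~\ref{l:cong-empbisim-main} by Lemma~\ref{l:cong-val-main}; note also that in the stuck-term clause Lemma~\ref{l:cong-e-ctx-main} alone does not suffice, since $\ctx'_0$ and its $\env_0$-image are not a common evaluation context applied to related terms but two contexts differing in normal-form leaves, which again is a job for Lemma~\ref{l:cong-val-main} (one leaf at a time, then transitivity). With that substitution the proof goes through and respects the dependency order the paper intends: soundness of up to environment first, then Lemmas~\ref{l:cong-val-main} and~\ref{l:cong-e-ctx-main}, then the present lemma, and only then Lemma~\ref{l:cong-empbisim-main}.
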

As usual with up-to bisimilarity with small-step relations, we cannot compose on
the left-hand side of $\erel$ in clause (\ref{e:utbisim-tm}) of
Definition~\ref{d:utbisim}.

\begin{lemma}
  \label{l:cong-is-subst}
  Let $\rel$ be a relation on closed terms. If $\tmzero \cloct\rel
  \tmone$ (where $\tmzero$ and $\tmone$ are potentially open terms)
  and $\valzero \rvals{\cloct\rel} \valone$, then $\subst \tmzero
  \varx \valzero \cloct\rel \subst \tmone \varx \valone$.
\end{lemma}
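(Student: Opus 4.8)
The plan is to prove Lemma~\ref{l:cong-is-subst} by induction on the derivation of $\tmzero \cloct\rel \tmone$, that is, by case analysis on the last rule of the inductive definition of $\cloct\rel$ used to conclude it. Before starting, I would fix the usual variable convention: by $\alpha$-conversion we may assume that every $\lambda$-abstraction and every shift construct occurring in $\tmzero$ or $\tmone$ binds a variable distinct from $\varx$ and not free in $\valzero$ or $\valone$. This is harmless, and the second requirement is in fact automatic here since $\valzero$ and $\valone$ are closed. Under this convention, substitution commutes with all the term constructors.

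In the base cases there are two rules. If $\tmzero \cloct\rel \tmone$ was derived from $\tmzero \rel \tmone$, then, since $\rel$ is a relation on \emph{closed} terms, both $\tmzero$ and $\tmone$ are closed, hence $\subst \tmzero \varx \valzero = \tmzero$ and $\subst \tmone \varx \valone = \tmone$; the axiom rule of $\cloct\rel$ then gives $\subst \tmzero \varx \valzero \cloct\rel \subst \tmone \varx \valone$. This is exactly where the hypothesis that $\rel$ relates closed terms is used. If instead $\tmzero = \tmone = y$ for a variable $y$ (the variable rule), then either $y = \varx$, so $\subst \tmzero \varx \valzero = \valzero$ and $\subst \tmone \varx \valone = \valone$, and $\valzero \rvals{\cloct\rel} \valone$ entails $\valzero \cloct\rel \valone$ since $\rvals{\cloct\rel} \subseteq \mathord{\cloct\rel}$; or $y \neq \varx$, so both substitutions leave $y$ unchanged and $y \cloct\rel y$ holds by the variable rule again.

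The inductive cases, one for each congruence rule of $\cloct\rel$ (for $\lambda$-abstraction, application, shift, and reset), are all routine: one applies the induction hypothesis to the immediate subterm(s) and re-applies the same rule, using the variable convention above to commute the substitution with the binder in the $\lambda$-abstraction and shift cases. I do not expect any genuine obstacle: the only two points needing a moment's attention are the variable-capture convention and, in the first base case, the observation that $\rel$-related terms are closed, so that substitution acts trivially on them.
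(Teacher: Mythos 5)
Your proof is correct and follows essentially the same route as the paper's: induction on the derivation of $\tmzero \cloct\rel \tmone$, with the variable case splitting on whether the variable is $\varx$, the $\rel$-case using closedness to make the substitution vanish, and the congruence cases handled by the induction hypothesis. The only difference is that you spell out the $\alpha$-conversion convention explicitly, which the paper leaves implicit.
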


\begin{proof}
  We proceed by induction on $\tmzero \cloct\rel \tmone$. Suppose
  $\tmzero = \tmone = \varx$. We have $\valzero \rvals{\cloct\rel}
  \valone$ as wished. The result is also easy if $\tmzero = \tmone = y
  \neq \varx$. Suppose $\tmzero \rel \tmone$. Because $\rel$ is
  defined on closed terms only, we have $\subst \tmzero \varx \valzero
  = \tmzero \rel \tmone = \subst \tmone \varx \valone$. The remaining
  induction cases are straightforward.

\end{proof}

\begin{lemma}
  \label{l:prop-cloct-env}
  Let $\env$ be an environment (i.e., a relation on closed values and
  closed stuck terms only). Suppose $\tmzero \cloctc\env \tmone$. If
  $\tmzero$ is a value, then so is $\tmone$, and if $\tmzero$ is a
  stuck term, then so is $\tmone$.
\end{lemma}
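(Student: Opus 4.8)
The plan is to argue by induction on the derivation of $\tmzero \cloct\env \tmone$, using that $\tmzero$ and $\tmone$ are closed (we work with the restriction $\cloctc\env$) and relying on Lemma~\ref{l:stuck} to recognize stuck terms. All cases but the application case are settled directly. In the base case $\tmzero \mathrel\env \tmone$ the conclusion is immediate, since an environment relates values with values and stuck terms with stuck terms. The case $\tmzero = \tmone = \varx$ cannot occur, as $\varx$ is not closed. If $\tmzero = \lam\varx{\tmzero'}$ and $\tmone = \lam\varx{\tmone'}$, both terms are values and hence neither is stuck. If $\tmzero = \reset{\tmzero'}$ and $\tmone = \reset{\tmone'}$, neither is a value and neither is stuck, since by Lemma~\ref{l:stuck} a closed stuck term has the form $\inctx\ctx{\shift\vark{\tm}}$, whose root is a $\shift$ when $\ctx = \mtctx$ and an application otherwise, never a $\reset$. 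Finally, if $\tmzero = \shift\vark{\tmzero'}$ and $\tmone = \shift\vark{\tmone'}$, neither is a value, while both are stuck: $\tmzero = \inctx\mtctx{\shift\vark{\tmzero'}}$ is closed, hence stuck by Lemma~\ref{l:stuck}, and likewise for $\tmone$.

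The application case $\tmzero = \app{\tmzero'}{\tmzero''}$, $\tmone = \app{\tmone'}{\tmone''}$ with $\tmzero' \cloct\env \tmone'$ and $\tmzero'' \cloct\env \tmone''$ is the crux, and the only one using the induction hypothesis; as $\tmzero$ is not a value, only the stuck part matters. If $\tmzero$ is stuck, then $\tmzero = \inctx\ctx{\shift\vark{\tm}}$ by Lemma~\ref{l:stuck}, and since the root of $\tmzero$ is an application, the grammar of pure contexts leaves exactly two possibilities: either $\tmzero'$ is stuck, or $\tmzero'$ is a value and $\tmzero''$ is stuck. Both $\tmzero'$ and $\tmzero''$ are closed (because $\tmzero$ is), so the relevant sub-derivations lie in $\cloctc\env$, and the induction hypothesis applies, yielding that $\tmone'$ is stuck in the first subcase, and that $\tmone'$ is a value and $\tmone''$ is stuck in the second. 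In either subcase, writing the stuck component as a pure context around a $\shift$ (via Lemma~\ref{l:stuck}) exhibits $\tmone = \app{\tmone'}{\tmone''}$ itself in the form $\inctx{\ctx'}{\shift\vark{\tm'}}$; being closed, $\tmone$ is stuck by Lemma~\ref{l:stuck}. This completes the induction.

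The main obstacle is the bookkeeping in this last case: one must observe that a closed stuck application decomposes in exactly the two ways allowed by the grammar of pure contexts, so as to invoke the induction hypothesis on the correct subterm (the stuck one, after recording in the second decomposition that the other is a value), and then recombine the results into a stuck term on the $\tmone$ side. Every other case follows at once from the definitions of value, stuck term, and environment, together with Lemma~\ref{l:stuck}.
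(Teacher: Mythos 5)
Your proof is correct and takes the same route as the paper, which simply notes that the value part follows by case analysis and the stuck part by induction on $\tmzero \cloctc\env \tmone$, using that $\env$ relates values with values and stuck terms with stuck terms. You have merely spelled out the details (in particular the decomposition of a stuck application via Lemma~\ref{l:stuck}) that the paper leaves implicit.
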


\begin{proof}
  The first item is straightforward by case analysis on $\tmzero
  \cloctc\env \tmone$ (and using the fact that $\env$ relates values
  only with values), and the second item is straightforward by
  induction on $\tmzero \cloctc\env \tmone$ (and using the fact that
  $\env$ relates stuck terms only with stuck terms).

\end{proof}

\begin{lemma}
  \label{l:cong-val}
  For all $\env$ and normal forms $\tmzero$, $\tmone$, if $\tmzero
  \ebisim \tmone$, then $\inctx \cctx \tmzero \ierel \bisim
          {\rvals{\cloct \env}} \inctx \cctx \tmone$.
\end{lemma}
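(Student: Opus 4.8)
The plan is to deduce the lemma from the fact that the relation $\erel$ displayed after Lemma~\ref{l:cong-e-ctx-main}, instantiated with $\erely = \mathord{\bisim}$, is an environmental bisimulation up to environment, so that by soundness of that up-to technique (Definition~\ref{d:utenv}) we obtain $\erel \subseteq \mathord{\bisim}$. Before that, I would record three elementary facts about the closures: (I) if $\tm \cloctc\env \tm'$ and $\rctxzero \clocc\env \rctxone$ (respectively a pure context $\ctx'_0 \clocc\env \ctx'_1$), then $\inctx\rctxzero\tm \cloctc\env \inctx\rctxone{\tm'}$ (respectively $\inctx{\ctx'_0}\tm \cloctc\env \inctx{\ctx'_1}{\tm'}$), by an easy induction on the context; (II) idempotency, $\cloct{\rvals{\cloct\env}} = \cloct\env$, which follows from $\env \subseteq \rvals{\cloct\env} \subseteq \cloct\env$ together with $\cloct{\cloct\env} = \cloct\env$; and (III) a decomposition lemma for $\cloctc\env$ in the spirit of Lemma~\ref{l:unique-decomp}: if $\tmzero \cloctc\env \tmone$ and $\tmzero = \inctx\rctx\redex$ is not a normal form, then $\tmone = \inctx{\rctx'}{\redex'}$ with $\rctx \clocc\env \rctx'$ and the redexes matched through $\cloctc\env$ — here one uses that none of the three redex shapes is a normal form, hence cannot come from $\env$ directly.

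The heart is checking the clauses of Definition~\ref{d:utenv} for $\erel$. For a first-set triple $(\rvals{\cloct\env}, \inctx\rctxzero\tmzero, \inctx\rctxone\tmone)$, with $\tmzero \ierel \bisim \env \tmone$ and $\rctxzero \clocc\env \rctxone$: if $\tmzero \redcbv \tmzero'$, then by compatibility of $\redcbv$ with evaluation contexts $\inctx\rctxzero\tmzero \redcbv \inctx\rctxzero\tmzero'$, $\tmone \clocbv \tmone'$ with $\tmzero' \ierel \bisim \env \tmone'$, and the residual triple is again in the first set; if $\tmzero$ is a value or a stuck term, clause~\ref{e:tmval} or~\ref{e:tmstuck} of Definition~\ref{d:env-bisim} catches $\tmone$ up to a matching normal form $\tmone'$ and gives that $\env' = \env \cup \{(\tmzero,\tmone')\}$ is in $\bisim$, and then fact~(I) places $(\inctx\rctxzero\tmzero, \inctx\rctxone\tmone')$ in the second set of $\erel$ under the larger environment $\rvals{\cloct{\env'}} \supseteq \rvals{\cloct\env}$ — exactly what ``up to environment'' permits. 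For a second-set triple $(\rvals{\cloct\env}, \tmzero, \tmone)$ with $\env \in \mathord{\bisim}$ and $\tmzero \cloctc\env \tmone$, I distinguish whether $\tmzero$ is a normal form: if it is, Lemma~\ref{l:prop-cloct-env} gives that $\tmone$ is a normal form of the same kind and $(\tmzero,\tmone) \in \rvals{\cloct\env}$, which discharges the value/stuck clauses immediately; otherwise $\tmzero = \inctx\rctx\redex$, fact~(III) decomposes $\tmone$ accordingly, and I match the reductions $\RRbeta$, $\RRshift$, $\RRreset$ one by one, landing either in the first set of $\erel$ (when the matched sub-derivation stems from a pair in $\env$, so that clause~\ref{e:envv} or~\ref{e:envs} of Definition~\ref{d:env-bisim} applies, e.g.\ with empty pure context in the $\RRshift$ case) or in the second set (when it stems from the structural rules, using facts~(I)--(II) and Lemma~\ref{l:cong-is-subst}). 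Finally, for the environment clause, a triple $\rvals{\cloct\env} \in \erel$ with $\env \in \mathord{\bisim}$: a pair $(\lam\varx\tmzero, \lam\varx\tmone) \in \rvals{\cloct\env}$ either belongs to $\env$, in which case clause~\ref{e:envv} of Definition~\ref{d:env-bisim} puts the substituted bodies in the first set of $\erel$ with empty evaluation contexts, or is derived by the $\lambda$-rule, in which case facts~(I)--(II) and Lemma~\ref{l:cong-is-subst} put them in the second set; stuck pairs in $\rvals{\cloct\env}$ are handled symmetrically via clause~\ref{e:envs} of Definition~\ref{d:env-bisim}.

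Given $\erel \subseteq \mathord{\bisim}$, the statement follows: if $\tmzero$, $\tmone$ are normal forms with $\tmzero \ebisim \tmone$, applying clause~\ref{e:tmval} or~\ref{e:tmstuck} of Definition~\ref{d:env-bisim} (the normal forms reducing to themselves in zero steps) shows $\env' = \env \cup \{(\tmzero,\tmone)\}$ is in $\bisim$; hence $\tmzero \cloctc{\env'} \tmone$, and so $\inctx\cctx\tmzero \cloctc{\env'} \inctx\cctx\tmone$ by building the context $\cctx$ with the term-generating closure (the tested terms being closed by assumption); this pair lies in the second set of $\erel$ under the environment $\rvals{\cloct{\env'}}$, hence in $\bisim$, and Lemma~\ref{l:weakening} lowers the environment to $\rvals{\cloct\env}$.

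I expect the main obstacle to be the delimited-control-specific book-keeping in the second-set induction — concretely the $\RRshift$ case of fact~(III) together with its matching reduction step, and the stuck-pair subcase of the environment clause (clause~\ref{e:utenv-envs} of Definition~\ref{d:utenv}). There one must check that the reified continuation $\lam\varx{\reset{\inctx{\ctx'_0}{\inctx\ctxzero\varx}}}$ built on each side is tracked correctly by the interaction between the context closure $\clocc\env$ on pure contexts and the term closure $\cloct\env$ (so that, after substituting it for $\vark$ and re-wrapping in $\reset$, the two sides stay in $\erel$), while keeping the freshness of $\varx$ and the closedness side conditions straight. Everything else is a routine adaptation of the $\lambda$-calculus argument of~\cite{Sangiorgi-al:TOPLAS11}.
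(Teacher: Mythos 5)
Your proposal follows essentially the same route as the paper: the paper proves Lemmas~\ref{l:cong-val} and~\ref{l:cong-e-ctx} simultaneously by showing that exactly the relation $\erel = \erel_1 \cup \erel_2 \cup \{\rvals{\cloct\env}\}$ you describe is a bisimulation up to environment, handling $\erel_2$ by induction on $\tmzero \cloctc\env \tmone$ (your fact~(III) is just a repackaging of that induction) and handling $\erel_1$ and the final weakening step exactly as you do. The part you single out as the main obstacle is precisely the content of the paper's preliminary Lemma~\ref{l:erel}, whose second item performs the induction on the pure context that pushes the peeled-off applications into the testing contexts $\ctx'_0$, $\ctx'_1$, so your identification of where the delimited-control-specific work lies is accurate.
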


\begin{lemma}
  \label{l:cong-e-ctx}
  For all $\env$, if $\tmzero \ebisim \tmone$, then $\inctx \rctx
  \tmzero \ierel \bisim {\rvals{\cloct \env}} \inctx \rctx \tmone$.
\end{lemma}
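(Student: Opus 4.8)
The plan is to prove Lemmas~\ref{l:cong-val} and~\ref{l:cong-e-ctx} simultaneously, along the lines announced for Lemmas~\ref{l:cong-val-main} and~\ref{l:cong-e-ctx-main}. I fix an arbitrary environmental bisimulation $\erely$ and work with the relation
\begin{multline*}
  \erel = \{(\rvals {\cloct \env}, \inctx \rctxzero \tmzero, \inctx \rctxone \tmone) \mmid \tmzero \ierel \erely \env \tmone,\; \rctxzero \clocc \env \rctxone\} \\ \cup \{(\rvals {\cloct \env}, \tmzero, \tmone) \mmid \env \in \mathord{\erely},\; \tmzero \cloctc \env \tmone \} \cup \{\rvals{\cloct \env} \mmid \env \in \mathord{\erely} \},
\end{multline*}
and I will show that it is an environmental bisimulation up to environment (Definition~\ref{d:utenv}); by soundness of that technique, $\erel \subseteq \bisim$. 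Instantiating $\erely := \bisim$ then yields the two lemmas. For Lemma~\ref{l:cong-e-ctx}, $\tmzero \ebisim \tmone$ puts $(\rvals{\cloct\env}, \inctx\rctx\tmzero, \inctx\rctx\tmone)$ in the first component, using that $\cloct\env$ (hence $\clocc\env$) is reflexive on closed contexts, so $\rctx \clocc\env \rctx$. For Lemma~\ref{l:cong-val}, if $\tmzero$ and $\tmone$ are normal forms with $\tmzero \ebisim \tmone$, then clauses~(\ref{e:tmval})/(\ref{e:tmstuck}) of Definition~\ref{d:env-bisim}, together with the fact that normal forms are irreducible, force $\tmone$ to be of the same kind as $\tmzero$ and $\env \cup \{(\tmzero, \tmone)\} \in \bisim$; hence $\inctx\cctx\tmzero$ and $\inctx\cctx\tmone$ lie in the second component for that enlarged environment, and Lemma~\ref{l:weakening} brings the environment back down to $\rvals{\cloct\env}$.

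The core of the argument is the second component, which I treat by induction on the derivation of $\tmzero \cloctc \env \tmone$. If $\tmzero$ is a normal form, Lemma~\ref{l:prop-cloct-env} makes $\tmone$ a normal form of the same kind, and since $(\tmzero,\tmone) \in \rvals{\cloct\env}$ already, clauses~(\ref{e:utenv-tmval})/(\ref{e:utenv-tmstuck}) hold with the same environment. Otherwise $\tmzero$ contains a redex; the routine cases push the redex into a proper subterm and re-plug the matching evaluation context on both sides before applying the induction hypothesis, so the substance lies in the top-level redexes. For a $\beta$-redex I invert $\cloct\env$ on the operator: in the structural case Lemma~\ref{l:cong-is-subst} shows the two contracta stay $\cloct\env$-related, keeping us in the second component; in the case where the operator pair comes from $\env$, clause~(\ref{e:envv}) of $\erely$ applies, landing in the first component with $\rctxzero = \rctxone = \mtctx$. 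A $\RRreset$-redex is immediate. The delicate top-level case is a $\RRshift$-redex $\reset{\inctx\ctx{\shift\vark\tm}}$: I peel off the pure-context layers of the stuck subterm by inverting $\cloct\env$, reaching either a bare $\shift$ whose body is $\cloct\env$-related (structural case, handled again by Lemma~\ref{l:cong-is-subst} applied to the captured continuation) or a stuck-term pair of $\env$ (base case, handled by clause~(\ref{e:envs}) of $\erely$ instantiated with the peeled-off layers as its wrapping context), and in both cases a short computation identifies the two $\RRshift$-reducts with the resulting related terms. The third component, whose obligation is clause~(\ref{e:utenv-env}) for $\rvals{\cloct\env}$, is discharged by the same inversions: a pair of values in $\rvals{\cloct\env}$ is either in $\env$ (use clause~(\ref{e:envv}) of $\erely$) or structural (use Lemma~\ref{l:cong-is-subst}), and likewise for a pair of stuck terms via clause~(\ref{e:envs}).

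For the first component I case on $\inctx\rctxzero\tmzero$. If $\tmzero \redcbv \tmzero'$, then $\inctx\rctxzero\tmzero \redcbv \inctx\rctxzero{\tmzero'}$ and $\erely$'s term clause matches with $\inctx\rctxone\tmone \clocbv \inctx\rctxone{\tmone'}$, staying in the first component. If $\tmzero$ is a normal form, $\erely$'s term clause gives $\tmone \clocbv \tmone'$ of the same kind with $\env' := \env \cup \{(\tmzero,\tmone')\} \in \erely$, and then $\inctx\rctxzero\tmzero \cloctc{\env'} \inctx\rctxone{\tmone'}$ (by the auxiliary fact that filling $\clocc{\env'}$-related evaluation contexts with $\env'$-related normal forms yields $\cloctc{\env'}$-related terms); so the behaviour of $\inctx\rctxzero\tmzero$ is governed by the second-component analysis of $(\rvals{\cloct{\env'}}, \inctx\rctxzero\tmzero, \inctx\rctxone{\tmone'})$, which, prefixed on the right with $\inctx\rctxone\tmone \clocbv \inctx\rctxone{\tmone'}$, discharges the clause. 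The one genuinely new subcase is $\tmzero$ stuck with $\rctxzero$ enclosing a $\reset$: writing $\rctxzero$ as an outer evaluation context around a $\reset$ around the maximal surrounding pure context, decomposing $\rctxone$ compatibly along $\clocc\env$, and combining the $\env$-pair $(\tmzero,\tmone')$ with clause~(\ref{e:envs}) of $\erely$ produces the matching reduct; the converse conditions on $\inctx\rctxone\tmone$ are symmetric.

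I expect the main obstacle to be the bookkeeping relating the two closures $\cloct{\cdot}$ and $\clocc{\cdot}$ to the operational semantics. Concretely, I will need: monotonicity and reflexivity of $\cloct{\cdot}$ and $\clocc{\cdot}$ in their parameter relation; the fact that $\clocc\env$-related evaluation contexts filled with $\cloctc\env$-related terms remain $\cloctc\env$-related; and --- the crucial ingredient for delimited control --- that $\clocc\env$ respects the decomposition of an evaluation context into its outer part, its nearest enclosing $\reset$, and the maximal pure context around the hole, so that a capture triggered by a $\reset$ coming from the testing context can be mirrored on the other side. This last point is exactly what makes clause~(\ref{e:envs}) of Definition~\ref{d:env-bisim} indispensable, and it is where the proof genuinely departs from its $\lambda$-calculus analogue.
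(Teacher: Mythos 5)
Your proposal is correct and follows essentially the same route as the paper: the same candidate relation $\erel$, shown to be a bisimulation up to environment, with the second component handled by induction on $\tmzero \cloctc\env \tmone$ (normal forms via Lemma~\ref{l:prop-cloct-env}, the $\beta$- and shift-redex cases by inverting the closure and appealing to Lemma~\ref{l:cong-is-subst} in the structural cases and to clauses~(\ref{e:envv})/(\ref{e:envs}) of the underlying bisimulation in the base cases), and the first component reduced to the second after extending the environment. The only cosmetic difference is that the paper packages the redex cases into a separate preliminary lemma about $\erel$ rather than treating them inline.
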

We prove Lemmas~\ref{l:cong-val} and~\ref{l:cong-e-ctx}
simultaneously. Let $\erely$ be an environmental bisimulation. We
define
\begin{align*}
  \erel & = \erel_1 \cup \erel_2 \cup \{\rvals{\cloct \env} \mmid \env \in \mathord{\erely} \}
  \\
  \erel_1 & = \{(\rvals {\cloct \env}, \inctx \rctxzero \tmzero, \inctx \rctxone
  \tmone) \mmid \tmzero \ierel \erely \env \tmone, \rctxzero \clocc \env \rctxone\}
  \\
  \erel_2 & = \{(\rvals {\cloct \env}, \tmzero, \tmone) \mmid \env \in \mathord{\erely},
  \tmzero \cloctc \env \tmone \}
\end{align*}
In $\erel_2$, we build the closed terms $(\tmzero, \tmone)$ out of
pairs of values or pair of stuck terms. We first prove a preliminary
lemma about $\erel$.

\begin{lemma}
  \label{l:erel}
  Let $\env \in \mathord{\erely}$.
  \begin{itemize}
  \item If $\lam \varx \tmzero \cloctc \env \lam \varx \tmone$ and
    $\valzero \rvals{\cloct \env} \valone$ then $\subst \tmzero \varx
    \valzero \ierel \erel {\rvals {\cloct \env}} \subst \tmone \varx
    \valone$.
  \item If $\inctx \ctxzero {\shift \vark \tmzero} \cloctc \env \inctx
    \ctxone {\shift \vark \tmone}$ and $\ctx'_0 \clocc \env
    \ctx'_1$, then $\reset{\subst \tmzero \vark {\lam \varx {\reset
          {\inctx {\ctx'_0}{\inctx \ctxzero \varx}}}}} \ierel \erel
            {\rvals {\cloct \env}} \reset{\subst \tmone \vark {\lam
                \varx {\reset {\inctx {\ctx'_1}{\inctx \ctxone
                      \varx}}}}}$.
  \end{itemize}
\end{lemma}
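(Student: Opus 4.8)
The plan is to prove both items by examining the derivation of the displayed $\cloctc\env$ relation, using Lemmas~\ref{l:cong-is-subst}, \ref{l:prop-cloct-env} and~\ref{l:stuck}, together with the hypotheses that $\env\in\erely$ and that $\erely$ is an environmental bisimulation (clauses~\ref{e:envv} and~\ref{e:envs} of Definition~\ref{d:env-bisim}). Since $\erel_1,\erel_2\subseteq\erel$, in each case it is enough to place the required triple in $\erel_1$ or in $\erel_2$, and the dichotomy that governs which one occurs is whether a base rule of $\cloct$ (an appeal to $\env$) is used along the relevant spine, or whether the related terms are built only from the congruence rules of $\cloct$: the former situation produces a triple in $\erel_1$ via a bisimulation clause of $\erely$, the latter a triple in $\erel_2$ via Lemma~\ref{l:cong-is-subst}. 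Two routine closure facts about $\clocc\env$, both proved by straightforward induction, are used throughout: it is preserved by plugging $\cloct\env$-related terms into $\clocc\env$-related contexts, and it is closed under context composition.

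For the first item, since a $\lambda$-abstraction matches no congruence rule of $\cloct$ other than the abstraction rule, the last rule of $\lam\varx\tmzero\cloctc\env\lam\varx\tmone$ is the base rule or the abstraction rule. If $\lam\varx\tmzero\mathrel\env\lam\varx\tmone$, then from $\valzero\rvals{\cloct\env}\valone$ (hence $\valzero\cloctc\env\valone$, as values are normal forms) and clause~\ref{e:envv} applied in $\erely$ we obtain $\subst\tmzero\varx\valzero \ierel \erely \env \subst\tmone\varx\valone$; taking the empty evaluation context on both sides puts this triple in $\erel_1$ with environment $\rvals{\cloct\env}$. If instead $\tmzero\cloct\env\tmone$ (with $\fv\tmzero\cup\fv\tmone\subseteq\{\varx\}$), then Lemma~\ref{l:cong-is-subst} gives $\subst\tmzero\varx\valzero\cloct\env\subst\tmone\varx\valone$, these terms being closed, so the triple is in $\erel_2$.

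For the second item I argue by induction on the derivation of $\inctx\ctxzero{\shift\vark\tmzero}\cloctc\env\inctx\ctxone{\shift\vark\tmone}$. If the last rule is the base rule, then $\inctx\ctxzero{\shift\vark\tmzero}\mathrel\env\inctx\ctxone{\shift\vark\tmone}$, and clause~\ref{e:envs} applied in $\erely$ with $\ctx'_0\clocc\env\ctx'_1$ relates, via $\erely$ at environment $\env$, exactly the two terms required by the conclusion; the empty evaluation context then places this triple in $\erel_1$. If the last rule is the shift rule, then $\ctxzero=\ctxone=\mtctx$ and $\tmzero\cloct\env\tmone$ (with $\fv\tmzero\cup\fv\tmone\subseteq\{\vark\}$); from $\ctx'_0\clocc\env\ctx'_1$ one gets $\inctx{\ctx'_0}\varx\cloct\env\inctx{\ctx'_1}\varx$, hence $\lam\varx{\reset{\inctx{\ctx'_0}\varx}}\rvals{\cloct\env}\lam\varx{\reset{\inctx{\ctx'_1}\varx}}$ (a closed value on each side), and Lemma~\ref{l:cong-is-subst} followed by the reset congruence rule of $\cloct$ shows the two terms required by the conclusion are $\cloct\env$-related and closed, so the triple is in $\erel_2$. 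Otherwise the last rule is the application rule and $\ctxzero$ is $\apctx{\ctxzero'}{u}$ or $\vctx\val{\ctxzero'}$. In the first sub-case, $\app{\inctx{\ctxzero'}{\shift\vark\tmzero}}{u}$ is related to some $\app{s}{u'}$ with $\inctx{\ctxzero'}{\shift\vark\tmzero}\cloctc\env s$ and $u\cloctc\env u'$; since $\inctx{\ctxzero'}{\shift\vark\tmzero}$ is stuck, so is $s$ by Lemma~\ref{l:prop-cloct-env}, and then by Lemma~\ref{l:stuck} and the shape of $\app{s}{u'}$ we get $\ctxone=\apctx{\ctxone'}{u'}$ with $\inctx{\ctxone'}{\shift\vark\tmone}=s$. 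Applying the induction hypothesis to $\inctx{\ctxzero'}{\shift\vark\tmzero}\cloctc\env\inctx{\ctxone'}{\shift\vark\tmone}$ with the pure contexts obtained by plugging the frame $\apctx\mtctx u$, resp.\ $\apctx\mtctx{u'}$, into the hole of $\ctx'_0$, resp.\ $\ctx'_1$ --- these stay pure, are $\clocc\env$-related because $u\cloctc\env u'$, and reconstitute $\inctx{\ctx'_0}{\inctx\ctxzero\varx}$ and $\inctx{\ctx'_1}{\inctx\ctxone\varx}$ --- yields the required relation. The sub-case $\ctxzero=\vctx\val{\ctxzero'}$ is symmetric, using that $\val$ is a value, hence its $\cloctc\env$-partner is a value as well (Lemma~\ref{l:prop-cloct-env}).

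I expect the application case of the second item to be the main obstacle: it requires reading off from the structure of $\cloct\env$ precisely how the right-hand stuck term decomposes (this is where Lemmas~\ref{l:prop-cloct-env} and~\ref{l:stuck} are needed) and then bookkeeping the composition of the testing contexts carefully enough that the induction hypothesis yields exactly the wanted conclusion.
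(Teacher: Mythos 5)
Your proof follows the paper's argument essentially verbatim: the first item by the same two-case analysis (base rule via clause~(2a) of the bisimulation $\erely$, landing in $\erel_1$; abstraction rule via Lemma~\ref{l:cong-is-subst}, landing in $\erel_2$), and the second item by the same induction with the same three cases, composing the testing contexts with the application frame before invoking the induction hypothesis exactly as the paper does. The only difference is that you explicitly justify, via Lemmas~\ref{l:prop-cloct-env} and~\ref{l:stuck}, why the right-hand term decomposes compatibly in the application case, a step the paper leaves implicit; this is a harmless (indeed welcome) refinement, not a different route.
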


\begin{proof}
  For the first item, we proceed by case analysis on $\lam \varx
  \tmzero \cloctc \env \lam \varx \tmone$. If $\lam \varx \tmzero \env
  \lam \varx \tmone$, then since $\erely$ is an environmental
  bisimulation, we have $\subst \tmzero \varx \valzero \ierel \erely
  \env \subst \tmone \varx \valone$, which implies $\subst \tmzero
  \varx \valzero \ierel \erel {\rvals {\cloct \env}} \subst \tmone
  \varx \valone$ (more precisely, the terms are in $\erel_1$).

  If $\tmzero \cloct \env \tmone$ with $\fv\tmzero \cup \fv\tmone
  \subseteq \{\varx\}$, then we have $\subst \tmzero \varx \valzero
  \cloct \env \subst \tmone \varx \valone$ by
  Lemma~\ref{l:cong-is-subst}. In fact, we have $\subst \tmzero \varx
  \valzero \cloctc \env \subst \tmone \varx \valone$, so we have
  $\subst \tmzero \varx \valzero \ierel \erel {\rvals {\cloct \env}}
  \subst \tmone \varx \valone$ (more precisely, the terms are in
  $\erel_2$).\\

  For the second item, we proceed by induction on $\inctx \ctxzero
  {\shift \vark \tmzero} \cloctc \env \inctx \ctxone {\shift \vark
    \tmone}$. If $\inctx \ctxzero {\shift \vark \tmzero} \env \inctx
  \ctxone {\shift \vark \tmone}$, then because $\erely$ is an
  environmental bisimulation, we have $\reset{\subst \tmzero \vark
    {\lam \varx {\reset {\inctx {\ctx'_0}{\inctx \ctxzero \varx}}}}}
  \ierel \erely \env \reset{\subst \tmone \vark {\lam \varx {\reset
        {\inctx {\ctx'_1}{\inctx \ctxone \varx}}}}}$, which is equivalent to 
  $\reset{\subst \tmzero \vark {\lam \varx {\reset {\inctx
          {\ctx'_0}{\inctx \ctxzero \varx}}}}} \ierel \erel {\rvals
    {\cloct \env}} \reset{\subst \tmone \vark {\lam \varx {\reset
        {\inctx {\ctx'_1}{\inctx \ctxone \varx}}}}}$ (the terms
  are in $\erel_1$).

  Suppose $\ctxzero = \ctxone = \mtctx$ and $\tmzero \cloct \env
  \tmone$ with $\fv \tmzero \cup \fv \tmone \subseteq \{\vark\}$. From
  $\ctx'_0 \clocc \env \ctx'_1$, we deduce $\lam \varx
  \reset{\inctx{\ctx'_0} \varx} \cloctc\env \lam \varx
  \reset{\inctx{\ctx'_1} \varx}$. We
  have $\reset{\subst \tmzero \vark {\lam \varx {\reset {\inctx
          {\ctx'_0}{\inctx \ctxzero \varx}}}}} \cloctc \env
  \reset{\subst \tmone \vark {\lam \varx {\reset {\inctx
          {\ctx'_1}{\inctx \ctxone \varx}}}}}$,  by
  Lemma~\ref{l:cong-is-subst}, hence the result
  holds (the terms are in~$\erel_2$).

  Suppose $\ctxzero = \vctx \valzero {\ctx''_0}$ and $\ctxone = \vctx
  \valone {\ctx''_1}$ with $\valzero \cloctc \env \valone$ and $\inctx
          {\ctx''_0}{\shift \vark \tmzero} \clocc \env
          \inctx{\ctx''_1}{\shift \vark \tmone}$. From $\valzero
          \cloctc \env \valone$ and $\ctx'_0 \clocc \env \ctx'_1$,
          we deduce $\inctx{\ctx'_1}{\vctx \valzero \mtctx} \clocc
          \env \inctx{\valone'}{\vctx \valone \mtctx}$. Then
          $\reset{\subst \tmzero \vark {\lam \varx {\reset {\inctx
                  {\ctx'_0}{\app \valzero {\inctx {\ctx''_0}
                      \varx}}}}}} \ierel \erel {\rvals {\cloct \env}}
          \reset{\subst \tmone \vark {\lam \varx {\reset {\inctx
                  {\ctx'_1}{\app \valone {\inctx {\ctx''_1}
                      \varx}}}}}}$ by the induction hypothesis, i.e.,
          $\reset{\subst \tmzero \vark {\lam \varx {\reset {\inctx
                  {\ctx'_0}{\inctx \ctxzero \varx}}}}} \ierel \erel
                {\rvals {\cloct \env}} \reset{\subst \tmone \vark
                  {\lam \varx {\reset {\inctx {\ctx'_1}{\inctx
                          \ctxone \varx}}}}}$, as wished. The case
                $\ctxzero = \apctx {\ctx''_0}{\tmzero'}$ and
                $\ctxone = \apctx{\ctx''_1}{\tmone'}$ is similar.

\end{proof}

We now prove Lemmas~\ref{l:cong-val} and~\ref{l:cong-e-ctx} by showing
that $\erel$ is a bisimulation up to environment.
\begin{proof}
  We first prove the bisimulation for the elements in $\erel_2$ (for
  these, we do not need the ``up to environment''). Let $\tmzero
  \cloctc \env \tmone$, with $\env \in \mathord{\erely}$. Clause
  \ref{e:tmval} (resp. \ref{e:tmstuck}) is easy: if $\tmzero$ is a
  value (resp. a stuck term), then so is $\tmone$ (cf. Lemma
  \ref{l:prop-cloct-env}), and we have $\rvals{\cloct \env} \cup
  \{(\tmzero, \tmone) \} = \rvals{\cloct \env} \in \erel$. For clause
  \ref{e:tmtau}, we proceed by induction on $\tmzero \cloctc \env
  \tmone$.

  Suppose $\tmzero = \app{\tmzero^1}{\tmzero^2}$ and $\tmone =
  \app{\tmone^1}{\tmone^2}$ with $\tmzero^1 \cloctc \env \tmone^1$ and
  $\tmzero^2 \cloctc \env \tmone^2$. We have three cases to consider.
  \begin{itemize}
  \item Assume $\tmzero^1 \redcbv {\tmzero^1}'$, so that $\tmzero
    \redcbv \app{{\tmzero^1}'}{\tmzero^2}$. By the induction
    hypothesis, there exists ${\tmone^1}'$ such that $\tmone^1 \clocbv
    {\tmone^1}'$ and ${\tmzero^1}' \ierel \erel {\rvals{\cloctc\env}}
    {\tmone^1}'$. From $\tmzero^2 \cloctc \env \tmone^2$ and
    ${\tmzero^1}' \ierel \erel {\rvals{\cloctc\env}} {\tmone^1}'$, we
    can deduce $\app {{\tmzero^1}'}{\tmzero^2} \ierel \erel
    {\rvals{\cloctc\env}} \app{{\tmone^1}'}{\tmone^2}$ by definition
    of $\erel$. We also have $\tmone \clocbv
    \app{{\tmone^1}'}{\tmone^2}$, hence the result holds.
  \item Assume $\tmzero^1=\valzero$ and $\tmzero^2 \redcbv
    {\tmzero^2}'$, so that $\tmzero \redcbv \app
    \valzero{{\tmzero^2}'}$. Then $\tmone^1$ is also a value $\valone$
    according to Lemma \ref{l:prop-cloct-env}. By the induction
    hypothesis, there exists ${\tmone^2}'$ such that $\tmone^2 \clocbv
    {\tmone^2}'$ and ${\tmzero^2}' \ierel \erel {\rvals{\cloctc\env}}
    {\tmone^2}'$. From $\valzero \cloctc \env \valone^2$ and
    ${\tmzero^2}' \ierel \erel {\rvals{\cloctc\env}} {\tmone^2}'$, we
    can deduce $\app \valzero {{\tmzero^2}'} \ierel \erel
    {\rvals{\cloctc\env}} \app \valone {{\tmone^2}'}$ by definition of
    $\erel$. We also have $\tmone \clocbv \app \valone {{\tmone^2}'}$,
    hence the result holds.
  \item Assume $\tmzero^1 = \lam \varx {\tmzero'}$ and $\tmzero^2 =
    \valzero$, so that $\tmzero \redcbv \subst {\tmzero'} \varx
    \valzero$. By Lemma~\ref{l:prop-cloct-env}, $\tmone^1$ is a value
    $\lam \varx {\tmone'}$ and $\tmone^2$ is a value $\valone$. We
    have $\tmone \redcbv \subst {\tmzero'} \varx \valzero$, and by
    Lemma~\ref{l:erel}, we have $\subst {\tmzero'} \varx \valzero
    \ierel \erel {\rvals{\cloctc\env}} \subst {\tmone'} \varx
    \valone$, hence the result holds.
  \end{itemize}
  Suppose $\tmzero = \reset{\tmzero'}$, $\tmone = \reset{\tmone'}$
  with $\tmzero' \cloctc\env \tmone'$. We have two possibilities.
  \begin{itemize}
  \item Assume $\tmzero' \redcbv \tmzero''$, so that $\tmzero \redcbv
    \reset{\tmzero''}$. By the induction hypothesis, there exists
    $\tmone''$ such that $\tmone' \clocbv \tmone''$ and $\tmzero''
    \ierel \erel {\rvals{\cloctc\env}} \tmone''$. By definition of
    $\erel$, we have $\reset{\tmzero''} \ierel \erel
           {\rvals{\cloctc\env}} \reset{\tmone''}$, and furthermore
           $\tmzero \clocbv \reset{\tmone''}$, we therefore have the
           required result.
  \item Assume $\tmzero' = \inctx \ctxzero {\shift \vark
    {\tmzero''}}$, so that $\tmzero \redcbv \reset{\subst {\tmzero''}
    \vark {\lam \varx {\reset{\inctx \ctxzero \varx}}}}$. By
    Lemma~\ref{l:prop-cloct-env}, $\tmone'$ is a stuck term $\inctx
    \ctxone {\shift \vark {\tmone''}}$, therefore $\tmone \redcbv
    \reset{\subst{\tmone''} \vark {\lam \varx {\reset{\inctx \ctxone
            \varx}}}}$. We have $\reset{\subst {\tmzero''} \vark {\lam
        \varx {\reset{\inctx \ctxzero \varx}}}} \ierel \erel
          {\rvals{\cloctc\env}}\reset{\subst{\tmone''} \vark {\lam
              \varx {\reset{\inctx \ctxone \varx}}}}$ by
          Lemma~\ref{l:erel}, hence the result holds.
  \end{itemize}

  We now prove the bisimulation property (up to environment) for
  elements in $\erel_1$. Let $\inctx \rctxzero \tmzero \ierel \erel
  {\rvals{\cloct \env}} \inctx \rctxone \tmone$, so that $\tmzero
  \ierel \erely \env \tmone$ and $\rctxzero \clocc \env \rctxone$. If
  $\tmzero$ is a value $\valzero$, then because $\erely$ is a
  bisimulation, there exists $\valone$ such that $\tmone \clocbv
  \valone$ and $\env' =\env \cup \{(\tmzero, \tmone)\} \in
  \mathord{\erely}$. We then have $\inctx \rctxone \tmone \clocbv
  \inctx \rctxone \valone$, and the terms $\inctx \rctxzero \valzero$,
  $\inctx \rctxone \valone$ are similar to the one of $\erel_1$. We
  can prove the bisimulation property with $\inctx \rctxzero
  \valzero$, $\inctx \rctxone \valone$ the same way we did with the
  terms in $\erel_1$, except that we reason up to environment, because
  $\env \subseteq \env'$. The reasoning is similar if $\tmzero$ is a
  stuck term. Suppose $\tmzero$ is not a value nor a stuck term. There
  exists $\tmzero'$ such that $\tmzero \redcbv \tmzero'$, and so
  $\inctx \rctxzero \tmzero \redcbv \inctx \rctxzero
  {\tmzero'}$. Because $\erely$ is a bisimulation, there exists
  $\tmone'$ such that $\tmone \clocbv \tmone'$ and $\tmzero' \ierel
  \erely \env \tmone'$. We therefore have $\inctx \rctxone \tmone
  \clocbv \inctx \rctxone {\tmone'}$ with $\inctx \rctxzero {\tmzero'}
  \ierel \erel {\rvals{\cloct \env}} \inctx \rctxone {\tmone'}$, as
  wished.

  We now prove the clause~\ref{e:env} of the bisimulation. The only environments
  in $\erel$ are of the form $\rvals {\cloct \env}$. Let $\lam \varx \tmzero
  \rvals {\cloct \env} \lam \varx \tmone$ and $\valzero \rvals {\cloct \env}
  \valone$. By Lemma~\ref{l:erel}, we have $\subst \tmzero \varx \valzero \ierel
  \erel {\rvals {\cloct \env}} \subst \tmone \varx \valone$, hence the result
  holds. Similarly, if  $\inctx \ctxzero {\shift \vark \tmzero} \rvals{\cloct
    \env} \inctx \ctxone {\shift \vark \tmone}$ and $\ctx'_0 \clocc \env
  \ctx'_1$, then (by Lemma~\ref{l:erel}) we have $\reset{\subst \tmzero \vark
    {\lam \varx {\reset {\inctx {\ctx'_0}{\inctx \ctxzero \varx}}}}} \ierel
  \erel {\rvals {\cloct \env}} \reset{\subst \tmone \vark {\lam \varx {\reset
        {\inctx {\ctx'_1}{\inctx \ctxone \varx}}}}}$.

\end{proof}

\begin{lemma}
  \label{l:val}
  If $\lam \varx \tmzero \empbisim \lam \varx \tmone$, then $\subst
  \tmzero \varx \val \empbisim \subst \tmone \varx \val$.
\end{lemma}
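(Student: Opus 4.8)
The statement is essentially the diagonal instance of clause~(\ref{e:envv}) of Definition~\ref{d:env-bisim}, once the two abstractions have been promoted into an environment. Since $\bisim$ is the largest environmental bisimulation, it satisfies every clause of Definition~\ref{d:env-bisim}, and the plan is to unfold these clauses in three steps. Throughout, recall that $\empbisim$ abbreviates $\ierel \bisim \emptyset$, and that $\val$ (and hence also $\lam \varx \tmzero$, $\lam \varx \tmone$) is closed, so $\fv \tmzero \cup \fv \tmone \subseteq \{\varx\}$ and the terms $\subst \tmzero \varx \val$, $\subst \tmone \varx \val$ are closed.

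\textbf{Step 1: promote the abstractions into an environment.} From $\lam \varx \tmzero \empbisim \lam \varx \tmone$, i.e.\ $\lam \varx \tmzero \ierel \bisim \emptyset \lam \varx \tmone$, and the fact that $\lam \varx \tmzero$ is a value, clause~(\ref{e:tmval}) applies: $\lam \varx \tmone$ is already a normal form, so it reduces to itself in zero steps, and we obtain $\emptyset \cup \{(\lam \varx \tmzero, \lam \varx \tmone)\} = \{(\lam \varx \tmzero, \lam \varx \tmone)\} \in \bisim$. This is a well-formed environment, since it relates a value with a value.

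\textbf{Step 2: apply the value clause with the diagonal argument.} Let $\env := \{(\lam \varx \tmzero, \lam \varx \tmone)\}$. We have $\env \in \bisim$ and $\lam \varx \tmzero \mathrel{\env} \lam \varx \tmone$, so clause~(\ref{e:envv}) gives, for any pair $\valzero \cloctc \env \valone$, that $\subst \tmzero \varx \valzero \ierel \bisim \env \subst \tmone \varx \valone$. It remains to instantiate this with $\valzero = \valone = \val$, i.e.\ to check $\val \cloctc \env \val$. This is reflexivity of the term-generating closure: a routine structural induction on a term $\tm$, using the axiom $\varx \cloct \rel \varx$ and the congruence rules for $\cloct \rel$, shows $\tm \cloct \rel \tm$ for every (closed) $\tm$ and every $\rel$; in particular $\val \cloct \env \val$, and since $\val$ is closed, $\val \cloctc \env \val$. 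Hence $\subst \tmzero \varx \val \ierel \bisim \env \subst \tmone \varx \val$.

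\textbf{Step 3: weaken the environment.} Since $\emptyset \subseteq \env$, Weakening (Lemma~\ref{l:weakening}) yields $\subst \tmzero \varx \val \ierel \bisim \emptyset \subst \tmone \varx \val$, that is, $\subst \tmzero \varx \val \empbisim \subst \tmone \varx \val$, as required. There is no genuine obstacle in this argument; the only point needing a sentence of justification is the diagonal instance $\val \cloctc \env \val$, handled by the reflexivity remark above.
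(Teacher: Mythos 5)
Your proof is correct and follows essentially the same route as the paper's: promote the pair of abstractions into an environment via clause~(\ref{e:tmval}), instantiate clause~(\ref{e:envv}) with the diagonal pair $\val \cloctc \env \val$, and conclude by weakening (Lemma~\ref{l:weakening}). The only difference is that you spell out the reflexivity of $\cloct\env$ needed for the diagonal instance, which the paper leaves implicit.
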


\begin{proof}
  By clause \ref{e:tmval}, we have $\{(\lam \varx \tmzero, \lam \varx
  \tmone) \} \in \mathord{\empbisim}$. Let $\env = \{(\lam \varx
  \tmzero, \lam \varx \tmone) \}$. By clause \ref{e:envv}, for all
  $\val$, we have $\subst \tmzero \varx \val \ebisim \subst \tmone
  \varx \val$, therefore $\subst \tmzero \varx \val \empbisim \subst
  \tmone \varx \val$ holds by weakening (Lemma~\ref{l:weakening}).
\end{proof}

\begin{lemma}
  \label{l:shift}
  If $\inctx \ctxzero {\shift \vark \tmzero} \empbisim \inctx \ctxone
  {\shift \vark \tmone}$, then we have $\reset{\subst \tmzero \vark {\lam
      \varx {\reset{\inctx \ctx {\inctx \ctxzero \varx}}}}} \empbisim
  \reset{\subst \tmone \vark {\lam \varx {\reset{\inctx \ctx {\inctx
            \ctxone \varx}}}}}$.
\end{lemma}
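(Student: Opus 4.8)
The plan is to adapt the proof of Lemma~\ref{l:val} to stuck terms, using clauses~(\ref{e:tmstuck}) and~(\ref{e:envs}) of Definition~\ref{d:env-bisim} in place of~(\ref{e:tmval}) and~(\ref{e:envv}). First I would note that $\empbisim$ relates closed terms, so $\inctx \ctxzero {\shift \vark \tmzero}$ and $\inctx \ctxone {\shift \vark \tmone}$ are closed, hence both stuck by Lemma~\ref{l:stuck}. Applying clause~(\ref{e:tmstuck}) of Definition~\ref{d:env-bisim} to the hypothesis, i.e., to the triple $(\emptyset, \inctx \ctxzero {\shift \vark \tmzero}, \inctx \ctxone {\shift \vark \tmone}) \in \mathord{\bisim}$, and using that $\inctx \ctxone {\shift \vark \tmone}$, being a normal form, reduces only to itself, I obtain $\env \in \mathord{\empbisim}$, where $\env = \{(\inctx \ctxzero {\shift \vark \tmzero}, \inctx \ctxone {\shift \vark \tmone})\}$.

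Next I would invoke clause~(\ref{e:envs}) of Definition~\ref{d:env-bisim} on this singleton $\env$, taking $\ctx'_0 = \ctx'_1 = \ctx$. The side condition to discharge is $\ctx \clocc \env \ctx$. This follows from reflexivity of the closures: a straightforward structural induction shows $\tm \cloctc \env \tm$ for every closed term $\tm$ (each constructor is handled by the matching congruence rule, with $\varx \cloct \env \varx$ at the leaves), and hence $\rctx \clocc \env \rctx$ for every closed evaluation context $\rctx$ --- in particular for the (closed) pure context $\ctx$. Clause~(\ref{e:envs}) then yields $\reset{\subst \tmzero \vark {\lam \varx {\reset{\inctx \ctx {\inctx \ctxzero \varx}}}}} \ierel \bisim {\env} \reset{\subst \tmone \vark {\lam \varx {\reset{\inctx \ctx {\inctx \ctxone \varx}}}}}$ for the fresh~$\varx$.

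Finally, since $\emptyset \subseteq \env$, Weakening (Lemma~\ref{l:weakening}) gives $\reset{\subst \tmzero \vark {\lam \varx {\reset{\inctx \ctx {\inctx \ctxzero \varx}}}}} \empbisim \reset{\subst \tmone \vark {\lam \varx {\reset{\inctx \ctx {\inctx \ctxone \varx}}}}}$, which is the statement. The only step that is not a direct instantiation of the bisimulation clauses is the reflexivity observation $\ctx \clocc \env \ctx$: it is what lets us test the bisimilar stuck pair with a single common context $\ctx$ instead of with contexts built from a richer environment. I expect this to be routine, so there is no real obstacle --- the lemma is essentially the stuck-term mirror of Lemma~\ref{l:val}.
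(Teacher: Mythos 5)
Your proposal is correct and follows exactly the paper's own argument: apply clause~(\ref{e:tmstuck}) to obtain the singleton environment $\{(\inctx \ctxzero {\shift \vark \tmzero}, \inctx \ctxone {\shift \vark \tmone})\} \in \mathord{\bisim}$, then clause~(\ref{e:envs}) with $\ctx'_0 = \ctx'_1 = \ctx$, and conclude by weakening (Lemma~\ref{l:weakening}). The only difference is that you spell out the reflexivity check $\ctx \clocc \env \ctx$, which the paper leaves implicit.
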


\begin{proof}
  By clause \ref{e:tmstuck}, we know that $\{(\inctx \ctxzero {\shift
    \vark \tmzero}, \inctx \ctxone {\shift \vark \tmone}) \} \in
  \mathord{\empbisim}$. Let $\env = \{(\inctx \ctxzero {\shift \vark
    \tmzero}, \inctx \ctxone {\shift \vark \tmone}) \}$. By
  clause~\ref{e:envs}, we know that $\reset{\subst \tmzero \vark {\lam
      \varx {\reset{\inctx \ctx {\inctx \ctxzero \varx}}}}} \ebisim
  \reset{\subst \tmone \vark {\lam \varx {\reset{\inctx \ctx {\inctx
            \ctxone \varx}}}}}$, hence $\reset{\subst \tmzero \vark
    {\lam \varx {\reset{\inctx \ctx {\inctx \ctxzero \varx}}}}}
  \empbisim \reset{\subst \tmone \vark {\lam \varx {\reset{\inctx \ctx
          {\inctx \ctxone \varx}}}}}$ is true by weakening
  (Lemma~\ref{l:weakening}).
\end{proof}

\begin{lemma}
  \label{l:subst-val}
  If $\lam \varx \tmzero \cloctc \empbisim \lam \varx \tm \empbisim
  \lam \varx \tmone$ and $\valzero \cloctc \empbisim \val \empbisim
  \valone$ then $\subst \tmzero \varx \valzero \cloctc \empbisim
  \empbisim \subst \tmone \varx \valone$.
\end{lemma}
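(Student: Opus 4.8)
The plan is to prove the composition $\subst\tmzero\varx\valzero \cloctc\empbisim \empbisim \subst\tmone\varx\valone$ (i.e.\ $\subst\tmzero\varx\valzero \cloctc\empbisim \tm'$ and $\tm' \empbisim \subst\tmone\varx\valone$ for a suitable intermediate term $\tm'$) by producing $\tm'$ explicitly and treating its $\cloctc\empbisim$-part and its $\empbisim$-part separately. Throughout I would use the basic properties of $\empbisim$ (in particular that it is symmetric and transitive), that $\cloct\rel$ always contains the identity relation on terms (an immediate induction on term structure using the variable and structural rules), and Lemmas~\ref{l:cong-is-subst}, \ref{l:val}, \ref{l:cong-val}, \ref{l:weakening}, and~\ref{l:redcbv-in-empbisim}.

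First I would record the auxiliary fact $(\star)$: $\subst\tm\varx\val \empbisim \subst\tmone\varx\valone$. Indeed, Lemma~\ref{l:val} applied to $\lam\varx\tm\empbisim\lam\varx\tmone$ gives $\subst\tm\varx\val\empbisim\subst\tmone\varx\val$; and applying Lemma~\ref{l:cong-val} (with the empty environment, then weakening via Lemma~\ref{l:weakening}) to the normal forms $\val$, $\valone$ in the context $\app{\lamp\varx\tmone}{\mtctx}$ gives $\app{\lamp\varx\tmone}{\val}\empbisim\app{\lamp\varx\tmone}{\valone}$, whose two sides $\beta$-reduce to $\subst\tmone\varx\val$ and $\subst\tmone\varx\valone$, so by Lemma~\ref{l:redcbv-in-empbisim} and transitivity $\subst\tmone\varx\val\empbisim\subst\tmone\varx\valone$; one more use of transitivity yields $(\star)$.

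Then I would distinguish the two ways $\lam\varx\tmzero\cloctc\empbisim\lam\varx\tm$ can be derived. If it is derived by the rule for $\lambda$-abstractions, then $\tmzero\cloct\empbisim\tm$, and since $\valzero$, $\val$ are closed values with $\valzero\mathrel{\rvals{\cloct\empbisim}}\val$ (from $\valzero\cloctc\empbisim\val$), Lemma~\ref{l:cong-is-subst} gives $\subst\tmzero\varx\valzero\cloctc\empbisim\subst\tm\varx\val$; composing with $(\star)$ through the witness $\tm' = \subst\tm\varx\val$ closes this case. The main obstacle is the other case, where $\lam\varx\tmzero\cloctc\empbisim\lam\varx\tm$ comes from the base rule, i.e.\ $\lam\varx\tmzero\empbisim\lam\varx\tm$: here the bodies $\tmzero$ and $\tm$ need not be $\cloct\empbisim$-related, so Lemma~\ref{l:cong-is-subst} cannot be applied to them and $\subst\tm\varx\val$ is no longer reachable from $\subst\tmzero\varx\valzero$ by $\cloctc\empbisim$ alone --- which is precisely why the statement asks for the composition $\cloctc\empbisim\;\empbisim$ rather than $\cloctc\empbisim$. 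I would resolve it by routing the composition through the other witness $\tm' = \subst\tmzero\varx\val$: on the one hand $\subst\tmzero\varx\valzero\cloctc\empbisim\subst\tmzero\varx\val$ by Lemma~\ref{l:cong-is-subst} (using the identity $\tmzero\cloct\empbisim\tmzero$ together with $\valzero\mathrel{\rvals{\cloct\empbisim}}\val$); on the other hand Lemma~\ref{l:val} applied to $\lam\varx\tmzero\empbisim\lam\varx\tm$ gives $\subst\tmzero\varx\val\empbisim\subst\tm\varx\val$, which together with $(\star)$ and transitivity gives $\subst\tmzero\varx\val\empbisim\subst\tmone\varx\valone$. Either way $\subst\tmzero\varx\valzero\cloctc\empbisim\empbisim\subst\tmone\varx\valone$, as required.
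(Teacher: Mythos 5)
Your proof is correct and takes essentially the same route as the paper's: a case analysis on how $\lam \varx \tmzero \cloctc \empbisim \lam \varx \tm$ was derived, using Lemma~\ref{l:cong-is-subst} for the $\cloctc\empbisim$-step and a chain of $\empbisim$-steps obtained from Lemmas~\ref{l:val} and~\ref{l:cong-val}, closed by transitivity. The only (harmless) difference is that where the paper appeals to Lemma~\ref{l:cong-val} directly to replace $\val$ by $\valone$ inside the substituted body, you route that step through the one-hole context $\app{\lamp \varx \tmone}{\mtctx}$ followed by a $\beta$-reduction and Lemma~\ref{l:redcbv-in-empbisim}, which is if anything slightly more careful about multiple occurrences of $\varx$.
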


\begin{proof}
  We proceed by case analysis on $\lam \varx \tmzero \cloctc \empbisim
  \lam \varx \tm$.

  Suppose $\lam \varx \tmzero \empbisim \lam \varx \tm$. We have
  $\subst \tmzero \varx \valzero \cloctc \empbisim \subst \tmzero
  \varx \val$ by Lemma \ref{l:cong-is-subst}, $\subst \tmzero \varx
  \val \empbisim \subst \tm \varx \val$ by Lemma~\ref{l:val}, $\subst
  \tm \varx \val \empbisim \subst \tm \varx \valone$ by
  Lemma~\ref{l:cong-val}, and $\subst \tm \varx \valone \empbisim
  \subst \tmone \varx \valone$ by Lemma~\ref{l:val}. Finally, $\subst
  \tmzero \varx \valzero \cloctc \empbisim \empbisim \subst \tmone
  \varx \valone$ holds using transitivity of $\empbisim$.

  Suppose $\tmzero \cloct \empbisim \tm$ with $\fv \tmzero \cup \fv
  \tm \subseteq \{ \varx \}$. We have $\subst \tmzero \varx \valzero
  \cloctc\empbisim \subst \tm \varx \val$ by
  Lemma~\ref{l:cong-is-subst}, $\subst \tm \varx \val \empbisim \subst
  \tm \varx \valone$ by Lemma~\ref{l:cong-val}, and $\subst \tm \varx
  \valone \empbisim \subst \tmone \varx \valone$ by
  Lemma~\ref{l:val}. Finally, $\subst \tmzero \varx \valzero \cloctc
  \empbisim \empbisim \subst \tmone \varx \valone$ holds using
  transitivity of $\empbisim$.
\end{proof}

\begin{lemma}
  \label{l:subst-shift}
  If $\inctx \ctxzero {\shift \vark \tmzero} \cloctc \empbisim \inctx
  \ctx {\shift \vark \tm}$, $\inctx \ctx {\shift \vark \tm} \empbisim
  \inctx \ctxone {\shift \vark \tmone}$ and $\ctx'_0 \clocc\empbisim
  \ctx'_1$, then $\reset{\subst \tmzero \vark {\lam \varx
      {\reset{\inctx {\ctx'_0}{\inctx \ctxzero \varx}}}}} \cloctc
  \empbisim \empbisim \reset{\subst \tmone \vark {\lam \varx
      {\reset{\inctx {\ctx'_1}{\inctx \ctxone \varx}}}}}$.
\end{lemma}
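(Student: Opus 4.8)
The plan is to adapt the proofs of the second item of Lemma~\ref{l:erel} and of Lemma~\ref{l:subst-val}, proceeding by induction on the derivation of $\inctx \ctxzero {\shift \vark \tmzero} \cloctc \empbisim \inctx \ctx {\shift \vark \tm}$. The second hypothesis $\inctx \ctx {\shift \vark \tm} \empbisim \inctx \ctxone {\shift \vark \tmone}$, together with transitivity of $\empbisim$, will be used to handle the right-hand part of the chain, and the main building blocks are Lemma~\ref{l:cong-is-subst} and Lemma~\ref{l:shift}. I will also use the following shape observation, established exactly like Lemma~\ref{l:prop-cloct-env}: if $\inctx {\ctxzero}{\shift \vark \tmzero} \cloctc \empbisim \inctx {\ctx}{\shift \vark \tm}$ holds between closed stuck terms and is \emph{not} derived from the base rule, then $\ctxzero$ and $\ctx$ have the same shape, with the corresponding subterms related by $\cloctc \empbisim$ and $\tmzero \cloct \empbisim \tm$; this relies on the fact that, by clauses~(\ref{e:tmval}) and~(\ref{e:tmstuck}) of Definition~\ref{d:env-bisim}, a closed value can never be $\empbisim$-related to a stuck term.

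In the base case $\inctx \ctxzero {\shift \vark \tmzero} \empbisim \inctx \ctx {\shift \vark \tm}$, so by transitivity with the second hypothesis $\inctx \ctxzero {\shift \vark \tmzero} \empbisim \inctx \ctxone {\shift \vark \tmone}$. From $\ctx'_0 \clocc \empbisim \ctx'_1$ and reflexivity of $\cloct \empbisim$ we get $\lam \varx {\reset {\inctx {\ctx'_0}{\inctx \ctxzero \varx}}} \cloctc \empbisim \lam \varx {\reset {\inctx {\ctx'_1}{\inctx \ctxzero \varx}}}$ (a closed value on each side), hence by Lemma~\ref{l:cong-is-subst} (with $\tmzero \cloct \empbisim \tmzero$) and closing the result under reset, $\reset {\subst \tmzero \vark {\lam \varx {\reset {\inctx {\ctx'_0}{\inctx \ctxzero \varx}}}}} \cloctc \empbisim \reset {\subst \tmzero \vark {\lam \varx {\reset {\inctx {\ctx'_1}{\inctx \ctxzero \varx}}}}}$; applying Lemma~\ref{l:shift} to $\inctx \ctxzero {\shift \vark \tmzero} \empbisim \inctx \ctxone {\shift \vark \tmone}$ with the pure context $\ctx'_1$ then gives $\reset {\subst \tmzero \vark {\lam \varx {\reset {\inctx {\ctx'_1}{\inctx \ctxzero \varx}}}}} \empbisim \reset {\subst \tmone \vark {\lam \varx {\reset {\inctx {\ctx'_1}{\inctx \ctxone \varx}}}}}$, and composing the two statements is exactly the required conclusion. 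The case $\ctxzero = \ctx = \mtctx$ with $\tmzero \cloct \empbisim \tm$ (and $\fv \tmzero \cup \fv \tm \subseteq \{\vark\}$) is the same computation, this time feeding $\tmzero \cloct \empbisim \tm$ to Lemma~\ref{l:cong-is-subst} and applying Lemma~\ref{l:shift} directly to the second hypothesis with the context $\ctx'_1$. For $\ctxzero = \vctx \valzero {\ctx''_0}$ (the case $\ctxzero = \apctx {\ctx''_0}{\tmzero'}$ being symmetric), the shape observation gives $\ctx = \vctx \val {\ctx''}$ with $\valzero \cloctc \empbisim \val$ and $\inctx {\ctx''_0}{\shift \vark \tmzero} \cloctc \empbisim \inctx {\ctx''}{\shift \vark \tm}$; as in Lemma~\ref{l:erel}, $\inctx {\ctx'_0}{\vctx \valzero \mtctx} \clocc \empbisim \inctx {\ctx'_1}{\vctx \val \mtctx}$, and I apply the induction hypothesis to $\inctx {\ctx''_0}{\shift \vark \tmzero} \cloctc \empbisim \inctx {\ctx''}{\shift \vark \tm}$ with the \emph{reflexive} instance $\inctx {\ctx''}{\shift \vark \tm} \empbisim \inctx {\ctx''}{\shift \vark \tm}$ as the second hypothesis and with the extended contexts above, obtaining $\reset {\subst \tmzero \vark {\lam \varx {\reset {\inctx {\ctx'_0}{\inctx \ctxzero \varx}}}}} \cloctc \empbisim \empbisim \reset {\subst \tm \vark {\lam \varx {\reset {\inctx {\ctx'_1}{\inctx \ctx \varx}}}}}$ after unfolding the context compositions. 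A last application of Lemma~\ref{l:shift} to the genuine second hypothesis with the context $\ctx'_1$, followed by transitivity of $\empbisim$, then concludes.

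The delicate point is precisely this recursion in the application cases: a naive induction would need the second hypothesis restated for the subcontext $\ctx''$, which is not available. The trick is to recurse with the trivial (reflexive) second hypothesis after absorbing the extra component of $\ctxzero$ ($\valzero$, or $\tmzero'$ in the symmetric case) into the testing contexts $\ctx'_0, \ctx'_1$, and to bring in the actual second hypothesis only at the very end, via a single use of Lemma~\ref{l:shift}; this is what keeps the $\cloctc \empbisim$ step at the front of the chain, as the statement of the lemma requires. The shape observation and the auxiliary closure-composition facts for $\clocc \empbisim$ are routine and are already implicit in the proof of Lemma~\ref{l:erel}.
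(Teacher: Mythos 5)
Your proof is correct and follows essentially the same route as the paper: an induction on the derivation of $\inctx \ctxzero {\shift \vark \tmzero} \cloctc \empbisim \inctx \ctx {\shift \vark \tm}$ using Lemma~\ref{l:cong-is-subst} and Lemma~\ref{l:shift}, with the same key move of absorbing the peeled-off component of $\ctxzero$ into the testing contexts $\ctx'_0$, $\ctx'_1$ before recursing. Your device of recursing with a reflexive second hypothesis and bringing in the genuine one only at the end via a single use of Lemma~\ref{l:shift} is exactly the paper's factoring of the argument into an intermediate claim (stated without the second hypothesis) followed by one application of Lemma~\ref{l:shift} and transitivity; the explicit ``shape observation'' you add is left implicit in the paper but is correctly justified.
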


\begin{proof}
  We start with proving that $\inctx \ctxzero {\shift \vark \tmzero}
  \cloctc \empbisim \inctx \ctx {\shift \vark \tm}$ and $\ctx'_0
  \clocc\empbisim \ctx'_1$ implies $\reset{\subst \tmzero \vark {\lam
      \varx {\reset{\inctx {\ctx'_0}{\inctx \ctxzero \varx}}}}}
  \cloctc \empbisim \empbisim \reset{\subst \tm \vark {\lam \varx
      {\reset{\inctx {\ctx'_1}{\inctx \ctx \varx}}}}}$. We proceed by
  induction on $\inctx \ctxzero {\shift \vark \tmzero} \cloctc
  \empbisim \inctx \ctx {\shift \vark \tm}$.

  Suppose $\inctx \ctxzero {\shift \vark \tmzero} \empbisim \inctx
  \ctx {\shift \vark \tm}$. From $\ctx'_0 \clocc\empbisim \ctx'_1$,
  we get $\lam \varx {\reset{\inctx {\ctx'_0}{\inctx \ctxzero
        \varx}}} \cloctc\empbisim \lam \varx {\reset{\inctx
      {\ctx'_1}{\inctx \ctxzero \varx}}}$. Then $\reset{\subst
    \tmzero \vark {\lam \varx {\reset{\inctx {\ctx'_0}{\inctx
            \ctxzero \varx}}}}} \cloctc \empbisim \reset{\subst
    \tmzero \vark {\lam \varx {\reset{\inctx {\ctx'_1}{\inctx
            \ctxzero \varx}}}}}$ holds by Lemma~\ref{l:cong-is-subst},
  and then $\reset{\subst \tmzero \vark {\lam \varx {\reset{\inctx
          {\ctx'_1}{\inctx \ctxzero \varx}}}}} \empbisim
  \reset{\subst \tm \vark {\lam \varx {\reset{\inctx {\ctx'_1}{\inctx
            \ctx \varx}}}}}$ holds by Lemma~\ref{l:shift}, hence the
  result holds.

  Suppose $\ctxzero=\ctx=\mtctx$ and $\tmzero \cloctc\empbisim \tm$
  with $\fv \tmzero \cup \fv \tm \subseteq \{ \vark \}$. From
  $\ctx'_0 \clocc\empbisim \ctx'_1$, we have $\lam \varx
  {\reset{\inctx {\ctx'_0} \varx}} \cloctc\empbisim \lam \varx
  {\reset{\inctx {\ctx'_1} \varx}}$. Then $\reset{\subst \tmzero
    \vark {\lam \varx {\reset{\inctx {\ctx'_0} \varx}}}}
  \cloctc\empbisim \reset{\subst \tm \vark {\lam \varx {\reset{\inctx
          {\ctx'_1} \varx}}}}$ by Lemma~\ref{l:cong-is-subst}, hence
  the result holds.

  Suppose $\inctx \ctxzero {\shift \vark \tmzero} = \app \valzero
  {\inctx {\ctx''_0}{\shift \vark \tmzero}}$, $\inctx \ctx {\shift
    \vark \tm} = \app \val {\inctx {\ctx''}{\shift \vark \tm}}$ with
  $\valzero \cloctc\empbisim \val$ and $\inctx {\ctx''_0}{\shift
    \vark \tmzero} \cloctc\empbisim \inctx {\ctx''}{\shift \vark
    \tm}$. From $\ctx'_0 \clocc\empbisim \ctx'_1$ and $\valzero
  \cloctc\empbisim \val$, it is the case that $\inctx{\ctx'_0}{\vctx
    \valzero \mtctx} \clocc\empbisim \inctx{\ctx'_1}{\vctx \val
    \mtctx}$. By the induction hypothesis, we obtain
  $$\reset{\subst \tmzero \vark {\lam \varx {\reset{\inctx
          {\ctx'_0}{\app \valzero {\inctx {\ctx''_0} \varx}}}}}}
  \cloctc \empbisim \empbisim \reset{\subst \tm \vark {\lam \varx
      {\reset{\inctx {\ctx'_1}{\app \val {\inctx \ctx \varx}}}}}},$$
  which means that $\reset{\subst \tmzero \vark {\lam \varx
      {\reset{\inctx {\ctx'_0}{\inctx \ctxzero \varx}}}}} \cloctc
  \empbisim \empbisim \reset{\subst \tm \vark {\lam \varx
      {\reset{\inctx {\ctx'_1}{\inctx \ctx \varx}}}}}$, as
  wished. The other case $\inctx \ctxzero {\shift \vark \tmzero} =
  \app {\inctx {\ctx''_0}{\shift \vark \tmzero}}{\tmzero'}$, $\inctx
  \ctx {\shift \vark \tm} = \app {\inctx {\ctx''}{\shift \vark
      \tm}}{\tm'}$ with $\tmzero' \cloctc\empbisim \tm'$ and $\inctx
       {\ctx''_0}{\shift \vark \tmzero} \cloctc\empbisim \inctx
       {\ctx''}{\shift \vark \tm}$ is treated similarly.\\

  We are now in a position to prove the lemma. We have just proved
  that $\reset{\subst \tmzero \vark {\lam \varx {\reset{\inctx
          {\ctx'_0}{\inctx \ctxzero \varx}}}}} \cloctc \empbisim
  \empbisim \reset{\subst \tm \vark {\lam \varx {\reset{\inctx
          {\ctx'_1}{\inctx \ctx \varx}}}}}$. We also have that
  $$\reset{\subst \tm \vark {\lam \varx {\reset{\inctx
          {\ctx'_1}{\inctx \ctx \varx}}}}} \empbisim \reset{\subst
    \tmone \vark {\lam \varx {\reset{\inctx {\ctx'_1}{\inctx \ctxone
            \varx}}}}}$$ by Lemma~\ref{l:shift}, therefore the required
  result holds by transitivity of $\empbisim$.
\end{proof}

\begin{lemma}
  \label{l:cong-empbisim}
  $\tmzero \empbisim \tmone$ implies $\inctx \cctx \tmzero \ierel
  \bisim {\rvals {\cloct \empbisim}} \inctx \cctx \tmone$.
\end{lemma}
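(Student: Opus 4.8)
The plan is to follow verbatim the recipe sketched after Lemma~\ref{l:cong-empbisim-main}: exhibit the environmental relation
\[
  \erel = \{ (\rvals{\cloct\empbisim}, \tmzero, \tmone) \mmid \tmzero \cloctc\empbisim \tmone \} \;\cup\; \{ \rvals{\cloct\empbisim} \}
\]
and prove it is an environmental bisimulation up to bisimilarity (Definition~\ref{d:utbisim}). Soundness of that up-to technique then yields $\erel \subseteq \mathord{\bisim}$, and since $\tmzero \empbisim \tmone$ implies $\inctx\cctx\tmzero \cloctc\empbisim \inctx\cctx\tmone$ (by the first rule and the constructor rules of $\cloct$), we get $\inctx\cctx\tmzero \ierel\bisim{\rvals{\cloct\empbisim}} \inctx\cctx\tmone$, which is the statement. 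A preliminary observation is that $\rvals{\cloct\empbisim}$ is a legitimate environment: exactly as $\empbisim$ does, the closure $\cloct\empbisim$ sends values to values and stuck terms to stuck terms (the analogue of Lemma~\ref{l:prop-cloct-env} for $\cloct\empbisim$), so the only environments occurring in $\erel$ do relate values with values and stuck terms with stuck terms.

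For the environment clauses of Definition~\ref{d:utbisim}, the only environments in $\erel$ have the form $\rvals{\cloct\empbisim}$, for which $\cloctc(\rvals{\cloct\empbisim}) = \cloctc\empbisim$ and for which $t_0 \ierel\erel{\rvals{\cloct\empbisim}} t_1$ is just $t_0 \cloctc\empbisim t_1$. Hence clause~(\ref{e:utbisim-envv}) is precisely Lemma~\ref{l:subst-val} (inserting $\tmone$, resp.\ $\valone$, as the middle term and using reflexivity of $\empbisim$ on closed terms to fill the leftmost $\empbisim$ slot), and symmetrically clause~(\ref{e:utbisim-envs}) is Lemma~\ref{l:subst-shift}.

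For the term clauses, take $\tmzero \cloctc\empbisim \tmone$. If $\tmzero$ is a value (resp.\ a stuck term) then so is $\tmone$, and $(\tmzero,\tmone)\in\rvals{\cloct\empbisim}$ already, so clauses~(\ref{e:utbisim-tmval}) and~(\ref{e:utbisim-tmstuck}) hold with no reduction of $\tmone$ and with $\rvals{\cloct\empbisim}\cup\{(\tmzero,\tmone)\}=\rvals{\cloct\empbisim}\in\erel$. The reduction clause~(\ref{e:utbisim-tmtau}) is proved by induction on the derivation of $\tmzero \cloctc\empbisim \tmone$. The base case $\tmzero\empbisim\tmone$ uses directly that $\empbisim$ is a bisimulation. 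In the constructor cases, a step $\tmzero\redcbv\tmzero'$ is either a step strictly inside one immediate subterm, in which case the induction hypothesis supplies a matching reduction of the corresponding subterm of $\tmone$ together with residuals related by $\cloctc\empbisim$ up to a trailing $\empbisim$; we recombine the $\tmzero$-residual with its unchanged siblings under the same constructor to obtain a $\cloctc\empbisim$-related term, and discharge the trailing $\empbisim$ by pushing it out through the surrounding evaluation context, using congruence of $\empbisim$ w.r.t.\ evaluation contexts (Lemma~\ref{l:cong-e-ctx}) and weakening (Lemma~\ref{l:weakening}); or the step contracts a $\RRbeta$, $\RRreset$, or $\RRshift$ redex at the root, which by the $\cloct\empbisim$-analogue of Lemma~\ref{l:prop-cloct-env} forces a redex of the same shape in $\tmone$, the two contracta being then related as required by Lemma~\ref{l:subst-val} (for $\RRbeta$), Lemma~\ref{l:subst-shift} (for $\RRshift$), or by a direct calculation (for $\RRreset$). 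The converse conditions on $\tmone$ are symmetric.

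The delicate point is the $\RRshift$ case of the reduction clause: the redex inside $\tmzero$ is of the form $\reset{\inctx\ctxzero{\shift\vark\tmzero'}}$, so the step captures a part of the current evaluation context, and one must relate $\reset{\subst{\tmzero'}\vark{\lam\varx{\reset{\inctx{\ctx'_0}{\inctx\ctxzero\varx}}}}}$ to the analogous term assembled from all the $\cloctc\empbisim$-related components; this is exactly what Lemma~\ref{l:subst-shift} provides, and its own proof needs a secondary induction that peels off the layers of $\ctxzero$ one at a time. The second thing to watch is the standard asymmetry of up-to-bisimilarity with a small-step relation: in clause~(\ref{e:utbisim-tmtau}) one may compose with $\empbisim$ only on the right (as noted after Definition~\ref{d:utbisim}), which is why the residual $\empbisim$ produced by the induction hypothesis must be absorbed through evaluation-context congruence rather than by transitivity on both sides — and this is precisely why the congruence results for normal forms (Lemma~\ref{l:cong-val}) and for evaluation contexts (Lemma~\ref{l:cong-e-ctx}) have to be in place before this proof.
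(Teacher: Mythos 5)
Your overall skeleton is the paper's: the same candidate relation, the same up-to-bisimilarity technique, the same reliance on Lemmas~\ref{l:subst-val}, \ref{l:subst-shift}, \ref{l:cong-val} and~\ref{l:cong-e-ctx}, and an induction on the derivation of $\tmzero \cloctc\empbisim \tmone$. But there is a genuine flaw in how you discharge the normal-form clauses. The ``analogue of Lemma~\ref{l:prop-cloct-env} for $\cloct\empbisim$'' that you invoke is false: Lemma~\ref{l:prop-cloct-env} holds because an \emph{environment} relates only normal forms, whereas the base rule of $\cloct\empbisim$ injects $\empbisim$, which relates a value (or a stuck term) to any term that merely \emph{evaluates} to one. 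For instance $\lam \varx \varx \empbisim \reset{\lam \varx \varx}$, so $\lam \varx \varx \cloctc\empbisim \reset{\lam \varx \varx}$ with the left-hand side a value and the right-hand side not. Hence your claim that clauses~(\ref{e:utbisim-tmval}) and~(\ref{e:utbisim-tmstuck}) ``hold with no reduction of $\tmone$'' because $(\tmzero,\tmone)\in\rvals{\cloct\empbisim}$ already is wrong: in general $\tmone$ is not a normal form and must be made to evaluate. The same false principle undermines your root-redex case: if $\tmzero = \app{\lamp\varx{\tmzero'}}{\valzero}$ with $\tmzero \cloctc\empbisim \app{\tmone^1}{\tmone^2}$, nothing forces $\tmone^1$ and $\tmone^2$ to be values; they may themselves be $\empbisim$-related to values without being values.

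The repair is exactly what the paper does: prove clauses~(\ref{e:utbisim-tmtau}), (\ref{e:utbisim-tmval}) and~(\ref{e:utbisim-tmstuck}) \emph{simultaneously} by the induction on the derivation. In the base case $\tmzero\empbisim\tmone$ all three follow from $\empbisim$ being a bisimulation (so $\tmone$ does reduce); in the constructor cases the induction hypothesis applied to the subterms yields, e.g., $\tmone^1 \clocbv \lam\varx{\tmone'}$ and $\tmone^2 \clocbv \valone$ with the residuals related by $\rvals{\cloct\empbisim}$ followed by $\empbisim$, after which Lemma~\ref{l:subst-val} (resp.\ Lemma~\ref{l:subst-shift}) applies to the contracta; and the stuck cases likewise require $\tmone$'s subterm to evaluate to a stuck term before the whole of $\tmone$ can be declared stuck. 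Your treatment of the environment clauses and your remarks about the one-sidedness of up-to-bisimilarity and the role of Lemma~\ref{l:cong-e-ctx} are correct and match the paper.
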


\begin{proof}
  We prove that 
  $$\erel = \{ (\rvals {\cloct \empbisim}, \tmzero, \tmone) \mmid \tmzero
  \cloctc \empbisim \tmone\} \cup \{ \rvals {\cloct \empbisim} \}$$ is
  a bisimulation up-to bisimilarity. Let $\tmzero \ierel \erel {\rvals
    {\cloct \empbisim}} \tmone$. We prove clauses
  \ref{e:utbisim-tmtau}, \ref{e:utbisim-tmval}, and
  \ref{e:utbisim-tmstuck} of Definition \ref{d:utbisim} by induction
  on $\tmzero \cloctc \empbisim \tmone$. Note that by definition of
  $\erel$, we have $\tm \ierel \erel {\rvals {\cloct \empbisim}} \tm'$
  iff $\tm \cloctc \empbisim \tm'$.
    
  Suppose $\tmzero \empbisim \tmone$. This case holds because
  $\empbisim$ is an environmental bisimulation.

  Suppose $\tmzero = \lam \varx {\tmzero'}$, $\tmone = \lam \varx
  {\tmone'}$ with $\tmzero' \cloct \empbisim \tmone'$ and
  $\fv{\tmzero'} \cup \fv{\tmone'} \subseteq \{ \varx \}$. We have to
  prove that $(\rvals {\cloct \empbisim} \cup \{(\tmzero, \tmone)\})
  \in \erel$, i.e., $\rvals {\cloct \empbisim} \in \erel$, which is
  true.

  Suppose $\tmzero = \app{\tmzero^1}{\tmzero^2}$,
  $\tmone=\app{\tmone^1}{\tmzero^2}$ with $\tmzero^1 \cloctc \empbisim
  \tmone^1$ and $\tmzero^2 \cloctc \empbisim \tmone^2$. We distinguish
  several cases.
  \begin{itemize}
  \item If $\tmzero^1 \redcbv {\tmzero^1}'$, then $\tmzero \redcbv
    \app{{\tmzero^1}'}{\tmzero^2}$. By induction there exist
    ${\tmone^1}''$, ${\tmone^1}'$ such that $\tmone^1 \clocbv
        {\tmone^1}'$ and ${\tmzero^1}' \cloctc \empbisim {\tmone^1}''
        \empbisim {\tmone^1}'$. Consequently we have $\tmone \clocbv
        \app{{\tmone^1}'}{\tmone^2}$. By definition, we have
        $\app{{\tmzero^1}'}{\tmzero^2} \cloctc\empbisim
        \app{{\tmone^1}''}{\tmone^2}$, and by
        Lemma~\ref{l:cong-e-ctx}, we have
        $\app{{\tmone^1}''}{\tmone^2} \empbisim
        \app{{\tmone^1}'}{\tmone^2}$, hence
        $\app{{\tmzero^1}'}{\tmzero^2} \cloctc\empbisim \empbisim
        \app{{\tmone^1}'}{\tmone^2}$ holds, as wished.
  \item If $\tmzero^1 = \valzero$ and $\tmzero^2 \redcbv
    {\tmzero^2}'$, then $\tmzero \redcbv \app \valzero
    {{\tmzero^2}'}$. By induction there exist ${\tmone^2}''$,
    ${\tmone^2}'$ such that $\tmone^2 \clocbv {\tmone^2}'$ and
    ${\tmzero^2}' \cloctc \empbisim {\tmone^2}'' \empbisim
    {\tmone^2}'$. There also exists $\valone'$, $\valone$ such that
    $\tmone^1 \clocbv \valone$ and $\valzero \rvals {\cloct \empbisim}
    \valone' \empbisim \valone$. Consequently we have $\tmone \clocbv
    \app \valone {{\tmone^2}'}$. By definition, we have $\app \valzero
         {{\tmzero^2}'} \cloctc \empbisim
         \app{\valone'}{{\tmone^2}''}$, and by
         Lemma~\ref{l:cong-e-ctx} and transitivity of~$\empbisim$, we
         have $\app{\valone'}{{\tmone^2}''} \empbisim \app \valone
         {{\tmone^2}'}$, hence $\app \valzero {{\tmzero^2}'} \cloctc
         \empbisim \empbisim \app \valone {{\tmone^2}'}$ holds, as
         wished.
  \item If $\tmzero^1 = \lam \varx {\tmzero'}$ and $\tmzero^2 \redcbv
    \valzero$, then $\tmzero \redcbv \subst {\tmzero'} \varx
    \valzero$. By induction there exist $\tmone''$, $\tmone'$ such
    that $\tmone^1 \clocbv \lam \varx {\tmone'}$ and $\lam \varx
    {\tmzero'} \rvals {\cloct \empbisim} \lam \varx {\tmone''}
    \empbisim \lam \varx {\tmone'}$. There also exists $\valone'$,
    $\valone$ such that $\tmone^2 \clocbv \valone$ and $\valzero
    \rvals {\cloct \empbisim} \valone' \empbisim
    \valone$. Consequently we have $\tmone \clocbv \subst {\tmone'}
    \varx \valone$. By Lemma~\ref{l:subst-val}, we have $\subst
          {\tmzero'} \varx \valzero \cloctc\empbisim \empbisim \subst
          {\tmone'} \varx \valone$, as wished.
  \item If $\tmzero = \app{\inctx \ctxzero {\shift \vark
      {\tmzero'}}}{\tmzero^2}$, then by induction there exist
    $\ctxone$ and $\tmone'$ such that $\tmone^1 \clocbv \inctx \ctxone
    {\shift \vark {\tmone'}}$ and $\inctx \ctxzero {\shift \vark
      {\tmzero'}} \cloctc \empbisim \empbisim \inctx \ctxone {\shift
      \vark {\tmone'}}$. By definition of $\cloctc \empbisim$ and
    Lemma \ref{l:cong-e-ctx}, we have $\app{\inctx \ctxzero {\shift
        \vark {\tmzero'}}}{\tmzero^2} \cloctc \empbisim \empbisim
    \app{\inctx \ctxone {\shift \vark {\tmone'}}}{\tmone^2}$, hence
    the result holds. The reasoning is the same if $\tmzero = \app
    \valzero {\inctx \ctxzero {\shift \vark {\tmzero'}}}$.
  \end{itemize}
  Suppose $\tmzero = \reset{\tmzero'}$ and $\tmone = \reset{\tmone'}$
  with $\tmzero' \cloctc\empbisim \tmone'$. We have three cases to
  consider.
  \begin{itemize}
  \item If $\tmzero' \redcbv \tmzero''$, then $\tmzero \redcbv
    \reset{\tmzero''}$. By induction there exists $\tmone''$ such that
    $\tmone' \clocbv \tmone''$ and $\tmzero'' \cloctc\empbisim
    \empbisim \tmone''$. Consequently we have $\tmone \clocbv
    \reset{\tmone''}$, and by definition of $\cloctc\empbisim$ and
    Lemma \ref{l:cong-e-ctx}, we have $\reset{\tmzero''} \cloctc
    \empbisim \empbisim \reset{\tmone''}$.
  \item If $\tmzero'=\inctx \ctxzero {\shift \vark {\tmzero''}}$, then
    $\tmzero \redcbv \reset{\subst {\tmzero''} \vark {\lam \varx
      {\reset {\inctx \ctxzero \varx}}}}$. By induction, there exist
    $\ctxone$ and $\tmone''$ such that $\tmone' \clocbv \inctx \ctxone
    {\shift \vark {\tmone''}}$ and $\inctx \ctxzero {\shift \vark
      {\tmzero''}} \cloctc \empbisim \empbisim \inctx \ctxone {\shift
      \vark {\tmone''}}$. By Lemma \ref{l:subst-shift}, we have
    $\reset{\subst {\tmzero''} \vark {\lam \varx {\reset {\inctx
            \ctxzero \varx}}}} \cloctc \empbisim \empbisim
    \reset{\subst {\tmone''} \vark {\lam \varx {\reset {\inctx \ctxone
            \varx}}}}$, as wished.
  \item If $\tmzero' = \valzero$, then $\tmzero \redcbv \valzero$. By
    induction, there exists $\valone$ such that $\tmone' \clocbv
    \valone$ and $\valzero \cloctc\empbisim \empbisim \valone$. We
    have $\tmone \clocbv \valone$, hence the result holds.
  \end{itemize}
  Suppose $\tmzero = \shift \vark{\tmzero'}$ and $\tmone = \shift
  \vark {\tmone'}$ with $\tmzero' \cloct \empbisim \tmone'$ and
  $\fv{\tmzero'} \cup \fv{\tmone'} \subseteq \{ \varx \}$. We have to
  prove that $(\rvals {\cloct \empbisim} \cup \{(\tmzero, \tmone)\})
  \in \erel$, i.e., $\rvals {\cloct \empbisim} \in \erel$, which is
  true.\\

  We now prove items \ref{e:utbisim-envv} and \ref{e:utbisim-envs} of
  Definition~\ref{d:utbisim}. Suppose $\lam \varx \tmzero \cloctc
  \empbisim \lam \varx \tmone$ and $\valzero \cloctc \empbisim
  \valone$. Then by Lemma \ref{l:subst-val} and reflexivity of
  $\empbisim$, we have $\subst \tmzero \varx \valzero \empbisim
  \cloctc \empbisim \empbisim \subst \tmone \varx \valzero$, as
  wished.

  Suppose $\inctx \ctxzero {\shift \vark \tmzero} \cloctc \empbisim
  \inctx \ctxone {\shift \vark \tmone}$ and $\ctx'_0 \clocc
  \empbisim \ctx'_1$. Then by Lemma \ref{l:subst-shift} and
  reflexivity of $\empbisim$, $\reset {\subst \tmzero \vark {\lam
      \varx {\reset {\inctx {\ctx'_0}{\inctx \ctxzero \varx}}}}}
  \empbisim \cloctc \empbisim \empbisim \reset {\subst \tmone \vark
    {\lam \varx {\reset {\inctx {\ctx'_1}{\inctx \ctxone \varx}}}}}$,
  as wished.
\end{proof}

\begin{remark}
  The proof of Lemma~\ref{l:cong-empbisim} uses up to bisimilarity
  because of Lemma~\ref{l:subst-val}, and Lemma~\ref{l:subst-val}
  cannot be strengthened.
\end{remark}

\begin{corollary}
  For all $\env \in \mathord{\bisim}$, $\ebisim$ is a congruence.
\end{corollary}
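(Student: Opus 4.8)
The plan is to read ``$\ebisim$ is a congruence'' as the conjunction of two facts, for each fixed $\env \in \bisim$: that $\ebisim$ is an equivalence relation, and that it is closed under every one-hole context, i.e.\ $\tmzero \ebisim \tmone$ implies $\inctx \cctx \tmzero \ierel \bisim \env \inctx \cctx \tmone$ whenever $\inctx \cctx \tmzero$ and $\inctx \cctx \tmone$ are closed. The equivalence-relation part is routine, following~\cite{Sangiorgi-al:TOPLAS11}: reflexivity because $\Id \in \bisim$ and, more generally, $\bisim$ is stable under enlarging the environment of any of its triples by diagonal pairs of normal forms (so $(\env, \tm, \tm) \in \bisim$ for all closed $\tm$, and each pair of $\env$ is itself a member of $\ierel \bisim \env$, whence $\env \subseteq \mathord{\ierel \bisim \env}$); symmetry because the pointwise inverse of an environmental bisimulation is again one; and transitivity by the standard relational-composition argument.

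For the context-closure part I would first dispose of the cases already settled by the congruence lemmas of the appendix. If $\tmzero, \tmone$ are normal forms, Lemma~\ref{l:cong-val} gives $\inctx \cctx \tmzero \ierel \bisim {\rvals{\cloct \env}} \inctx \cctx \tmone$; if $\cctx$ is an evaluation context, Lemma~\ref{l:cong-e-ctx} applies; and if $\env = \emptyset$, Lemma~\ref{l:cong-empbisim} applies for an arbitrary $\cctx$. In every such case, since $\env \subseteq \rvals{\cloct \env}$, weakening (Lemma~\ref{l:weakening}) brings the environment back down to $\env$. For the one remaining configuration --- an arbitrary $\env \in \bisim$, arbitrary terms, and an arbitrary, possibly non-evaluation, context --- I would re-run the proof of Lemma~\ref{l:cong-empbisim} verbatim, with $\empbisim$ replaced by the closed-term relation $\ierel \bisim \env$ in the candidate relation (keeping the up-to-bisimilarity side as $\empbisim$, which is legitimate since $\ierel \bisim \env \subseteq \empbisim$ by Lemma~\ref{l:weakening}). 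This substitution is sound because $\env \in \bisim$ gives $\env \subseteq \mathord{\ierel \bisim \env}$, so $\cloct{(\ierel \bisim \env)}$ is a genuine congruence closure on terms extending $\cloct \env$; because Lemmas~\ref{l:cong-val} and~\ref{l:cong-e-ctx} are already parametric in $\env$; and because the auxiliary substitution lemmas used inside the proof of Lemma~\ref{l:cong-empbisim} (Lemmas~\ref{l:cong-is-subst}, \ref{l:val}, \ref{l:shift}, \ref{l:subst-val}, \ref{l:subst-shift}, and~\ref{l:erel}) transpose to $\ierel \bisim \env$ without change, invoking clauses~\ref{e:envv} and~\ref{e:envs} of Definition~\ref{d:env-bisim} on the environment $\env$ enlarged by the value (resp.\ stuck-term) pair currently being substituted. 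Concretely, one checks that
\[
  \erel = \{(\rvals{\cloct{(\ierel \bisim \env)}}, \tmzero, \tmone) \mmid \tmzero \cloctc{(\ierel \bisim \env)} \tmone \} \cup \{\rvals{\cloct{(\ierel \bisim \env)}}\}
\]
is an environmental bisimulation up to bisimilarity, by the same induction on $\tmzero \cloctc{(\ierel \bisim \env)} \tmone$ as in Lemma~\ref{l:cong-empbisim}; then $\inctx \cctx \tmzero \cloctc{(\ierel \bisim \env)} \inctx \cctx \tmone$ yields $\inctx \cctx \tmzero \ierel \bisim {\rvals{\cloct{(\ierel \bisim \env)}}} \inctx \cctx \tmone$, and one last weakening finishes.

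The step I expect to be the main obstacle is, exactly as in Lemma~\ref{l:cong-empbisim} itself, the case of a \emph{value-producing} context, e.g.\ $\cctx$ of the form $\lam \varx {\cctx'}$: then $\inctx \cctx \tmzero$ and $\inctx \cctx \tmone$ are normal forms, the bisimulation game forces the pair into the environment, and applying the two resulting $\lambda$-abstractions to related arguments substitutes those arguments into the arbitrary context $\cctx'$, putting the related subterms back inside a non-evaluation context. This is precisely what up-to bisimilarity is there to absorb, and it is why Lemma~\ref{l:subst-val} has to be stated with $\cloctc \empbisim$ pre- and post-composed with $\empbisim$ and, as the remark after Lemma~\ref{l:cong-empbisim} records, cannot be strengthened; the twin difficulty is the stuck-term case, where a pure part of the context comes from $\cctx$ and has to be recombined with the captured context, handled via Lemma~\ref{l:erel}. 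Everything else is the routine case analysis on the shape of $\tmzero \cloctc{(\ierel \bisim \env)} \tmone$ already carried out for Lemma~\ref{l:cong-empbisim}.
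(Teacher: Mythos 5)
Your overall strategy would work, but you have taken a substantially longer route than the paper, and the detour is unnecessary. The paper's proof is three lines: from $\tmzero \ebisim \tmone$, weakening (Lemma~\ref{l:weakening}, with $\emptyset \subseteq \env$) immediately gives $\tmzero \empbisim \tmone$; Lemma~\ref{l:cong-empbisim} then yields $\inctx \cctx \tmzero \ierel \bisim {\rvals{\cloct \empbisim}} \inctx \cctx \tmone$ for an \emph{arbitrary} context; and a second application of weakening (using $\env \subseteq \rvals{\cloct \empbisim}$, which holds because every pair of an environment $\env \in \mathord{\bisim}$ is itself bisimilar --- the fact you establish as $\env \subseteq \mathord{\ierel \bisim \env}$) brings the environment back to $\env$. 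The observation you miss is that weakening can be applied to the \emph{hypothesis} $\tmzero \ebisim \tmone$ as well as to the conclusion, so the ``remaining configuration'' you isolate (arbitrary $\env$, arbitrary terms, non-evaluation context) is already covered by Lemma~\ref{l:cong-empbisim}: there is no need to re-prove an $\env$-parameterized version of it. What your approach buys is nothing the short proof does not already deliver, and it costs you a genuine obligation you only gesture at: re-running the induction with $\cloctc{(\ierel \bisim \env)}$ in place of $\cloctc \empbisim$ requires, in the base case and in the value/stuck clauses, that reduction is included in $\ierel \bisim \env$ and that newly produced normal-form pairs land back in $\rvals{\cloct{(\ierel \bisim \env)}}$ rather than merely in $\rvals{\cloct{\env'}}$ for the enlarged $\env' = \env \cup \{(\valzero,\valone)\}$; these points are resolvable (again via weakening and the self-relatedness of $\env$), but ``transposes without change'' is not quite accurate and is exactly where a verbatim re-run could silently break. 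Your diagnosis of where the difficulty of Lemma~\ref{l:cong-empbisim} itself lies (the $\lam \varx {\cctx'}$ case and the role of Lemmas~\ref{l:subst-val} and~\ref{l:erel}) is accurate, but that difficulty belongs to the lemma, not to this corollary.
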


\begin{proof}
  If $\tmzero \ebisim \tmone$, then by weakening
  (Lemma~\ref{l:weakening}), we have $\tmzero \empbisim \tmone$, which
  in turn implies $\inctx \cctx \tmzero \ierel \bisim {\rvals {\cloct
      \empbisim}} \inctx \cctx \tmone$ (by
  Lemma~\ref{l:cong-empbisim}), and gives us $\inctx \cctx \tmzero
  \ierel \bisim \env \inctx \cctx \tmone$ using weakening again.
\end{proof}

\begin{corollary}[Soundness]
  \label{c:soundness}
  The relation $\empbisim$ is sound.
\end{corollary}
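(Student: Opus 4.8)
The plan is to prove soundness, i.e.\ $\mathord{\empbisim} \subseteq \mathord{\ctxequiv}$, directly from the congruence of $\empbisim$ (the preceding corollary, instantiated at the environment $\emptyset$) together with the bisimulation game of Definition~\ref{d:env-bisim}; the Context Lemma (Lemma~\ref{l:context-lemma}) is used only to keep the bookkeeping about substitutions light.

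First I would take closed $\tmzero \empbisim \tmone$ and invoke Lemma~\ref{l:context-lemma}, so that it suffices to check, for every closed evaluation context $\rctx$ and every value substitution $\sigma$ (which is inert here, since $\tmzero$ and $\tmone$ are closed), that $\inctx\rctx{\tmzero\sigma} \evalcbv \valzero$ implies $\inctx\rctx{\tmone\sigma} \evalcbv \valone$ for some $\valone$, that $\inctx\rctx{\tmzero\sigma} \evalcbv \tmzero'$ with $\tmzero'$ stuck implies $\inctx\rctx{\tmone\sigma} \evalcbv \tmone'$ with $\tmone'$ stuck, and symmetrically for $\tmone$. For open $\tmzero$, $\tmone$ one argues with $\open\empbisim$ instead: from $\lam{\vect{\varx}}\tmzero \empbisim \lam{\vect{\varx}}\tmone$, clause~(\ref{e:tmval}) of Definition~\ref{d:env-bisim} places $(\lam{\vect{\varx}}\tmzero, \lam{\vect{\varx}}\tmone)$ in an environment of $\bisim$, and repeated use of clause~(\ref{e:envv}) followed by weakening (Lemma~\ref{l:weakening}) yields $\tmzero\sigma \empbisim \tmone\sigma$ for every closing $\sigma$, after which the closed-case argument applies verbatim.

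Next, since $\empbisim$ is a congruence, taking the surrounding context to be $\rctx$ gives $\inctx\rctx{\tmzero\sigma} \empbisim \inctx\rctx{\tmone\sigma}$. Suppose $\inctx\rctx{\tmzero\sigma} \evalcbv \valzero$; then $\inctx\rctx{\tmzero\sigma} \clocbv \valzero$, so iterating clause~(\ref{e:tmtau}) of Definition~\ref{d:env-bisim} along this finite reduction sequence and then closing with clause~(\ref{e:tmval}) gives $\inctx\rctx{\tmone\sigma} \clocbv \valone$ for some value $\valone$; as values are irreducible, $\inctx\rctx{\tmone\sigma} \evalcbv \valone$. If instead $\inctx\rctx{\tmzero\sigma} \evalcbv \tmzero'$ with $\tmzero'$ stuck, the same reasoning with clause~(\ref{e:tmstuck}) in place of clause~(\ref{e:tmval}) produces a stuck $\tmone'$ with $\inctx\rctx{\tmone\sigma} \evalcbv \tmone'$. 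The converse direction uses the symmetric half of clause~1 of Definition~\ref{d:env-bisim}. This discharges every hypothesis of Lemma~\ref{l:context-lemma}, hence $\tmzero \ctxequiv \tmone$, as required.

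Given everything already built up, I do not expect a serious obstacle here: the three observation clauses of Definition~\ref{d:env-bisim} mirror exactly the observations appearing in the definition of $\ctxequiv$, and the genuinely hard part, congruence, was already discharged via Lemma~\ref{l:cong-empbisim} and the preceding corollary. The only points requiring a little care are the passage from a $\clocbv$-reduction to a normal form back to the relation $\evalcbv$ (which needs irreducibility of normal forms) and the uniform handling of open terms through $\open\empbisim$ and clause~(\ref{e:envv}).
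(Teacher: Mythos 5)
Your proof is correct and takes essentially the same route as the paper, whose entire argument is ``because it is a congruence, and the observable actions coincide'': you expand that one-liner by combining the congruence corollary with the matching of evaluation-to-value and evaluation-to-stuck-term observables via clauses~(\ref{e:tmtau}), (\ref{e:tmval}) and (\ref{e:tmstuck}). The detour through the Context Lemma is harmless but redundant, since the congruence result already covers arbitrary contexts $\cctx$.
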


\begin{proof}
  Because it is a congruence, and the observables actions coincide.
\end{proof}

\begin{theorem}[Completeness]
  The relation $\empbisim $ is complete.
\end{theorem}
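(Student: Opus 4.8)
The plan is to establish $\mathord{\ctxequiv} \subseteq \mathord{\empbisim}$ by exhibiting a \emph{big-step} environmental bisimulation (Definition~\ref{d:bs}) whose triples relate contextually equivalent closed terms under the environment $\rvals\ctxequiv$, the restriction of $\ctxequiv$ to closed normal forms. Concretely, I would show that
\[
  \erel = \{(\rvals\ctxequiv, \tmzero, \tmone) \mmid \tmzero,\tmone \text{ closed and } \tmzero \ctxequiv \tmone\} \cup \{\rvals\ctxequiv\}
\]
is a big-step environmental bisimulation; hence $\erel \subseteq \mathord{\bisim}$, so $\tmzero \ctxequiv \tmone$ yields $\tmzero \ierel \bisim {\rvals\ctxequiv} \tmone$, and weakening (Lemma~\ref{l:weakening}) gives $\tmzero \empbisim \tmone$. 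Two facts about $\ctxequiv$ are used throughout: reduction is contained in $\ctxequiv$ (a standard fact, provable via Lemma~\ref{l:context-lemma} and determinacy of $\redcbv$), and $\ctxequiv$ is a congruence (immediate from its definition as a quantification over all contexts). From congruence one obtains, by a routine induction on the context closures, that on closed entities $\cloctc{\rvals\ctxequiv} \subseteq \ctxequiv$, and that plugging $\ctxequiv$-related closed terms into $\clocc{\rvals\ctxequiv}$-related contexts yields $\ctxequiv$-related terms.

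For clause~(\ref{e:bs-tm}), suppose $\tmzero \ctxequiv \tmone$ and $\tmzero \clocbv \valzero$; since $\valzero$ is a value this means $\tmzero \evalcbv \valzero$. Instantiating Lemma~\ref{l:context-lemma} with the empty evaluation context and the empty substitution, $\tmone \evalcbv \valone$ for some value $\valone$, so $\tmone \clocbv \valone$. As reduction lies in $\ctxequiv$ we have $\tmzero \ctxequiv \valzero$ and $\tmone \ctxequiv \valone$, hence $\valzero \ctxequiv \valone$ by symmetry and transitivity, whence $\rvals\ctxequiv \cup \{(\valzero,\valone)\} = \rvals\ctxequiv \in \erel$, discharging subclause~(\ref{e:bs-tmval}). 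Subclause~(\ref{e:bs-tmstuck}) is identical, Lemma~\ref{l:context-lemma} now preserving evaluation to a stuck term; the converse conditions on $\tmone$ hold by symmetry of $\ctxequiv$.

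For clause~(\ref{e:bs-env}), the only environment in $\erel$ is $\rvals\ctxequiv$. For subclause~(\ref{e:bs-envv}), from $\lam\varx\tmzero \mathrel{\rvals\ctxequiv} \lam\varx\tmone$ and $\valzero \cloctc{\rvals\ctxequiv} \valone$ we get $\lam\varx\tmzero \ctxequiv \lam\varx\tmone$ and $\valzero \ctxequiv \valone$, hence $\app{\lamp\varx\tmzero}\valzero \ctxequiv \app{\lamp\varx\tmone}\valone$ by congruence; each side $\RRbeta$-reduces to the corresponding substitution instance, so $\subst\tmzero\varx\valzero \ctxequiv \subst\tmone\varx\valone$ by reduction being contained in $\ctxequiv$ plus transitivity, i.e.\ the pair is in $\erel$. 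For subclause~(\ref{e:bs-envs}), given $\inctx\ctxzero{\shift\vark\tmzero} \mathrel{\rvals\ctxequiv} \inctx\ctxone{\shift\vark\tmone}$ and $\ctx'_0 \clocc{\rvals\ctxequiv} \ctx'_1$, congruence gives $\reset{\inctx{\ctx'_0}{\inctx\ctxzero{\shift\vark\tmzero}}} \ctxequiv \reset{\inctx{\ctx'_1}{\inctx\ctxone{\shift\vark\tmone}}}$; these two programs $\RRshift$-reduce precisely to the two terms that clause~(\ref{e:bs-envs}) requires to be related, so once more reduction-containment and transitivity close the case.

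The main obstacle is not any single calculation but the discipline around the context closures: every triple of $\erel$ must carry exactly $\rvals\ctxequiv$, so one needs the closures $\cloctc{\rvals\ctxequiv}$ and $\clocc{\rvals\ctxequiv}$ (on closed entities) to be absorbed back into $\ctxequiv$ before the environment clauses can be discharged, and this absorption rests entirely on $\ctxequiv$ being a congruence. Once that auxiliary lemma is in place the verification is essentially mechanical, and choosing the big-step formulation is what lets us avoid matching intermediate reduction steps — which would be awkward since $\ctxequiv$ is only characterised through evaluation to normal forms (Lemma~\ref{l:context-lemma}).
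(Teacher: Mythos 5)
Your proposal matches the paper's proof essentially verbatim: the same candidate relation $\{(\rvals\ctxequiv,\tmzero,\tmone) \mmid \tmzero \ctxequiv \tmone\} \cup \{\rvals\ctxequiv\}$ is shown to be a big-step environmental bisimulation, with the value/stuck clauses discharged via Lemma~\ref{l:context-lemma} and the environment clauses via congruence of $\ctxequiv$ followed by a $\beta$- or shift-reduction step absorbed back into $\ctxequiv$. The only cosmetic difference is that the paper justifies $\mathord{\redcbv} \subseteq \mathord{\ctxequiv}$ by composing Lemma~\ref{l:redcbv-in-empbisim} with soundness (Corollary~\ref{c:soundness-main}), whereas you derive it directly from the context lemma and determinism of reduction; both are valid.
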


\begin{proof}
  We prove that $\erel = \{(\rvals \ctxequiv, \tmzero, \tmone) \mmid
  \tmzero \ctxequiv \tmone\} \cup \{ \rvals \ctxequiv \}$ is a
  big-step environmental bisimulation.

  Let $\tmzero \ierel \erel {\rvals \ctxequiv} \tmone$. If $\tmzero
  \clocbv \valzero$, then by definition of $\ctxequiv$, there exists
  $\valone$ such that $\tmone \clocbv \valone$. By
  Lemma~\ref{l:redcbv-in-empbisim}, we have $\tmzero \empbisim
  \valzero$ and $\tmone \empbisim \valone$, which implies $\tmzero
  \ctxequiv \valzero$ and $\tmone \ctxequiv \valone$ by
  Corollary~\ref{c:soundness}. Transitivity of $\ctxequiv$ gives
  $\valzero \ctxequiv \valone$, hence we have $\rvals \ctxequiv \cup
  \{(\valzero, \valone)\} = \mathord{\rvals \ctxequiv} \in \erel$, as
  wished. The reasoning is the same for $\tmzero \clocbv \tmzero'$,
  where $\tmzero'$ is a stuck term.

  Let $\lam \varx \tmzero \ctxequiv \lam \varx \tmone$ and $\valzero
  \cloctc{\rvals \ctxequiv} \valone$. By congruence of $\ctxequiv$, we
  have $\valzero \ctxequiv \valone$, and also $\app {\lamp \varx
    \tmzero} \valzero \ctxequiv \app {\lamp \varx \tmone}
  \valone$. Because $\app {\lamp \varx \tmzero} \redcbv \subst \tmzero
  \varx \valzero$, $\app {\lamp \varx \tmone} \valone \redcbv \subst
  \tmone \varx \valone$ and $\redcbv \subseteq \mathord{\empbisim}
  \subseteq \mathord{\ctxequiv}$, we have $\subst \tmzero \varx
  \valzero \ctxequiv \subst \tmone \varx \valone$, i.e., $\subst
  \tmzero \varx \valzero \ierel \erel {\rvals \ctxequiv} \subst \tmone
  \varx \valone$, as wished.

  Let $\inctx \ctxzero {\shift \vark \tmzero} \ctxequiv \inctx \ctxone
  {\shift \vark \tmone}$ and $\ctx'_0 \clocc {\rvals \ctxequiv}
  \ctx'_1$. By induction on $\ctx'_0 \clocc {\rvals
    \ctxequiv} \ctx'_1$, we know that $\reset{\inctx {\ctx'_0}{\inctx
      \ctxzero {\shift \vark \tmzero}}} \ctxequiv \reset{\inctx
    {\ctx'_1}{\inctx \ctxone {\shift \vark \tmone}}}$ holds. From
   $\reset{\inctx {\ctx'_0}{\inctx \ctxzero {\shift \vark
        \tmzero}}} \redcbv \reset{\subst \tmzero \vark {\lam \varx
      \reset{\inctx {\ctx'_0}{\inctx \ctxzero \varx}}}}$, 
  $\reset{\inctx {\ctx'_1}{\inctx \ctxone {\shift \vark \tmone}}}
  \redcbv \reset{\subst \tmone \vark {\lam \varx \reset{\inctx
        {\ctx'_1}{\inctx \ctxone \varx}}}}$, and $\redcbv \subseteq
  \mathord{\empbisim} \subseteq \mathord{\ctxequiv}$, we get
  $\reset{\subst \tmzero \vark {\lam \varx \reset{\inctx
        {\ctx'_0}{\inctx \ctxzero \varx}}}} \ctxequiv \reset{\subst
    \tmone \vark {\lam \varx \reset{\inctx {\ctx'_1}{\inctx \ctxone
          \varx}}}}$, and then, as required, $\reset{\subst \tmzero
    \vark {\lam \varx \reset{\inctx {\ctx'_0}{\inctx \ctxzero
          \varx}}}} \ierel \erel {\rvals \ctxequiv} \reset{\subst
    \tmone \vark {\lam \varx \reset{\inctx {\ctx'_1}{\inctx \ctxone
          \varx}}}}$.
\end{proof} 

\section{Soundness and Completeness for the Original Semantics}
\label{a:programs}

\begin{lemma}
  \label{l:weakening-p}
  If $\tmzero \ebisimp \tmone$ and $\env' \subseteq \env$ then
  $\tmzero \ierel \bisimp {\env'} \tmone$.
\end{lemma}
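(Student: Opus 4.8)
The plan is to mimic, step for step, the proof of Lemma~\ref{l:weakening} (the relaxed-semantics weakening lemma) and the $\lambda$-calculus argument of~\cite{Sangiorgi-al:TOPLAS11}, adapting the bisimulation game to Definition~\ref{d:env-bisim-p}. Concretely, I would define the \emph{downward closure of $\bisimp$ under environment inclusion},
\[
  \erel \;=\; \{(\env', \tmzero, \tmone) \mmid \tmzero \ierel \bisimp \env \tmone \text{ for some } \env \supseteq \env'\}
  \;\cup\; \{\env' \mmid \env \in \mathord{\bisimp} \text{ for some } \env \supseteq \env'\},
\]
and prove that $\erel$ is an environmental bisimulation for programs. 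Since $\bisimp$ is the largest such relation, this gives $\erel \subseteq \mathord{\bisimp}$, and in particular $\tmzero \ierel \bisimp {\env'} \tmone$ whenever $\tmzero \ebisimp \tmone$ and $\env' \subseteq \env$, which is the statement.

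The verification relies on two elementary monotonicity facts, exactly as in the relaxed case: if $\env' \subseteq \env$ then $\clocc{\env'} \subseteq \mathord{\clocc\env}$ and $\cloctc{\env'} \subseteq \mathord{\cloctc\env}$ (a larger environment generates more testing contexts and arguments), and $\env_1 \subseteq \env_2$ implies $\env_1 \cup \{(\valzero,\valone)\} \subseteq \env_2 \cup \{(\valzero,\valone)\}$. With these, the clauses of Definition~\ref{d:env-bisim-p} are immediate. For clause~(\ref{e:tm-p}): given $(\env',\tmzero,\tmone) \in \erel$ with $\tmzero,\tmone$ not both programs, pick $\env \supseteq \env'$ with $\tmzero \ierel \bisimp \env \tmone$; for $\ctxzero \clocc{\env'} \ctxone$ we also have $\ctxzero \clocc\env \ctxone$, so $\reset{\inctx \ctxzero \tmzero} \ierel \bisimp \env \reset{\inctx \ctxone \tmone}$, and this pair lies in $\erel$ at $\env'$ (witness $\env$). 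For clause~(\ref{e:prg-p}): a $\redcbv$-step of $\prgzero$ is matched by $\bisimp$ at $\env$ and the resulting pair falls back into $\erel$ at $\env'$ since $\env' \subseteq \env$; a step $\prgzero \redcbv \valzero$ is matched by some $\valone$ with $\{(\valzero,\valone)\} \cup \env \in \mathord{\bisimp}$, and as $\{(\valzero,\valone)\} \cup \env'$ is a subset of that, $\{(\valzero,\valone)\} \cup \env' \in \erel$; the cases on $\prgone$ are symmetric. For clause~(\ref{e:env-p}): if $\env' \in \erel$, say $\env \supseteq \env'$ with $\env \in \mathord{\bisimp}$, and $\lam \varx \tmzero \mathrel{\env'} \lam \varx \tmone$ with $\valzero \cloctc{\env'} \valone$, then $\lam \varx \tmzero \mathrel{\env} \lam \varx \tmone$ and $\valzero \cloctc\env \valone$, so $\subst \tmzero \varx \valzero \ierel \bisimp \env \subst \tmone \varx \valone$, which is in $\erel$ at $\env'$.

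I do not expect any real obstacle: the argument is purely structural, and as already noted after Definition~\ref{d:env-bisim-p} it is essentially the proof of~\cite{Sangiorgi-al:TOPLAS11}. The only ingredient absent from the relaxed setting is clause~(\ref{e:tm-p}), but it is discharged by exactly the monotonicity of $\clocc{\cdot}$ used for all the other clauses, so it needs no new idea. Finally, the same candidate relation, with clause~(\ref{e:prgtau-p}) dropped and $\redcbv$ replaced by $\clocbv$ in clause~(\ref{e:prgval-p}), yields the corresponding weakening statement for the big-step variant of $\bisimp$.
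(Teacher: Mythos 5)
Your proof is correct and is exactly the standard argument the paper appeals to: its own proof of Lemma~\ref{l:weakening-p} is literally ``As in~\cite{Sangiorgi-al:TOPLAS11}'', i.e., the downward closure of $\bisimp$ under environment inclusion is shown to be a bisimulation via monotonicity of $\clocc{\cdot}$ and $\cloctc{\cdot}$, which is precisely what you spell out. No discrepancy to report.
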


\begin{proof}
  As in \cite{Sangiorgi-al:TOPLAS11}.
\end{proof}

\begin{lemma}
  \label{l:redcbv-in-empbisim-p}
  If $\tmzero \redcbv \tmzero'$, then $\tmzero \empbisimp \tmzero'$.
\end{lemma}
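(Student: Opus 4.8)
The plan is to exhibit an environmental bisimulation for programs (in the sense of Definition~\ref{d:env-bisim-p}) that contains $(\emptyset,\tmzero,\tmzero')$, and then to conclude by maximality of $\empbisimp=\ierel\bisimp\emptyset$. Since $\redcbv$ is deterministic (Lemma~\ref{l:unique-decomp}) and compatible with evaluation contexts, and since $\tmzero\redcbv\tmzero'$ entails $\tmzero\clocbv\tmzero'$, the candidate I would take is
\[
  \erel \;=\; \{(\env,\tm,\tm')\mmid\tm\clocbv\tm',\ \env\subseteq\mathord{\Id}\}\;\cup\;\{\env\mmid\env\subseteq\mathord{\Id}\},
\]
where $\Id$ denotes the identity environment; since environments for programs relate only values, $\env\subseteq\mathord{\Id}$ just means that $\env$ consists of pairs $(\val,\val)$. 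Clearly $(\emptyset,\tmzero,\tmzero')\in\erel$, so the work is to check that $\erel$ is an environmental bisimulation for programs.

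The first observation I would make is that, as soon as $\env\subseteq\mathord{\Id}$, the closures $\cloct\env$, $\cloctc\env$ and $\clocc\env$ all collapse to the identity, on closed terms, closed values, and closed contexts respectively. This settles clause~(\ref{e:env-p}) at once: if $\lam\varx\tmzero\mathrel\env\lam\varx\tmone$ and $\valzero\cloctc\env\valone$ then $\lam\varx\tmzero=\lam\varx\tmone$ and $\valzero=\valone$, so $\subst\tmzero\varx\valzero$ and $\subst\tmone\varx\valone$ are the same closed term, which is $\erel$-related to itself by a zero-step reduction. Clause~(\ref{e:prg-p}) is handled by determinism: if $(\env,\prgzero,\prgone)\in\erel$ then $\prgzero\clocbv\prgone$, and for a step $\prgzero\redcbv\prgzero'$, either $\prgzero'\clocbv\prgone$ (take $\prgone$ itself, using $(\env,\prgzero',\prgone)\in\erel$) or $\prgone=\prgzero$ (take $\prgzero'$, using $(\env,\prgzero',\prgzero')\in\erel$); and if $\prgzero\redcbv\valzero$ with $\valzero$ a value, then necessarily $\prgone\clocbv\valzero$, so with $\valone=\valzero$ we get $\{(\valzero,\valone)\}\cup\env\subseteq\mathord{\Id}$, again in $\erel$. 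The converse conditions on $\prgone$ are symmetric.

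The only clause that needs care --- and the only one that differs from the plain call-by-value $\lambda$-calculus --- is clause~(\ref{e:tm-p}). Given $(\env,\tm,\tm')\in\erel$ with $\tm\clocbv\tm'$ and $\tm,\tm'$ not both programs, I must show $\reset{\inctx\ctxzero\tm}\ierel\erel\env\reset{\inctx\ctxone{\tm'}}$ for every $\ctxzero\clocc\env\ctxone$. Since $\env\subseteq\mathord{\Id}$ forces $\ctxzero=\ctxone=\ctx$, and since placing a term inside $\ctx$ and then under a $\reset$ is an evaluation context, compatibility of $\redcbv$ with evaluation contexts upgrades $\tm\clocbv\tm'$ to $\reset{\inctx\ctx\tm}\clocbv\reset{\inctx\ctx{\tm'}}$; both sides are programs, so this pair is in $\erel$, as required. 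The point to keep in mind is exactly this: $\tm$ may well be stuck, so wrapping it under a $\reset$ can trigger a capture that was frozen inside $\tm$ --- but this does no harm, because $\reset{\inctx\ctx{\mtctx}}$ is an evaluation context and $\redcbv$ remains stable under it, so the two sides still reduce in lockstep. With all clauses verified, $\erel\subseteq\mathord{\bisimp}$, whence $\tmzero\empbisimp\tmzero'$. This mirrors the proof of Lemma~\ref{l:redcbv-in-empbisim} for the relaxed semantics, the only new ingredient being the evaluation-context observation used for clause~(\ref{e:tm-p}), which I expect to be the sole genuinely novel point of the argument.
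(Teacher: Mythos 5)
Your proof is correct and is essentially the argument the paper intends: the paper's own proof is just the remark ``same as for Lemma~\ref{l:redcbv-in-empbisim}'' (whose proof is itself omitted), and the candidate relation you exhibit --- pairs related by $\clocbv$ together with sub-identity environments --- is exactly the shape of relation the paper uses in its example bisimulations. Your verification of clause~(\ref{e:tm-p}) via compatibility of $\redcbv$ with the evaluation context $\reset{\inctx \ctx \mtctx}$, and of clause~(\ref{e:prg-p}) via determinism (Lemma~\ref{l:unique-decomp}), is sound.
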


\begin{proof}
  Same as for Lemma~\ref{l:redcbv-in-empbisim}.
\end{proof}

\begin{lemma}
  \label{l:cong-val-p}
  For all $\env$, if $\valzero \ebisim \valone$, then $\inctx \cctx
  \valzero \ierel \bisimp {\rval{\cloct \env}} \inctx \cctx \valone$.
\end{lemma}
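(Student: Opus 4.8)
The plan is to prove this lemma by closely following the proof of Lemmas~\ref{l:cong-val} and~\ref{l:cong-e-ctx}, with the clauses of Definition~\ref{d:env-bisim} replaced by those of Definition~\ref{d:env-bisim-p}, and to prove it simultaneously with its evaluation-context companion (the program analogue of Lemma~\ref{l:cong-e-ctx}, cf.\ Lemma~\ref{l:cong-e-ctx-p-main}). Given an arbitrary environmental bisimulation for programs $\erely$, I would take $\erel$ to be the exact analogue of the relation $\erel$ of Section~\ref{s:soundness-completeness}: its components are the triples $(\rval{\cloct\env}, \inctx\rctxzero\tmzero, \inctx\rctxone\tmone)$ with $\tmzero \ierel \erely \env \tmone$ and $\rctxzero \clocc \env \rctxone$ (call this part $\erel_1$), the triples $(\rval{\cloct\env}, \tmzero, \tmone)$ with $\env \in \erely$ and $\tmzero \cloctc \env \tmone$ (call this part $\erel_2$), and the environments $\rval{\cloct\env}$ with $\env \in \erely$. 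I would then show that $\erel$ is an environmental bisimulation for programs up to environment (the program analogue of Definition~\ref{d:utenv}), hence $\erel \subseteq \mathord{\bisimp}$. The lemma then follows: taking $\erely$ to witness the hypothesis, clause~(\ref{e:tm-p}) applied to the non-programs $\valzero, \valone$ with the empty pure context gives $\reset\valzero \ierel \erely \env \reset\valone$; since $\reset\valzero \redcbv \valzero$ and $\reset\valone \redcbv \valone$, clause~(\ref{e:prgval-p}) yields an $\env' \in \erely$ with $(\valzero, \valone) \in \env'$; hence $\inctx\cctx\valzero \cloctc{\env'} \inctx\cctx\valone$ for every closing $\cctx$, so the triple $(\rval{\cloct{\env'}}, \inctx\cctx\valzero, \inctx\cctx\valone)$ lies in $\erel_2 \subseteq \mathord{\bisimp}$, and $\inctx\cctx\valzero \ierel \bisimp {\rval{\cloct\env}} \inctx\cctx\valone$ follows by weakening (Lemma~\ref{l:weakening-p}), since $\rval{\cloct\env} \subseteq \rval{\cloct{\env'}}$.

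The core is to check that $\erel$ is a bisimulation up to environment, for which I would first prove two auxiliary facts, the program versions of Lemmas~\ref{l:prop-cloct-env} and~\ref{l:erel}: (i) for a program environment $\env$ (one relating closed values only), $\cloctc\env$ sends values to values, stuck terms to stuck terms, and programs to programs; and (ii) if $\lam\varx\tmzero \cloctc\env \lam\varx\tmone$ and $\valzero' \mathrel{\rval{\cloct\env}} \valone'$, then $\subst\tmzero\varx{\valzero'} \ierel \erel {\rval{\cloct\env}} \subst\tmone\varx{\valone'}$ --- by Lemma~\ref{l:cong-is-subst} when the abstractions are context-generated (the result lands in $\erel_2$) and by the bisimulation property of $\erely$ when they come from the environment (it lands in $\erel_1$), and symmetrically for pairs of the shape $\reset{\inctx\ctxzero{\shift\vark\tmzero}}$, $\reset{\inctx\ctxone{\shift\vark\tmone}}$. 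With (i) and (ii) in hand, clause~(\ref{e:env-p}) is immediate; the program pairs in $\erel_2$ are, by (i), of the form $\reset\tmzero$, $\reset\tmone$ with $\tmzero \cloctc\env \tmone$, and are handled by the same induction on $\tmzero \cloctc\env \tmone$ as in the $\reset$-cases of the relaxed proof (invoking (ii) for the $\beta$-redex and shift-capture steps); the program pairs in $\erel_1$ are treated by first using clause~(\ref{e:tm-p}) of $\erely$ with the empty pure context to turn $\tmzero \ierel \erely \env \tmone$ into related programs whenever the surrounding evaluation context already supplies the outermost reset, and then using clause~(\ref{e:prg-p}) of $\erely$, which keeps a reduction step inside $\erel_1$ and moves a step to a value into $\erel_2$ with a possibly larger environment --- whence the use of ``up to environment''; and clause~(\ref{e:tm-p}) for $\erel$ itself (the tested pair not both programs) is discharged by plugging the two sides into $\reset{\inctx\ctxzero\cdot}$, $\reset{\inctx\ctxone\cdot}$ for $\ctxzero \clocc{\rval{\cloct\env}} \ctxone$ and noting that the outcomes lie again in $\erel$.

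I expect the main obstacle to be the interplay of clause~(\ref{e:tm-p}) with the $\erel_1$ component. A pair $(\inctx\rctxzero\tmzero, \inctx\rctxone\tmone)$ of $\erel_1$ that is not a pair of programs must still be tested inside every $\reset{\inctx\ctxzero\cdot}$ built from $\env$, and a $\shift$ buried in $\tmzero$ can now capture a piece of $\rctxzero$ up to that freshly added reset; one then has to verify that the delimited continuation reconstructed from $\ctxzero$ and $\rctxzero$ keeps the two sides related by the appropriate closure --- the program counterpart of the $\vctx\cdot\cdot$ and $\apctx\cdot\cdot$ sub-cases of Lemma~\ref{l:erel}. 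Unlike Definition~\ref{d:env-bisim}, Definition~\ref{d:env-bisim-p} has no clause recording a stuck term in the environment, so a stuck $\tmzero$ is dealt with only through clause~(\ref{e:tm-p}); this is exactly what makes the $\erel_1$-with-pure-contexts bookkeeping, and the ``up to environment'' slack used once a wrapped program evaluates to a value, unavoidable. Everything else is routine bookkeeping parallel to Section~\ref{s:soundness-completeness}.
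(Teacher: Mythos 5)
Your proposal matches the paper's proof essentially verbatim: the same simultaneous statement with the evaluation-context companion lemma, the same candidate relation $\erel_1 \cup \erel_2 \cup \{\rval{\cloct\env} \mmid \env \in \mathord{\erely}\}$, the same preliminary substitution lemma, and the same strategy of showing this relation to be a bisimulation for programs up to environment. The only quibble is that for an $\erel_1$ pair of programs whose components $\tmzero$, $\tmone$ are not both programs, the paper writes $\prgzero = \inctx{\rctxzero'}{\reset{\inctx \ctxzero \tmzero}}$ and applies clause~(\ref{e:tm-p}) of $\erely$ with the pure contexts $\ctxzero \clocc\env \ctxone$ extracted from the surrounding evaluation contexts, not with the empty pure context as you write --- but since you explicitly flag the reconstruction of the delimited continuation from $\ctxzero$ and $\rctxzero$ as the point to verify, the plan goes through as in the paper.
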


\begin{lemma}
  \label{l:cong-e-ctx-p}
  For all $\env$, if $\tmzero \ebisim \tmone$, then $\inctx \rctx
  \tmzero \ierel \bisimp {\rval{\cloct \env}} \inctx \rctx \tmone$.
\end{lemma}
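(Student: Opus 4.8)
We prove Lemmas~\ref{l:cong-val-p} and~\ref{l:cong-e-ctx-p} simultaneously, following the scheme used for Lemmas~\ref{l:cong-val} and~\ref{l:cong-e-ctx}. Let $\erely$ be an environmental bisimulation for programs (\eg $\bisimp$) and define
\begin{align*}
  \erel & = \erel_1 \cup \erel_2 \cup \{\rval{\cloct \env} \mmid \env \in \mathord{\erely}\}, \\
  \erel_1 & = \{(\rval{\cloct \env}, \inctx \rctxzero \tmzero, \inctx \rctxone \tmone) \mmid \tmzero \ierel \erely \env \tmone,\ \rctxzero \clocc \env \rctxone\}, \\
  \erel_2 & = \{(\rval{\cloct \env}, \tmzero, \tmone) \mmid \env \in \mathord{\erely},\ \tmzero \cloctc \env \tmone\}.
\end{align*}
The plan is to show that $\erel$ is an environmental bisimulation for programs up to environment, from which the two lemmas follow by its soundness: specialize $\erely$ to $\bisimp$ and take $\rctxzero = \rctxone = \rctx$ in $\erel_1$ (legal, as $\clocc \env$ is reflexive on closed contexts) for Lemma~\ref{l:cong-e-ctx-p}, and for Lemma~\ref{l:cong-val-p} first move the related normal forms into the environment of $\erely$ so that $\inctx \cctx \valzero$ and $\inctx \cctx \valone$ fall in $\erel_2$.

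I would first establish the analogue of Lemma~\ref{l:erel} --- in the single form it takes here, since environments for programs relate only values: for $\env \in \erely$, if $\lam \varx \tmzero \cloctc \env \lam \varx \tmone$ and $\valzero \rval{\cloct \env} \valone$, then $\subst \tmzero \varx \valzero \ierel \erel {\rval{\cloct \env}} \subst \tmone \varx \valone$, proved by the usual case analysis (clause~(\ref{e:env-p}) of Definition~\ref{d:env-bisim-p} for the base case, Lemma~\ref{l:cong-is-subst} for the closure case). I would also record, by a straightforward induction on the pure context, the refinement of Lemma~\ref{l:prop-cloct-env} needed for captures --- if $\inctx \ctxzero {\shift \vark t} \cloctc \env \tmone$, then $\tmone = \inctx \ctxone {\shift \vark {t'}}$ with $\ctxzero \clocc \env \ctxone$ and $t \cloct \env t'$ (only $\vark$ possibly free) --- and the identity $\clocc{\rval{\cloct \env}} = \clocc \env$ on closed contexts, which follows from $\env \subseteq \rval{\cloct \env} \subseteq \cloct \env$ as in the relaxed case.

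Then I would verify the clauses of Definition~\ref{d:env-bisim-p}. Clause~(\ref{e:env-p}) is precisely the value-substitution lemma above (the only environments in $\erel$ being the $\rval{\cloct \env}$). For clause~(\ref{e:tm-p}): if an $\erel_2$-element $(\rval{\cloct \env}, \tmzero, \tmone)$ is not a pair of programs, then for $\ctxzero \clocc{\rval{\cloct \env}} \ctxone$ we get $\reset{\inctx \ctxzero \tmzero} \cloctc \env \reset{\inctx \ctxone \tmone}$, again in $\erel_2$; for an $\erel_1$-element, wrapping $\inctx \rctxzero \tmzero$ in the tested context $\reset{\inctx \ctxzero \cdot}$ yields $\inctx \rctx \tmzero$ for an evaluation context $\rctx$ that is $\clocc \env$-related to its counterpart, so the pair stays of the $\erel_1$ form. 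For clause~(\ref{e:prg-p}) only pairs of programs matter. An $\erel_1$-element that is a pair of programs either has $\tmzero$ itself a program, and then the step is played by $\erely$ and we remain in $\erel_1$ (or move to $\erel_2$ with a larger environment, using up to environment, when $\tmzero$ returns a value), or has $\rctxzero = \reset{\rctx'}$, and then I would re-decompose $\inctx \rctxzero \tmzero$ at its nearest enclosing reset as $\inctx {\rctx''}{\reset{\inctx \ctx \tmzero}}$ and apply clause~(\ref{e:tm-p}) of $\erely$ to obtain $\reset{\inctx \ctx \tmzero} \ierel \erely \env \reset{\inctx {\ctx'} \tmone}$, reducing to the previous case. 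An $\erel_2$-element that is a pair of programs is of the form $(\rval{\cloct \env}, \reset{\tmzero'}, \reset{\tmone'})$ with $\tmzero' \cloctc \env \tmone'$ (the possibility $\reset{\tmzero'} \mathrel\env \reset{\tmone'}$ being excluded since $\env$ relates only values), and I would induct on $\tmzero' \cloctc \env \tmone'$: if $\tmzero'$ reduces internally (application, inner reset), the induction hypothesis matches the step and keeps the pair in $\erel_2$; if $\tmzero'$ is a value, so is $\tmone'$ (Lemma~\ref{l:prop-cloct-env}), $\reset{\tmzero'} \redcbv \tmzero'$, and $(\tmzero', \tmone') \in \rval{\cloct \env}$ (no change of environment); and if $\tmzero' = \inctx \ctxzero {\shift \vark {\tmzero''}}$, the refined decomposition gives $\tmone' = \inctx \ctxone {\shift \vark {\tmone''}}$ with $\ctxzero \clocc \env \ctxone$, $\tmzero'' \cloct \env \tmone''$, and Lemma~\ref{l:cong-is-subst} shows the two contracta are $\cloctc \env$-related, hence in $\erel_2$.

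I expect two points to require the most care. The first is the capture case of the $\erel_2$-induction: in the relaxed proof the matching ``stuck $\cloctc \env$ stuck'' fact, with its related pure context and body, came essentially for free from clause~(\ref{e:envs}) and Lemma~\ref{l:prop-cloct-env}, whereas here $\env$ contains no stuck pairs, so those ingredients must be recovered by the context-induction inversion above, keeping track that $\tmone''$ is genuinely open in $\vark$ (hence the open closure $\cloct \env$ and the use of Lemma~\ref{l:cong-is-subst}). The second is the treatment of $\erel_1$-elements that are programs only because of a $\reset$ sitting in the evaluation context: the re-decomposition at the nearest enclosing reset and the appeal to clause~(\ref{e:tm-p}) of $\erely$ are the genuinely new ingredient relative to the $\lambda$-calculus and the relaxed case. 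The remaining ``up to environment'' bookkeeping --- needed precisely when a program subcomputation returns a value --- is routine and proceeds as in~\cite{Sangiorgi-al:TOPLAS11} and Section~\ref{s:soundness-completeness}.
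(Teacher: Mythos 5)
Your proposal is correct and follows essentially the same route as the paper: the same candidate relation $\erel = \erel_1 \cup \erel_2 \cup \{\rval{\cloct\env}\}$, the same substitution lemma (the paper's Lemma~\ref{l:erel-p}), the same wrapping argument for clause~(\ref{e:tm-p}), and the same key step of re-decomposing an $\erel_1$-program at its innermost enclosing reset so as to invoke clause~(\ref{e:tm-p}) of $\erely$. The only (presentational) difference is in the $\erel_2$-program case, where the paper first decomposes $\prgzero$ into an evaluation context and a redex and argues that $\prgone$ decomposes correspondingly because $\env$ relates only values, then does a case analysis on the three redex forms, whereas you run a structural induction on the derivation of $\tmzero' \cloctc\env \tmone'$; both rest on exactly the same observation and yield the same cases.
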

We prove Lemmas~\ref{l:cong-val-p} and~\ref{l:cong-e-ctx-p}
simultaneously. Let $\erely$ be an environmental bisimulation. We
define
\begin{align*}
  \erel & = \erel_1 \cup \erel_2 \cup \{\rval{\cloct \env} \mmid \env \in \mathord{\erely} \}
  \\
  \erel_1 & = \{(\rval {\cloct \env}, \inctx \rctxzero \tmzero, \inctx \rctxone
  \tmone) \mmid \tmzero \ierel \erely \env \tmone, \rctxzero \clocc \env \rctxone\}
  \\
  \erel_2 & = \{(\rval {\cloct \env}, \tmzero, \tmone) \mmid \env \in \mathord{\erely},
  \tmzero \cloctc \env \tmone \}
\end{align*}
In $\erel_2$, we build the closed terms $(\tmzero, \tmone)$ out of
pairs of values. We first prove a preliminary lemma about
$\erel$. Remark that $\erel$ is a congruence.

\begin{lemma}
  \label{l:erel-p}
  Let $\env \in \mathord{\erely}$. If $\lam \varx \tmzero \cloctc \env
  \lam \varx \tmone$ and $\valzero \rval{\cloct \env} \valone$ then
  $\subst \tmzero \varx \valzero \ierel \erel {\rval {\cloct \env}}
  \subst \tmone \varx \valone$.
\end{lemma}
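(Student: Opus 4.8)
The plan is to imitate the proof of the first item of Lemma~\ref{l:erel} (the analogous statement for the relaxed semantics); the argument here is in fact slightly simpler, since environments for programs relate only values, so there are no stuck-term subcases and no appeal to pure-context closures. I would proceed by case analysis on the last rule used to derive $\lam \varx \tmzero \cloctc \env \lam \varx \tmone$. Only two rules can produce an abstraction on both sides: the base rule (the pair is already in $\env$) and the congruence rule for $\lambda$-abstraction of $\cloct{}$.

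In the first case, $\lam \varx \tmzero \mathrel\env \lam \varx \tmone$. Since $\valzero \rval{\cloct \env} \valone$ entails in particular $\valzero \cloctc\env \valone$, clause~(\ref{e:env-p}) of Definition~\ref{d:env-bisim-p} applied to the environmental bisimulation $\erely$ gives $\subst \tmzero \varx \valzero \ierel \erely \env \subst \tmone \varx \valone$. Taking $\rctxzero = \rctxone = \mtctx$ (which are related by $\clocc\env$), the pair $(\rval{\cloct\env}, \subst \tmzero \varx \valzero, \subst \tmone \varx \valone)$ then lies in $\erel_1$, hence in $\erel$, as required. In the second case, the judgment was derived from $\tmzero \cloct \env \tmone$ with $\fv \tmzero \cup \fv \tmone \subseteq \{\varx\}$. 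By Lemma~\ref{l:cong-is-subst}, using the closed value pair $\valzero \rval{\cloct\env} \valone$, we obtain $\subst \tmzero \varx \valzero \cloct \env \subst \tmone \varx \valone$; since $\valzero$ and $\valone$ are closed and $\fv\tmzero \cup \fv\tmone \subseteq \{\varx\}$, the resulting terms are closed, so in fact $\subst \tmzero \varx \valzero \cloctc \env \subst \tmone \varx \valone$. This places the pair in $\erel_2 \subseteq \erel$, concluding the case.

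I do not anticipate a genuine obstacle: the only points needing care are keeping the several ``closed restriction'' notations in line ($\cloct{}$ restricted to closed terms, $\cloctc{}$, and $\rval{\cloct{}}$), and observing — in contrast with Lemma~\ref{l:erel} — that there is no second item about stuck terms and pure contexts to establish, precisely because environments in the program setting contain only value pairs, so the output of the first case is routed through $\erel_1$ with the empty evaluation context rather than through a shift clause. (The statement that $\erel$ is a congruence, noted just after the definitions of $\erel_1$ and $\erel_2$, is not needed for this particular lemma.)
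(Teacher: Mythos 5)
Your proposal matches the paper's proof: the same case analysis on the derivation of $\lam \varx \tmzero \cloctc \env \lam \varx \tmone$, with the base case handled via clause~(\ref{e:env-p}) of the bisimulation $\erely$ (landing in $\erel_1$) and the congruence case via Lemma~\ref{l:cong-is-subst} (landing in $\erel_2$). The extra remarks about closedness and the absence of a stuck-term item are accurate but do not change the argument.
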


\begin{proof}
  We proceed by case analysis on $\lam \varx \tmzero \cloctc \env \lam
  \varx \tmone$. If $\lam \varx \tmzero \env \lam \varx \tmone$, then
  because $\erely$ is an environmental bisimulation, we have $\subst
  \tmzero \varx \valzero \ierel \erely \env \subst \tmone \varx
  \valone$, which implies $\subst \tmzero \varx \valzero \ierel \erel
         {\rval {\cloct \env}} \subst \tmone \varx \valone$ (more
         precisely, the terms are in $\erel_1$).

  If $\tmzero \cloct \env \tmone$ with $\fv\tmzero \cup \fv\tmone
  \subseteq \{\varx\}$, then we have $\subst \tmzero \varx \valzero
  \cloct \env \subst \tmone \varx \valone$ by
  Lemma~\ref{l:cong-is-subst}. In fact, we have $\subst \tmzero \varx
  \valzero \cloctc \env \subst \tmone \varx \valone$, so we have
  $\subst \tmzero \varx \valzero \ierel \erel {\rval {\cloct \env}}
  \subst \tmone \varx \valone$ (more precisely, the terms are in
  $\erel_2$).
\end{proof}

We now prove Lemmas~\ref{l:cong-val-p} and~\ref{l:cong-e-ctx-p} by
showing that $\erel$ is a bisimulation up to environment.
\begin{proof}
  We first prove the bisimulation for the elements in $\erel_2$ (for
  these, we do not need the ``up to environment''). Let $\tmzero
  \cloctc \env \tmone$, with $\env \in \mathord{\erely}$. If $\tmzero$
  or $\tmone$ is not a program, then for all $\ctxzero \clocc \env
  \ctxone$, we have $\reset{\inctx \ctxzero \tmzero} \cloctc \env
  \reset{\inctx \ctxone \tmone}$, i.e., $\reset{\inctx \ctxzero
    \tmzero} \ierel \erel {\rval{\cloct\env}} \reset{\inctx \ctxone
    \tmone}$, hence clause \ref{e:tm-p} holds. Suppose $\tmzero$,
  $\tmone$ are programs $\prgzero$, $\prgone$. If $\prgzero \redcbv
  \valzero$, then $\prgzero = \reset \valzero$, and therefore $\prgone
  = \reset \valone$ with $\valzero \cloctc \erel \valone$. We have
  $\prgone \redcbv \valone$, and also $\{(\valzero,\valone)\} \cup
  \rval{\cloct \env} = \mathord{\rval{\cloct \env}} \in \erel$, as
  wished by clause \ref{e:prgval-p}.

  Otherwise $\prgzero \redcbv \prgzero'$. Then $\prgzero = \inctx
  \rctxzero \redexzero$. Because $\env$ relates only values, we can
  prove there exist $\rctxone$, $\redexone$ such that $\prgone =
  \inctx \rctxone \redexone$, $\rctxzero \clocc\env \rctxone$, and
  $\redexzero \cloctc\env \redexone$. We show that clause
  \ref{e:prgtau-p} holds by case analysis on the different redexes.
  \begin{itemize}
  \item If $\redexzero = \reset \valzero$ and $\redexone = \reset
    \valone$ with $\valzero \rval{\cloct\env} \valone$, then $\prgzero
    \redcbv \inctx \rctxzero \valzero$ and $\prgone \redcbv \inctx
    \rctxone \valone$. We have $\inctx \rctxzero \valzero \cloct\env
    \inctx \rctxone \valone$, as wished.
  \item Suppose $\redexzero = \app{\lamp \varx {\tmzero'}} \valzero$
    and $\redexone = \app{\lamp \varx {\tmone'}} \valone$ with $\lam
    \varx {\tmzero'} \rval{\cloct\env} \lam \varx {\tmone'}$ and
    $\valzero \rval{\cloct\env} \valone$. Then $\prgzero \redcbv
    \inctx \rctxzero {\subst {\tmzero'} \varx \valzero}$ and $\prgone
    \redcbv \inctx \rctxone {\subst {\tmone'} \varx \valone}$. By
    Lemma~\ref{l:erel-p} and because $\erel$ is a congruence, we have
    $\inctx \rctxzero {\subst {\tmzero'} \varx \valzero} \ierel \erel
    {\rval{\cloct \env }} \inctx \rctxone {\subst {\tmone'} \varx
      \valone}$, as wished.
  \item If $\redexzero = \reset{\inctx \ctxzero {\shift \vark
      {\tmzero'}}}$ and $\redexone = \reset{\inctx \ctxone {\shift
      \vark {\tmone'}}}$ with $\ctxzero \clocc\env \ctxone$ and
    $\tmzero' \rval{\cloct\env} \tmone'$. Then $\prgzero \redcbv
    \inctx \rctxzero {\reset{\subst {\tmzero'} \vark {\lam \varx
          \reset{\inctx \ctxzero \varx}}}}$ and $\prgone \redcbv
    \inctx \rctxone {\reset{\subst {\tmone'} \vark {\lam \varx
          \reset{\inctx \ctxone \varx}}}}$. From $\ctxzero \clocc\env
    \ctxone$, we deduce $\lam \varx \reset{\inctx \ctxzero \varx}
    \rval{\cloct\env} \lam \varx \reset{\inctx \ctxone \varx}$, so by
    Lemma~\ref{l:cong-is-subst}, we have $\inctx \rctxzero
    {\reset{\subst{\tmzero'} \vark {\lam \varx \reset{\inctx \ctxzero
            \varx}}}} \cloct \env \inctx \rctxone
    {\reset{\subst{\tmone'} \vark {\lam \varx \reset{\inctx \ctxone
            \varx}}}}$, as wished.
  \end{itemize}

  We now prove the bisimulation property (up to environment) for
  elements in $\erel_1$. Let $\inctx \rctxzero \tmzero \ierel \erel
  {\rval{\cloct \env}} \inctx \rctxone \tmone$, so that $\tmzero
  \ierel \erely \env \tmone$ and $\rctxzero \clocc \env \rctxone$. If
  $\inctx \rctxzero \tmzero$ and $\inctx \rctxone \tmone$ are not both
  programs, then for all $\ctxzero \clocc \env \ctxone$, we have
  $\reset{\inctx \ctxzero {\inctx \rctxzero \tmzero}} \ierel \erel
  {\rval{\cloct\env}} \reset{\inctx \ctxone {\inctx \rctxone
      \tmone}}$, hence clause \ref{e:tm-p} holds. Suppose $\inctx
  \rctxzero \tmzero$, $\inctx \rctxone \tmone$ are programs
  $\prgzero$, $\prgone$. We distinguish two cases. First, suppose
  $\tmzero$ and $\tmone$ are programs. If $\tmzero \redcbv \prgzero'$,
  then $\prgzero \redcbv \inctx \rctxzero {\prgzero'}$. Because
  $\tmzero \ierel \erely \env \tmone$, there exists $\prgone'$ such
  that $\tmone \clocbv \prgone'$ and $\prgzero' \ierel \erely \env
  \prgone'$. We have $\inctx \rctxzero {\prgzero'} \ierel \erel
          {\rval{\cloct \env}} \inctx \rctxone {\prgone'}$ and
          $\prgone \clocbv \inctx \rctxone{\prgone'}$, therefore
          clause \ref{e:prgtau-p} holds. If $\tmzero \redcbv
          \valzero$, then $\prgzero \redcbv \inctx \rctxzero
          \valzero$. Because $\tmzero \ierel \erely \env \tmone$,
          there exists $\valone$ such that $\tmone \clocbv \valone$
          and $\env' = \{(\valzero, \valone)\} \cup \env \in
          \mathord{\erely}$. Hence $\prgone \clocbv \inctx \rctxone
          \valone$, and we have $\inctx \rctxzero \valzero \ierel
          \erel {\rval{\cloct {\env'}}} \inctx \rctxone \valone$,
          therefore clause \ref{e:prgtau-p} holds (up to environment).

  In the second case, $\tmzero$ and $\tmone$ are not both
  programs. Then we can write $\prgzero =
  \inctx{\rctxzero'}{\reset{\inctx \ctxzero \tmzero}}$ and $\prgone =
  \inctx{\rctxone'}{\reset{\inctx \ctxone \tmone}}$ for some
  $\rctxzero' \clocc\env \rctxone'$ and $\ctxzero \clocc\env
  \ctxone$. Because $\tmzero \ierel \erely \env \tmone$ and since
  $\erely$ is an environmental bisimulation, we have $\reset{\inctx
    \ctxzero \tmzero} \ierel \erely \env \reset{\inctx \ctxone
    \tmone}$. If $\reset{\inctx \ctxzero \tmzero} \redcbv \prgzero'$,
  then there exists $\prgone'$ such that $\reset{\inctx \ctxone
    \tmone} \clocbv \prgone'$ and $\prgzero' \ierel \erely \env
  \prgone'$. Therefore, $\prgzero \redcbv \inctx
          {\rctxzero'}{\prgzero'}$, $\prgone \redcbv \inctx
          {\rctxone'}{\prgone'}$, and $\inctx {\rctxzero'}{\prgzero'}
          \ierel \erel {\rval{\cloct \env}} \inctx
                 {\rctxone'}{\prgone'}$, hence clause \ref{e:prgtau-p}
                 holds. If $\reset{\inctx \ctxzero \tmzero} \redcbv
                 \valzero$, then there exists $\valone$ such that
                 $\reset{\inctx \ctxone \tmone} \clocbv \valone$ and
                 $\env' = \{(\valzero, \valone)\} \cup \env \subseteq
                 \mathord{\erely}$. Therefore $\prgzero \redcbv \inctx
                         {\rctxzero'} \valzero$, $\prgone \redcbv
                         \inctx {\rctxone'}{\valone}$, and $\inctx
                                {\rctxzero'}{\valzero} \ierel \erel
                                {\rval{\cloct {\env'}}} \inctx
                                {\rctxone'}{\valone}$, hence clause
                                \ref{e:prgtau-p} holds (up to
                                environment).\\

  We finally prove the clause~\ref{e:env-p} of the bisimulation. The
  only environments in $\erel$ are of the form $\rval {\cloct
    \env}$. Let $\lam \varx \tmzero \rval {\cloct \env} \lam \varx
  \tmone$ and $\valzero \rval {\cloct \env} \valone$. By
  Lemma~\ref{l:erel-p}, we have $\subst \tmzero \varx \valzero \ierel
  \erel {\rval {\cloct \env}} \subst \tmone \varx \valone$, hence the
  result holds.
\end{proof}

\begin{lemma}
  \label{l:val-p}
  If $\lam \varx \tmzero \empbisimp \lam \varx \tmone$, then $\subst
  \tmzero \varx \val \empbisimp \subst \tmone \varx \val$.
\end{lemma}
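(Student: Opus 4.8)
The plan is to follow the proof of Lemma~\ref{l:val}, inserting one extra step to account for the fact that Definition~\ref{d:env-bisim-p} only lets us read a pair of values off the environment through the reduction of a \emph{program}. I would start from $\lam \varx \tmzero \empbisimp \lam \varx \tmone$. Since $\lam \varx \tmzero$ and $\lam \varx \tmone$ are not both programs and $\mtctx \clocc \emptyset \mtctx$, clause~(\ref{e:tm-p}) of Definition~\ref{d:env-bisim-p}, instantiated with the empty context, gives $\reset{\lam \varx \tmzero} \empbisimp \reset{\lam \varx \tmone}$.

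Both of these terms are now programs, so clause~(\ref{e:prg-p}) applies. Since $\reset{\lam \varx \tmzero} \redcbv \lam \varx \tmzero$ by rule $\RRreset$, clause~(\ref{e:prgval-p}) yields $\reset{\lam \varx \tmone} \clocbv \valone$ together with $\{(\lam \varx \tmzero, \valone)\} \in \mathord{\bisimp}$. But $\reset{\lam \varx \tmone} \redcbv \lam \varx \tmone$ as well, and $\lam \varx \tmone$ is a value, so by determinism of $\redcbv$ (Lemma~\ref{l:unique-decomp}) we get $\valone = \lam \varx \tmone$, hence $\{(\lam \varx \tmzero, \lam \varx \tmone)\} \in \mathord{\bisimp}$.

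Finally, writing $\env = \{(\lam \varx \tmzero, \lam \varx \tmone)\}$ and using the fact that $\cloct \env$ contains the identity on closed terms, so that $\val \cloctc \env \val$, clause~(\ref{e:env-p}) of Definition~\ref{d:env-bisim-p} gives $\subst \tmzero \varx \val \ierel \bisimp \env \subst \tmone \varx \val$; weakening (Lemma~\ref{l:weakening-p}) applied with $\emptyset \subseteq \env$ then produces $\subst \tmzero \varx \val \empbisimp \subst \tmone \varx \val$, as required. The only part that is not completely routine, compared with Lemma~\ref{l:val}, is this detour: in the program setting we cannot obtain the pair of $\lambda$-abstractions directly from an observation clause, so we first embed them under a $\reset{\cdot}$ to form programs and then use rule $\RRreset$ to push that pair into the environment, where clause~(\ref{e:env-p}) can be applied.
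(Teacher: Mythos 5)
Your proposal is correct and follows essentially the same route as the paper's own proof: wrap both abstractions in $\reset{\cdot}$ via clause~(\ref{e:tm-p}), use clause~(\ref{e:prgval-p}) on the $\RRreset$ step to place the pair $(\lam \varx \tmzero, \lam \varx \tmone)$ in an environment of $\bisimp$, then apply clause~(\ref{e:env-p}) with $\val \cloctc \env \val$ and conclude by weakening. Your extra justification that $\valone = \lam \varx \tmone$ by determinism of $\redcbv$ is a detail the paper leaves implicit, and your citation of the program-level weakening lemma is, if anything, more precise than the paper's.
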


\begin{proof}
  If $\lam \varx \tmzero \empbisimp \lam \varx \tmone$, then
  $\reset{\lam \varx \tmzero} \empbisimp \reset{\lam \varx \tmone}$ by
  clause \ref{e:tm-p}. Since $\reset{\lam \varx \tmzero} \redcbv \lam
  \varx \tmzero$ and $\reset{\lam \varx \tmone} \redcbv \lam \var
  \tmone$, we have $\{(\lam \varx \tmzero, \lam \varx \tmone) \} \in
  \mathord{\empbisimp}$ by clause \ref{e:prgval-p}. Let $\env =
  \{(\lam \varx \tmzero, \lam \varx \tmone) \}$. By clause
  \ref{e:env-p}, for all $\val$, we have $\subst \tmzero \varx \val
  \ebisim \subst \tmone \varx \val$, therefore $\subst \tmzero \varx
  \val \empbisimp \subst \tmone \varx \val$ holds by weakening
  (Lemma~\ref{l:weakening}).
\end{proof}

\begin{lemma}
  \label{l:subst-val-p}
  If $\lam \varx \tmzero \cloctc \empbisimp \lam \varx \tm \empbisimp
  \lam \varx \tmone$ and $\valzero \cloctc \empbisimp \val \empbisimp
  \valone$ then $\subst \tmzero \varx \valzero \cloctc \empbisimp
  \empbisimp \subst \tmone \varx \valone$.
\end{lemma}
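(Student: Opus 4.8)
The plan is to replay the proof of Lemma~\ref{l:subst-val} almost verbatim, substituting the program-level counterparts of the three auxiliary facts it relies on. Lemma~\ref{l:cong-is-subst} is stated for an arbitrary relation on closed terms, so it applies unchanged with $\rel = \empbisimp$; Lemma~\ref{l:val-p} (already established) plays the role of Lemma~\ref{l:val}; and Lemma~\ref{l:cong-val-p} together with Lemma~\ref{l:weakening-p} plays the role of Lemma~\ref{l:cong-val}. Accordingly, I would proceed by case analysis on the derivation of $\lam \varx \tmzero \cloctc \empbisimp \lam \varx \tm$. Since $\lam \varx \tmzero$ and $\lam \varx \tm$ are closed values, a $\lambda$-headed pair can enter $\cloct\empbisimp$ only through the base rule, giving $\lam \varx \tmzero \empbisimp \lam \varx \tm$, or through the abstraction rule, giving $\tmzero \cloct \empbisimp \tm$ with $\fv \tmzero \cup \fv \tm \subseteq \{\varx\}$ (the side condition being automatic, as both abstractions are closed).

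In the first case I would exhibit the chain
\[
\subst \tmzero \varx \valzero \;\cloctc\empbisimp\; \subst \tmzero \varx \val \;\empbisimp\; \subst \tm \varx \val \;\empbisimp\; \subst \tm \varx \valone \;\empbisimp\; \subst \tmone \varx \valone,
\]
whose links are justified respectively by: Lemma~\ref{l:cong-is-subst} applied to $\empbisimp$, using reflexivity $\tmzero \cloct \empbisimp \tmzero$ and the hypothesis $\valzero \cloctc \empbisimp \val$; Lemma~\ref{l:val-p} applied to $\lam \varx \tmzero \empbisimp \lam \varx \tm$; Lemma~\ref{l:cong-val-p} applied to $\tm$ regarded as a context with a hole for $\varx$, using $\val \empbisimp \valone$ (and Lemma~\ref{l:weakening-p} to discard the generated environment); and Lemma~\ref{l:val-p} again, applied to $\lam \varx \tm \empbisimp \lam \varx \tmone$. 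Collapsing the last three $\empbisimp$-links by transitivity of $\empbisimp$ then gives $\subst \tmzero \varx \valzero \cloctc \empbisimp \empbisimp \subst \tmone \varx \valone$, which is the claim.

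In the second case the chain is one step shorter: $\subst \tmzero \varx \valzero \cloctc\empbisimp \subst \tm \varx \val$ directly by Lemma~\ref{l:cong-is-subst} (now using both $\tmzero \cloct \empbisimp \tm$ and $\valzero \cloctc \empbisimp \val$), then $\subst \tm \varx \val \empbisimp \subst \tm \varx \valone$ by Lemma~\ref{l:cong-val-p}, then $\subst \tm \varx \valone \empbisimp \subst \tmone \varx \valone$ by Lemma~\ref{l:val-p}; transitivity of $\empbisimp$ concludes.

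I do not expect any genuine obstacle here --- the content is bookkeeping along the template of Lemma~\ref{l:subst-val}. The points deserving a line of care are: (i) that Lemma~\ref{l:cong-is-subst}, though stated generically, is legitimately instantiated at $\empbisimp$ because $\empbisimp$ relates only closed terms; (ii) that $\empbisimp$ is transitive --- needed to fuse the middle segment of the chain --- which holds for the same reason $\empbisim$ does, namely that the composition of two environmental bisimulations for programs (with composed environments) is again one; and (iii) that when $\varx$ occurs several times in $\tm$, the step $\subst \tm \varx \val \empbisimp \subst \tm \varx \valone$ is obtained by iterating the single-hole congruence Lemma~\ref{l:cong-val-p} over those occurrences and appealing to transitivity of $\empbisimp$, exactly as in the proof of Lemma~\ref{l:subst-val}.
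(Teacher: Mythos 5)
Your proof matches the paper's own argument for this lemma essentially verbatim: the same case analysis on $\lam \varx \tmzero \cloctc \empbisimp \lam \varx \tm$, the same chains of relations justified by Lemmas~\ref{l:cong-is-subst}, \ref{l:val-p} and \ref{l:cong-val-p}, and the same final appeal to transitivity of $\empbisimp$. The extra remarks you add (instantiating Lemma~\ref{l:cong-is-subst} at $\empbisimp$, transitivity, iterating the congruence over multiple occurrences of $\varx$) are points the paper leaves implicit, and they are correct.
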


\begin{proof}
  We proceed by case analysis on $\lam \varx \tmzero \cloctc
  \empbisimp \lam \varx \tm$.

  Suppose $\lam \varx \tmzero \empbisimp \lam \varx \tm$. We have
  $\subst \tmzero \varx \valzero \cloctc \empbisimp \subst \tmzero
  \varx \val$ by Lemma \ref{l:cong-is-subst}, $\subst \tmzero \varx
  \val \empbisimp \subst \tm \varx \val$ by Lemma~\ref{l:val-p},
  $\subst \tm \varx \val \empbisimp \subst \tm \varx \valone$ by
  Lemma~\ref{l:cong-val-p}, and $\subst \tm \varx \valone \empbisimp
  \subst \tmone \varx \valone$ by Lemma~\ref{l:val-p}. Finally,
  $\subst \tmzero \varx \valzero \cloctc \empbisimp \empbisimp \subst
  \tmone \varx \valone$ holds using transitivity of $\empbisimp$.

  Suppose $\tmzero \cloct \empbisimp \tm$ with $\fv \tmzero \cup \fv
  \tm \subseteq \{ \varx \}$. We have $\subst \tmzero \varx \valzero
  \cloctc\empbisimp \subst \tm \varx \val$ by
  Lemma~\ref{l:cong-is-subst}, $\subst \tm \varx \val \empbisimp
  \subst \tm \varx \valone$ by Lemma~\ref{l:cong-val-p}, and $\subst
  \tm \varx \valone \empbisimp \subst \tmone \varx \valone$ by
  Lemma~\ref{l:val-p}. Finally, $\subst \tmzero \varx \valzero \cloctc
  \empbisimp \empbisimp \subst \tmone \varx \valone$ holds using
  transitivity of $\empbisimp$.
\end{proof}

Programs are either value programs or can be decomposed in contexts
$\rctx$, $\ctx$, and a redex $\redex$. We extend this result to
related programs $\prgzero \cloctc\empbisimp \prgone$, and see how
they can be decomposed.
\begin{lemma}
  \label{l:decompose-prg}
  If $\prgzero \cloctc\empbisimp \prgone$ then we have one of the
  following cases:
  \begin{itemize}
  \item $\prgzero \empbisimp \prgone$;
  \item $\prgzero = \reset \valzero$;
  \item $\prgzero = \inctx \rctxzero {\reset{\inctx \ctxzero
      \tmzero}}$, $\prgone = \inctx \rctxone {\reset{\inctx \ctxone
      \tmone}}$ , $\rctxzero \clocc \empbisimp \rctxone$, $\ctxzero
    \clocc\empbisimp \ctxone$, $\tmzero \empbisimp \tmone$ and
    $\tmzero \redcbv \tmzero'$ or $\tmzero$ is stuck;
  \item $\prgzero = \inctx \rctxzero {\reset{\inctx \ctxzero
      \redexzero}}$, $\prgone = \inctx \rctxone {\reset{\inctx \ctxone
      \tmone}}$ , $\rctxzero \clocc \empbisimp \rctxone$, $\ctxzero
    \clocc\empbisimp \ctxone$, $\redexzero \cloctc\empbisimp \tmone$
    but $\redexzero \not\empbisimp \tmone$.
  \end{itemize}
\end{lemma}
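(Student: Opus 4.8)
The plan is to follow the derivation of $\prgzero \cloctc \empbisimp \prgone$, peeling off matching outer layers until the active position of $\prgzero$ is reached. First, if that derivation is the base rule, then $\prgzero \empbisimp \prgone$ and we are in the first case. Otherwise it ends with a congruence rule; since $\prgzero$ is a program, i.e.\ of the form $\reset{\tm_0}$, this rule must be the one for $\reset{}$, so $\prgone = \reset{\tm_1}$ with $\tm_0 \cloctc \empbisimp \tm_1$ via a strictly shorter derivation. If $\tm_0$ is a value we are in the second case; otherwise, by Lemma~\ref{l:unique-decomp}, $\tm_0$ is stuck or has a unique decomposition $\inctx{\rctx}{\redex}$, and in either case $\prgzero$ itself reduces.

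It then remains to locate, inside $\reset{\tm_0}$, the innermost $\reset{}$ enclosing the redex of $\tm_0$ (or the $\shift$-term, if $\tm_0$ is stuck), and to follow the derivation of $\tm_0 \cloctc \empbisimp \tm_1$ along that path. I would do this by induction on that derivation together with a case analysis on the shape of $\tm_0$ dictated by Lemma~\ref{l:unique-decomp}: at each congruence step the outermost constructor of $\tm_0$ is preserved, so $\tm_1$ has the same shape and we recurse into the unique subterm containing the active position, accumulating the traversed evaluation-context frames into $\rctxzero$ and $\rctxone$ and, once we have passed the innermost enclosing $\reset{}$, the traversed pure-context frames into $\ctxzero$ and $\ctxone$; the accumulated pairs of contexts are then related by $\clocc \empbisimp$ by construction. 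If along the way we hit a base node on a subterm that is not the active one, we stop, reassemble, and obtain the first case (if nothing had been accumulated) or the third case. At the active position two situations arise: the remaining relation is a base node, giving $\tmzero \empbisimp \tmone$ (third case, with $\tmzero$ either a redex, hence reducible, or a stuck $\shift$-term); or it is a congruence step applied to a redex $\redexzero$, so that $\redexzero \cloctc \empbisimp \tmone$ while possibly $\redexzero \not\empbisimp \tmone$ (fourth case). The $\RRshift$ and $\RRreset$ redexes of $\tm_0$ are handled by treating $\shift \vark \tmzero$ at a $\reset{}$-boundary as the active term, so that the ``redex'' case 4 only needs to cover $\beta$-redexes (and the trivial $\reset{\valzero}$ one).

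I expect the main difficulty to be exactly this traversal. The delicate point is that $\cloctc \empbisimp$ can turn a subterm from a value into a stuck term and back at a base node — recall $\valzero \empbisimp \shift \vark {\app \vark \valzero}$ when $\vark \notin \fv \valzero$ — so the active positions of $\tm_0$ and $\tm_1$ need not be literally aligned, and one has to argue that whenever they are not, the mismatch sits at a base node and is therefore reported as the first or third case. A second source of care is the nesting of resets: since the decomposition must single out the \emph{innermost} enclosing $\reset{}$, any outer resets of $\tm_0$ have to be pushed into $\rctxzero$, and Lemma~\ref{l:cong-is-subst} is needed to propagate the closure through the substitutions introduced by the $\RRshift$ steps the traversal may cross. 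Finally, the stuck case $\tm_0 = \inctx{\ctxzero}{\shift \vark \tmzero}$ should be checked separately, since the derivation may relate $\shift \vark \tmzero$ to its counterpart either at a base $\empbisimp$ node or through the $\shift$-congruence rule, and one must confirm that both possibilities fall under the third case of the statement.
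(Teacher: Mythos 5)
Your overall strategy---induction on the derivation of the closure, a case analysis guided by Lemma~\ref{l:unique-decomp}, and accumulation of the traversed evaluation-context frames into $\rctxzero,\rctxone$ and of the pure-context frames inside the innermost enclosing reset into $\ctxzero,\ctxone$---is essentially the paper's: the paper strengthens the statement to a six-case disjunction about arbitrary related terms $\tmzero \cloctc\empbisimp \tmone$ (distinguishing whether an enclosing reset has already been crossed) and proves it by induction on the derivation of the closure.

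There is, however, a genuine gap in your treatment of capture. You propose to route every $\shift$ sitting at a $\reset$-boundary through the third case, so that the fourth case ``only needs to cover $\beta$-redexes (and the trivial $\reset\valzero$ one)'', and at the end you hope to ``confirm that both possibilities fall under the third case''. That confirmation fails for the congruence possibility. The third case requires the active term to satisfy $\tmzero \empbisimp \tmone$, i.e., the derivation must pass through a base node of the closure somewhere on the evaluation spine. When the derivation relates $\reset{\inctx \ctx {\shift \vark \tm}}$ to $\reset{\tmone''}$ entirely through congruence rules---reset-congruence, then application- and shift-congruence down to the bodies---no pair of subterms on that spine is related by $\empbisimp$ (congruence of $\empbisimp$ is exactly what is being proved, so it cannot be invoked here), and the third case is unreachable. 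The only correct classification of this situation is the fourth case with $\redexzero = \reset{\inctx \ctx {\shift \vark \tm}}$; this is why the statement keeps all three redex forms in $\redexzero$, and why the paper's proof of Lemma~\ref{l:cong-empbisim-p} needs the dedicated Lemma~\ref{l:bisim-stuck-p} for precisely this redex. A smaller inaccuracy: Lemma~\ref{l:cong-is-subst} plays no role in the decomposition itself, since no reduction (and hence no capture-induced substitution) is performed here; it is only needed downstream when the decomposition is exploited.
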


\begin{proof}
  We prove a more general result on $\tmzero \cloctc\empbisimp
  \tmone$. We have either
  \begin{itemize}
  \item $\tmzero \empbisimp \tmone$;
  \item $\tmzero = \valzero$;
  \item $\tmzero = \inctx \ctxzero {\tmzero'}$, $\tmone = \inctx
    \ctxone {\tmone'}$, $\ctxzero \clocc\empbisimp \ctxone$, $\tmzero'
    \empbisimp \tmone'$, and $\tmzero' \redcbv \tmzero''$ or $\tmzero$
    is stuck;
  \item $\tmzero = \inctx \rctxzero {\reset{\inctx \ctxzero
      {\tmzero'}}}$, $\tmone = \inctx \rctxone {\reset{\inctx \ctxone
      {\tmone'}}}$ , $\rctxzero \clocc \empbisimp \rctxone$, $\ctxzero
    \clocc\empbisimp \ctxone$, $\tmzero' \empbisimp \tmone'$, and
    $\tmzero' \redcbv \tmzero''$ or $\tmzero'$ is stuck;
  \item $\tmzero = \inctx \ctxzero \redexzero$, $\tmone = \inctx
    \ctxone {\tmone'}$, $\ctxzero \clocc\empbisimp \ctxone$,
    $\redexzero \cloctc\empbisimp \tmone'$ but $\redexzero
    \not\empbisimp \tmone'$.
  \item $\tmzero = \inctx \rctxzero {\reset{\inctx \ctxzero
      \redexzero}}$, $\tmone = \inctx \rctxone {\reset{\inctx \ctxone
      {\tmone'}}}$ , $\rctxzero \clocc \empbisimp \rctxone$, $\ctxzero
    \clocc\empbisimp \ctxone$, $\redexzero \cloctc\empbisimp \tmone'$
    but $\redexzero \not\empbisimp \tmone'$.
  \end{itemize}
  The proof is easy by induction on $\tmzero \cloctc\empbisimp \tmone$
  but tedious.
\end{proof}

\begin{lemma}
  \label{l:bisim-val-p}
  If $\valzero \cloctc \empbisimp \tmone$, then there exists $\valone$
  such that $\reset \tmone \clocbv \valone$ and $\valzero \cloctc
  \empbisimp \valone$.
\end{lemma}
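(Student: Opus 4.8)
The plan is a case analysis on the last rule used to derive $\valzero \cloctc \empbisimp \tmone$. Since $\valzero$ is a $\lambda$-abstraction and the closure rules for variables, applications, shifts, and resets each produce on their left-hand side a term that is not a $\lambda$-abstraction, the derivation can only end with the base rule (yielding $\valzero \empbisimp \tmone$) or the $\lambda$-abstraction rule (yielding $\valzero = \lam \varx {\tmzero'}$, $\tmone = \lam \varx {\tmone'}$, and $\tmzero' \cloct \empbisimp \tmone'$ with $\fv{\tmzero'} \cup \fv{\tmone'} \subseteq \{\varx\}$). So there are only two cases to treat.

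In the $\lambda$-abstraction case there is essentially nothing to do: $\tmone$ is already a value, hence $\reset \tmone \redcbv \tmone$ by rule $\RRreset$, so we take $\valone = \tmone$, and $\valzero \cloctc \empbisimp \valone$ is precisely the hypothesis. Note that here one genuinely needs the full closure $\cloctc \empbisimp$ and not just $\empbisimp$, since we do not know $\valzero \empbisimp \tmone$ in this case.

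In the base case $\valzero \empbisimp \tmone$, the first step is to apply clause~(\ref{e:tm-p}) of Definition~\ref{d:env-bisim-p}: $\valzero$ is not a program, and $\mtctx \clocc \emptyset \mtctx$, so $\reset \valzero \empbisimp \reset \tmone$. Both sides are now programs, and $\reset \valzero \redcbv \valzero$ with $\valzero$ a value, so clause~(\ref{e:prgval-p}) yields a value $\valone$ with $\reset \tmone \clocbv \valone$, which settles the first half of the statement. For the second half it suffices to establish $\valzero \empbisimp \valone$ and then apply the base rule of $\cloctc$; I would get this by transitivity from the chain $\valzero \empbisimp \reset \valzero \empbisimp \reset \tmone \empbisimp \valone$, whose three links come respectively from Lemma~\ref{l:redcbv-in-empbisim-p} and symmetry of $\empbisimp$, the instance of clause~(\ref{e:tm-p}) just used, and $\reset \tmone \clocbv \valone$ together with Lemma~\ref{l:redcbv-in-empbisim-p}.

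The only non-routine point, and hence the expected obstacle, is closing that chain in the base case: it relies on $\empbisimp$ being transitive (equivalently, on $\clocbv \subseteq \mathord{\empbisimp}$), which has to be invoked here as a standalone property, because Lemma~\ref{l:bisim-val-p} feeds into the congruence and soundness proofs for $\empbisimp$ and so cannot appeal to them. This is the analogue of the free use of transitivity of $\empbisim$ in the relaxed setting (e.g.\ in the proof of Lemma~\ref{l:subst-val}). If one wishes to avoid transitivity, an alternative is to extract $\valone$ as above and then check directly that $\bisimp \cup \{(\emptyset, \valzero, \valone)\}$ is an environmental bisimulation up to context: its only new obligation, clause~(\ref{e:tm-p}) for the pair $(\valzero, \valone)$, is discharged by pulling the common evaluation context $\reset \ctx$ out of $\reset{\inctx \ctx \valzero}$ and $\reset{\inctx \ctx \valone}$ and landing back on the pair $(\emptyset, \valzero, \valone)$.
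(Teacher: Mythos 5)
Your proposal is correct and follows essentially the same route as the paper: the same two-case analysis on the derivation of $\valzero \cloctc \empbisimp \tmone$ (base rule vs.\ $\lambda$-abstraction rule), wrapping both terms in a $\rawreset$ via clause~(\ref{e:tm-p}) in the base case, extracting $\valone$ via clause~(\ref{e:prgval-p}), and recovering $\valzero \empbisimp \valone$ from $\clocbv \subseteq \mathord{\empbisimp}$ and transitivity of $\empbisimp$. The paper uses transitivity just as freely as you do (cf.\ the proof of Lemma~\ref{l:subst-val-p}), so your cautionary remark, while sensible, does not mark a divergence from the paper's argument.
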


\begin{proof}
  We have two cases to consider. If $\valzero \empbisimp \tmone$, then
  $\reset \valzero \empbisimp \reset \valone$, and because $\reset
  \valzero \redcbv \valzero$, there exists $\valone$ such that
  $\reset\tmone \clocbv \valone$ and $\{(\valzero, \valone)\} \in
  \bisimp$. Because $\clocbv \subseteq \empbisimp$, we have $\valzero
  \empbisimp \valone$, as wished. Otherwise, $\tmone$ is a value
  $\valone$, and the result holds trivially.
\end{proof}

\begin{lemma}
  \label{l:progress-bisim-p}
  Let $\tmzero \empbisimp \tmone$ so that $\tmzero \redcbv \tmzero'$
  or $\tmzero$ is stuck, and $\ctxzero \clocc \empbisimp
  \ctxone$. There exist $\prgzero'$, $\prgone'$ such that
  $\reset{\inctx \ctxzero \tmzero} \redcbv \prgzero'$, $\reset{\inctx
    \ctxone \tmone} \clocbv \prgone'$, and $\prgzero' \cloctc
  \empbisimp \empbisimp \prgone'$.
\end{lemma}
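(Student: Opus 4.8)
The plan is to fire the pending redex sitting inside $\reset{\inctx\ctxzero\tmzero}$ and then match it on the right by feeding that step into the bisimulation game for programs, using the congruence-like behaviour of the two context closures to pass from $\ctxzero$ to $\ctxone$. First I would record that $\reset{\inctx\ctxzero\tmzero}$ is a reducible program. Since $\ctxzero$ is a pure, hence an evaluation, context, if $\tmzero\redcbv\tmzero'$ then $\reset{\inctx\ctxzero\tmzero}\redcbv\reset{\inctx\ctxzero\tmzero'}$ by compatibility of $\redcbv$ with evaluation contexts; and if $\tmzero$ is stuck, then by Lemma~\ref{l:stuck} $\tmzero=\inctx{\ctx}{\shift\vark s}$, so $\inctx\ctxzero\tmzero=\inctx{(\inctx\ctxzero\ctx)}{\shift\vark s}$ and $\reset{\inctx\ctxzero\tmzero}$ is a $\RRshift$-redex. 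In either case let $\prgzero'$ be this (unique) one-step reduct.

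Then I would split on whether $\tmzero$ and $\tmone$ are both programs. If they are, then $\tmzero$ is not stuck, so $\tmzero=\reset s$, $\tmone=\reset{s'}$, $\tmzero\redcbv\tmzero'$, and $\prgzero'=\reset{\inctx\ctxzero\tmzero'}$. Running the bisimulation game for programs (clauses~(\ref{e:prgtau-p}) and~(\ref{e:prgval-p}) of Definition~\ref{d:env-bisim-p}) on $\tmzero\empbisimp\tmone$ and $\tmzero\redcbv\tmzero'$ produces $\tmone'$ with $\tmone\clocbv\tmone'$ such that either $\tmzero'\empbisimp\tmone'$, or $\tmzero'$ is a value and $\{(\tmzero',\tmone')\}\in\mathord{\bisimp}$; in the latter case the value-congruence lemma (Lemma~\ref{l:cong-val-p}, with weakening, Lemma~\ref{l:weakening-p}), or Lemma~\ref{l:bisim-val-p}, upgrades the good singleton environment to $\tmzero'\empbisimp\tmone'$. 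Hence $\reset{\inctx\ctxone\tmone}=\reset{\inctx\ctxone{\reset{s'}}}\clocbv\reset{\inctx\ctxone\tmone'}=:\prgone'$, and from $\ctxzero\clocc\empbisimp\ctxone$ and $\tmzero'\empbisimp\tmone'$ the fact that plugging $\cloctc\empbisimp$-related terms into $\clocc\empbisimp$-related contexts and wrapping them in a reset stays within $\cloctc\empbisimp$ gives $\prgzero'=\reset{\inctx\ctxzero\tmzero'}\cloctc\empbisimp\reset{\inctx\ctxone\tmone'}=\prgone'$, so a fortiori $\prgzero'\cloctc\empbisimp\empbisimp\prgone'$ by reflexivity of $\empbisimp$.

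If $\tmzero$ and $\tmone$ are not both programs, I would first use clause~(\ref{e:tm-p}) of Definition~\ref{d:env-bisim-p} (whose context quantification, since the environment is $\emptyset$, ranges over a single closed pure context) to reduce to the previous situation: it yields $\reset{\inctx\ctxzero\tmzero}\empbisimp\reset{\inctx\ctxzero\tmone}$, and the same plugging property, now with $\tmone$ itself in the hole, gives $\reset{\inctx\ctxzero\tmone}\cloctc\empbisimp\reset{\inctx\ctxone\tmone}$. Since $\reset{\inctx\ctxzero\tmzero}$ and $\reset{\inctx\ctxzero\tmone}$ are bisimilar programs, the step $\reset{\inctx\ctxzero\tmzero}\redcbv\prgzero'$ is matched by a run $\reset{\inctx\ctxzero\tmone}\clocbv q$ with $\prgzero'\empbisimp q$ (again upgrading the value subcase as above); combining with Lemma~\ref{l:redcbv-in-empbisim-p} and transitivity of $\empbisimp$, and transporting the run $\reset{\inctx\ctxzero\tmone}\clocbv q$ along $\reset{\inctx\ctxzero\tmone}\cloctc\empbisimp\reset{\inctx\ctxone\tmone}$, we obtain $\prgone'$ with $\reset{\inctx\ctxone\tmone}\clocbv\prgone'$ and $\prgzero'\cloctc\empbisimp\empbisimp\prgone'$.

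I expect this last paragraph to be the main obstacle. Passing from $\ctxzero$ to $\ctxone$ is harmless only because the two contexts are related by the context closure, whereas full congruence of $\empbisimp$ is not yet available at this point of the development; reconciling this with the need to match a concrete reduction — and, in particular, making the composition $\cloctc\empbisimp\empbisimp$ (rather than merely $\cloctc\empbisimp$) go through in the subcase where $\tmzero$ is stuck while $\tmone$ reduces normally, and in the value subcase of the bisimulation game where only a good singleton environment is delivered — is where the bookkeeping is delicate, and is precisely what the trailing "$\empbisimp$'' in the statement is there to absorb.
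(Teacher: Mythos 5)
Your first case (both terms programs) is essentially the paper's argument, and your explicit handling of the value subcase via $\redcbv\,\subseteq\,\empbisimp$ and transitivity is a harmless elaboration of what the paper leaves implicit. The gap is in your second case, and it sits exactly where you suspected. You apply clause~(\ref{e:tm-p}) with $\ctxzero$ on both sides, obtaining $\reset{\inctx \ctxzero \tmzero} \empbisimp \reset{\inctx \ctxzero \tmone}$, play the game there to get a run $\reset{\inctx \ctxzero \tmone} \clocbv q$, and then propose to ``transport'' this multi-step run along $\reset{\inctx \ctxzero \tmone} \cloctc\empbisimp \reset{\inctx \ctxone \tmone}$. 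That transport step is not available: matching an arbitrary reduction sequence across $\cloctc\empbisimp$ is precisely the congruence/soundness property that Lemma~\ref{l:progress-bisim-p} is an ingredient of (it is used inside the proof that $\cloctc\empbisimp$ generates a bisimulation up to bisimilarity, Lemma~\ref{l:cong-empbisim-p}), so invoking it here is circular. It also cannot be replaced by a purely syntactic mirroring argument, because after the first step (e.g.\ a capture that substitutes the surrounding context into the body of a shift) the two sides are no longer of the form ``same term in componentwise-related contexts.'' Finally, even granting the transport, your composition comes out as $\empbisimp$ followed by something, whereas the statement requires $\cloctc\empbisimp$ \emph{first} and $\empbisimp$ \emph{second}.

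The paper avoids all of this by choosing the other context when instantiating clause~(\ref{e:tm-p}): from $\tmzero \empbisimp \tmone$ it derives $\reset{\inctx \ctxone \tmzero} \empbisimp \reset{\inctx \ctxone \tmone}$, i.e.\ it puts $\ctxone$ on \emph{both} sides. The only cross-context comparison then needed is between $\reset{\inctx \ctxzero \tmzero}$ and $\reset{\inctx \ctxone \tmzero}$, which have the \emph{same} term $\tmzero$ in the hole; since $\tmzero$ either reduces internally or is stuck and captures the surrounding contexts (which are $\clocc\empbisimp$-related, so the resulting continuations are $\cloctc\empbisimp$-related values and Lemma~\ref{l:cong-is-subst} applies), a \emph{single} reduction step can be mirrored syntactically, giving $\reset{\inctx \ctxzero \tmzero} \redcbv \prgzero'$ and $\reset{\inctx \ctxone \tmzero} \redcbv \prgzero''$ with $\prgzero' \cloctc\empbisimp \prgzero''$. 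The bisimulation game on $\reset{\inctx \ctxone \tmzero} \empbisimp \reset{\inctx \ctxone \tmone}$ then yields $\prgone'$ with $\prgzero'' \empbisimp \prgone'$, and the composite $\prgzero' \cloctc\empbisimp \prgzero'' \empbisimp \prgone'$ is already in the required shape. You should restructure your second case along these lines.
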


\begin{proof}
  Suppose $\tmzero$ and $\tmone$ are both programs. Then $\tmzero$
  cannot be stuck, and we have $\tmzero \redcbv \tmzero'$. By
  bisimilarity, there exists $\prgone'$ such that $\tmone \clocbv
  \prgone'$ and $\tmzero' \empbisimp \prgone'$. We have $\reset{\inctx
    \ctxzero \tmzero} \redcbv \reset{\inctx \ctxzero {\tmzero'}}$,
  $\reset{\inctx \ctxone \tmone} \clocbv \reset{\inctx
    \ctxone{\prgone'}}$, and $\reset{\inctx \ctxzero {\tmzero'}}
  \cloctc \empbisimp \reset{\inctx \ctxone{\prgone'}}$, hence the
  result holds.

  Suppose $\tmzero$ and $\tmone$ are not both programs. Because
  $\tmzero \empbisimp \tmone$, we have $\reset{\inctx \ctxone \tmzero}
  \empbisimp \reset{\inctx \ctxone \tmone}$. From $\tmzero \redcbv
  \tmzero'$ or $\tmzero$ is stuck, we know that $\reset{\inctx \ctxone
    \tmzero} \redcbv \prgzero''$ for some $\prgzero''$ and
  $\reset{\inctx \ctxzero \tmzero} \redcbv \prgzero'$ with $\prgzero'
  \cloctc \empbisimp \prgzero''$. By bisimilarity, there exists
  $\prgone'$ such that $\reset{\inctx \ctxone \tmone} \clocbv
  \prgone'$ and $\prgzero'' \empbisimp \prgone'$. We have $\prgzero'
  \cloctc \empbisimp \empbisimp \prgone'$, hence the result holds.
\end{proof}

\begin{lemma}
  \label{l:bisim-beta-redex-p}
  Let $\lam \varx \tmzero \cloctc\empbisimp \tmone^1$, $\valzero
  \cloctc\empbisimp \tmone^2$, and $\ctxzero \clocc \empbisimp
  \ctxone$. There exist $\prgzero$, $\prgone$ such that $\reset{\inctx
    \ctxzero {\app {\lamp \varx \tmzero} \valzero}} \redcbv \prgzero$,
  $\reset{\inctx \ctxone {\app{\tmone^1}{\tmone^2}}} \clocbv \prgone$,
  and $\prgzero \cloctc\empbisimp \empbisimp \prgone$.
\end{lemma}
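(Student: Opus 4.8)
The plan is to discharge the left-hand obligation immediately and then reduce the right-hand side to a program that is $\empbisim$-equivalent to $\prgzero$. Since $\reset{\inctx\ctxzero\mtctx}$ is an evaluation context and $\app{\lamp\varx\tmzero}\valzero$ is a $\beta$-redex, rule $\RRbeta$ gives $\reset{\inctx\ctxzero{\app{\lamp\varx\tmzero}\valzero}}\redcbv\prgzero$ with $\prgzero := \reset{\inctx\ctxzero{\subst\tmzero\varx\valzero}}$ (a closed program, as $\ctxzero$, $\lam\varx\tmzero$ and $\valzero$ are closed). For the other term, $\reset{\inctx\ctxone{\app{\tmone^1}{\tmone^2}}}$ is a closed program, hence not a normal form (by Lemmas~\ref{l:stuck} and~\ref{l:unique-decomp}, a term of the form $\reset{\cdot}$ is neither a value nor stuck), so it reduces to some $\prgone$, and $\reset{\inctx\ctxone{\app{\tmone^1}{\tmone^2}}}\empbisim\prgone$ by Lemma~\ref{l:redcbv-in-empbisim-p}. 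It thus suffices to prove $\prgzero\cloctc\empbisimp\empbisimp\reset{\inctx\ctxone{\app{\tmone^1}{\tmone^2}}}$, which together with the last equivalence yields $\prgzero\cloctc\empbisimp\empbisimp\prgone$ by transitivity of $\empbisim$.

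The crucial observation is that we cannot establish this by evaluating $\tmone^1$ and $\tmone^2$ to values in isolation (e.g.\ via Lemma~\ref{l:bisim-val-p}): such a term may perform a shift-capture whose result depends on the surrounding pure context $\ctxone$ — recall $\valzero\empbisimp\shift\vark{\app\vark\valzero}$, whose right-hand side does precisely such a capture. Instead I would use clause~(\ref{e:tm-p}) of Definition~\ref{d:env-bisim-p} with $\env=\emptyset$, which lets one replace a term by an $\empbisimp$-equivalent one inside any closed pure context. Analysing the closure hypotheses, $\lam\varx\tmzero\cloctc\empbisimp\tmone^1$ forces either \emph{(i)} $\lam\varx\tmzero\empbisimp\tmone^1$, or \emph{(ii)} $\tmone^1=\lam\varx{\tmone'}$ with $\tmzero\cloct\empbisimp\tmone'$; and likewise $\valzero\cloctc\empbisimp\tmone^2$ forces either \emph{(a)} $\valzero\empbisimp\tmone^2$, or \emph{(b)} $\tmone^2$ is a value $\valone$ with $\valzero\cloct\empbisimp\valone$ — giving four cases.

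In case~\emph{(i)}, applying clause~(\ref{e:tm-p}) to $\lam\varx\tmzero\empbisimp\tmone^1$ with the closed pure context $\inctx\ctxone{\app\mtctx{\tmone^2}}$ gives $\reset{\inctx\ctxone{\app{\lamp\varx\tmzero}{\tmone^2}}}\empbisimp\reset{\inctx\ctxone{\app{\tmone^1}{\tmone^2}}}$; in case~\emph{(a)}, a further use of clause~(\ref{e:tm-p}) on $\valzero\empbisimp\tmone^2$, now with the $\lambda$-abstraction in function position, replaces $\tmone^2$ by $\valzero$; and as soon as the argument of the $\beta$-redex is a genuine value, the redex fires and Lemma~\ref{l:redcbv-in-empbisim-p} records the one-step reduct as $\empbisimp$-equivalent. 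Chaining these equivalences, using reflexivity, symmetry and transitivity of $\empbisimp$, yields in each of the four cases $\reset{\inctx\ctxone{\app{\tmone^1}{\tmone^2}}}\empbisimp\reset{\inctx\ctxone{s_1}}$, where $s_1$ is the substitution matching the case, namely one of $\subst\tmzero\varx\valzero$, $\subst{\tmone'}\varx\valzero$, $\subst\tmzero\varx\valone$, $\subst{\tmone'}\varx\valone$.

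Finally I relate $\prgzero$ to $\reset{\inctx\ctxone{s_1}}$ by $\cloctc\empbisimp$. From Lemma~\ref{l:cong-is-subst} applied to the decomposed hypotheses — reflexivity of $\cloct\empbisimp$ supplying the trivial component ($\tmzero\cloct\empbisimp\tmzero$ or $\valzero\cloct\empbisimp\valzero$) in the cases where only one of $\tmone^1,\tmone^2$ is genuinely rewritten — one obtains $\subst\tmzero\varx\valzero\cloctc\empbisimp s_1$; and since $\ctxzero\clocc\empbisimp\ctxone$, a straightforward induction on that derivation shows that plugging $\cloctc\empbisimp$-related closed terms into $\clocc\empbisimp$-related pure contexts produces $\cloctc\empbisimp$-related terms, so $\prgzero=\reset{\inctx\ctxzero{\subst\tmzero\varx\valzero}}\cloctc\empbisimp\reset{\inctx\ctxone{s_1}}$. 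Combined with the equivalence of the previous paragraph, this completes the argument. The main difficulty is bookkeeping: one must check throughout that the pure contexts and terms stay closed so that clause~(\ref{e:tm-p}) and Lemma~\ref{l:cong-is-subst} apply, and that the $\cloctc\empbisimp$-then-$\empbisim$ chains compose as claimed; the single conceptual ingredient is the realization that clause~(\ref{e:tm-p}), not Lemma~\ref{l:bisim-val-p}, is the right tool for the interaction between the captured part of $\ctxone$ and the observable behaviour.
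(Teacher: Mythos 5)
Your proof is correct and follows essentially the same route as the paper's: the same four-way case analysis on the two $\cloctc\empbisimp$ hypotheses, with clause~(\ref{e:tm-p}) used to swap $\empbisimp$-related components inside the surrounding delimited context, $\beta$-reduction plus Lemma~\ref{l:cong-is-subst} for the structural components, and the final assembly into $\cloctc\empbisimp\empbisimp$. The only (harmless, arguably cleaner) deviation is that you replace one component at a time with the \emph{same} context on both sides and push every structural discrepancy into the $\cloctc\empbisimp$ leg, where the paper instead invokes Lemma~\ref{l:subst-val-p} in the doubly-structural case and performs simultaneous replacements in the mixed ones.
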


\begin{proof}
  We have four different cases:
  \begin{itemize}
  \item Suppose $\lam \varx \tmzero \cloctc\empbisimp \lam \varx
    \tmone$, $\valzero \cloctc\empbisimp \valone$. By Lemma
    \ref{l:subst-val-p}, we have $\subst \tmzero \varx \valzero
    \cloctc\empbisimp \empbisimp \subst \tmone \varx \valone$. We have
    $\reset{\inctx \ctxzero {\app {\lamp \varx \tmzero} \valzero}}
    \redcbv \reset{\inctx \ctxzero {\subst \tmzero \varx \tmzero}}$,
    $\reset{\inctx \ctxone {\app {\lamp \varx \tmone} \valone}}
    \redcbv \reset{\inctx \ctxone {\subst \tmone \varx \tmone}}$, and
    we have $\reset{\inctx \ctxzero {\subst \tmzero \varx \tmzero}}
    \cloctc\empbisimp \empbisimp \reset{\inctx \ctxone {\subst \tmone
        \varx \tmone}}$ by congruence of $\cloctc\empbisimp$ and Lemma
    \ref{l:cong-e-ctx-p}. Therefore, we have the required result.
  \item Suppose $\lam \varx \tmzero \cloctc\empbisimp \lam \varx
    \tmone$ with $\tmzero \cloctc \empbisimp \tmone$ and $\valzero
    \empbisimp \tmone^2$. By bisimilarity, we have $\reset{\inctx
      \ctxone {\app {\lamp \varx \tmone} \valzero}} \empbisimp
    \reset{\inctx \ctxone {\app {\lamp \varx \tmone}{\tmone^2}}}$,
    therefore there exists $\prgone$ such that $\reset{\inctx \ctxone
      {\app {\lamp \varx \tmone}{\tmone^2}}} \clocbv \prgone$ and
    $\reset{\inctx \ctxone {\subst \tmone \varx \valzero}} \empbisimp
    \prgone$. We get $\reset{\inctx \ctxzero {\app {\lamp \varx
          \tmzero} \valzero}} \redcbv \reset{\inctx \ctxzero {\subst
        \tmzero \varx \valzero}}$ and $\reset{\inctx \ctxzero {\subst
        \tmzero \varx \valzero}} \cloctc\empbisimp \reset{\inctx
      \ctxone {\subst \tmone \varx \valzero}}$, and hence the result
    holds.
  \item Suppose $\lam \varx \tmzero \empbisimp \tmone^1$ and $\valzero
    \cloctc\empbisimp \valone$. By bisimilarity, we have
    $\reset{\inctx \ctxone {\app {\lamp \varx \tmzero} \valzero}}
    \empbisimp \reset{\inctx \ctxone {\app {\tmone^1} \valone}}$,
    therefore there exists a program $\prgone$ such that
    $\reset{\inctx \ctxone {\app {\tmone^1} \valone}} \clocbv \prgone$
    and $\reset{\inctx \ctxone {\subst \tmzero \varx \valzero}}
    \empbisimp \prgone$. We have $\reset{\inctx \ctxzero {\app {\lamp
          \varx \tmzero} \valzero}} \redcbv \reset{\inctx \ctxzero
      {\subst \tmzero \varx \valzero}}$ and $\reset{\inctx \ctxzero
      {\subst \tmzero \varx \valzero}} \cloctc\empbisimp \reset{\inctx
      \ctxone {\subst \tmzero \varx \valzero}}$, hence the result
    holds.
  \item Suppose $\lam \varx \tmzero \empbisimp \tmone^1$ and $\valzero
    \empbisimp \tmone^2$. By bisimilarity, we have $\reset{\inctx
      \ctxone {\app {\lamp \varx \tmzero} \valzero}} \empbisimp
    \reset{\inctx \ctxone {\app {\tmone^1} \valzero}}$ and
    $\reset{\inctx \ctxone {\app {\tmone^1} \valzero}} \empbisimp
    \reset{\inctx \ctxone {\app {\tmone^1}{\tmone^2}}}$, therefore
    there exists $\prgone$ such that $\reset{\inctx \ctxone {\app
        {\tmone^1}{\tmone^2}}} \clocbv \prgone$ and $\reset{\inctx
      \ctxone {\subst \tmzero \varx \valzero}} \empbisimp \prgone$. We
    get $\reset{\inctx \ctxzero {\app {\lamp \varx \tmzero} \valzero}}
    \redcbv \reset{\inctx \ctxzero {\subst \tmzero \varx \valzero}}$
    and $\reset{\inctx \ctxzero {\subst \tmzero \varx \valzero}}
    \cloctc\empbisimp \reset{\inctx \ctxone {\subst \tmzero \varx
        \valzero}}$, and hence the result holds.
  \end{itemize}
\end{proof}

\begin{lemma}
  \label{l:bisim-stuck-p}
  If $\inctx \ctx {\shift \vark \tmzero} \cloctc \empbisimp \tmone$
  and $\ctxzero \clocc \empbisimp \ctxone$, then there exist
  $\prgzero$, $\prgone$ such that $\reset{\inctx \ctxzero {\inctx \ctx
      {\shift \vark \tmzero}}} \redcbv \prgzero$, $\reset{\inctx
    \ctxone \tmone} \clocbv \prgone$, and $\prgzero \cloctc \empbisimp
  \empbisimp \prgone$.
\end{lemma}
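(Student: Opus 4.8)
The plan is to prove Lemma~\ref{l:bisim-stuck-p} by induction on the derivation of $\inctx \ctx {\shift \vark \tmzero} \cloctc \empbisimp \tmone$, keeping $\ctxzero \clocc \empbisimp \ctxone$, the pure context $\ctx$ and the body $\tmzero$ universally quantified so that the induction hypothesis can be re-instantiated along the way. Write $s = \inctx\ctx{\shift\vark\tmzero}$; it is a stuck term, and by the grammar of pure contexts its head constructor is $\shift$ if $\ctx = \mtctx$ and an application otherwise. Hence the last rule of the derivation is either the base rule $s \empbisimp \tmone$, or the structural rule of $\cloct$ for $\shift$, or that for application. In the base case, since $s$ is stuck, Lemma~\ref{l:progress-bisim-p} instantiated with $s \empbisimp \tmone$ and $\ctxzero \clocc\empbisimp \ctxone$ produces $\prgzero$, $\prgone$ with exactly the required properties.

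Suppose the last rule is structural. If $\ctx = \mtctx$, then $\tmone = \shift\vark{\tmone_0}$ with $\tmzero \cloct\empbisimp \tmone_0$, and both $\reset{\inctx\ctxzero{\shift\vark\tmzero}}$ and $\reset{\inctx\ctxone{\shift\vark\tmone_0}}$ reduce by $\RRshift$; it suffices to relate their reducts $\reset{\subst\tmzero\vark{\lam\varx{\reset{\inctx\ctxzero\varx}}}}$ and $\reset{\subst{\tmone_0}\vark{\lam\varx{\reset{\inctx\ctxone\varx}}}}$. From $\ctxzero\clocc\empbisimp\ctxone$ one gets $\lam\varx{\reset{\inctx\ctxzero\varx}} \cloctc\empbisimp \lam\varx{\reset{\inctx\ctxone\varx}}$ by a routine induction on the derivation of $\ctxzero\clocc\empbisimp\ctxone$ using the closure rules of $\cloct$, so Lemma~\ref{l:cong-is-subst} gives $\subst\tmzero\vark{\lam\varx{\reset{\inctx\ctxzero\varx}}} \cloctc\empbisimp \subst{\tmone_0}\vark{\lam\varx{\reset{\inctx\ctxone\varx}}}$; wrapping in $\reset$, the reducts are $\cloctc\empbisimp$-related, hence $\cloctc\empbisimp\empbisimp$-related by reflexivity of $\empbisimp$. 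If $\ctx \neq \mtctx$, then $s = \app{s_1}{s_2}$, $\tmone = \app{\tmone_1}{\tmone_2}$, $s_1 \cloctc\empbisimp \tmone_1$ and $s_2 \cloctc\empbisimp \tmone_2$. When $\ctx = \apctx{\ctx'}\tm$, the subterm $s_1 = \inctx{\ctx'}{\shift\vark\tmzero}$ is stuck; since $\clocc\empbisimp$ is preserved by extending the hole with a $\clocc\empbisimp$-related frame (here $\apctx\mtctx{\tm}$ against $\apctx\mtctx{\tmone_2}$, using $s_2 \cloctc\empbisimp \tmone_2$), the induction hypothesis applied to $s_1 \cloctc\empbisimp \tmone_1$ with the extended contexts $\inctx\ctxzero{\apctx\mtctx\tm}$ and $\inctx\ctxone{\apctx\mtctx{\tmone_2}}$ gives the claim, because filling these with $s_1$, resp.\ $\tmone_1$, reproduces $\inctx\ctxzero{s}$, resp.\ $\inctx\ctxone{\tmone}$. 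When $\ctx = \vctx\val{\ctx'}$, it is $s_2 = \inctx{\ctx'}{\shift\vark\tmzero}$ that is stuck while $s_1 = \val$ is a value. If $\tmone_1$ is a value, we argue symmetrically, recursing on $s_2 \cloctc\empbisimp \tmone_2$ with the hole extended by $\vctx\val\mtctx$ on the left and by $\vctx{\tmone_1}\mtctx$ on the right. If $\tmone_1$ is not a value, then by inspection of the rules defining $\cloct$ necessarily $\val \empbisimp \tmone_1$, so by congruence with respect to evaluation contexts (Lemma~\ref{l:cong-e-ctx-p} and weakening, Lemma~\ref{l:weakening-p}) $\reset{\inctx\ctxone{\app\val{\tmone_2}}} \empbisimp \reset{\inctx\ctxone\tmone}$; we recurse on $s_2 \cloctc\empbisimp \tmone_2$ with the hole extended by $\vctx\val\mtctx$ on both sides to relate $\reset{\inctx\ctxzero{s}}$ and $\reset{\inctx\ctxone{\app\val{\tmone_2}}}$, then transfer the reduction of the latter onto $\reset{\inctx\ctxone\tmone}$ by iterating clauses~(\ref{e:prgtau-p}) and~(\ref{e:prgval-p}) of Definition~\ref{d:env-bisim-p}, composing the two trailing $\empbisimp$ steps by transitivity.

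The step I expect to be the main obstacle is this last sub-case: when the left term applies a value but the matching sub-term of $\tmone$ is not a value --- it can be a stuck term, a program, or a divergent term, since under $\empbisimp$ a value may be related to a non-value --- so $\tmone$ cannot be re-decomposed to mirror the shape of $s$ and no direct recursion is available. The device that unblocks it is to first rewrite $\tmone_1$ back to $\val$ using congruence of $\empbisimp$, which is legitimate here because Lemma~\ref{l:cong-e-ctx-p} is established before the present lemma, and only then recurse. The remaining work is routine: the short inductions that $\clocc\empbisimp$ is preserved by composing evaluation-context frames and that the reducts produced in the $\shift$-case and in the recursive calls indeed lie in the target relation $\cloctc\empbisimp\empbisimp$.
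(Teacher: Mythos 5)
Your proof is correct and follows essentially the same route as the paper's: induction on the derivation of $\inctx \ctx {\shift \vark \tmzero} \cloctc \empbisimp \tmone$, with the same case split on the shape of $\ctx$, the same use of Lemma~\ref{l:cong-is-subst} when $\ctx = \mtctx$, the same re-instantiation of the induction hypothesis with extended contexts in the application cases, and the same resolution of the hard sub-case ($\val \empbisimp \tmone_1$ with $\tmone_1$ not a value) by first recursing with $\val$ on both sides and then transferring the reduction via bisimilarity. The only cosmetic difference is that you discharge the base case by invoking Lemma~\ref{l:progress-bisim-p}, whereas the paper inlines the equivalent argument.
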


\begin{proof}
  We proceed by induction on $\inctx \ctx {\shift \vark \tmzero}
  \cloctc \empbisimp \tmone$.

  If $\inctx \ctx {\shift \vark \tmzero} \empbisimp \tmone$, then by
  bisimilarity we have $\reset{\inctx \ctxone {\inctx \ctx {\shift
        \vark \tmzero}}} \empbisimp \reset{\inctx \ctxone
    \tmone}$. Because $\reset{\inctx \ctxone {\inctx \ctx {\shift
        \vark \tmzero}}} \redcbv \prgzero'$, there exists $\prgone$
  such that $\reset{\inctx \ctxone \tmone} \clocbv \prgone$ and
  $\prgzero' \empbisimp \prgone$. We also have $\reset{\inctx \ctxzero
    {\inctx \ctx {\shift \vark \tmzero}}} \redcbv \prgzero$ with
  $\prgzero \cloctc\empbisimp \prgzero'$, hence the result holds.

  If $\ctx = \mtctx$, $\tmone = \shift \vark {\tmone'}$ with $\tmzero
  \cloctc\empbisimp \tmone'$, then we have $\reset{\subst \tmzero
    \vark {\lam \vark {\reset {\inctx \ctxzero \varx}}}} \cloctc
  \empbisimp \reset{\subst {\tmone'} \vark {\lam \vark {\reset {\inctx
          \ctxone \varx}}}}$ by Lemma \ref{l:cong-is-subst}, and
  because we have the two reductions $\reset {\inctx \ctxzero {\shift
      \vark \tmzero}} \redcbv \reset{\subst \tmzero \vark {\lam \vark
      {\reset {\inctx \ctxzero \varx}}}}$ and $\reset {\inctx \ctxone
    {\shift \vark \tmone}} \redcbv \reset{\subst {\tmone'} \vark {\lam
      \vark {\reset {\inctx \ctxone \varx}}}}$, we obtain the required
  result.
  
  Suppose $\inctx \ctx {\shift \vark \tmzero} = \app \valzero {\inctx
    {\ctx'}{\shift \vark \tmzero}}$, $\tmone =
  \app{\tmone^1}{\tmone^2}$ with $\valzero \cloctc\empbisimp \tmone^1$
  and $\inctx {\ctx'}{\shift \vark \tmzero} \cloctc\empbisimp
  \tmone^2$. We distinguish two cases. If $\tmone^1 = \valone$, then
  by the induction hypothesis, there exist $\prgzero$, $\prgone$ such
  that $\reset{\inctx \ctxzero {\app \valzero {\inctx {\ctx'}{\shift
          \vark \tmzero}}}} \redcbv \prgzero$, $\reset{\inctx \ctxone
    {\app \valone {\tmone^2}}} \clocbv \prgone$, and $\prgzero \cloctc
  \empbisimp \empbisimp \prgone$, therefore we have the required
  result. Suppose now $\valzero \empbisimp \tmone^1$. By the induction
  hypothesis, there exist $\prgzero$, $\prgzero'$ such that
  $\reset{\inctx \ctxzero {\app \valzero {\inctx {\ctx'}{\shift \vark
          \tmzero}}}} \redcbv \prgzero$, $\reset{\inctx \ctxone {\app
      \valzero {\tmone^2}}} \clocbv \prgzero'$, and $\prgzero \cloctc
  \empbisimp \empbisimp \prgzero'$. From $\valzero \empbisimp
  \tmone^1$, we know that $\reset{\inctx \ctxone {\app \valzero
      {\tmone^2}}} \empbisimp \reset{\inctx \ctxone {\app
      {\tmone^1}{\tmone^2}}}$. By bisimilarity, there exists $\prgone$
  such that $\reset{\inctx \ctxone {\app {\tmone^1}{\tmone^2}}}
  \clocbv \prgone$ and $\prgzero' \empbisimp \prgone$. Therefore we
  have $\prgzero \cloctc \empbisimp \empbisimp \prgone$, hence the
  result holds.

  Suppose $\inctx \ctx {\shift \vark \tmzero} = \app {\inctx
    {\ctx'}{\shift \vark \tmzero}} \tm$, $\tmone =
  \app{\tmone^1}{\tmone^2}$ with $\inctx {\ctx'}{\shift \vark \tmzero}
  \cloctc\empbisimp \tmone^1$ and $\tm \cloctc\empbisimp \tmone^2$. By
  the induction hypothesis, there exist $\prgzero$, $\prgone$ such
  that $\reset{\inctx \ctxzero {\app {\inctx {\ctx'}{\shift \vark
          \tmzero}} \tm}} \redcbv \prgzero$, $\reset{\inctx \ctxone
    {\app {\tmone^1}{\tmone^2}}} \clocbv \prgone$, and $\prgzero
  \cloctc \empbisimp \empbisimp \prgone$, therefore we have the
  required result.
\end{proof}

\begin{lemma}
  \label{l:cong-empbisim-p}
  $\tmzero \empbisimp \tmone$ implies $\inctx \cctx \tmzero \ierel
  \bisimp {\rvals {\cloct \empbisimp}} \inctx \cctx \tmone$.
\end{lemma}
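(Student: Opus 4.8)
The plan is to exhibit one candidate relation and show it is an environmental bisimulation up to bisimilarity for programs (the program-oriented analogue of Definition~\ref{d:utbisim}, sound by the same argument as in the relaxed case); the claimed inclusion then follows. The candidate is
\[
  \erel = \{ (\rval {\cloct \empbisimp}, \tmzero, \tmone) \mmid \tmzero \cloctc \empbisimp \tmone\} \cup \{ \rval {\cloct \empbisimp} \},
\]
and the key observation is that $\tm \ierel \erel {\rval{\cloct \empbisimp}} \tm'$ holds exactly when $\tm \cloctc \empbisimp \tm'$, so the up-to-bisimilarity clauses amount to: related terms step to terms related by $\cloctc \empbisimp$, possibly composed with $\empbisimp$ on the right. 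The two easy clauses go first. For the environment clause (the analogue of~(\ref{e:env-p})), if $\lam \varx \tmzero \mathrel{\rval{\cloct \empbisimp}} \lam \varx \tmone$ and $\valzero \cloctc \empbisimp \valone$, then Lemma~\ref{l:subst-val-p} (taking the intermediate terms to be the left-hand ones and using reflexivity of $\empbisimp$) gives $\subst \tmzero \varx \valzero \cloctc \empbisimp \empbisimp \subst \tmone \varx \valone$; for clause~(\ref{e:tm-p}), if $\tmzero \cloctc \empbisimp \tmone$ are not both programs then for any $\ctxzero \clocc{\rval{\cloct\empbisimp}} \ctxone$ the two context closures compose into $\reset{\inctx \ctxzero \tmzero} \cloctc \empbisimp \reset{\inctx \ctxone \tmone}$, which is the required $\reset{\inctx \ctxzero \tmzero} \ierel \erel {\rval{\cloct\empbisimp}} \reset{\inctx \ctxone \tmone}$.

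The substance is clause~(\ref{e:prg-p}), the game for a pair of programs $\prgzero \cloctc \empbisimp \prgone$. A plain induction on this derivation fails: if $\prgzero = \reset{\tmzero}$, $\prgone = \reset{\tmone}$ with $\tmzero \cloctc \empbisimp \tmone$ and $\tmzero$ stuck, then $\prgzero$ reduces (the stuck subterm triggers a capture under the reset) but the induction hypothesis on $\tmzero \cloctc \empbisimp \tmone$ says nothing about $\tmone$. The remedy is Lemma~\ref{l:decompose-prg}, which puts $\prgzero \cloctc \empbisimp \prgone$ into one of four shapes. If $\prgzero \empbisimp \prgone$, done, since $\empbisimp$ is already a bisimulation for programs and up to bisimilarity subsumes it. If $\prgzero = \reset \valzero$, write $\prgone = \reset{\tmone'}$, so $\valzero \cloctc \empbisimp \tmone'$; Lemma~\ref{l:bisim-val-p} yields $\valone$ with $\prgone \clocbv \valone$ and $\valzero \cloctc \empbisimp \valone$, hence $(\valzero,\valone) \in \rval{\cloct\empbisimp}$ and $\rval{\cloct\empbisimp} \cup \{(\valzero,\valone)\} = \rval{\cloct\empbisimp} \in \erel$, settling the value sub-case of~(\ref{e:prgval-p}) (without even using up to bisimilarity).

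In the two remaining shapes, $\prgzero = \inctx \rctxzero {\reset{\inctx \ctxzero r}}$ with $\rctxzero \clocc \empbisimp \rctxone$ and $r$ either a subterm $\tmzero'$ that is $\empbisimp$-related to the matching subterm of $\prgone$ and about to reduce or stuck, or a redex $\redexzero$ only $\cloctc\empbisimp$-related to that subterm. I peel off $\rctxzero$ and apply Lemma~\ref{l:progress-bisim-p} in the first situation, and Lemma~\ref{l:bisim-beta-redex-p}, Lemma~\ref{l:bisim-stuck-p} or Lemma~\ref{l:bisim-val-p} in the second (according to whether $\redexzero$ is a $\beta$-redex, a reset of a stuck term, or a reset of a value); each supplies $\prgzero''$, $\prgone''$ with $\reset{\inctx\ctxzero r} \redcbv \prgzero''$, the counterpart under $\reset\ctxone$ reducing by $\clocbv$ to $\prgone''$, and $\prgzero'' \cloctc \empbisimp \empbisimp \prgone''$. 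Re-plugging $\rctxzero \clocc\empbisimp \rctxone$ and using congruence of $\cloctc\empbisimp$ together with Lemma~\ref{l:cong-e-ctx-p} then gives $\prgzero \redcbv \inctx\rctxzero{\prgzero''}$ matched by $\prgone \clocbv \inctx\rctxone{\prgone''}$ with $\inctx\rctxzero{\prgzero''} \cloctc \empbisimp \empbisimp \inctx\rctxone{\prgone''}$, which is the up-to-bisimilarity form of clause~(\ref{e:prgtau-p}). Once $\erel$ is verified to be a bisimulation up to bisimilarity for programs, soundness of that technique gives $\erel \subseteq \mathord{\bisimp}$; since $\tmzero \empbisimp \tmone$ implies $\inctx \cctx \tmzero \cloctc \empbisimp \inctx \cctx \tmone$ by definition of $\cloct$ (when these are closed), the triple $(\rval{\cloct\empbisimp}, \inctx\cctx\tmzero, \inctx\cctx\tmone)$ lies in $\erel$, i.e. $\inctx \cctx \tmzero \ierel \bisimp {\rval{\cloct\empbisimp}} \inctx \cctx \tmone$.

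I expect the main obstacle to be the case analysis for clause~(\ref{e:prg-p}): matching each shape produced by Lemma~\ref{l:decompose-prg} with the correct progress lemma, carrying the $\rctx$/$\reset\ctx$/$\redex$ split through faithfully, and checking that re-plugging the outer evaluation context preserves the relation ``$\cloctc\empbisimp$ followed by $\empbisimp$'' — exactly where congruence of $\cloctc\empbisimp$ and Lemma~\ref{l:cong-e-ctx-p} are needed. The environment clause, clause~(\ref{e:tm-p}), and the $\prgzero\empbisimp\prgone$ and $\prgzero=\reset\valzero$ cases are routine by comparison.
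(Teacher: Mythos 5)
Your proposal is correct and follows essentially the same route as the paper's own proof: the same candidate relation, the same reliance on Lemma~\ref{l:decompose-prg} to replace the failing induction, the same dispatch to Lemmas~\ref{l:bisim-val-p}, \ref{l:progress-bisim-p}, \ref{l:bisim-beta-redex-p} and~\ref{l:bisim-stuck-p} according to the shape of the decomposition, and the same use of congruence of $\cloctc\empbisimp$ with Lemma~\ref{l:cong-e-ctx-p} to re-plug the outer evaluation context, finishing with Lemma~\ref{l:subst-val-p} for the environment clause. (Note only that the environment throughout should be $\rval{\cloct\empbisimp}$, the restriction to values, as in your write-up and the paper's proof, rather than the $\rvals$ appearing in the statement.)
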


\begin{proof}
  We prove that 
  $$\erel = \{ (\rval {\cloct \empbisimp}, \tmzero, \tmone) \mmid \tmzero
  \cloctc \empbisimp \tmone\} \cup \{ \rval {\cloct \empbisimp} \}$$
  is a bisimulation up to bisimilarity. Note that by definition of
  $\erel$, we have $\tm \ierel \erel {\rval {\cloct \empbisimp}} \tm'$
  iff $\tm \cloctc \empbisimp \tm'$.

  Let $\tmzero \ierel \erel {\rval {\cloct \empbisimp}}
  \tmone$. Suppose $\tmzero$ and $\tmone$ are not both programs. Then
  for all $\ctxzero \clocc\empbisimp \ctxone$, we have $\reset{\inctx
    \ctxzero \tmzero} \ierel \erel {\rval {\cloct \empbisimp}}
  \reset{\inctx \ctxone \tmone}$, hence clause \ref{e:tm-p} is
  satisfied.

  Suppose $\tmzero$ and $\tmone$ are both programs $\prgzero$,
  $\prgone$. By Lemma~\ref{l:decompose-prg}, we have several
  possibilities. If $\prgzero \empbisimp \prgone$, then the result
  holds trivially. Suppose $\prgzero = \reset{\valzero}$, $\prgone =
  \reset {\tmone'}$ with $\valzero \cloctc\empbisimp \tmone'$. By
  Lemma \ref{l:bisim-val-p}, there exists $\valone$ such that
  $\reset{\tmone'} \clocbv \valone$ and $\valzero \cloctc\empbisimp
  \valone$. We have $\{(\valzero, \valone)\} \cup \rval {\cloct
    \empbisimp} = \rval {\cloct \empbisimp} \in \erel$, hence the
  result holds.

  Suppose $\prgzero = \inctx \rctxzero {\reset{\inctx \ctxzero
      {\tmzero'}}}$ and $\prgone = \inctx \rctxone {\reset{\inctx
      \ctxone {\tmone'}}}$, with $\rctxzero \clocc \empbisimp
  \rctxone$, $\ctxzero \clocc\empbisimp \ctxone$, $\tmzero' \empbisimp
  \tmone'$, and $\tmzero \redcbv \tmzero'$ or $\tmzero$ is stuck. By
  Lemma \ref{l:progress-bisim-p}, there exist $\prgzero'$, $\prgone'$
  such that $\reset{\inctx \ctxzero {\tmzero'}} \redcbv \prgzero'$,
  $\reset{\inctx \ctxone {\tmone'}} \clocbv \prgone'$, and $\prgzero'
  \cloctc \empbisimp \empbisimp \prgone'$. By definition of $\cloctc
  \empbisimp$ and Lemma \ref{l:cong-e-ctx-p}, we have $\inctx
  \rctxzero {\prgzero'} \cloctc \empbisimp \empbisimp \inctx \rctxone
            {\prgone'}$. Moreover $\prgzero \redcbv \inctx \rctxzero
            {\prgzero'}$ and $\prgone \clocbv \inctx \rctxone
            {\prgone'}$, hence the result holds.
  
  The last possibility is $\prgzero = \inctx \rctxzero {\reset{\inctx
      \ctxzero \redexzero}}$, $\prgone = \inctx \rctxone
  {\reset{\inctx \ctxone {\tmone'}}}$, with $\rctxzero \clocc
  \empbisimp \rctxone$, $\ctxzero \clocc\empbisimp \ctxone$,
  $\redexzero \cloctc\empbisimp \tmone'$, and $\redexzero
  \not\empbisimp \tmone'$. We discuss the three possible redexes. If
  $\redexzero = \reset\valzero$, then $\tmone' = \reset{\tmone''}$
  with $\valzero \cloctc \empbisimp \tmone''$. By Lemma
  \ref{l:bisim-val-p}, there exists $\valone$ such that
  $\reset{\tmone''} \clocbv \valone$ and $\valzero \cloctc \empbisimp
  \valone$. Then we have $\prgzero \redcbv \inctx \rctxzero {\reset
    {\inctx \ctxzero \valzero}}$ and $\prgone \clocbv \inctx \rctxone
         {\reset {\inctx \ctxone \valone}}$ with $\inctx \rctxzero
         {\reset {\inctx \ctxzero \valzero}} \cloctc\empbisimp \inctx
         \rctxone {\reset {\inctx \ctxone \valone}}$, hence the result
         holds. If $\redexzero = \app {\valzero^1}{\valzero^2}$, then
         $\tmone' = \app{\tmone^1}{\tmone^2}$ with $\valzero^1
         \cloctc\empbisimp \tmone^1$, and $\valzero^2
         \cloctc\empbisimp \tmone^2$. By Lemma
         \ref{l:bisim-beta-redex-p}, there exist $\prgzero'$,
         $\prgone'$ such that $\inctx \ctxzero \redexzero \redcbv
         \prgzero'$, $\inctx \ctxone {\tmone'} \clocbv \prgone'$, and
         $\prgzero' \cloctc\empbisimp \empbisimp \prgone'$. Therefore
         we have $\prgzero \redcbv \inctx \rctxzero {\prgzero'}$,
         $\prgone \clocbv \inctx \rctxone {\prgone'}$, with $\inctx
         \rctxzero {\prgzero'} \cloctc\empbisimp \empbisimp \inctx
         \rctxone {\prgone'}$ (by Lemma \ref{l:cong-e-ctx-p} and
         definition of $\cloctc\empbisimp$), hence the result
         holds. The last case is $\redexzero = \reset{\inctx \ctx
           {\shift \vark {\tmzero'}}}$; then $\tmone' =
         \reset{\tmone''}$ with $\inctx \ctx {\shift \vark {\tmzero'}}
         \cloctc \empbisimp \tmone''$. By Lemma \ref{l:bisim-stuck-p},
         there exist $\prgzero'$, $\prgone'$ such that $\redexzero
         \redcbv \prgzero'$, $\tmone' \clocbv \prgone'$ and $\prgzero'
         \cloctc\empbisimp \empbisimp \prgone'$. Therefore we have
         $\prgzero \redcbv \inctx \rctxzero {\reset{\inctx
             \ctxzero{\prgzero'}}}$, $\prgone \clocbv \inctx \rctxone
                           {\reset{\inctx \ctxone {\prgone'}}}$, with
                           $\inctx \rctxzero {\reset{\inctx \ctxzero
                               {\prgzero'}}} \cloctc\empbisimp
                           \empbisimp \inctx \rctxone {\reset{\inctx
                               \ctxone {\prgone'}}}$ (by Lemma
                           \ref{l:cong-e-ctx-p} and definition of
                           $\cloctc\empbisimp$), hence the result
                           holds.
  
  Finally, let $\lam \varx \tmzero \cloctc \empbisimp \lam \varx
  \tmone$ and $\valzero \cloctc\empbisimp \valone$. By Lemma
  \ref{l:subst-val-p}, we get $\subst \tmzero \varx \valzero \cloctc
  \empbisimp \empbisimp \subst \tmone \varx \valone$, hence the
  required result holds.
\end{proof}

\begin{lemma}
  \label{l:completeness-p}
  We have $\mathord{\ctxequivep} \subseteq
  \mathord{\empbisimp}$. 
\end{lemma}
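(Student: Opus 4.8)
The plan is to adapt the completeness proof for the relaxed semantics (the Completeness Theorem of Appendix~\ref{a:general}): build a big-step environmental bisimulation for programs out of $\ctxequivep$. Concretely I would show that
$$\erel = \{(\rval\ctxequivep, \tmzero, \tmone) \mmid \tmzero \ctxequivep \tmone\} \cup \{\rval\ctxequivep\}$$
is a big-step environmental bisimulation for programs (the big-step variant of Definition~\ref{d:env-bisim-p} obtained as in Section~\ref{s:general}, whose bisimulations are included in $\empbisimp$). Since $\emptyset \subseteq \rval\ctxequivep$, weakening (Lemma~\ref{l:weakening-p}) then turns $(\rval\ctxequivep, \tmzero, \tmone) \in \mathord{\empbisimp}$ into $\tmzero \empbisimp \tmone$, yielding $\mathord{\ctxequivep} \subseteq \mathord{\empbisimp}$.

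First I would record a few facts about $\ctxequivep$. It is an equivalence relation, and $\mathord{\clocbv} \subseteq \mathord{\ctxequivep}$ (by Lemma~\ref{l:redcbv-in-empbisim-p}, Corollary~\ref{c:soundness-p-main}, and the inclusion $\mathord{\ctxequivp} \subseteq \mathord{\ctxequivep}$). It is closed under closed evaluation contexts: if $\tmzero \ctxequivep \tmone$ and $\rctx'$ is closed, then for every closed $\rctx$ the term $\reset{\inctx\rctx{\inctx{\rctx'}\tmzero}}$ equals $\reset{\inctx{\rctx''}\tmzero}$ for the evaluation context $\rctx'' = \inctx\rctx{\rctx'}$ (evaluation contexts compose), so testing $\tmzero\ctxequivep\tmone$ at $\rctx''$ gives $\inctx{\rctx'}\tmzero \ctxequivep \inctx{\rctx'}\tmone$. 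The key fact is that on closed values $\ctxequivep$ and $\ctxequivp$ coincide, so $\rval\ctxequivep = \rval\ctxequivp$ is the restriction of the congruence $\ctxequivp$; only $\rval\ctxequivep \subseteq \mathord{\ctxequivp}$ needs proof, and it follows by $\beta$-expansion: given closed values $\valzero \ctxequivep \valone$ and a context $\cctx$ with $\reset{\inctx\cctx\valzero}$, $\reset{\inctx\cctx\valone}$ closed, the context $\vctx {\lamp h {\inctx \cctx h}} {\mtctx}$ is a closed evaluation context, and $\reset{\inctx {\vctx {\lamp h {\inctx \cctx h}} {\mtctx}} {\valzero}} \redcbv \reset{\inctx\cctx\valzero}$ is the unique reduction (the hole already carries a value), and likewise for $\valone$; hence $\reset{\inctx\cctx\valzero}\evalcbv\val$ iff $\reset{\inctx {\vctx {\lamp h {\inctx \cctx h}} {\mtctx}} {\valzero}}\evalcbv\val$, and testing $\valzero\ctxequivep\valone$ at this evaluation context yields $\valzero\ctxequivp\valone$.

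With these facts I would verify the clauses of the big-step bisimulation for $\erel$, using $\rval\ctxequivep = \rval\ctxequivp$ freely. For clause~(\ref{e:tm-p}): if $\tmzero \ctxequivep \tmone$ are not both programs and $\ctxzero \clocc{\rval\ctxequivep} \ctxone$, an induction on the derivation of $\ctxzero \clocc{\rval\ctxequivp} \ctxone$ using the congruence of $\ctxequivp$ gives $\inctx\ctxzero\tmzero \ctxequivp \inctx\ctxone\tmzero$, hence $\reset{\inctx\ctxzero\tmzero} \ctxequivp \reset{\inctx\ctxone\tmzero}$ and a fortiori $\reset{\inctx\ctxzero\tmzero} \ctxequivep \reset{\inctx\ctxone\tmzero}$; since $t \mapsto \reset{\inctx\ctxone t}$ is a closed evaluation context, closure of $\ctxequivep$ under closed evaluation contexts gives $\reset{\inctx\ctxone\tmzero} \ctxequivep \reset{\inctx\ctxone\tmone}$, and transitivity concludes. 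For clause~(\ref{e:prgval-p}) in big-step form, write $\prgzero = \reset\tmzero'$ and $\prgone = \reset\tmone'$; if $\prgzero \clocbv \valzero$ then $\reset{\prgzero}\evalcbv\valzero$, so $\prgzero \ctxequivep \prgone$ at $\rctx = \mtctx$ gives $\reset{\prgone}\evalcbv\valone$, and since a program never reduces to a stuck term we obtain $\prgone \clocbv \valone$; from $\valzero \ctxequivep \prgzero \ctxequivep \prgone \ctxequivep \valone$ (using $\mathord{\clocbv}\subseteq\mathord{\ctxequivep}$) and transitivity we get $\valzero \ctxequivep \valone$, i.e.\ $\{(\valzero,\valone)\}\cup\rval\ctxequivep = \rval\ctxequivep \in \erel$; the converse is symmetric. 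For clause~(\ref{e:env-p}): from $\lamp\varx\tmzero \mathrel{\rval\ctxequivep} \lamp\varx\tmone$ and $\valzero \cloctc{\rval\ctxequivep} \valone$ we get $\lamp\varx\tmzero \ctxequivp \lamp\varx\tmone$ and $\valzero \ctxequivp \valone$ (the latter since the term-closure of the reflexive congruence $\ctxequivp$ stays in $\ctxequivp$), hence $\app{\lamp\varx\tmzero}\valzero \ctxequivp \app{\lamp\varx\tmone}\valone$, and $\beta$-reducing both sides (using $\mathord{\redcbv}\subseteq\mathord{\ctxequivp}$ and transitivity) gives $\subst\tmzero\varx\valzero \ctxequivp \subst\tmone\varx\valone$.

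I expect the main obstacle to be the coincidence of $\ctxequivep$ and $\ctxequivp$ on closed values: this is exactly where one must show that the observations available to $\ctxequivep$ (evaluation contexts under a top-level reset) are as discriminating as arbitrary contexts under a reset, and it is also what delivers, a posteriori, the context lemma for $\ctxequivp$. The $\beta$-expansion argument works precisely because the hole is filled by a \emph{value}, so floating it into the argument position of a freshly abstracted context costs only a single, forced $\beta$-step under the reset; it would fail for non-values, which is why the bisimulation clauses only ever move values in and out of the environment.
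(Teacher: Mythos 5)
Your candidate relation and overall strategy are exactly the paper's: show that $\erel = \{(\rval\ctxequivep, \tmzero, \tmone) \mmid \tmzero \ctxequivep \tmone\} \cup \{\rval\ctxequivep\}$ is a big-step environmental bisimulation for programs, verifying the three clauses in the same way (clause~(\ref{e:tm-p}) by congruence, clause~(\ref{e:prgval-p}) by testing at $\rctx=\mtctx$ and using $\mathord{\clocbv}\subseteq\mathord{\ctxequivep}$ via soundness, clause~(\ref{e:env-p}) by $\beta$-expansion of the application $\app{\lamp\varx\tmzero}\valzero$). The one place where you go beyond the paper is instructive: the paper's proof twice appeals to ``congruence of $\ctxequivep$'' to handle pairs $\ctxzero \clocc{\rval\ctxequivep} \ctxone$ and $\valzero \cloctc{\rval\ctxequivep} \valone$, but these closures place $\ctxequivep$-related values in arbitrary positions (including under binders), whereas $\ctxequivep$ is by definition only closed under evaluation contexts; the missing step is precisely your observation that $\rval\ctxequivep = \rval\ctxequivp$, proved by the $\beta$-expansion trick with the closed evaluation context $\vctx{\lamp h{\inctx\cctx h}}{\mtctx}$ (sound because the hole is filled by a closed value, so the single $\beta$-step is forced and capture-free). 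This is exactly the content of the context lemma for $\ctxequivp$ that the main text claims falls out of this proof, so your version is a more self-contained rendering of the same argument rather than a different one.
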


\begin{proof}
  We prove that $\erel = \{(\rval \ctxequivep, \tmzero, \tmone) \mmid
  \tmzero \ctxequivep \tmone\} \cup \{ \rval \ctxequiv \}$ is a
  big-step environmental bisimulation for programs.

  Let $\tmzero \ierel \erel {\rval \ctxequivep} \tmone$ such that $\tmzero$ and
  $\tmone$ are not both programs. Because $\ctxequivep$ is a congruence
  w.r.t. evaluation contexts, we have $\reset{\inctx \ctxzero \tmzero}
  \ctxequivep \reset{\inctx \ctxone \tmone}$ for all $\ctxzero \clocc{\rval
    \ctxequivep} \ctxone$, i.e., $\reset{\inctx \ctxzero \tmzero} \ierel \erel
  {\rval \ctxequivep} \reset{\inctx \ctxone \tmone}$ as wished.

  Let $\prgzero \ierel \erel {\rval \ctxequivep} \prgone$ such that $\prgzero
  \clocbv \valzero$. We have $\reset{\prgzero} \clocbv \valzero$, so by
  definition of $\ctxequivep$, there exists $\valone$ such that $\reset{\prgone}
  \clocbv \valone$, which implies $\prgone \clocbv \valone$. By
  Lemma~\ref{l:redcbv-in-empbisim-p}, we have $\prgzero \empbisimp \valzero$ and
  $\prgone \empbisimp \valone$, which implies $\prgzero \ctxequivep \valzero$
  and $\prgone \ctxequivep \valone$ by soundness of $\empbisimp$. Transitivity
  of $\ctxequivep$ gives $\valzero \ctxequivep \valone$, hence we have $\rval
  \ctxequivep \cup \{(\valzero, \valone)\} = \mathord{\rval \ctxequivep} \in
  \erel$, as wished.

  Let $\lam \varx \tmzero \ctxequivep \lam \varx \tmone$ and $\valzero
  \cloctc{\rval \ctxequivep} \valone$. By congruence of $\ctxequivep$, we
  have $\valzero \ctxequivep \valone$, and also $\app {\lamp \varx
    \tmzero} \valzero \ctxequivep \app {\lamp \varx \tmone}
  \valone$. Because $\app {\lamp \varx \tmzero} \redcbv \subst \tmzero
  \varx \valzero$, $\app {\lamp \varx \tmone} \valone \redcbv \subst
  \tmone \varx \valone$ and $\redcbv \subseteq \mathord{\empbisimp}
  \subseteq \mathord{\ctxequivep}$, we have $\subst \tmzero \varx
  \valzero \ctxequivep \subst \tmone \varx \valone$, i.e., $\subst
  \tmzero \varx \valzero \ierel \erel {\rval \ctxequivep} \subst \tmone
  \varx \valone$, as wished.

\end{proof} 

\section{Bisimulation proofs}

\subsection{Proof of the $\shift \vark {\app \vark \tm} \eqKH \tm$ Axiom}
\label{a:skkt-open}

Let
\begin{multline*}
 \env_1 = \{ (\lam {\vect \varx}{\tm \subs^0_1 \ldots \subs^0_n}, \lam
{\vect \varx}{\shift \vark {\app \vark {\tm \subs^1_1 \ldots \subs^1_n}}}) \mmid \\
 \fv \tm \subseteq \vect \varx \cup \{\varx_1 \ldots \varx_n \}, \vark \notin \fv
\tm \}
\end{multline*}
and
$$\env_2 = \{(\val \subs^0_1 \ldots \subs^0_m, \val \subs^1_1 \ldots
\val \subs^1_m) \mmid \fv \val \subseteq \{\varx_1 \ldots \varx_m \} \},$$
where $\subs^0_i$ and $\subs^1_i$ are of the form $\subst \cdot {\varx_i}{\lam
  {\vect y}{\tm_i}}$ and $\subst \cdot {\varx_i}{\lam {\vect y}{\shift \vark
    {\app \vark {\tm_i}}}}$ respectively for some $\tm_i$ such that $\vark
\notin \fv{\tm_i}$. Let $\env = \env_1 \cup \env_2$. We prove that
\begin{multline*}
\erel = \{ ( \env, \val_1, \val_2) \mmid \val_1 \mathrel{\env_1} \val_2 \} \\
\cup \{ (\env, \tm \subs^0_1 \ldots \subs^0_m, \shift \vark {\app \vark \tm}
\subs^1_1 \ldots \subs^1_m \mmid \fv \tm \subseteq \{\varx_1 \ldots \varx_n\},
\vark \notin \fv \tm \} \\
\cup \{ ( \env, \tm \subs^0_1 \ldots \subs^0_n, \tm \subs^1_1 \ldots \subs^1_n),
  \mmid \fv \tm \subseteq \{\varx_1 \ldots \varx_n\} \}
\end{multline*}
is a bisimulation for programs up to bisimilarity. 

Let $\val_1 \mathrel{\env_1} \val_2$, and let $\ctx_1 \clocc\env \ctx_2$. Then
we have 
$\val_1 = \lam {\vect \varx}{\tm \subs^0_1 \ldots \subs^0_n}$, $\val_2 = \lam
{\vect \varx}{\shift \vark {\app \vark {\tm \subs^1_1 \ldots \subs^1_n}}}$, and
$\ctx_1$ and $\ctx_2$ can be rewritten $\ctx_1 = \ctx \subs^0_{n+1} \ldots
\subs^0_m$, and $\ctx_2 = \ctx \subs^1_{n+1} \ldots \subs^1_m$ for some
$\ctx$. We can rewrite $\reset{\inctx {\ctx_1}{\val_1}}$ and $\reset{\inctx
  {\ctx_2}{\val_2}}$ into
\begin{align*}
\reset{\inctx {\ctx_1}{\val_1}} &=
\subst{\reset{\inctx \ctx y}} y {\lam {\vect \varx} \tm}\subs^0_1 \ldots
\subs^0_m \\
\reset{\inctx {\ctx_2}{\val_2}} &= \subst{\reset{\inctx \ctx y}}
y {\lam {\vect \varx}{\shift \vark {\app \vark \tm}}}\subs^1_1 \ldots \subs^1_m
\end{align*}
for some fresh $y$. Hence, we have $\reset {\inctx {\ctx_1}{\val_1}} \ierel
\erel \env \reset {\inctx {\ctx_2}{\val_2}}$, as wished.

Let $\tmzero = \tm \subs^0_1 \ldots \subs^0_n \ierel \erel \env \shift \vark
{\app \vark \tm} \subs^1_1 \ldots \subs^1_n = \tmone$, and let $\ctx_1
\clocc\env \ctx_2$. Then $\ctx_1$ and $\ctx_2$ can be rewritten $\ctx_1 = \ctx
\subs^0_{n+1} \ldots \subs^0_m$, and $\ctx_2 = \ctx \subs^1_{n+1} \ldots
\subs^1_m$ for some $\ctx$. Therefore
\begin{align*}
\reset{\inctx {\ctx_1}{\tmzero}} &=
\reset{\inctx \ctx \tm}\subs^0_1 \ldots
\subs^0_m \\
\reset{\inctx {\ctx_2}{\tmone}} &= \reset{\inctx \ctx {\shift \vark {\app \vark
      \tm}}} \subs^1_1 \ldots \subs^1_m
\end{align*}
but we have $\reset{\inctx \ctx {\shift \vark {\app \vark \tm}}} \subs^1_1
\ldots \subs^1_m \empbisimp \reset{\inctx \ctx \tm} \subs^1_1 \ldots \subs^1_m$
by the same reasoning as in Section~\ref{s:examples-p}, therefore we have
$\reset{\inctx {\ctx_1}{\tmzero}} \ierel \erel \env \empbisimp \reset{\inctx
  {\ctx_2}{\tmone}}$ as required. 

Let $\tmzero = \tm \subs^0_1 \ldots \subs^0_n \ierel \erel \env \tm
\subs^1_1 \ldots \subs^1_n = \tmone$. If $\tm$ is not a program, then
for all $\ctx_1 \clocc\env \ctx_2$, we can show that $\inctx {\ctx_1}
\tmzero \ierel \erel \env \inctx {\ctx_2} \tmone$ by rewriting
$\ctx_1$ and $\ctx_2$ as in the previous cases. If $\tm$ is a program
$\prg$, then we distinguish several cases. If $\prg \redcbv \prg'$,
then we conclude easily. If $\prg = \reset\val$, then $\tmzero$ and
$\tmone$ reduce to values related by $\env_2$. If $\prg =
\reset{\varx_j}$, then $\tmzero$ and $\tmone$ reduce to values related
by $\env_1$. If $\prg = \reset{\inctx \rctx {\app {\varx_j} \val}}$,
then $\tmzero = \reset{\inctx \rctx {\app {\lamp {\vect y}{\tm_j}}
    \val}}\subs^0_1 \ldots \subs^0_n$ and $\tmone = \reset{\inctx
  \rctx {\app {\lamp {\vect y}{\shift \vark {\app \vark {\tm_j}}}}
    \val}}\subs^1_1 \ldots \subs^1_n$. If $\vect y = y_0 \cup
\vect{y'}$ with $\vect{y'}$ not empty, then $\tmzero \redcbv
\reset{\inctx \rctx {\lam {\vect {y'}}{\subst {\tm_j}{y_0} \val}}
}\subs^0_1 \ldots \subs^0_n$ and $\tmone \redcbv \reset{\inctx \rctx
  {\lam {\vect {y'}}{\shift \vark {\app \vark {\subst {\tm_j}{y_0}
          \val}}}} }\subs^1_1 \ldots \subs^1_n$. The two resulting
terms can be rewritten into $\prg' \subs^0_1 \ldots \subs^0_j
\subs^0_{j'} \ldots \subs^0_n$ and $\prg' \subs^1_1 \ldots \subs^1_j
\subs^1_{j'} \ldots \subs^1_n$, where we have $\prg' = \reset {\inctx
  \rctx {\varx_{j'}}}$, $\subs^0_{j'} = \subst \cdot
     {\varx_{j'}}{\lam{\vect{y'}}{\subst {\tm_j}{y_0} \val}}$,
     $\subs^1_{j'} = \subst \cdot {\varx_{j'}}{\lam {\vect
         {y'}}{\shift \vark {\app \vark {\subst {\tm_j}{y_0}
             \val}}}}$, and $\varx_{j'}$ is fresh. If $\vect y = y$,
     then $\tmzero \redcbv \reset{\inctx \rctx {\subst {\tm_j}{y_0}
         \val}}\subs^0_1 \ldots \subs^0_n = \prgzero$ and $\tmone
     \redcbv \reset{\inctx \rctx {\shift \vark {\app \vark {\subst
             {\tm_j}{y_0} \val}}}}\subs^1_1 \ldots \subs^1_n =
     \prgone$. By the same reasoning as in Section~\ref{s:examples-p}
     (and with some case analysis on $\rctx$), we have $\reset{\inctx
       \rctx {\shift \vark {\app \vark {\subst {\tm_j}{y_0}
             \val}}}}\subs^1_1 \ldots \subs^1_n
     \empbisimp\reset{\inctx \rctx {\subst {\tm_j}{y_0}
         \val}}\subs^1_1 \ldots \subs^1_n$, therefore we have
     $\prgzero \ierel \erel \env \empbisimp \prgone$, as wished.

Let $\lam {\vect \varx}{\tm \subs^0_1 \ldots \subs^0_n} \mathrel{\env_1} \lam
{\vect \varx}{\shift \vark {\app \vark {\tm \subs^1_1 \ldots \subs^1_n}}}$ and
$\valzero \clocc\env \valone$. Then $\valzero = \val \subs^0_{n+1} \ldots
\subs^0_m$ and $\valone = \val \subs^1_{n+1} \ldots \subs^1_m$ for some
$\val$. If $\vect \varx = y \cup \vect{\varx'}$ with $\vect{\varx'}$ not empty,
then we have $\lam {\vect{\varx'}}{\subst \tm y \val \subs^0_1 \ldots \subs^0_m}
\ierel \erel \env \lam {\vect {\varx'}}{\shift \vark {\app \vark {\subst \tm y
      \val \subs^1_1 \ldots \subs^1_m}}}$. If $\vect \varx = y$, then we also
have $\subst \tm y \val \subs^0_1 \ldots \subs^0_m \ierel \erel \env \shift
\vark {\app \vark {\subst \tm y \val \subs^1_1 \ldots \subs^1_m}}$, therefore
the result holds in both cases.

Let $\lam \varx \tm \subs^0_1 \ldots \subs^0_n \mathrel{\env_2} \lam \varx \tm \subs^1_1 \ldots \val
\subs^1_n$ and $\valzero \clocc\env \valone$. Then $\valzero = \val \subs^0_{n+1} \ldots
\subs^0_m$ and $\valone = \val \subs^1_{n+1} \ldots \subs^1_m$ for some
$\val$. Then $\subst \tm \varx \val \subs^0_1 \ldots \subs^0_m \ierel \erel \env
\subst \tm \varx \val \subs^1_1 \ldots \subs^1_m$ holds.

\subsection{Proof of the S-tail Axiom}
\label{a:s-tail}

Let $\tmzero$ and $\tmone$ such that $\vark \notin \fv\tmone$. We want
to show that $\app{\lamp \varx{\shift \vark \tmzero}} \tmone
\empbisimp \shift\vark {\appp {\lamp \varx \tmzero} \tmone}$. To this
end, we need to plug both terms in some context $\reset{\ctx}$, and
compare $\reset{\inctx \ctx {\app{\lamp \varx{\shift \vark \tmzero}}
    \tmone}}$ with $\reset{\inctx \ctx {\shift\vark {\appp {\lamp
        \varx \tmzero} \tmone}}}$. The second term reduces to
$\reset{\app {\lamp \varx {\subst \tmzero \vark {\lam y {\reset
          {\inctx \ctx y}}}}} \tmone}$, so we in fact prove the
following result.

\begin{lemma}
  We have $\reset{\inctx \ctx {\app{\lamp \varx{\shift \vark \tmzero}} \tmone}}
  \empbisim \reset{\app {\lamp \varx {\subst \tmzero \vark {\lam y {\reset
            {\inctx \ctx y}}}}} \tmone}$.
\end{lemma}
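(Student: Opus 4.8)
The plan is to produce an environmental bisimulation relating the two terms (which I take closed — the general case follows from the open extension, exactly as in Section~\ref{s:examples-p}). Write $w = \lam y{\reset{\inctx\ctx y}}$, $V_0 = \lamp\varx{\shift\vark\tmzero}$ and $V_1 = \lamp\varx{\subst\tmzero\vark w}$, so the two terms are $\reset{\inctx\ctx{\app{V_0}\tmone}}$ and $\reset{\app{V_1}\tmone}$; both are programs, hence each reduces by evaluating $\tmone$. The guiding observation is that $V_0$ and $V_1$ are interchangeable once their argument has become a value: for any closed value $\val$, both $\reset{\inctx\ctx{\app{V_0}\val}}$ and $\reset{\app{V_1}\val}$ reduce to $\reset{\subst{\subst\tmzero\varx\val}\vark w}$, using the substitution-commutation lemma together with $\vark\notin\fv\val$ and $\varx\notin\fv w$ (both valid since $\ctx$ and $\val$ are closed). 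The only discrepancy arises when $\tmone$ performs a control effect before reaching a value: then $\app{V_0}{\mtctx}$, respectively $\app{V_1}{\mtctx}$, is swallowed by the inner $\shift$, producing the two \emph{different} continuation values $\kappa_0 = \lam z{\reset{\inctx\ctx{\app{V_0}{\inctx\ctxzero z}}}}$ and $\kappa_1 = \lam z{\reset{\app{V_1}{\inctx\ctxzero z}}}$, where $\ctxzero$ is the pure context sitting inside $\tmone$; and when these are later resumed on an argument, they reduce back to an instance of the original pair. So the candidate relation must carry these continuation pairs in its environment.

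Concretely, let $\env$ be the least relation on closed values such that $\bigl(\lam z{\reset{\inctx\ctx{\app{V_0}{\inctx{\ctxzero} z}}}},\ \lam z{\reset{\app{V_1}{\inctx{\ctxone} z}}}\bigr) \in \env$ for every pair of closed pure contexts with $\ctxzero \clocc\env \ctxone$ (this least fixpoint exists because $\env \mapsto \clocc\env$ is monotone). I would then take $\erel$ to consist of the single environment $\rvals{\cloct\env}$ together with all triples $(\rvals{\cloct\env},\reset{\inctx\ctx{\app{V_0}{\tmzero'}}},\reset{\app{V_1}{\tmone'}})$ and all triples $(\rvals{\cloct\env},\tmzero',\tmone')$ with $\tmzero'\cloctc\env\tmone'$, and show $\erel$ is a bisimulation up to reduction (equivalently, up to bisimilarity — both are sound in $\lamshift$, as noted in Section~\ref{s:up-to-context}; the up-to layer is needed only to collapse the few administrative $\beta$/$\shift$/$\reset$ steps). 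The verification follows the pattern of Lemma~\ref{l:erel}: the pairs $(\rvals{\cloct\env},\tmzero',\tmone')$ are treated by induction on $\tmzero'\cloctc\env\tmone'$, keeping the two components synchronized (value with value, stuck with stuck) via Lemma~\ref{l:prop-cloct-env} and using Lemma~\ref{l:cong-is-subst} at the substitution steps; the main pairs split on whether $\tmzero'$ is reducible (lock-step, staying a main pair), a value $\val_0$ (then $\tmone'$ is a value $\val_1$, and both sides reduce, by the substitution-commutation lemma and Lemma~\ref{l:cong-is-subst}, to the $\cloctc\env$-related terms $\reset{\subst{\subst\tmzero\varx{\val_0}}\vark w}$ and $\reset{\subst{\subst\tmzero\varx{\val_1}}\vark w}$), or stuck $\inctx\ctxzero{\shift{\vark'}{t_0}}$ (then $\tmone' = \inctx\ctxone{\shift{\vark'}{t_0'}}$ with $\ctxzero\clocc\env\ctxone$ and $t_0\cloctc\env t_0'$, and both sides fire the captured $\shift$, landing in $\reset{\subst{t_0}{\vark'}{\kappa_0}}$ and $\reset{\subst{t_0'}{\vark'}{\kappa_1}}$ with $(\kappa_0,\kappa_1)\in\env$ by construction, hence again a $\cloctc\env$-pair); and the environment clause for $\rvals{\cloct\env}$ is discharged by observing that applying such a pair $(\kappa_0,\kappa_1)$ to $\cloctc\env$-related arguments $a_0,a_1$ produces $\reset{\inctx\ctx{\app{V_0}{\inctx{\ctxzero}{a_0}}}}$ and $\reset{\app{V_1}{\inctx{\ctxone}{a_1}}}$, i.e.\ another main pair. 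Finally $(\rvals{\cloct\env},\reset{\inctx\ctx{\app{V_0}\tmone}},\reset{\app{V_1}\tmone}) \in \erel$ (take $\tmzero'=\tmone'=\tmone$), so $\reset{\inctx\ctx{\app{V_0}\tmone}} \ierel\bisim{\rvals{\cloct\env}} \reset{\app{V_1}\tmone}$, and weakening (Lemma~\ref{l:weakening}) down to the empty environment yields the claim.

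The main obstacle is precisely the self-referential design of $\env$ and $\erel$: the pure contexts $\ctxzero$ that appear when $\tmone$ gets stuck are themselves generated from $\env$, so $\env$'s defining clause has to quantify over $\clocc\env$ (hence the least-fixpoint construction), while the environment clause of the bisimulation drives the continuation pairs back into the set of main triples — the two definitions can only be designed together. This is the ``complex candidate relation'' the paper warns about, forced by the absence of a suitable up-to-context technique here. A secondary source of care is the book-keeping in the stuck case — decomposing a $\cloctc\env$-related pair of stuck terms as $\inctx{\ctxzero}{\shift{\vark'}{t_0}} \cloctc\env \inctx{\ctxone}{\shift{\vark'}{t_0'}}$ (after $\alpha$-renaming the captured variable), in the possible presence of resets already occurring inside $\tmone$, inside $\ctxzero$, or inside the values involved — together with the collapsing of the multi-step administrative reductions, which is what makes the up-to-reduction (or up-to-bisimilarity, via Lemma~\ref{l:redcbv-in-empbisim}) layer indispensable.
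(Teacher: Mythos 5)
You have correctly isolated the two key phenomena, and they are the same ones the paper's proof is built on: once the argument reaches a value $\val$ both programs collapse to $\reset{\subst{\subst \tmzero \varx \val}{\vark}{\lam y {\reset{\inctx \ctx y}}}}$, and the only genuine discrepancy is the pair of distinct captured continuations $\kappa_0=\lam z{\reset{\inctx \ctx {\app{\lamp \varx {\shift \vark \tmzero}}{\inctx \ctxzero z}}}}$ and $\kappa_1=\lam z{\reset{\app{\lamp \varx {\subst \tmzero \vark {\lam y {\reset{\inctx \ctx y}}}}}{\inctx \ctxone z}}}$, which forces the environment and the set of main triples to be defined together as a fixed point. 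The gap is that the candidate relation you build from this observation is not closed under the bisimulation game. Since $(\kappa_0,\kappa_1)\in\env$, your third family contains programs such as $\reset{\app{\app{\kappa_0}{a_0}}{s}} \cloctc\env \reset{\app{\app{\kappa_1}{a_1}}{s}}$ (these arise as soon as the body of the captured shift uses its continuation variable in a non-tail position, e.g.\ $t_0=\app{\app{\vark'}{a}}{b}$). One $\beta$-step turns the left-hand side into $\reset{\app{\reset{\inctx \ctx {\app{\lamp \varx {\shift \vark \tmzero}}{\inctx \ctxzero {a_0}}}}}{s}}$: the skeleton pair now sits \emph{under} a nontrivial evaluation context. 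This successor is in neither of your families --- your second family pins the skeleton at the very top of the program, and the two skeletons $\inctx \ctx {\app{\lamp \varx {\shift \vark \tmzero}}{\mtctx}}$ versus $\app{\lamp \varx {\subst \tmzero \vark {\lam y {\reset{\inctx \ctx y}}}}}{\mtctx}$ are two genuinely different contexts, not congruent over $\env$, so the successor pair is not in $\cloctc\env$ either. The same phenomenon then iterates: such an application can occur inside the hole of another skeleton, so the skeletons nest to arbitrary depth.

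Neither of your up-to layers repairs this: up to reduction does not help because the structural mismatch persists through every subsequent step (possibly forever, if the plugged term diverges), and up to bisimilarity would require already knowing an instance of the very equivalence being proved. This is exactly why the paper's relation is parameterized the way it is: the families $\tms^{\tm,\vect\rctx}_i$ and $\tmu^{\tm,\vect\rctx}_i$ close the relation under placing the two skeletons inside a stack $\vect\rctx$ of surrounding evaluation contexts and under arbitrary nesting of the skeletons themselves, while the substitution families $\subs^{\vect\ctx}_j$ and $\delta^{\vect\ctx}_j$ record the growing sequence of captured contexts so that related terms can always be written as a single term under two parallel substitutions. Your relation needs the analogous closure before the verification you sketch can go through; with it, you essentially recover the paper's proof.
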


\begin{proof}
  To make the proof easier to follow, we introduce some notations. We write
  $\vect \cdot$ for a sequence of entities (e.g., $\vect \ctx$ for a sequence of
  contexts). We write $\bm\ctx$ for $\inctx \ctx {\app{\lamp \varx{\shift \vark
        \tmzero}} \mtctx}$ and $\bm{\ctx'}$ for $\app {\lamp \varx {\subst
      \tmzero \vark {\lam y {\reset {\inctx \ctx y}}}}} \mtctx$, so the problem
  becomes relating $\reset {\inctx {\bm \ctx} \tmone}$ and $\reset {\inctx {\bm
      {\ctx'}} \tmone}$.
   
  Next, given a sequence $\ctx_0 \ldots \ctx_i$ of contexts such that
  $\fv{\ctx_j} \subseteq \{\vark_0 \ldots \vark_{j-1} \}$ for all $0 \leq j \leq
  i$, we inductively define families of substitutions $\subs^{\vect \ctx}_0
  \ldots \subs^{\vect \ctx}_i$, $\delta^{\vect \ctx}_0 \ldots \delta^{\vect
    \ctx}_i$ as follows:

  $
  \begin{array}{rcl}
    \subs^{\vect \ctx}_0 & = & \subst \cdot {\vark_0}{\lam y {\reset {\inctx {\bm
          \ctx}{\inctx \ctxzero y}}}}\\
    \delta^{\vect \ctx}_0 & = & \subst \cdot {\vark_0}{\lam y {\reset {\inctx {\bm
          {\ctx'}}{\inctx \ctxzero y}}}}\\
    \subs^{\vect \ctx}_j & = & \subst \cdot {\vark_j}{\lam y {\reset {\inctx {\bm
          \ctx}{\inctx {\ctx_j\subs^{\vect \ctx}_0 \ldots \subs^{\vect
    \ctx}_{j-1}} y}}}} \mbox{ if } j > 0 \\
    \delta^{\vect \ctx}_j & = & \subst \cdot {\vark_j}{\lam y {\reset {\inctx {\bm
          {\ctx'}}{\inctx {\ctx_j\delta^{\vect \ctx}_0 \ldots \delta^{\vect
    \ctx}_{j-1}} y}}}} \mbox{ if } j > 0
  \end{array}
  $

  Finally, given a term $\tm$ and a sequence of contexts $\rctx_0 \ldots
  \rctx_i$, we inductively define families of terms $\tms^{\tm, \vect \rctx}_0
  \ldots \tms^{\tm, \vect \rctx}_i$, $\tmu^{\tm, \vect \rctx}_0 \ldots
  \tmu^{\tm, \vect \rctx}_i$ as follows:

  $
  \begin{array}{rclrcl}
    \tms^{\tm, \vect \rctx}_0 & = & \inctx {\rctx_0}{\reset {\inctx{\bm \ctx}
        \tm}} 
    & \qquad \tms^{\tm, \vect \rctx}_j & = & \inctx {\rctx_j}{\reset {\inctx{\bm \ctx}
        {\tms^{\tm, \vect \rctx}_{j-1}}}} \mbox{ if } j > 0 \\
    \tmu^{\tm, \vect \rctx}_0 & = & \inctx {\rctx_0}{\reset {\inctx{\bm {\ctx'}}
        \tm}} &
    \qquad \tmu^{\tm, \vect \rctx}_0 & = & \inctx {\rctx_j}{\reset {\inctx{\bm {\ctx'}}
        {\tmu^{\tm, \vect \rctx}_{j-1}}}} \mbox{ if } j > 0
  \end{array}
  $

  Note that the term we want to relate are $\tms^{\tmone, \mtctx}_0$ and
  $\tmu^{\tmone, \mtctx}_0$. We let $\env$ ranges over environments of the form
  $\{ (\val \subs^{\vect \ctx}_0 \ldots \subs^{\vect \ctx}_i, \val \delta^{\vect
    \ctx}_0 \ldots \delta^{\vect \ctx}_i) \mmid \fv \val \subseteq \{ \vark_0 \ldots
  \vark_i \} \} \cup \{ (\vark_j\subs^{\vect \ctx}_j, \vark_j\delta^{\vect
    \ctx}_j) \}$. We prove that the relation 
  \begin{multline*}
    \erel = \{(\env, \tm \subs^{\vect \ctx}_0 \ldots \subs^{\vect
      \ctx}_i, \tm \delta^{\vect \ctx}_0 \ldots \delta^{\vect \ctx}_i) \mmid
    \fv \tm \subseteq \{ \vark_0 \ldots \vark_i \} \} \cup \\ 
    \{(\env, \reset {\tms^{\tm, \vect \rctx}_i}\subs^{\vect {\ctx'}}_0 \ldots \subs^{\vect
      {\ctx'}}_j, \reset {\tmu^{\tm, \vect \rctx}_i}\delta^{\vect {\ctx'}}_0
    \ldots \delta^{\vect {\ctx'}}_j) \mmid
    \fv \tm \cup \fv{\vect \rctx} \subseteq \{ \vark_0 \ldots \vark_j \} \} \cup
    \{\env \}
  \end{multline*}
  is a bisimulation for programs. Let $\tm \subs^{\vect \ctx}_0 \ldots
  \subs^{\vect \ctx}_i \ierel \erel \env \tm \delta^{\vect \ctx}_0 \ldots
  \delta^{\vect \ctx}_i$ where $\tm$ is not a program. Let $\ctxzero \clocc \env
  \ctxone$; by definition of $\env$, we have $\ctxzero = \ctx' \subs^{\vect
    {\ctx'}}_0 \ldots \subs^{\vect {\ctx''}}_j$ and $\ctxone = \ctx'
  \delta^{\vect \ctx}_0 \ldots \delta^{\vect {\ctx''}}_j$ for some $\ctx'$,
  $\vect{\ctx''}$. With some renumbering and rewriting, we have
  $\reset{\inctx \ctxzero {\tm \subs^{\vect \ctx}_0 \ldots \subs^{\vect
        \ctx}_i}} = \reset{\inctx {\ctx'} \tm}\subs^{\vect \ctx,\vect
    {\ctx''}}_0 \ldots \subs^{\vect \ctx,\vect {\ctx''}}_{i+j+1}$ and
  $\reset{\inctx \ctxone {\tm \subs^{\vect \ctx}_0 \ldots \subs^{\vect \ctx}_i}}
  = \reset{\inctx {\ctx'} \tm}\delta^{\vect \ctx,\vect {\ctx''}}_0 \ldots
  \delta^{\vect \ctx,\vect {\ctx''}}_{i+j+1}$: the
  two terms are in $\erel$, as wished.\\

  Let $\reset \tm \subs^{\vect \ctx}_0 \ldots \subs^{\vect \ctx}_i \ierel \erel
  \env \reset \tm \delta^{\vect \ctx}_0 \ldots \delta^{\vect \ctx}_i$. We have
  three cases for $\tm$.

  If $\reset{\tm} \redcbv \reset{\tm'}$, we still have $\reset {\tm'}
  \subs^{\vect \ctx}_0 \ldots \subs^{\vect \ctx}_i \ierel \erel \env \reset
  {\tm'} \delta^{\vect \ctx}_0 \ldots \delta^{\vect \ctx}_i$. If $\reset \tm
  \redcbv \val$ or $\tm = \vark_j$, then both terms reduce to values that are in
  $\env$, by definition of $\env$.

  If $\tm = \inctx \rctx {\app {\vark_j} \val}$, then 
  
  $
  \begin{array}{rcl}
    \reset \tm \subs^{\vect \ctx}_0 \ldots
    \subs^{\vect \ctx}_i &=& \inctx {\rctx\subs^{\vect \ctx}_0 \ldots \subs^{\vect
        \ctx}_i}{\app {\lam y {\reset {\inctx {\bm
              \ctx}{\inctx {\ctx_j\subs^{\vect \ctx}_0 \ldots \subs^{\vect
                  \ctx}_{j-1}} y}}}}{\val \subs^{\vect \ctx}_0 \ldots
        \subs^{\vect \ctx}_i}} \\
    \reset \tm \delta^{\vect \ctx}_0 \ldots
    \delta^{\vect \ctx}_i &=& \inctx {\rctx\delta^{\vect \ctx}_0 \ldots \delta^{\vect
        \ctx}_i}{\app {\lam y {\reset {\inctx {\bm
              {\ctx'}}{\inctx {\ctx_j\delta^{\vect \ctx}_0 \ldots \delta^{\vect
                  \ctx}_{j-1}} y}}}}{\val \delta^{\vect \ctx}_0 \ldots
        \delta^{\vect \ctx}_i}}    
  \end{array}
  $

 \noindent  Reducing the $\beta$-redex in both terms, we obtain

  $
  \begin{array}{rcl}
    \reset \tm \subs^{\vect \ctx}_0 \ldots
    \subs^{\vect \ctx}_i &\redcbv& \inctx {\rctx\subs^{\vect \ctx}_0 \ldots \subs^{\vect
        \ctx}_i}{\reset {\inctx {\bm
              \ctx}{\inctx {\ctx_j\subs^{\vect \ctx}_0 \ldots \subs^{\vect
                  \ctx}_{j-1}} {\val \subs^{\vect \ctx}_0 \ldots
        \subs^{\vect \ctx}_i}}}} \\
    \reset \tm \delta^{\vect \ctx}_0 \ldots
    \delta^{\vect \ctx}_i &\redcbv& \inctx {\rctx\delta^{\vect \ctx}_0 \ldots \delta^{\vect
        \ctx}_i}{\reset {\inctx {\bm
              {\ctx'}}{\inctx {\ctx_j\delta^{\vect \ctx}_0 \ldots \delta^{\vect
                  \ctx}_{j-1}} {\val \delta^{\vect \ctx}_0 \ldots
        \delta^{\vect \ctx}_i}}}}
  \end{array}
  $

  \noindent The resulting terms can be written $\reset {\tms^{\tm', \rctx}_0}\subs^{\vect
    {\ctx}}_0 \ldots \subs^{\vect {\ctx}}_j$ and $\reset {\tmu^{\tm', \vect
      \rctx}_0}\delta^{\vect {\ctx}}_0 \ldots \delta^{\vect {\ctx}}_j$, with
  $\tm' = \inctx {\ctx_j}{\val}$, therefore we obtain terms in $\ierel \erel
  \env$.\\

  Let $\reset {\tms^{\tm, \vect \rctx}_i}\subs^{\vect {\ctx'}}_0 \ldots
  \subs^{\vect {\ctx'}}_j \ierel \erel \env \reset {\tmu^{\tm, \vect
      \rctx}_i}\delta^{\vect {\ctx'}}_0 \ldots \delta^{\vect {\ctx'}}_j$. One
  can check that the reductions from terms of the form $\tms^{\tm, \vect
    \rctx}_i$, $\tmu^{\tm, \vect \rctx}_i$ come from respectively $\tms^{\tm
    \vect \rctx}_0$ and $\tmu^{\tm, \vect \rctx}_0$, and the transitions from
  these two terms come from $\tm$. We have several cases for~$\tm$.
  If $\tm \redcbv \tm'$, then we still have $\reset {\tms^{\tm', \vect
      \rctx}_i}\subs^{\vect {\ctx'}}_0 \ldots \subs^{\vect {\ctx'}}_j \ierel
  \erel \env \reset {\tmu^{\tm', \vect \rctx}_i}\delta^{\vect {\ctx'}}_0 \ldots
  \delta^{\vect {\ctx'}}_j$.

  If $\tm = \val$, then $\reset {\tms^{\val, \vect
      \rctx}_0}\subs^{\vect {\ctx'}}_0 \ldots \subs^{\vect {\ctx'}}_j
  = \reset{\inctx \rctxzero {\reset {\inctx {\bm\ctx}
        \val}}}\subs^{\vect {\ctx'}}_0 \ldots \subs^{\vect {\ctx'}}_j$
  and we also have $\reset {\tmu^{\tm', \vect \rctx}_0}\delta^{\vect
    {\ctx'}}_0 \ldots \delta^{\vect {\ctx'}}_j = \reset{\inctx
    \rctxzero {\reset {\inctx {\bm{\ctx'}} \val}}}\delta^{\vect
    {\ctx'}}_0 \ldots \delta^{\vect {\ctx'}}_j$. It is easy to check
  that $\reset{\inctx {\bm\ctx} \val}$ and $\reset{\inctx {\bm{\ctx'}}
    \val}$ reduce to the same term $\reset{\subst{\subst \tmzero \vark
      {\lam y {\reset{\inctx \ctx y}}}} \varx \val}$, written
  $\tm'$. Then we have $\reset {\tms^{\val, \vect
      \rctx}_0}\subs^{\vect {\ctx'}}_0 \ldots \subs^{\vect {\ctx'}}_j
  \redcbv \reset{\inctx \rctxzero {\tm'}}\subs^{\vect {\ctx'}}_0
  \ldots \subs^{\vect {\ctx'}}_j$, and also $\reset {\tmu^{\tm', \vect
      \rctx}_0}\delta^{\vect {\ctx'}}_0 \ldots \delta^{\vect
    {\ctx'}}_j \redcbv \reset{\inctx \rctxzero {\tm'}}\delta^{\vect
    {\ctx'}}_0 \ldots \delta^{\vect {\ctx'}}_j$; the two resulting
  terms are in the first set of $\erel$. If $i >0$, one can check that
  $\reset {\tms^{\val, \vect \rctx}_i}\subs^{\vect {\ctx'}}_0 \ldots
  \subs^{\vect {\ctx'}}_j \redcbv \reset {\tms^{\reset{\inctx
        \rctxzero {\tm'}}, \vect {\rctx'}}_{i-1}}\subs^{\vect
    {\ctx'}}_0 \ldots \subs^{\vect {\ctx'}}_j$ and we also have
  $\reset {\tmu^{\val, \vect \rctx}_i}\delta^{\vect {\ctx'}}_0 \ldots
  \delta^{\vect {\ctx'}}_j \redcbv \reset {\tmu^{\reset{\inctx
        \rctxzero {\tm'}}, \vect {\rctx'}}_{i-1}}\delta^{\vect
    {\ctx'}}_0 \ldots \delta^{\vect {\ctx'}}_j$, where $\vect{\rctx'}
  = \rctxone \ldots \rctx_i$ (the first context $\rctxzero$ is removed
  from the sequence). We obtain terms that are in the second set of
  $\erel$. In both cases, the resulting terms are in $\erel$. The
  reasoning is the same if $\tm = \vark_l$ for some $0 \leq l \leq j$.

  If $\tm = \inctx{\ctx'_{j+1}}{\shift {\vark_{j+1}}{\tm'}}$, then 
  \begin{multline*}
    \reset {\tms^{\tm, \vect \rctx}_0}\subs^{\vect {\ctx'}}_0 \ldots
    \subs^{\vect {\ctx'}}_j = \reset{\inctx \rctxzero {\reset {\inctx
          {\bm\ctx}{\inctx{\ctx'_{j+1}}{\shift
              {\vark_{j+1}}{\tm'}}}}}}\subs^{\vect {\ctx'}}_0 \ldots
    \subs^{\vect {\ctx'}}_j \\ \redcbv \reset{\inctx \rctxzero {\reset
        {\tm'}}}\subs^{\vect {\ctx'}}_0 \ldots \subs^{\vect {\ctx'}}_j
    \subs^{\vect {\ctx'}, \ctx'_{j+1}}_{j+1} 
  \end{multline*}
  and 
  \begin{multline*}
    \reset {\tmu^{\tm, \vect \rctx}_0}\delta^{\vect {\ctx'}}_0 \ldots
    \delta^{\vect {\ctx'}}_j = \reset{\inctx \rctxzero {\reset {\inctx
          {\bm{\ctx'}}{\inctx{\ctx'_{j+1}}{\shift
              {\vark_{j+1}}{\tm'}}}}}}\delta^{\vect {\ctx'}}_0 \ldots
    \delta^{\vect {\ctx'}}_j \\ \redcbv \reset{\inctx \rctxzero {\reset
        {\tm'}}}\delta^{\vect {\ctx'}}_0 \ldots \delta^{\vect {\ctx'}}_j
    \delta^{\vect {\ctx'}, \ctx'_{j+1}}_{j+1}
  \end{multline*}
  therefore $\reset {\tms^{\val, \vect \rctx}_0}\subs^{\vect
    {\ctx'}}_0 \ldots \subs^{\vect {\ctx'}}_j$ and $\reset
  {\tmu^{\val, \vect \rctx}_0}\delta^{\vect {\ctx'}}_0 \ldots
  \delta^{\vect {\ctx'}}_j$ reduce to terms of the form
  $\reset{\tm''}\subs^{\vect {\ctx'}}_0 \ldots \subs^{\vect
    {\ctx'}}_{j+1}$ and $\reset{\tm''}\delta^{\vect {\ctx'}}_0 \ldots
  \delta^{\vect {\ctx'}}_{j+1}$, that are in $\ierel \erel \env$. If
  $i>0$, then one can check that $\reset {\tms^{\val, \vect
      \rctx}_i}\subs^{\vect {\ctx'}}_0 \ldots \subs^{\vect {\ctx'}}_j
  \redcbv\reset {\tms^{\reset{\inctx \rctxzero {\tm'}}, \vect
      {\rctx'}}_{i-1}}\subs^{\vect {\ctx'}}_0 \ldots \subs^{\vect
    {\ctx'}}_{j+1}$ and also $\reset {\tmu^{\val, \vect
      \rctx}_i}\delta^{\vect {\ctx'}}_0 \ldots \delta^{\vect
    {\ctx'}}_j \redcbv \reset {\tmu^{\reset{\inctx \rctxzero {\tm'}},
      \vect {\rctx'}}_{i-1}}\delta^{\vect {\ctx'}}_0 \ldots
  \delta^{\vect {\ctx'}}_{j+1}$, where $\vect{\rctx'} = \rctxone
  \ldots \rctx_i$, so the resulting terms are in $\ierel \erel \env$.

  If $\tm = \inctx {\rctx_{i+1}}{\app {\vark_l} \val}$ (with $1 \leq l \leq j$),
  then 
  \begin{multline*}
    \reset {\tms^{\tm, \vect \rctx}_0}\subs^{\vect {\ctx'}}_0 \ldots
    \subs^{\vect {\ctx'}}_j = \reset{\inctx \rctxzero {\reset {\inctx
          {\bm\ctx}{\inctx{\rctx_{i+1}}{\app{\lamp y {\reset {\inctx {\bm
          \ctx}{\inctx {\ctx_l} y}}}} \val}}}}}\subs^{\vect {\ctx'}}_0 \ldots
    \subs^{\vect {\ctx'}}_j \\ \redcbv \reset{\inctx \rctxzero {\reset {\inctx
          {\bm\ctx}{\inctx{\rctx_{i+1}}{\reset {\inctx {\bm
          \ctx}{\inctx {\ctx_l} \val}}}} }}}\subs^{\vect {\ctx'}}_0 \ldots
    \subs^{\vect {\ctx'}}_j = \reset {\tms^{\inctx {\ctx_l} \val, \vect {\rctx'}}_1}\subs^{\vect {\ctx'}}_0 \ldots
    \subs^{\vect {\ctx'}}_j 
  \end{multline*}
  and 
  \begin{multline*}
    \reset {\tmu^{\tm, \vect \rctx}_0}\delta^{\vect {\ctx'}}_0 \ldots
    \delta^{\vect {\ctx'}}_j = \reset{\inctx \rctxzero {\reset {\inctx
          {\bm{\ctx'}}{\inctx{\rctx_{i+1}}{\app{\lamp y {\reset {\inctx {\bm
          {\ctx'}}{\inctx {\ctx_l} y}}}} \val}}}}}\delta^{\vect {\ctx'}}_0 \ldots
    \delta^{\vect {\ctx'}}_j \\ \redcbv \reset{\inctx \rctxzero {\reset {\inctx
          {\bm{\ctx'}}{\inctx{\rctx_{i+1}}{\reset {\inctx {\bm
          {\ctx'}}{\inctx {\ctx_l} \val}}}} }}}\delta^{\vect {\ctx'}}_0 \ldots
    \delta^{\vect {\ctx'}}_j = \reset {\tmu^{\inctx {\ctx_l} \val, \vect {\rctx'}}_1}\delta^{\vect {\ctx'}}_0 \ldots
    \delta^{\vect {\ctx'}}_j 
  \end{multline*}
  with $\vect {\rctx'} = \rctx_{i+1},\rctx_0, \ldots \rctx_i$, so the resulting
  terms are in $\ierel \erel \env$. If $i>0$, then $\reset {\tms^{\tm, \vect
      \rctx}_0}\subs^{\vect {\ctx'}}_0 \ldots \subs^{\vect {\ctx'}}_j \redcbv
  \reset {\tms^{\inctx {\ctx_l} \val, \vect {\rctx'}}_{i+1}}\subs^{\vect
    {\ctx'}}_0 \ldots \subs^{\vect {\ctx'}}_j$, and we have also $\reset {\tmu^{\tm, \vect
      \rctx}_0}\delta^{\vect {\ctx'}}_i \ldots \delta^{\vect {\ctx'}}_j \redcbv
  \reset {\tmu^{\inctx {\ctx_l} \val, \vect {\rctx'}}_{i+1}}\delta^{\vect
    {\ctx'}}_0 \ldots \delta^{\vect {\ctx'}}_j$, so the resulting terms are in
  $\ierel \erel \env$, as required. \\

  Finally, let $\lam \varx \tmzero \mathrel\env \lam \varx \tmone$ and $\valzero
  \cloctc \env \valone$. It is easy to check that by definition of $\env$, the
  two terms $\subst \tmzero \varx \valzero$ and $\subst \tmone \varx \valone$
  are of the form $\tm' \subs^{\vect {\ctx'}}_0 \ldots \subs^{\vect \ctx}_i$ and
  $\tm' \delta^{\vect {\ctx'}}_0 \ldots \delta^{\vect \ctx}_i$. 
 
\end{proof}

\end{document}